  \renewcommand{\DOTI}[1]{%
    \CTV\FmTi{#1}\par\nobreak 
    \vskip 20\p@}
  \renewcommand{\DOTIS}[1]{%
    \CTV\FmTi{#1}\par\nobreak 
    \vskip 20\p@}
\titlespacing*{\paragraph} {0pt}{1\baselineskip}{1\baselineskip}
\DeclareUrlCommand\uri{}
\renewcommand*{\backref}[1]{}
\renewcommand*{\backrefalt}[4]{{%
    \color{darkgray}
    \ifcase #1 Not cited.%
          \or Cited on page~#2.%
          \else Cited on pages #2.%
    \fi%
    }}
\newlength{\defbaselineskip}
\newcommand{\setlinespacing}[1]%
           {\setlength{\baselineskip}{#1 \defbaselineskip}}
\newcommand{\algcomment}[1]{{\color{gray}\# #1}}
\definecolor{black}{rgb}{0, 0, 0}
\definecolor{cobalt}{rgb}{0.0, 0.28, 0.67}
\definecolor{darkblue}{rgb}{0.0, 0.0, 0.55}
\colorlet{base_color}{black}
\colorlet{main_color}{cobalt}
\colorlet{auxiliary_color}{cobalt}
\newcommand{\university}{Universit\`a degli Studi di Napoli Federico II }
\newcommand{\phdprogram}{Computational and Quantitative Biology }
\newcommand{\phdcycle}{XXXVI }
\newcommand{\phdchairman}{Prof. Michele Ceccarelli}
\newcommand{\thesistitle}{Modelling and Control of Spatial Behaviours in Multi-Agent Systems}
\newcommand{\thesissubtitle}{with Applications to Biology and Robotics}
\newcommand{\thesisauthor}{Andrea Giusti}
\newcommand{\advisor}{Prof. Mario di Bernardo}
\newcommand{\coadvisor}{Prof. Diego di Bernardo and prof. Marco Faella}
\newcommand{\addref}{{\color{red}[REF] }}
\newcommand{\add}[1]{{\color{red} #1}}
\newcommand{\note}[1]{{\color{red} [#1]}}
\newcommand{\new}[1]{{\color{blue} #1}}
\renewcommand{\note}[1]{}
\renewcommand{\add}[1]{}
\renewcommand{\addref}{\!\!}
\renewcommand{\new}[1]{#1}
\renewcommand{\vec}[1]{\mathbf{#1}}
\newcommand{\unitvec}[1]{ \vec{\hat{#1}} }
\newcommand{\func}[2]{\ \mathrm{#1}\!\left(#2\right)}
\newcommand{\norm}[1]{\left\lVert #1 \right\rVert}
\newcommand{\abs}[1]{\left\lvert #1 \right\rvert}
\renewcommand{\d}[0]{\mathrm{d}}
\newcommand{\x}[0]{$\times$}
\newcommand{\T}{^{\mathsf{T}}}
\newcommand{\B}[1]{\if#1\relax\bm{#1}\else\mathbf{#1}\fi} 
\newcommand{\R}[1]{\mathrm{#1}}						      
\newcommand{\C}[1]{\mathcal{#1}}
\newcommand{\BB}[1]{\mathbb{#1}}
\newcommand{\bigcdot}[0]{\boldsymbol{\cdot}}
\newtheorem{remark}{Remark}[chapter]
\newtheorem{theorem}{Theorem}[chapter]
\newtheorem{lemma}{Lemma}[chapter]
\newtheorem{proposition}{Proposition}[chapter]
\newtheorem{definition}{Definition}[chapter]
\newtheorem{assumption}{Assumption}[chapter]
\DeclareMathAlphabet{\mathcal}{OMS}{cmsy}{m}{n} 
    \def\cleardoublepage{\clearpage%
        \if@twoside
            \ifodd\c@page\else
                \vspace*{\fill}
                \hfill
	        \BgThispage
                \begin{center}
                \end{center}
                \vspace{\fill}
                \thispagestyle{empty}
                \newpage
                \if@twocolumn\hbox{}\newpage\fi
            \fi
        \fi
    }
\begin{document}
\frontmatter

	\begin{titlepage}
		\centering{\includegraphics[width=\columnwidth]{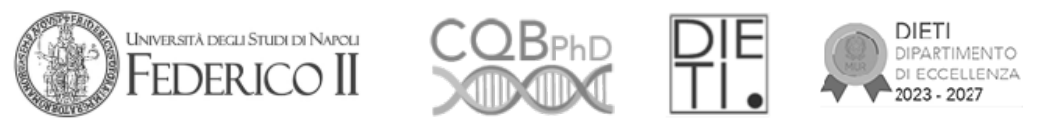}}\\

		\begin{center}
         	{\color{base_color}\Large{\university \par}}
		\vspace{-0.4cm}
		{\color{base_color}\singlespacing \large {Ph.D. Program in\\
		\color{auxiliary_color}C\color{base_color}omputational and \color{auxiliary_color}Q\color{base_color}uantitative  \color{auxiliary_color}B\color{base_color}iology \par\footnotesize{\vspace{3pt}\phdcycle Cycle\par}}}
		\vspace{1cm}
		{\color{base_color} \bfseries\large{\textsc{Thesis for the Degree of Doctor of Philosophy }}\par}
		\vspace{0.5 cm}
		{\color{main_color} \LARGE \bfseries \textsc{\thesistitle} \newline
        \large \textsc{\thesissubtitle} \par}
		{\color{main_color}\center{by\\\bfseries{\textsc\thesisauthor \par}
        \note{Annotated version}}}
		\end{center}
		\vspace{0.4 cm}
		\begin{flushleft}
			\color{base_color}{\normalsize{Advisor: \advisor\par}}
			\vspace{6pt}
			\color{base_color}{\small{Co-advisors: \coadvisor\par}}
		\end{flushleft}
		\vspace*{\fill}
		\begin{center}
			\centering{\includegraphics[width=0.2\columnwidth]{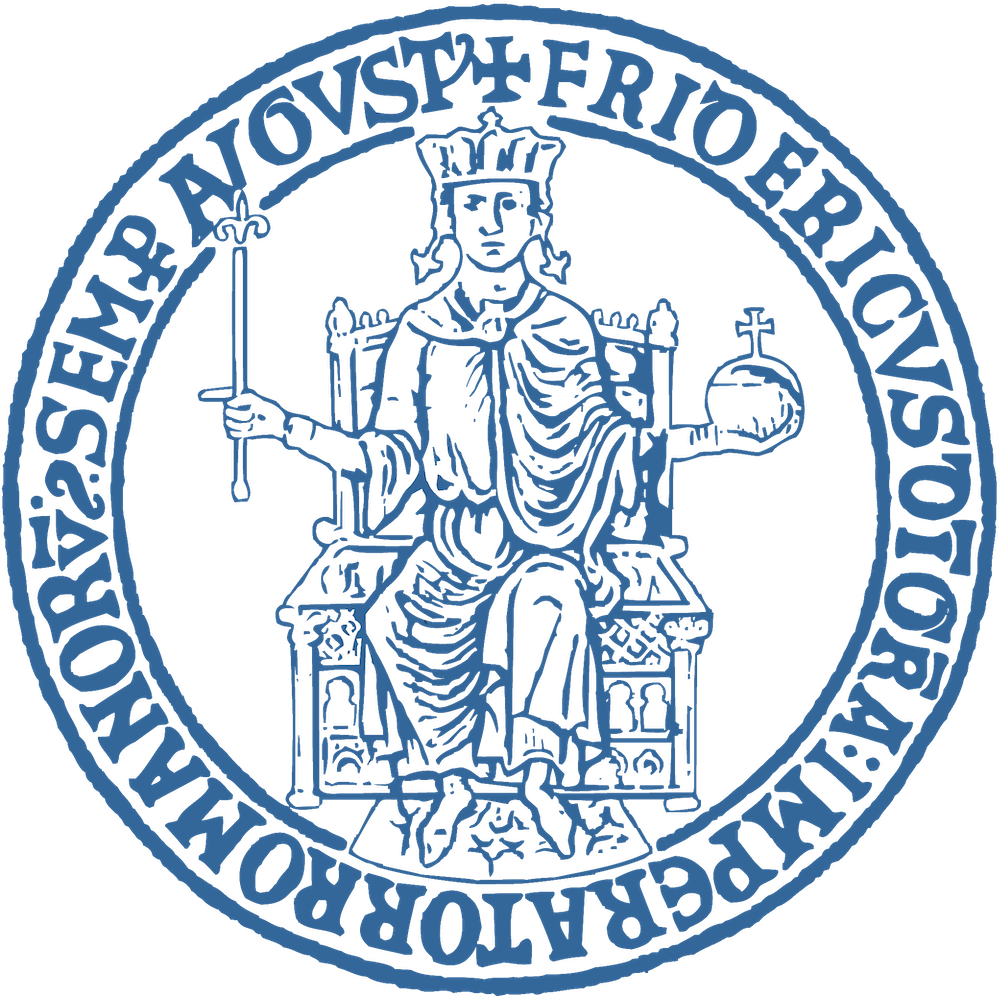}\par}
			\vspace{4pt}
			\color{base_color}{\bfseries\footnotesize\textsc{Scuola Politecnica e delle Scienze di Base}\par}
			\vspace{3pt}
			\bfseries\footnotesize\textsc{\color{auxiliary_color}D\color{base_color}{ipartimento di} \color{auxiliary_color}{I}\color{base_color}{ngegneria} \color{auxiliary_color}{E}\color{base_color}{lettrica e delle }\color{auxiliary_color}{T}\color{base_color}{ecnologie dell'}\color{auxiliary_color}{I}\color{base_color}{nformazione }}
		\end{center}
	\end{titlepage}

\newpage
 
\thispagestyle{empty} 
\centering
{\LARGE\textsc{\textcolor{main_color}{\textbf{\thesistitle\newline \large\thesissubtitle}}}}


\vspace{0.8cm}
\centering{\textcolor{base_color}{Ph.D. Thesis presented\\
for the fulfillment of the Degree of Doctor of Philosophy\\
in \phdprogram\\
by}}\\
 
\vspace{0.2cm}
\centering{\large\textsc{\textcolor{main_color}{\textbf{\thesisauthor}}}}

\vspace{0.7cm}
\centering{\large{December 2023}}\\

\vspace{0.3cm}
\centering{\includegraphics[width=0.2\columnwidth]{figs/logoUNINAblu.png}}\\
\vspace{1.cm}
\raggedright Approved as to style and content by \\

\vspace{0.5cm}
\raggedright \rule{4cm}{0.5pt} \\
\raggedright \advisor, Advisor


\vfill
\textcolor{main_color}{
\university \\
Ph.D. Program in \phdprogram\\
\scriptsize{\phdcycle cycle - Chairman: \phdchairman}
}
\clearpage

\normalsize

\thispagestyle{empty} 
\textbf{Candidate's declaration} \\

\vspace{0.5cm}
I hereby declare that this thesis submitted to obtain the academic degree of Philosophi\ae~Doctor (Ph.D.) in \phdprogram is my own unaided work, that I have not used other than the sources indicated, and that all direct and indirect sources are acknowledged as references. \\

Parts of this dissertation have been published in international journals and conference articles (see list of the author's publications at the end of the thesis).\\

\vspace{1cm}
Naples, \today

\vspace{1cm}
\rule{5cm}{0.5pt} 

\vspace{0.5cm}

\thesisauthor

\clearpage

\thispagestyle{empty} 

\vspace*{.25\textheight}

\begin{flushright}
\begin{minipage}{9cm}
\begin{flushright}
\textit{
Agli amici, lontani e vicini, perché mi ricordano che le cose importanti non sono nei libri. \\
Ai colleghi, con cui ho condiviso lunghe giornate ed innumerevoli pause caffè. \\
A chi mi ha insegnato a capire i dettagli, e chi invece mi ha insegnato a guardare oltre. \\
A chi ha condiviso la mia passione nel progettare e costruire qualcosa.
\paragraph{}
\noindent
A Franci, che c'era dall'inizio. \\
Ed alla mia famiglia, che in effetti, c'era da prima.
}
\end{flushright}
\end{minipage}
\end{flushright}

\thispagestyle{empty} 

\cleardoublepage

\pagenumbering{roman}

\phantomsection
\section*{Abstract}
\thispagestyle{empty} 
\setlength{\parindent}{1em}
\justifying
\addcontentsline{toc}{section}{Abstract}
Large-Scale Multi-Agent Systems (LS-MAS) consist of several autonomous components, interacting in a non-trivial way, so that the emerging behaviour of the ensemble depends on the individual dynamics of the components and their reciprocal interactions. 
These models can describe a rich variety of natural systems, as well as artificial ones, characterised by unparalleled scalability, robustness, and flexibility.
%
Indeed, a crucial objective is devising efficient strategies to model and control the spatial behaviours of LS-MAS to achieve specific goals.
However, the inherent complexity of these systems and the wide spectrum of their emerging behaviours pose significant challenges.
%
The overarching goal of this thesis is, therefore, to advance methods for modelling, analyzing and controlling the spatial behaviours of LS-MAS, with applications to cellular populations and swarm robotics. 
The thesis begins with an overview of the existing Literature, and is then organized into two distinct parts.
In the context of swarm robotics, Part \ref{part:geometric_pattern} deals with distributed control algorithms to spatially organize agents on geometric patterns.
The contribution is twofold, encompassing both the development of original control algorithms, and providing a novel formal analysis, which allows to guarantee the emergence of specific geometric patterns.
In Part \ref{part:spatial_microorganisms}, looking at the spatial behaviours of biological agents, experiments are  carried out  to study the movement of microorganisms and their response to light stimuli.
This allows the derivation and parametrization of mathematical models that capture these behaviours, and pave the way for the development of innovative approaches for the spatial control of microorganisms.
%
The results presented in the thesis were developed by leveraging formal analytical tools, simulations, and experiments, using innovative platforms and original computational frameworks.

\vfill
\noindent\textbf{\textit{Keywords}}: emerging behaviours, control theory, multi-agent systems, biology, swarm robotics.

\cleardoublepage

\phantomsection
\section*{Sintesi in lingua italiana}
\thispagestyle{empty} 
\setlength{\parindent}{1em}
\justifying
\addcontentsline{toc}{section}{Sintesi in lingua italiana}

I Sistemi Multi-Agente su Larga Scala (LS-MAS) sono costituiti da molteplici componenti autonome, che interagiscono tra di loro, così che il comportamento emergente dell'insieme dipenda dalla dinamica delle singole componenti e dalle loro reciproche interazioni. 
Questi modelli possono descrivere una ricca varietà di sistemi, sia naturali che artificiali, caratterizzati da scalabilità, robustezza e flessibilità. 
Infatti, lo sviluppo di strategie per modellare e controllare i comportamenti spaziali dei LS-MAS, rappresenta un problema cruciale. 
Tuttavia, l'intrinseca complessità  di questi sistemi e l’ampio spettro dei loro comportamenti, pongono grosse sfide. 
L'obiettivo di questa tesi è, quindi, quello di contribuire allo sviluppo di metodi per modellare, analizzare e controllare i comportamenti spaziali dei LS-MAS, con applicazioni alle popolazioni cellulari ed agli sciami robotici. 
La tesi inizia con una panoramica della Letteratura esistente, per poi essere organizzata in due parti distinte. 
Nel contesto della robotica degli sciami, la Parte \ref{part:geometric_pattern} tratta gli algoritmi di controllo distribuito per organizzare spazialmente gli agenti su pattern geometrici. 
Il duplice contributo, comprende sia lo sviluppo di nuovi algoritmi di controllo, sia l'introduzione di un'analisi formale, che consente di garantire l'emergere di specifici pattern. 
Nella Parte \ref{part:spatial_microorganisms}, al fine di esaminare i comportamenti spaziali di agenti biologici, vengono condotti esperimenti per studiare il movimento dei microrganismi e la loro risposta agli stimoli luminosi. 
Ciò consente la derivazione e la parametrizzazione di modelli matematici che catturano questi comportamenti, ed aprono la strada allo sviluppo di approcci per il controllo spaziale dei microrganismi. 
I risultati presentati nella tesi sono stati sviluppati sfruttando strumenti analitici, simulazioni ed esperimenti, utilizzando piattaforme innovative e strumenti computazionali originali.

\vfill
\noindent\textbf{\textit{Parole chiave}}: comportamenti emergenti, teoria del controllo, sistemi multi-agente, biologia, sciami robotici.

\cleardoublepage

\phantomsection
\pagestyle{plain}

\section*{Acknowledgements}





Part of the research presented in this dissertation was developed during a nine months period at the BioCompute Lab (University of Bristol), under the supervision of Dr. Thomas Gorochowski.

Moreover, it has been partially carried out in the framework of the projects PRIN 2017 “Advanced Network Control of Future Smart Grids” and PON INSIST 2019 "Sistema di monitoraggio intelligente per la sicurezza delle infrastrutture".

\addcontentsline{toc}{section}{Acknowledgements}
\cleardoublepage

\pagestyle{plain}
\tableofcontents
\clearpage

\phantomsection
\chapter*{List of Acronyms}
\noindent The following acronyms are used throughout the thesis.
\begin{acronym}[MAXACR]
\acro{DLP}{digital light projector}
\acro{DOME}{Dynamic Optical Micro-Environment}
\acro{LS-MAS}{Large-Scale Multi-Agent Systems}
\acro{OLS}{Ordinary Least Squares}
\acro{ODE}{Ordinary Differential Equation}
\acro{PDE}{Partial Differential Equation}
\acro{PTW}{Persistent Turing Walker}
\acro{SDE}{Stochastic Differential Equation}
\acro{UAV}{Unmanned Aerial Vehicle}
\end{acronym}


\addcontentsline{toc}{section}{List of Acronyms}
\clearpage





\pagestyle{fancy}
\pagenumbering{arabic}
\mainmatter
\setlength{\parindent}{1em}
\justifying
\singlespacing

\chapter{Introduction}
\thispagestyle{empty} 
\label{ch:intro}


\section{Motivation}

The behaviour of a rich variety of systems, in Nature and Technology, depends on the dynamics of multiple individual components and their reciprocal interactions. This is the case for both natural systems, such as gene networks in cells, flocks of birds, or global climate; and artificial ones, among others robotic swarms and internet \cite{Gorochowski2020,Schatten2014}. 
When the number of agents becomes extremely large, the role of their interconnections more intricate, and the interactions with the environment predominant, we talk of \acfi{LS-MAS}, or \emph{swarms} \cite{Brambilla2013}. 

The distinctive trait of such systems lies in the emergence of complex behaviours, stemming from the interactions among the individual agents.
Examples of such \emph{emerging behaviours} are the synchronized flashing of fireflies, the flocking of birds \cite{Ballerini2008}, or the formation of patterns by swarming bacteria \cite{Doshi2022}.
Furthermore, artificial multi-robot systems can be engineered to replicate these behaviours, and to address complex tasks, such as  manipulation or patrolling, in a cooperative manner. 
The coexistence and the interplay, between the \emph{microscopic} dynamics of the individual agents, and the emerging \emph{macroscopic} behaviour of the swarm, define the characteristic \emph{multi-scale} nature of these systems.
Crucially, the distributed nature of \ac{LS-MAS} guarantees \emph{scalability}, i.e. the ability to cope with large and often fluctuating numbers of agents, \emph{robustness} to agents' failure and various disturbances, and \emph{flexibility} to adapt to different environments and tasks \cite{Brambilla2013}.
These features catalysed interests around such systems, and fostered the development of new technologies and methodologies, with applications spanning from synthetic biology \cite{Gorochowski2020} to swarm robotics \cite{Brambilla2013}, from opinion dynamics \cite{Jusup2022} to infrastructures’ resilience \cite{Guerrero-Bonilla2019}.

\note{add figure esempi LSMAS}

\section{Open challenges}
A pressing problem for the deployment of real world applications is devising efficient strategies to model \ac{LS-MAS} and control their spatial behaviour to achieve specific goals, e.g. microbial pattern formation or swarming robots coordination.
However, the inherent complexity of such systems, and the wide spectrum of their possible emerging behaviours pose significant challenges to the systematic study of their dynamics and the design of general methodologies to, either, model or control their behaviour.
%
A first difficulty lies in the formal description and classification of the different behaviours that can emerge, which, in many cases, have not been clearly classified. 

The lack of general tools for the study of emerging behaviours in \ac{LS-MAS} constitutes a second challenge.
Conventional approaches from control theory can fail in guaranteeing stability, scalability and robustness, primarily due to the high dimensionality and the often non-linear nature of \ac{LS-MAS}; these difficulties worsening in the presence of a variable number of agents and uncertainties.
In particular, the study of stability offers unparalleled insights into the nature of dynamical systems and has been the pillar of this field, nevertheless the complexity of \ac{LS-MAS} often caused this formal analysis trudging behind \addref. Indeed, many emerging behaviours have been studied and a variety of control algorithms has been validated by simulations or experiments, while the analytical study of their stability remains unaddressed.

Moreover, we still lack effective methodologies to build mathematical models, able of describing the intricate behaviours of real-world \ac{LS-MAS}, such as the coordinated motion of fishes and birds, or the cooperation in insects colonies.
The nature of such systems poses great difficulties to the bottom-up development of models form first principles.
Therefore, we have to rely on top-down approaches, such as data-driven modelling \cite{Brunton2016, Stephany2022}.

Lastly, the design of distributed control algorithms to achieve desired emerging behaviours has been addressed primarily in specific cases, with a more comprehensive design framework still missing.
This is certainly due to intrinsic difficulties in developing control strategies able to close a loop between the microscopic and the macroscopic scales, but also the lack of models for real-world \ac{LS-MAS} represents a significant obstacle.

\section{Main contributions}
In this context, this thesis aims to contribute to the development of novel and effective methods for modelling, analyzing and controlling \ac{LS-MAS} of mobile agents, with a focus on studying their spatial behaviours for applications to cellular populations and swarm robotics. 
To enhance the understanding of these emerging behaviours, and of the existing modelling and control approaches, we provide a concise review of the existing Literature.
From this, we move to tackle the design and the formal analysis of distributed control algorithms for pattern formation.
Specifically, we develop an interaction law that allows a swarm to self organize onto geometric patterns, such as triangular and square lattices.
While this approach is validated via simulations and experiments, we apply formal tools to a similar, and very popular control algorithm, proving its local stability. 
Furthermore, we address the problem of constructing mathematical models able to describe the spatial behaviour of real world \ac{LS-MAS}.
In particular, starting from experimental data we acquired, we characterize and then model, the movement and the light response of some microorganisms, in order to provide the first step towards the development of new methodologies to control their spatial distribution.

To develop our results we employed formal tools, such as graph theory and Lyapunov stability; and developed a novel simulation platform, called  SwarmSim, to carry out agent-based simulations of  \ac{LS-MAS}.
Also, we performed experiments, using two innovative platforms, the Robotarium for swarm robotics applications and the \acs{DOME} for biological agents.
Overall, this thesis addresses general open problems in the field of \ac{LS-MAS} and their application to robotic and biological agents, providing both novel methodological results and new numerical tools. 


\section{Thesis outline}
The thesis is organized as follows; Chapter \ref{ch:background} provides a brief overview of the literature on \ac{LS-MAS} of mobile agents, presenting the most relevant aspects of modelling, classifying, analyzing, and controlling their emerging behaviours.
In particular we focus on those behaviours that influence the spatial organization of these systems. 
The rest of the thesis is then organized in two parts. Part \ref{part:geometric_pattern} concerns geometric pattern formation, and specifically the development of novel swarming strategies and their proof of stability, in the context of robotics applications.
Specifically, Chapter \ref{ch:patt_form_background}  introduces the problem of geometric pattern formation, reviews the existing Literature and the necessary preliminary mathematical formalism.
Chapter \ref{ch:dist_cont} presents our own solution to achieve this behaviour with a novel distributed approach, while the   formal study on the stability of specific geometric configurations is discussed in Chapter \ref{ch:convergence}.

Part \ref{part:spatial_microorganisms}, deals with a biological application, specifically the movement of microorganisms (protozoa and microscopic algae), and their response to light stimuli. 
The problem is first introduced and described in Chapter \ref{ch:miroorganisms_background}, while Chapter \ref{ch:dome} presents our experimental methodology and results.
Finally, Chapter \ref{ch:modelling} discusses how to mathematically model such behaviours, and how these can be leveraged to control the spatial distribution of the microorganisms.
Conclusions are drawn in Chapter \ref{ch:conclusions}.

Two software tools we developed in the process of our research are presented in the Appendices.
Specifically, Appendix \ref{ch:swarmsim} presents SwarmSim, an agent-based simulator we developed for the study of multi-agent systems, while  Appendix \ref{ch:tracker} presents our computer vision software for the automatic detection and tracking of moving objects in the \acs{DOME}.

\paragraph{}
The results reported in this thesis appeared in the articles listed in the
\hyperref[ch:author_publications]{Author's publications} section.




\chapter{Background}
\thispagestyle{empty} 
\label{ch:background}
   
     

As introduced in Chapter \ref{ch:intro}, \acfi{LS-MAS} consist of several autonomous components, or \emph{agents}, interacting in a non-trivial way \cite{Simon1962}, so that the emerging behaviour of the ensemble depends on the individual dynamics of the components and their reciprocal interactions, and vice versa \cite{Estrada2023}.
A strictly related concept is that of \emph{swarm}, which is mostly used in robotics to describe a \ac{LS-MAS} of mobile robots moving together in some coordinated manner and emphasizes the idea of physically embodied agents \cite{Brambilla2013}.
In the rest of the thesis we will use these two terms interchangeably.

\ac{LS-MAS} can describe a rich variety of natural systems, as well as artificial ones. 
Moreover, such systems show desirable properties, namely scalability, robustness and flexibility.
These aspects together motivate the interest in understanding, modelling and controlling their behaviour.
In this Chapter, we will discuss what characterizes the emergent behaviours of \ac{LS-MAS} and how these can be classified, focusing on those behaviours that influence the spatial distribution of the swarm.
Then, we will introduce the possible modelling approaches for \ac{LS-MAS}, and how these can be used to understand the emerging properties of the system.
Finally, an overview of the existing control approaches is presented.

\section{Emergent behaviours}
\label{sec:background_behaviours}
A distinctive characteristic of \ac{LS-MAS} is the wide range of complex emergent behaviors they can show. Unlike ordinary systems, whose steady-state behavior can generally be described in terms of equilibrium points, limit cycles, quasi-periodic or chaotic attractors \cite{Strogatz1994}, in \ac{LS-MAS} the relative behaviors of agents can determine the origin of complex emerging properties. 
The simpler of emerging behaviours are consensus and synchronization \cite{Russo2009}. 
Other examples include aggregation, flocking, area coverage, morphogenesis and more \cite{Brambilla2013,Majid2022}.

Various classifications of these behaviours have been proposed, but a general consensus is still lacking.
Overall, one of the most influential was proposed by Brambilla et al. \cite{Brambilla2013}, where emergent behaviours are classified in \emph{spatial organization}, \emph{navigation}, \emph{collective decision making} and \emph{others}.

Here, we do not aim to propose a new general classification, but try to adjust the existing ones to define the class of \emph{spatial behaviours} of our interest.
Specifically, we will say spatial behaviours are those characterized by a steady state configuration that satisfies some geometric constraints on the states of the agents.
These will include consensus, aggregation, density regulation and more (see below).
Contrarily, we will not include behaviours explicitly involving persistent movement%
\footnote{This does not require the agents to necessarily become static, as long as their motion does not influence the resulting steady state behaviour.}
(e.g. flocking and synchronization), or more complex ones requiring interactions with the external environment, such as object assembling (that instead is included within \emph{spatial organizing behaviours} in \cite{Brambilla2013}), cooperative transportation, mapping, herding, etc.

Moreover, some of the spatial behaviors, despite being treated in a vast existing Literature, have been mostly defined heuristically and ambiguously, with different authors adopting different definitions, e.g. \cite{Brambilla2013, Majid2022}.
Therefore, in the following, we list the spatial behaviours we identified, together with the proposed definitions and some possible applications.
We will loosely use $\vec{x}_i \in  \mathbb{R}^d$ to represent the state of agent $i$, and $\rho(\vec{x},t):\mathbb{R}^d \times \mathbb{R}^+ \rightarrow \mathbb{R}^+$ for the density distribution of the agents in a state space, of dimension $d$.
Moreover, we will implicitly refer to the steady state behaviour, and therefore omit the dependence on time:

\begin{itemize}
\item \emph{Consensus} 
describes the convergence of all the agents to the same equilibrium state, formally  $\vec{x}_i=\vec{x}_j \ \forall i,j$  with $\dot{\vec{x}}_i=0$. 
Possible examples encompass rendezvous \cite{Ji2007}, the alignment of magnetosomes in magnetotactic bacteria or of schooling fishes \cite{Couzin2005}.
Moreover, if the state of the agents represents their opinion, consensus can represent unanimous decisions in social networks \cite{Valentini2017}. 

\item \emph{Aggregation} 
is a weaker form of consensus, requiring all the agents to converge towards a bounded set of their state space, formally guaranteeing that eventually $\Vert \vec{x}_i-\vec{x}_j\Vert<M \ \forall i, j$. 
A classic example is spatial aggregation of mobile agents \cite{Leverentz2009, Gazi2002}, but it can also describe the emergence of bounded consensus in opinion dynamics models \cite{LoIudice2023}.

\item \emph{Dispersion} 
is the opposite behaviour to aggregation, requiring agents to spread and diverge from one another, formally $\Vert \vec{x}_i-\vec{x}_j\Vert\rightarrow \infty \ \forall i,j$. 
This simple behaviour \cite{Bodnar2005} can have important applications in exploration \cite{Duncan2022, Elamvazhuthi2016}, or search and localization \cite{Zarzhitsky2005}.
    
\item \emph{Morphogenesis} 
is the formation of organic-like shapes, a behaviour that eludes a formal definition.
It was famously introduced by Alan Turing in \cite{Turing1952}, and recently inspired algorithms for the spontaneous organization of robots \cite{Oh2014,Carrillo-Zapata2019}.
A similar behaviour, the natural formation of patterns by swarming bacteria, has also been used to encode digital information \cite{Doshi2022}.
    
\item \emph{Pattern formation}, in a very general formulation, requires each agent to converge to a different point $p$ of a specified set $A\subset \mathbb{R}^d$, formally $\forall i \ \exists \vec{p}\in A : \vec{x}_i=\vec{p}$, with $\vec{x}_i\neq \vec{x}_j \ \forall i,j$.
Depending on the actual definition of the set $A$, it can describe a quite various range of behaviours, such as geometric pattern formation \cite{Giusti2023FRAI, Zhao2019}, 1D shape formation \cite{Hsieh2008, Meng2013} and 2D shape formation \cite{Rubenstein2014}.

\item \emph{Density regulation} 
requires the swarm  to distribute accordingly to some density distribution of interest $\rho_d$, formally $\rho(\vec{x},t) \propto \rho_d(\vec{x})$.
It has been implemented in swarming robots \cite{Elamvazhuthi2016, Eren2017}, with applications to surveillance \cite{Spears2006}, in mammalian cells for tissue engineering \cite{Gentile2014}, and in microorganisms \cite{Lam2017, Massana-Cid2022}.

\end{itemize}

In this thesis we will mainly focus on two of these behaviours, specifically Part \ref{part:geometric_pattern} will focus on \emph{pattern formation}, while Part \ref{part:spatial_microorganisms} will deal with the motion of microorganisms and how this can induce 
\emph{density regulation}.

\section{Modelling \& Analysis}
\label{sec:background_modelling}
Modelling the dynamics of \ac{LS-MAS} is a crucial step, as it allows to perform simulations, carry out mathematical analysis of their emergent behaviours and, finally, design model-based control strategies.
Mathematical models of dynamical systems typically consists of a set of first-order \acp{ODE}, each one describing the time evolution of one of the \emph{state variables} of the system.
Such equations can be either \emph{linear} or \emph{non-linear}. In the first case, they can be easily written in matrix form and analysed with the tools of linear algebra \cite[Section 4.3]{Khalil2002}. 
On the contrary, non-linear systems do not admit such simple analysis. While their local behaviour around a specific state can, often, be studied through linearization, a more general analysis requires different tools, such as Lyapunov analysis \cite[Section 4.1]{Khalil2002}.
Moreover, some systems, for example biological ones, have inherently stochastic behaviours. In such cases \acp{ODE} are replaced by \acp{SDE}, which can include randomness, usually in the form of additive white noise \cite{Elamvazhuthi2020}. 

Clearly, most \ac{LS-MAS} of interest involve non-linear dynamics and, of course, an extremely large and often variable, number of state variables, therefore ad-hoc modelling approaches are necessary.
Specifically, models of \ac{LS-MAS} can be divided into two main groups, which reflect the multi-scale nature of their dynamics:
\begin{itemize}
    \item \emph{Microscopic models}, also known as agent-based models, describe each agent (cell, robot or component) individually, and the topology of their interactions is represented by a (di-)graph \cite{Bullo2008}.    
    This formulation is, usually, intuitive and easy to build, moreover it allows the straightforward integration of a distributed control action into the agents’ dynamics. 
    Nevertheless, inferring the emerging properties of the ensemble and their robustness to changes in the network topology remains challenging. 
    
    \item \emph{Macroscopic models} describe the whole ensemble (colony, consortium, swarm) at aggregate level as a continuum. Therefore, individual agents are disregarded and instead the average behavior is studied, capturing at once the time and spatial evolutions by using \acp{PDE}.
    Specifically, the resulting equation usually takes the form of a Fokker-Planck equation, also known as forward Kolmogorov equation \cite{Roy2016}.
    These models are not always feasible, especially in the case of non trivial and heterogeneous interaction topologies, and require analysis tools different from those based on \acp{ODE}.
    On the other hand, they are particularly suited for stochastic systems and certain applications, such as density control. 
    For a deeper discussion on this approach see \cite{Elamvazhuthi2020}.
\end{itemize}

Given the different characteristics of these modeling approaches, tools to obtain one model from the other are necessary. 
Specifically, moving from a discrete to a macroscopic description, a process called \emph{countinuification} \cite{Maffettone2023LCSS}, requires specific tools such as the mean-field limit \cite{Elamvazhuthi2020}, \emph{graphons} \cite{Lovsz2012}, or the continuous conversion of Partial Difference Equations \cite{Nikitin2022}.
Moreover, this process, by approximating the individual states of the agents with a continuous distribution, implies some loss of information \cite{Lamarche-Perrin2013}.
Also the reverse process of \emph{discretization} comes with its own challenges, as it requires to infer the specific state of agents and their interaction topology. Indeed, this process is rarely considered, but has a crucial role in obtaining a distributed control law (\acp{ODE}), to be implemented on the agents, out of a continuous control law (\acp{PDE}) designed on a macroscopic model, as done, for example, in \cite{Maffettone2023LCSS}.

The development of the model itself represents another crucial aspect.
Indeed, bottom-up approaches, that consist in deriving the dynamics of the system starting form first principles, are hardly applicable to \ac{LS-MAS}, due to their complexity.
Instead, top-down approaches (e.g. data-driven modelling), leverage the observation of specific dynamics and data to capture the aspects of interest in the dynamics of the system \cite{Brunton2016}.
These second approaches, while still presenting numerous challenges, can be more suited to deal with \ac{LS-MAS}. An example is the modelling approach described in Chapter \ref{ch:modelling}.

Once a \ac{LS-MAS} model has been obtained, in one of the possible representations, it needs to be analyzed to infer the emerging properties of the system, such as the existence and the stability of equilibria, or more complex steady state behaviors as those discussed in Section \ref{sec:background_behaviours}.
We will also be interested in studying  robustness to variations of the number of agents, noise and disturbance. 
Due to the variety of emerging behaviors \ac{LS-MAS} can show, the analysis of such systems requires ad-hoc methods. 
Two fundamental tools in the study of such behaviors are Lyapunov’s and graph theory \cite{Mesbahi2010}. 
In particular, Lyapunov theory has been extended in multiple directions, for example vector
Lyapunov functions allow to study the robustness to agents removal \cite{Siljak1972}, while graph theory allows to model and analyze the network of interactions between the agents.
In Chapter \ref{ch:convergence} we use these tools to prove the stability of specific geometric patter formations.
Other useful tools are \emph{passivity} \cite[Chapter 6]{Khalil2002} and \emph{contraction} \cite{Bullo2022} theories, that have been extended to assess emerging properties of the ensemble from the study of the individual agents' dynamics \cite{Bai2011, Russo2009}.

\section{Control algorithms}
\label{sec:background_control}

\ac{LS-MAS} can be engineered to achieve desired goals and leverage cooperation to solve complex problems, such as surveillance  \cite{Lopes2021}, exploration \cite{Kegeleirs2021}, herding \cite{Auletta2022} or transportation \cite{Bayindir2016, Gardi2022}. 
Unfortunately the classic paradigm of control theory, does not apply as is to these systems.
Here, there is no longer a single process and a controller, that observes the output of the process and computes the appropriate inputs to steer the process towards the desired state.
Instead, to control \ac{LS-MAS}, one needs to close a feedback loop across multiple scales, understanding how to drive the individual agents (microscopic level), to obtain the desired effect at a macroscopic level (see Figure \ref{fig:classic_vs_LSMAS}).
For example, in density regulation problems, one might be able to measure the (macroscopic) spatial density of the swarm, while the control action would be exerted by controlling the movement of the single agents. 
Currently, despite the existence of many specific solutions and a vast body of research, no general method has been developed to translate macroscopic specifications into instructions for the individual agents to follow \cite{Heinrich2022}.

\begin{figure}[t]
    \centering
    \begin{subfigure}[t]{0.64\textwidth}
    \centering
    \includegraphics[trim=0 0 0 0, clip, width=1\columnwidth]{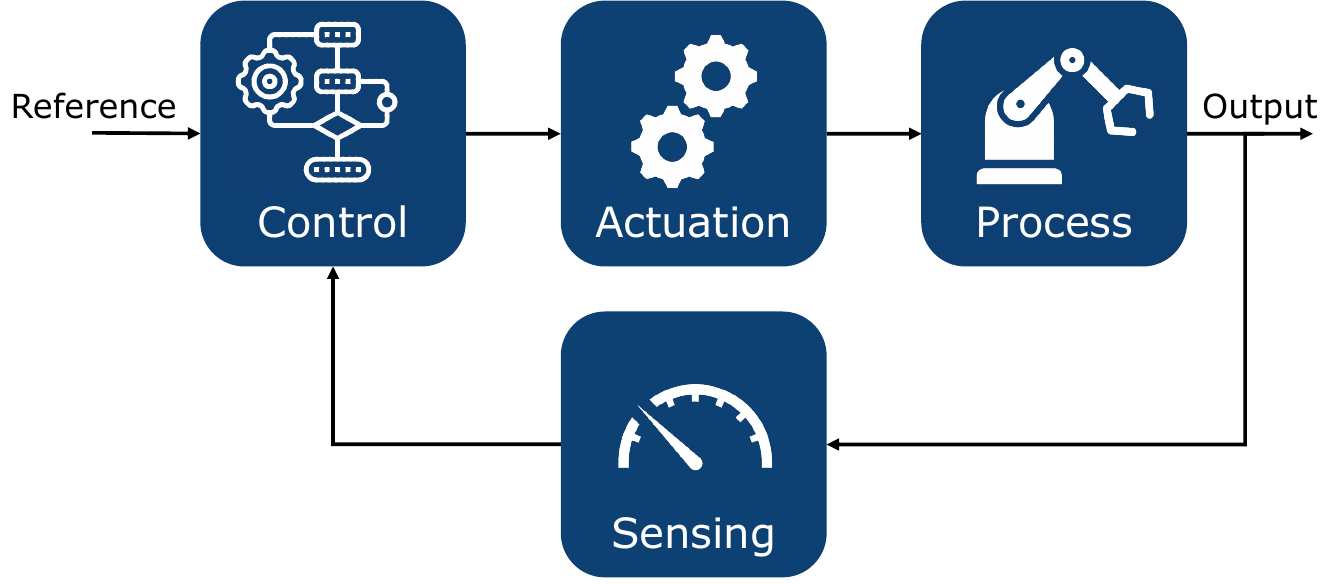}
    \caption{Classic control scheme}
    \end{subfigure}
    \hfill
    \begin{subfigure}[t]{0.33\textwidth}
    \includegraphics[trim=0 0 0 0, clip, width=1\columnwidth]{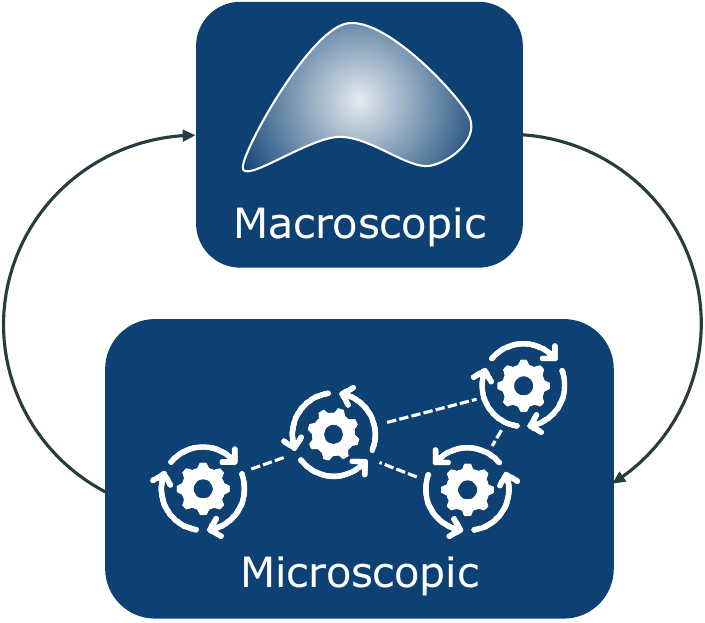}
    \caption{Multi-scale LS-MAS}
    \end{subfigure}
    \caption{
    Schematic comparison between classic systems and \ac{LS-MAS}.}
    \label{fig:classic_vs_LSMAS}
\end{figure} 

\subsection{Hierarchic classification of control algorithms}
To better frame the open challenges and to understand how to achieve control of \ac{LS-MAS}, it is useful to define a hierarchic structure of the control process \cite{Majid2022}. It is a standard approach in many engineering disciplines, according to the motto \textit{divide et impera}, but has not received the deserved attention in the context of \ac{LS-MAS}. 
Majid et al. in \cite{Majid2022}, divide the tasks commonly found in swarm robotics into two layers, \emph{low-level tasks}, such as aggregation or flocking, and \emph{high-level tasks}, including collective mapping or transportation, concerning the collective interaction of the swarm with the environment, but the connection between the two levels is fuzzy and not addressed directly.
A more refined hierarchy, focuses on the control of \acp{UAV} and encompasses five layers \cite{YZhou2020}.
The highest level, \emph{decision-making}, performs planning and task allocation, then the \emph{path planning} generates the path to reach the given goal.
The \emph{control} layer is in charge of the multi-agent coordination, including formation control and obstacle avoidance, and can rely on the services provided by the \emph{communication} layer.
Finally, the \emph{application} layer represents the lowest level and deals with the execution of the  application-specific task.
Here the analysis is narrowed down to the specific case of groups of \acp{UAV}, and decentralization is considered only for the lower levels.

Here, we propose a simple yet general hierarchical classification, distinguishing three control layers, each of which is in charge of tracking the reference generated by the upper level, as depicted in Figure \ref{fig:ControlHierarchy}. 
In general the higher the control level the greater the abstraction of the system and the slower the dynamics to be controlled, and therefore the execution frequency.

\begin{figure}[t]
    \centering
    \includegraphics[width=0.6\textwidth]{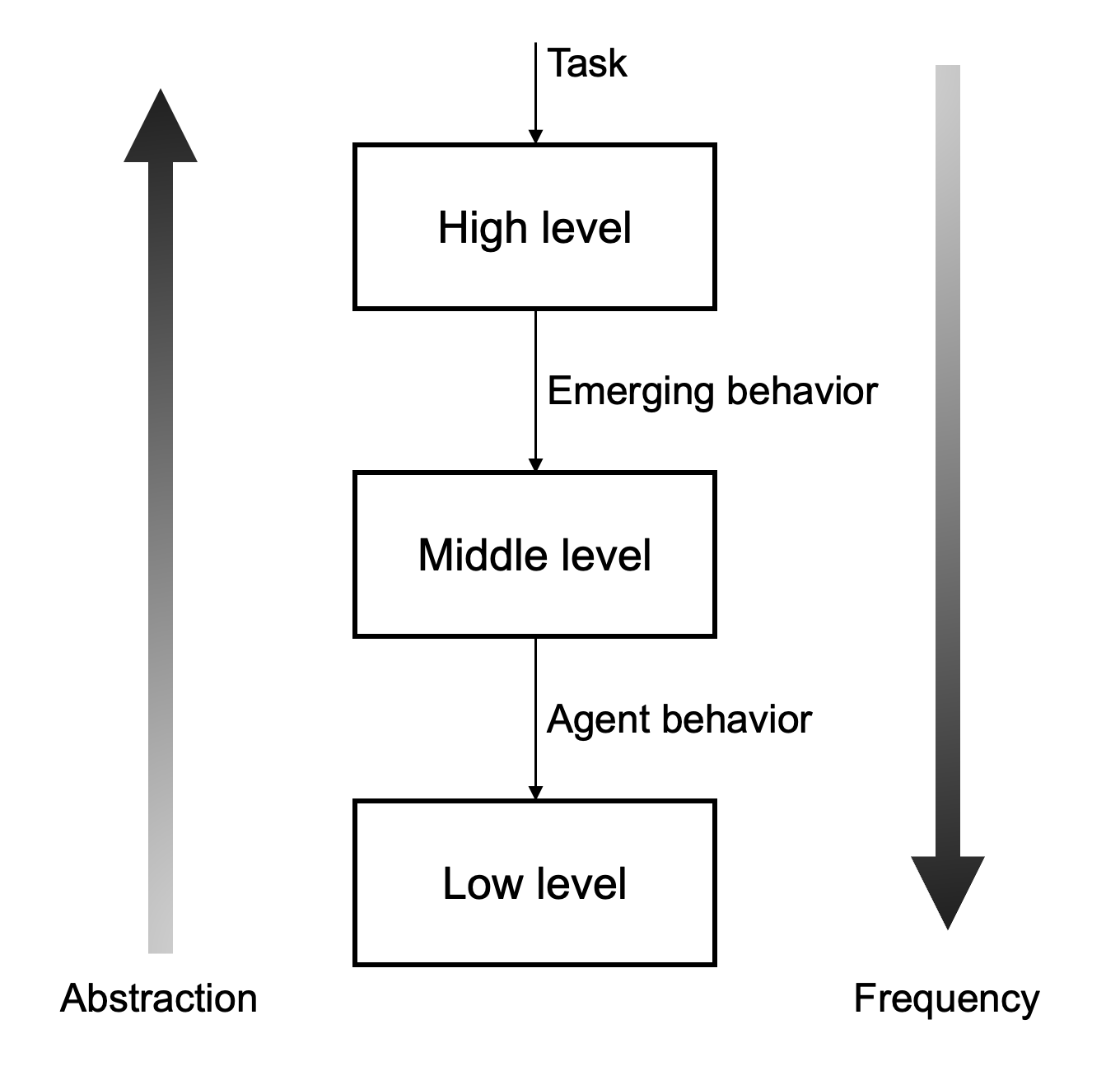}
    \caption{Hierarchy of control algorithms for \ac{LS-MAS}. The higher the layer the more abstract is the system representation and the lower the required execution frequency.}
    \label{fig:ControlHierarchy}
\end{figure}

\paragraph{High level control layer.}  
Starting from the high level (macroscopic) specification of the task to be solved (e.g. patrol a certain region), eventually provided by the human operator, high level control algorithms are in charge of defining the appropriate emerging behavior of the swarm that is required to solve the given task (e.g. flock towards north, then form a geometric pattern). 
The specific properties of the swarm, such as the number agents, and of the agents themselves are completely abstracted. 
The execution is most likely centralized and relatively slow, due to the slow evolution of the high level dynamics of interests.
Possibly, the high level control might also be executed off-line. 

\paragraph{Middle level control layer.}  
Given the desired emerging behavior of the swarm, computed by the high level control algorithm, the middle level algorithm computes the behaviour of each individual agent (e.g. agent $i$ moves north with a certain speed, while agent $j$ moves towards it).
Simple models of the agents' dynamics (e.g. first order integrator) and their interactions (e.g. a graph) are used, using any of the representations discussed in Section \ref{sec:background_modelling}. 
The execution might be either centralized or, preferably, distributed. 

\paragraph{Low level control layer.}  
Finally, the lower level implements the "servo" control of the single agent (e.g. turn the left wheel forward). 
The implementation is necessarily platform dependent and completely distributed, therefore an accurate model of the agent and its physical characteristics is necessary.
Most likely, an high execution frequency will be required, in order to provide the fast response necessary to the functioning of physical sensors and actuators.

\paragraph{}
The high and low levels are less relevant in this thesis, as they do not directly deal with the multi-agent aspect of the problem, and can, in most cases, be addressed with tools from other areas of engineering. 
Therefore, we will mostly focus on middle layer control strategies and particularly, on spatiotemporal control, aiming to regulate some quantity of interest, in both space and time, to obtain the desired emerging behaviour.

\subsection{Distributed and centralized implementations}
As said above, given the model of a \ac{LS-MAS} of interest and the desired emerging behavior, the middle level control is a feedback control algorithm that makes the target behavior emerge.
Such control law may be either centralized or distributed. In the first case a single entity observes the whole system and defines a control action that is imposed on the system. It can be implemented either by communicating the control action to each agent (\emph{centralized internal control}), or by applying some modification to the environment the system is evolving in (\emph{centralized external control}) \cite{Shannon2020}. 
On the other hand, a \emph{distributed control} law prescribes that each agent autonomously computes its own action.  Moreover, a distributed control algorithm should be local, that is, it only uses information that can be directly measured or estimated by the agents. 
Distributed and local control approaches intrinsically provide \emph{scalability} with the size of the system and \emph{robustness} to failure of agents \cite{Brambilla2013}. 
Indeed, most of the literature on the control of \ac{LS-MAS} focuses on these approaches, with applications to virtually any task.
This is also the approach used for pattern formation and discussed in Part \ref{part:geometric_pattern}.

Conversely, a centralized controller needs an infeasible increase of sensing and computation capabilities, as the number of agents increases, and represents a single point of failure.
One possible solution to achieve scalability in centralized architectures, is represented by macroscopic models (see Section \ref{sec:background_modelling}).
Indeed, such models provide a representation of the system that is independent of the number of agents and their individual state. 
A possible application of this approach is the spatial control of microorganisms, discussed in Part \ref{part:spatial_microorganisms}.

\section{Discussion}
This Chapter provided a brief overview of modelling and controlling emergent behaviours of \ac{LS-MAS}.
Firstly, the spatial behaviours were introduced as those influencing the steady-state distribution of the agents in the space. Among these, pattern formation and density regulation will be discussed in the rest of the thesis.
To study the properties of these behaviours, one needs to properly describe the dynamics of the system of interest. Hence, we discussed the microscopic and macroscopic modelling approaches, the respective features and the most relevant analysis tools.
Finally, we focused on the algorithms to control the emergent behaviours, and how these can deal with the multi-scale nature of \ac{LS-MAS}.
In particular, we discussed of a hierarchic control architecture, to address the different problems in the control of such systems, and the importance of decentralization, to improve scalability and robustness of the system.
The next Chapters will present our contributions on the control of geometric pattern formation (Chapters \ref{ch:patt_form_background} - \ref{ch:convergence} in Part \ref{part:geometric_pattern}), and the study of spatial behaviours in microorganisms (Chapters \ref{ch:miroorganisms_background} - \ref{ch:modelling} in Part \ref{part:spatial_microorganisms}).

\part{Geometric pattern formation}
\label{part:geometric_pattern}

\chapter{Introduction to geometric pattern formation}
\thispagestyle{empty} 
\label{ch:patt_form_background}

As introduced in Chapter \ref{ch:intro} spatial organisation is a crucial feature for both natural and artificial multi-agent systems. Here we will focus on \emph{geometric pattern formation} (see Section \ref{sec:background_behaviours}), where the agents self-organize their relative positions into some repeating geometric structure or {\em pattern}, e.g., arranging themselves on a lattice consisting of repeating adjacent triangles (see Figure \ref{fig:lattices_examples}).
\begin{figure}[t]
    \centering
    \begin{subfigure}[t]{0.25\textwidth}
    \centering
    \includegraphics[trim=4mm 4mm 4mm 4mm, clip, width=0.9\columnwidth]{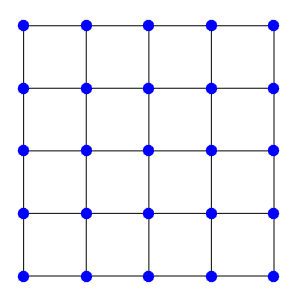}
    \caption{Square lattice}
    \end{subfigure}
    \begin{subfigure}[t]{0.25\textwidth}
    \includegraphics[trim=4mm 4mm 4mm 4mm, clip, width=1\columnwidth]{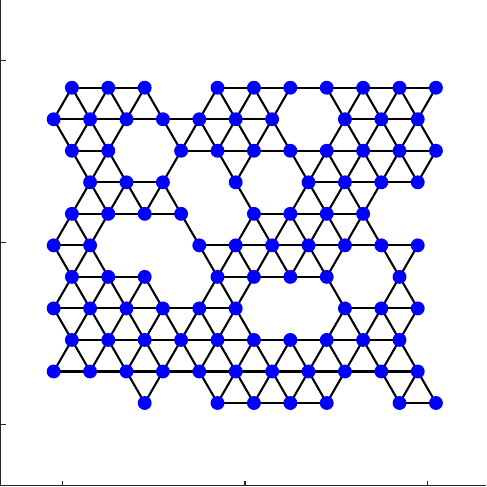}
    \caption{Triangular lattice}
    \label{subfig:lattice_example_triangular}
    \end{subfigure}
    \begin{subfigure}[t]{0.4\textwidth}
    \centering
    \includegraphics[trim=2mm 2mm 2mm 0mm, clip, width=0.9\columnwidth]{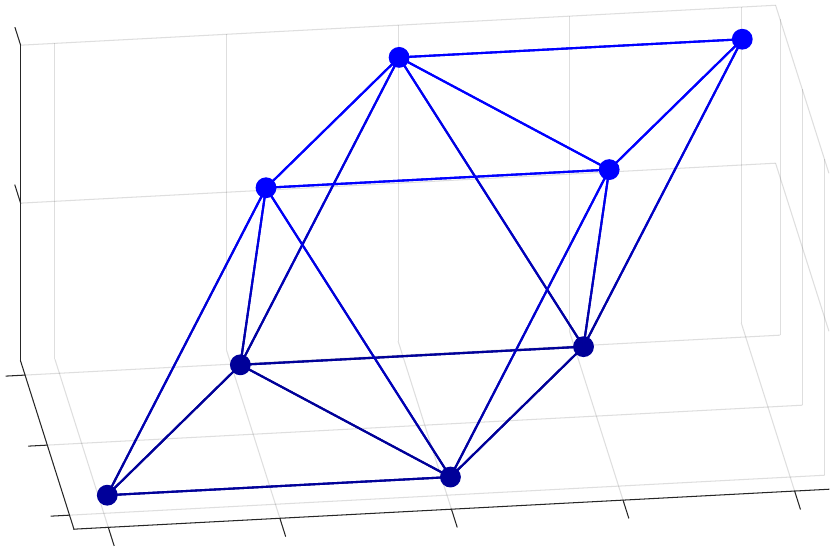}
    \caption{Tetrahedral-octahedral lattice}
    \end{subfigure}
    \caption{
    Examples of lattices in 2D (a-b) and 3D (c) spaces.}
    \label{fig:lattices_examples}
\end{figure} 
It is crucial in many tasks involving large-scale multi-agent systems and especially in \emph{swarm robotics} \cite{HOh2017}.
Examples include sensor networks deployment \cite{Zhao2019, Kim2014}, cooperative transportation and construction \cite{Rubenstein2013, Mooney2014, Gardi2022}, 2D or 3D exploration and mapping \cite{Kegeleirs2021} or area coverage \cite{HWang2020}.
Moreover, the formation of patterns is common in many biological systems where agents, such as cells or microorganisms, form organized geometric structures, e.g., \cite{Tan2022}.


To achieve pattern formation we have to overcome the two main difficulties commonly found in swarm robotics, as discussed in Section \ref{sec:background_control}.
Firstly, as there are no leader agents, the pattern must emerge by exploiting a control strategy that is the same for all agents, \emph{distributed} and \emph{local} (i.e., each agent can only use information about ``nearby'' agents).
Secondly, the number of agents is large and may change over time; therefore, the control strategy must also be  {\em scalable} to varying sizes of the swarm and \emph{robust} to uncertainties due to its possible variations.
This sets the problem of achieving pattern formation apart from the more classical \emph{formation control} problems \cite{Oh2015} where agents are typically fewer and have pre-assigned roles within the formation.
Moreover, geometric formations can also emerge as a by-product of \emph{flocking} algorithms as those described in \cite{Olfati-Saber2006, GWang2022}, but, in such cases the focus of the control strategy is to achieve coordinated motion, rather than desired regular formations to emerge.


The next pages will give an overview of the existing Literature on geometric pattern formation, with a particular attention to control strategies based on the use of virtual forces, and then introduce some useful mathematical tools.


\section{State of the art on geometric pattern formation}
\label{sec:literature_geom_patt_form}

Most of the existing distributed control algorithms for geometric pattern formation rely on the use of \emph{virtual forces} (or \emph{virtual potentials}) \cite{Spears2004, Casteigts2012, Zhao2019, Torquato2009, Mesbahi2010, Sakurama2021, Olfati-Saber2006}.
Within this framework, first introduced for obstacle avoidance \cite{Khatib1985}, agents move under the effect of virtual forces generated by the presence of their neighboring agents and the environment, causing attraction, repulsion, alignment, etc. 
In the most common case, where agents are homogeneous and exert radial forces depending on their relative distance, the resulting virtual force $\vec{u}_i$ acting on agent $i$ can be written as
\begin{equation}\label{eq:vf_control_law}
    \vec{u}_i(t) \coloneqq \sum_{j \in \C{I}_i(t)} f\left( \norm{\vec{x}_{i}(t)-\vec{x}_{j}(t)} \right)\, \frac{\vec{x}_{i}(t)-\vec{x}_{j}(t)}{\norm{\vec{x}_{i}(t)-\vec{x}_{j}(t)}},
\end{equation}
where $\C{I}_i(t)$ is the set of agents $i$ can interact with (usually depending on the relative distance), $\vec{x}_{i}$ represent the position of the agent and $f:\BB{R}_{\geq0}\to \BB{R}$ is the \emph{virtual interaction function}, describing the influence of interacting agents.
The most common interaction embeds short range repulsion, to avoid collisions, and long range attraction, to keep the swarm cohesive. This simple control law can then be modified or extended to obtain different behaviours (e.g. alignment, pattern formation, flocking, etc.).
The main advantage of this approach is that it is inherently local and distributed, two key features of swarm robotics algorithms.

To classify existing solutions to pattern formation, we employ the same taxonomy used in \cite{Oh2015}, and later extended in \cite{Sakurama2021}, which is based on the type of information available to the agents.
Namely, existing strategies can be classified as being (i) \emph{position-based} when it is assumed agents know their position and orientation and those of their neighbours, in a global reference frame;
(ii) \emph{displacement-based} when agents can only sense their own orientation with respect to a global reference direction (e.g., North) and the relative positions of their neighbours;
(iii) \emph{distance-based} when agents can measure the relative positions of their neighbours with respect to their local reference frame. 
In terms of sensor requirements, position-based solutions are the most demanding, requiring global positioning sensors, typically GPS, and communication devices, such as WiFi or LoRa. 
Differently, displacement-based methods require only a distance sensor (e.g., LiDAR) and a compass, although the latter can be replaced by a coordinated initialisation procedure of all local reference frames \cite{Cortes2009}.
Finally, distance-based algorithms are the least demanding, needing only the availability of some distance sensors.

\subsection{Position-based approaches}
In \cite{Pinciroli2008}, a position-based algorithm was proposed to achieve 2D triangular lattices in a constellation of satellites in a 3D space.
This strategy combines global attraction towards a reference point with local interaction among the agents to control both the global shape and the internal lattice structure of the swarm.
In \cite{Casteigts2012}, a position-based approach was presented that combines the common radial virtual force (also used in \cite{Spears2004,Hettiarachchi2005,Torquato2009}) with a normal force. 
In this way, a network of connections is built such that each agent has at least two neighbours. 
Importantly, this approach requires the acquisition of positions from two-hop neighbours.
In \cite{Zhao2019}, a position-based strategy is presented to achieve triangular and square patterns, as well as lines and circles, both in 2D and 3D; the control strategy features global attraction towards a reference point and re-scaling of distances between neighbours, with the virtual forces changing according to the goal pattern.
Therein, a qualitative comparison is also provided with the distance-based strategy from \cite{Spears2004}, showing more precise configurations and a shorter convergence time, due to the position-based nature of the solution. 
Finally, a simple position-based algorithm for triangular patterns, based on virtual forces and requiring communication between the agents, is proposed in \cite{Trotta2018} to have unmanned aerial vehicles perform area coverage. 

\subsection{Displacement-based approaches}

In \cite{Li2009}, a displacement-based approach is presented based on the use of a geometric control law similar to the one proposed in \cite{Lee2008}. The aim is to obtain triangular lattices but small persisting oscillations of the agents are present at steady state, as the robots are assumed to have a constant non-zero speed.
In \cite{Balch2000a, Balch2000}, an approach is discussed inspired by covalent bonds in crystals, where each agent has multiple attachment points for its neighbours. 
Only starting conditions close to the desired pattern are tested, as the focus is on navigation in environments with obstacles.
In \cite{Song2014} the desired lattice is encoded by a graph, where the vertices denote possible \emph{roles} the agents may play in the lattice and edges denote rigid transformations between the local frames or reference of pairs of neighbours.
All agents communicate with each other and are assigned a label (or identification number) through which they are organised hierarchically to form triangular, square, hexagonal or octagon-square patterns.
Formation control is similarly addressed in \cite{Coppola2019}. 
The algorithm proposed therein is made of a higher level
policy to assign positions in a square lattice to the agents,
and a lower level control, based on virtual forces, to have the agents reach these positions.
The algorithm can be readily applied to the formation of square geometric patterns, but not to triangular ones. Notably, the reported convergence time is relatively long and increases with the number of agents.
%
Finally, a solution to progressively deploy a swarm on a predetermined set of points is presented in \cite{Li2019}.
The algorithm can be used to perform both formation control and geometric lattice formation, even though the orientation of the formation cannot be controlled. 
Moreover, this strategy requires local communication between the agents and the knowledge of a common graph associated to the formation.

\subsection{Distance-based approaches}
A popular distance-based approach for the formation of triangular and square lattices, based on \emph{gravitational} virtual forces, was proposed in \cite{Spears1999} and later further investigated in \cite{Spears2004, Hettiarachchi2005}. 
In these studies, 
triangular lattices are achieved with long-range attraction and short-range repulsion virtual forces only, while square lattices are obtained through a selective rescaling of the distances between some of the agents (see Section \ref{subsec:comparison}
for further details).
The main drawback of the gravitational strategy \cite{Spears1999, Spears2004, Hettiarachchi2005, Sailesh2014} is that it can produce the formation of multiple aggregations of agents, each respecting the desired pattern, but with different orientations.
Another problem, described in \cite{Spears2004}, is that, for some values of the parameters, multiple agents can converge towards the same position and collide.

Similar approaches are also used to obtain triangular lattices in flocking algorithms \cite{Olfati-Saber2006, XWang2022, GWang2022}, but such solutions mostly focus on the coordination of the motion.
An extension of the gravitational strategy to achieve the formation of hexagonal lattices was proposed in \cite{Sailesh2014}, but with the requirement of an ad-hoc correction procedure to prevent agents from remaining stuck in the centre of a hexagon. 
%
In \cite{Torquato2009}, an approach exploiting Lennard-Jones-like virtual forces is numerically optimised to locally stabilise a hexagonal lattice.
This control law relies on stability conditions provided by \emph{harmonic approximation} \cite{Hinsen2005}, but such conditions are only necessary, and not sufficient to prove the stability of the lattice configuration.
Moreover, this approach requires time-varying control gains and synchronous clocks among the agents.
%
%
A different strategy, based on geometric arguments, was proposed in \cite{Lee2008}. It allows to build triangular lattices and, notably, its convergence was proved exploiting the Lyapunov method.
A 3D extension was later presented in \cite{Lee2010}.

\subsection{Open problems}
From this analysis emerges that many existing strategies are limited to 2D domains and can only achieve triangular patterns (see Figure \ref{subfig:lattice_example_triangular}) \cite{Pinciroli2008,Trotta2018}.
More flexible solutions either show good performance but require expensive sensors and communication devices \cite{Zhao2019,Song2014,Li2019}, or have lesser  requirements but result in more poor performance \cite{Spears2004,Balch2000,Coppola2019}.
Moreover, validation is mainly achieved numerically or experimentally \cite{Spears2004, Balch2000, Balch2000a, Li2009, Casteigts2012, Song2014, Zhao2019}, with only few strategies supported by formal proofs of convergence.
Therefore, two pressing open challenges in geometric pattern formation are (i) the design of local and distributed control strategies that can combine low sensor requirements with flexibility and consistently high performance, and (ii) the formal proof of convergence for such algorithms.

We tackled both this challenges. Chapter \ref{ch:dist_cont} will present the distributed control algorithm we developed to address the formation of both triangular and square lattices, while our formal study on the stability of specific lattice configurations is presented in Chapter \ref{ch:convergence}. 

\section{Mathematical preliminaries}
\label{sec:math_preliminaries}

Here we introduce some notation and mathematical tools that will be useful in Chapters \ref{ch:dist_cont} and \ref{ch:convergence}.

\paragraph*{Notation} 
Given a vector $\vec{v} \in \BB{R}^d$, $[\vec{v}]_i$ is its $i$-th element, $\norm{\vec{v}}$ its Euclidean norm, and $\unitvec{v}\coloneqq\frac{\vec{v}}{\norm{\vec{v}}}$ its direction. 
$\vec{0}$ denotes a column vector of appropriate dimension with all elements equal to 0.
Given a matrix $\vec{A}$, $[\vec{A}]_{ij}$ is its $(i, j)$-th element. 
Given a set $\C{B}$, its cardinality is denoted by $\abs{\C{B}}$. Finally, we refer to $\BB{R}^2$ as the \emph{plane}.

\subsection{Dynamical systems}
We start by recalling the concepts of equilibrium set and stability.
Given a continuous-time, autonomous dynamical system
\begin{equation}\label{eq:dynamical_system}
    \dot{\vec{x}}(t) = \vec{f}(\vec{x}(t)), \quad \vec{x}(0) = \vec{x}_0,
\end{equation}
 with state vector $\vec{x}(t) \in \BB{R}^d$, and $\vec{x}_0 \in \BB{R}^d$, we term as $\vec{\phi}(t, \vec{x}_0)$ its trajectory starting from $\vec{x}(0) = \vec{x}_0$.

\begin{definition}[Equilibrium set]
\label{def:equilibrium_set}
A set $\Xi  \subset \BB{R}^d$ is an \emph{equilibrium set} for system \eqref{eq:dynamical_system} if $ \vec{f}(\vec{x}) = \vec{0} \ \forall \vec{x} \in \Xi$.
\end{definition}

\begin{definition}[Local asymptotic stability {\cite[Definition~1.8]{Kuznetsov2004}}]
\label{def:LAS}
An equilibrium set $\Xi$ for system \eqref{eq:dynamical_system} is \emph{locally asymptotically stable} if $\forall \epsilon>0, \exists \delta > 0$ such that if $\min_{\vec{y} \in \Xi}\norm{\vec{x}_0 - \vec{y}} < \delta $, then
\begin{enumerate}
    \item 
    $\min_{\vec{y} \in \Xi}\norm{\vec{\phi}(t, \vec{x}_0)-\vec{y}} < \epsilon,  \ \forall t>0$, and
    \item
    $ \lim_{t \rightarrow +\infty} \vec{\phi}(t, \vec{x}_0) \in \Xi$.
\end{enumerate}
\end{definition}

\subsection{Frameworks and rigidity}
Next let us introduce the concept of \emph{framework}, useful to add a spatial dimension to the mathematical concept of \emph{graph}.
\begin{definition}[Incidence matrix]
\label{def:incidence_mat}
Given a digraph with $N$ vertices and $m$ edges, its \emph{incidence matrix} $\vec{B}\in \mathbb{R}^{N\times m}$ has elements defined as
\begin{equation*}
    [\vec{B}]_{ij} \coloneqq \begin{dcases}
    + 1, &\text{if edge $j$ starts from vertex $i$},\\
    - 1, &\text{if edge $j$ ends in vertex $i$},\\
    0,   &\text{otherwise}.
    \end{dcases}
\end{equation*}
\end{definition}

\begin{definition}[Framework {\cite[p.~120]{Mesbahi2010}}]\label{def:framework}
Consider a (di\mbox{-)}graph $\C{G}=(\C{V}, \C{E})$ with $N$ vertices, and a set of positions $\vec{p}_1, \dots, \vec{p}_n \in \mathbb{R}^d$ associated to its vertices, with $\vec{p}_i \neq \vec{p}_j \ \forall i,j \in \{1, \dots, N\}$.
A \emph{$d$-dimensional framework} is the pair $(\C{G},\bar{\vec{p}})$, where $\bar{\vec{p}} \coloneqq [\vec{p}_1\T \ \cdots \ \vec{p}_n\T]\T \in \mathbb{R}^{dN}$.
Moreover, the \emph{length} of an edge, say $(i,j)\in \C{E}$, is $\norm{\vec{p}_i - \vec{p}_j}$.
\end{definition}

\begin{definition}[Congruent frameworks {\cite[p. 3]{Jackson2007}}]\label{def:congurent_framework}
Given a graph $\C{G}=(\C{V},\C{E})$ and two frameworks $(\C{G},\bar{\vec{p}})$ and $(\C{G},\bar{\vec{q}})$, these are \emph{congruent} if $\norm{\vec{p}_i - \vec{p}_j}=\norm{\vec{q}_i - \vec{q}_j}\ \forall i,j \in \C{V}$.
\end{definition}
%


\begin{definition}[Rigidity matrix {\cite[p.\,5]{Jackson2007}}]
\label{def:rigidity_matrix}
Given a $d$-dimensional framework with $N\geq2$ vertices and $m$ edges, its \emph{rigidity matrix} $\vec{M}\in \mathbb{R}^{m\times dN}$ has elements defined as
\begin{equation*}
    [\vec{M}]_{e,(jd-d+k)}\coloneqq\begin{cases}
    [\vec{p}_j-\vec{p}_i]_k, & \parbox[t]{4cm}{ if edge $e$ goes from vertex $i$ to vertex $j$,}\\
    [\vec{p}_i-\vec{p}_j]_k, & \parbox[t]{4cm}{ if edge $e$ goes from vertex $j$ to vertex $i$,}\\
    0, &\mbox{otherwise.}
    \end{cases}
\end{equation*}
with $k \in \{1, \dots, d\}$.
\end{definition}

\begin{definition}[Infinitesimal rigidity {\cite[p.~122]{Mesbahi2010}}]
\label{def:inf_rigidity}
A framework with rigidity matrix $\vec{M}$ is \emph{infinitesimally rigid} if, for any infinitesimal motion, say $\vec{u}$,%
\footnote{$\vec{u}$ can be interpreted as either a velocity or a small displacement.}
of its vertices, such that the length of the edges is preserved, it holds that $\vec{M u} = \vec{0}$.
\end{definition}

To give a geometrical intuition of the concept of infinitesimal rigidity, we note that an infinitesimally rigid framework is also rigid \cite[p.~122]{Mesbahi2010}, according to the definition below.%
\footnote{Rarely, a rigid framework is not infinitesimally rigid; e.g. \cite[p.~7]{Jackson2007}.}

\begin{definition}[Rigidity {\cite[p. 3]{Jackson2007}}]
\label{def:rigidity}
A framework is \emph{rigid} if every continuous motion of the vertices, that preserves the length of the edges, also preserves the distances between all pairs of vertices.
\end{definition}
Consequently, in a rigid framework, a continuous motion that does \emph{not} preserve the distance between any two vertices also does \emph{not} preserve the length of at least one edge.

To more easily assess the infinitesimal rigidity of a framework, it is possible to use the following result.

\begin{theorem}[{\cite[Theorem 2.2]{Hendrickson1992}}]
\label{th:rigidity}
A $d$-dimensional framework with $N\geq d$ vertices and rigidity matrix $\vec{M}$ is infinitesimally rigid if and only if $\R{rank}(\vec{M})=dN-d(d+1)/2$.
\end{theorem}


\subsection{Swarms of mobile agents}
\label{subsec:swarm_definition}
Now, let us formally define a \emph{swarm} as a set of $N \in \mathbb{N}_{>0}$ identical agents, say $\C{S} \coloneqq \{1,2,\dots,N\}$, that can move in $\mathbb{R}^d$ and interact with their neighbors to generate emergent behavior \cite{Brambilla2013}.
For each agent $i \in \C{S}$, $\vec{x}_i(t)\in \mathbb{R}^d$ denotes its position at time $t \in \mathbb{R}_{\geq0}$.
Moreover, we call $\bar{\vec{x}}(t) \coloneqq [\vec{x}_1\T(t) \ \cdots \ \vec{x}_n\T(t)]\T \in \mathbb{R}^{dN}$ the \emph{configuration} of the swarm, define $\vec{x}_{\R{c}}(t) \coloneqq \frac{1}{N} \sum_{i = 1}^N \vec{x}_i(t) \, \in \mathbb{R}^d$ as its \emph{center}, and denote by $\vec{r}_{ij}(t) \coloneqq \vec{x}_{i}(t)-\vec{x}_{j}(t) \in \mathbb{R}^d$ the relative position of agent $i$ with respect to agent $j$.
Finally, for planar swarms ($d=2$) we define $\theta_{ij}(t) \in [0, 2\pi]$ as the angle between $\vec{r}_{ij}$ and the horizontal axis, and given any two pairs of agents, $(i,j)$ and $(h,k)$, we denote with $\theta_{ij}^{hk}(t) \in [0, 2\pi]$ the absolute value of the angle between the vectors $\vec{r}_{ij}$ and $\vec{r}_{hk}$.



\begin{definition}[Interaction set]
\label{def:interaction_set}
Given a swarm $\C{S}$ and a \emph{sensing radius} $R_{\text{s}} \in \mathbb{R}_{>0}$, the \emph{interaction set} of agent $i$ at time $t$ is 
$
    \mathcal{I}_i(t) \coloneqq \{ j \in \mathcal{S} \setminus \{i\} : \Vert \vec{r}_{ij}(t)\Vert \leq R_{\text{s}} \}.
$
\end{definition}

\begin{definition}[Adjacency set] 
\label{def:adjacency_set}
Given a swarm $\C{S}$ and some finite $R_{\min},\, R_{\max} \in \mathbb{R}_{>0}$, with $R_{\min}\leq R_{\max}$, the \emph{adjacency set} of agent $i$ at time $t$ is (see Figure~\ref{fig:lattices})
\begin{align}
    \mathcal{A}_i(t) &\coloneqq \{j \in \mathcal{S} \setminus \{i\} : R_{\min} \leq \Vert \vec{r}_{ij}(t)\Vert \leq R_{\max} \}.
\end{align}
\end{definition}

\begin{definition}[Links]
\label{def:links}
A \emph{link} is a pair $(i,j) \in \C{S} \times \C{S}$ such that $j \in \C{A}_i(t)$, and $\norm{\vec{r}_{ij}(t)}$ is its \emph{length}.
The set of all links existing in a certain configuration $\bar{\vec{x}}$ is denoted by $\C{E}(\bar{\vec{x}})$.
\end{definition}
We will say that two agents are connected if and only if $(i,j) \in \C{E}(\bar{\vec{x}})$, that also implies $(j,i) \in \C{E}(\bar{\vec{x}})$.
Moreover, notice that if $R_{\max} \leq R_{\text{s}}$ then $\mathcal{A}_i \subseteq \mathcal{I}_i$.


\begin{definition}[Swarm graph and framework]
\label{def:swram_graph}
The \emph{swarm graph} is the digraph $\C{G}(\bar{\vec{x}})\coloneqq(\C{S},\C{E}(\bar{\vec{x}}))$, whose vertices correspond to the agents in the swarm and whose edges correspond to the links %
\footnote{Formally, $\mathcal{G}(\bar{\vec{x}})$ is a directed graph, even though $\mathcal{E}(\bar{\vec{x}})$ is such that the existence of link $(i, j)$ implies the existence of link $(j, i)$.}.
The \emph{swarm framework} is $\C{F}(\bar{\vec{x}})\coloneqq(\C{G}(\bar{\vec{x}}), \bar{\vec{x}})$.
\end{definition}

\begin{definition}[Congruent configurations]\label{def:congruent_conf}
Given a configuration $\bar{\vec{x}}^\diamond$, we define \emph{the set of its congruent configurations} $\Gamma(\bar{\vec{x}}^\diamond)$ as the set of configurations with congruent associated frameworks (see Definition~\ref{def:congurent_framework}), that is
    $\Gamma(\bar{\vec{x}}^\diamond) \coloneqq \{ \bar{\vec{x}} \in \BB{R}^{dN} : \norm{\vec{x}_{i}-\vec{x}_{j}} = \norm{\vec{x}_{i}^\diamond-\vec{x}_{j}^\diamond}, \forall i,j \in \C{S} \}$.
\end{definition}
These configurations are obtained by translations and rotations of the framework $\C{F}(\bar{\vec{x}}^\diamond)$; thus, it is immediate to verify that $\Gamma(\bar{\vec{x}}^\diamond)$ is connected and unbounded for any $\bar{\vec{x}}^\diamond$.

\section{Discussion}
In this Chapter we introduced the problem of geometric pattern formation, discussing the possible applications and the main challenges.
We reviewed the existing solutions, classified according to the sensory requirements, and the open challenges, mainly the design of flexible control algorithms and their analytical validation.
Finally, we introduced some notation and the mathematical tools, used in Chapters \ref{ch:dist_cont} and \ref{ch:convergence} to address, respectively, the problems of designing and analysing solutions for the formation of geometric patterns.

\chapter{Distributed control for geometric pattern formation}
\thispagestyle{empty} 
\label{ch:dist_cont}

This Chapter follows the contents and the structure of our work \cite{Giusti2023FRAI}, where we introduced a distributed displacement-based control law that allows large groups of agents to achieve triangular and square lattices in 2D, with low sensor requirements and without needing communication between the agents. 
Also, a simple, yet powerful, adaptation law is proposed to automatically tune the control gains in order to reduce the design effort, while improving robustness and flexibility.
We show the validity and robustness of our approach via numerical simulations and experiments, comparing it, where possible, with other approaches from the existing literature.

\section{Control goal and performance metrics}
\label{sec:performance_metrics}

To describe the planar lattices studied in this Chapter let us introduce the following definition.

\begin{definition}[$(L, R)$-Lattice]
\label{def:lattice}
Given some $L \in \{4, 6\}$ and $R \in \mathbb{R}_{>0}$, a \emph{$(L, R)$-lattice} is a set of points in the plane that coincide with the vertices of an associated \emph{regular tiling} \cite{Engel2004};
$R$ is the distance between adjacent vertices and $L$ is the number of adjacent vertices each point has.
\end{definition}
In Definition \ref{def:lattice}, $L=4$, and $L=6$ correspond to square and triangular lattices,%
\footnote{\label{foot:tilings}
Regular tilings exist only for $L \in \{3, 4, 6\}$.
The case $L=2$ corresponds more trivially to a line, rather than a planar structure.
The case $L = 3$, corresponding to hexagonal tilings, where vertices appear in two different spatial configurations (one with edges at angles $\pi/2$, $7/6\,\pi$, $11/6\,\pi$, and one with edges at angles $\pi/6$, $5/6\,\pi$, $3/2\,\pi$---plus an optional offset).
Hence, the control strategy we propose here would need to be extended to select one or the other of the possible hexagonal configurations.
}
respectively, as portrayed in Figure~\ref{fig:lattices}. 
\begin{figure}
    \centering
    \begin{subfigure}[t]{0.3\textwidth}
        \includegraphics[width=1\textwidth]{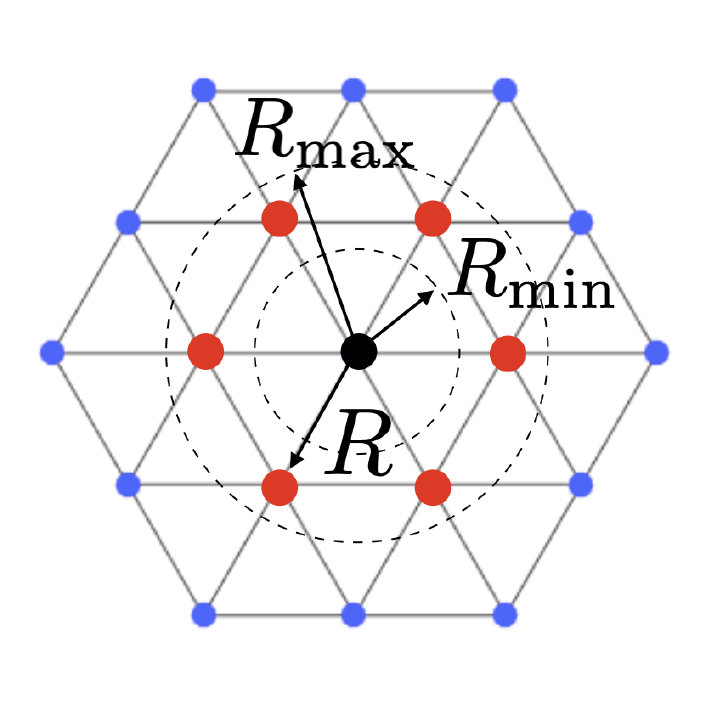}
        \caption{$L=6$}
    \end{subfigure}
    \begin{subfigure}[t]{0.28\textwidth}
        \includegraphics[width=1\textwidth]{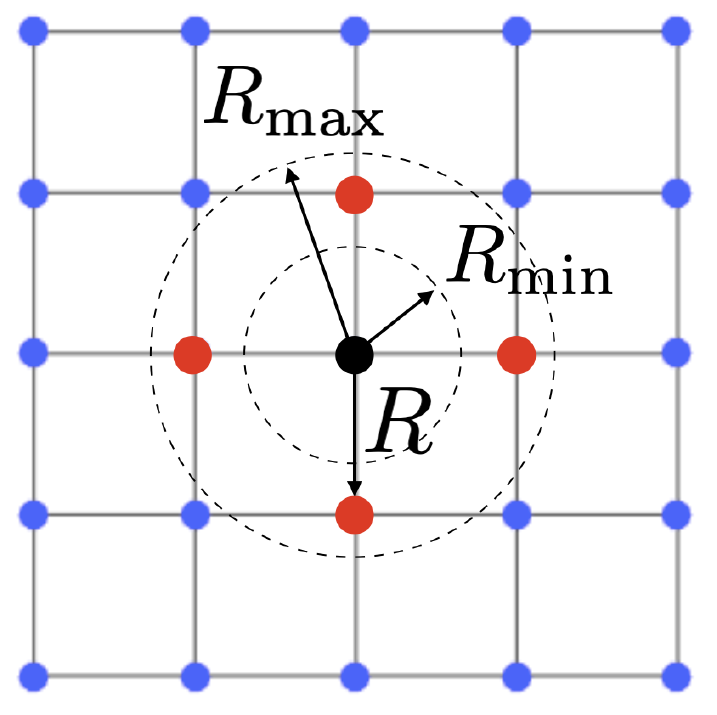}
        \caption{$L=4$}
    \end{subfigure}
    \caption{$(L,R)$-lattice formations: triangular ($L=6$) (a) and square ($L=4$) (b).
    Red dots are agents in the adjacency set ($\mathcal{A}_i$) of the generic agent $i$ depicted as a black dot.
    }
    \label{fig:lattices}
\end{figure}
Let us recall the concept of \emph{swarm} as introduced in Section \ref{subsec:swarm_definition}, we will say it \emph{self-organises into a $(L,R)$-lattice} if (i) each agent has at most $L$ links (see Definition \ref{def:links}), and (ii) given any two links $(i,j)$ and $(h,k)$ in $\mathcal{E}$ it holds that $\theta_{ij}^{hk}$ (i.e. the angle between the two as defined in Section \ref{subsec:swarm_definition}) is some multiple of $2\pi/L$.
Therefore, to assess whether a swarm self-organises into some desired $(L,R)$-lattice, we introduce the following two metrics.

\begin{definition}[Regularity metric]
\label{def:regularity}
Given a swarm and a desired $(L,R)$-lattice, the \emph{regularity metric} $e_{\theta}(t) \in [0,1]$ is
\begin{align}
    e_{\theta}(t) \coloneqq \frac{L}{\pi}\cdot \theta_{\mathrm{err}}(t),
    \label{eq:regMet}
\end{align}
where, omitting the dependence on time,
\begin{equation}
    \theta_{\mathrm{err}} \coloneqq\frac{1}{\vert \mathcal{E}\vert^2-2\vert \mathcal{E}\vert} \mathlarger{\sum}_{(i,j)\in \mathcal{E}} \ \mathlarger{\sum}_{(h,k) \in \mathcal{E}} \min_{q\in \mathbb{Z}} \left\vert\theta_{ij}^{hk} -q\frac{2\pi}{L}\right\vert .
    \label{eq::SpearsMetrics}
\end{equation}
\end{definition}

The regularity metric $e_{\theta}$, derived from \cite{Spears2004}, quantifies the incoherence in the orientation of the links in the swarm.
In particular, $e_{\theta}=0$ when all the pairs of links form angles that are multiples of $2\pi/L$ (which is desirable to achieve the $(L, R)$-lattice), while $e_{\theta}=1$ when all pairs of links have the maximum possible orientation error, equal to $\pi/L$. ($e_{\theta} \approx 0.5$ generally corresponds to the agents being arranged randomly.)

\begin{definition}[Compactness metric]
\label{def:compactness}
Given a swarm of $N$ agents and a desired $(L,R)$-lattice, the \emph{compactness metric} $e_L(t) \in [0,(N-1-L)/L]$ 
is
\begin{align}
    e_L(t)&\coloneqq\frac{1}{N} \sum_{i=1}^N \frac{\big\lvert \vert \mathcal{A}_i (t)\vert - L \big\rvert}{L},
    \label{eq::compactness_metric}
\end{align}
where $\mathcal{A}_i (t)$ is the adjacency set of agent $i$, as given in Definition \ref{def:adjacency_set}.
\end{definition}
The compactness metric $e_L$ measures the average difference between the number of neighbours each agent has and the one they are ought to have if they were arranged in a $(L,R)$-lattice.
According to this definition, $e_L$ reaches its maximum value, $e_{L,\max} = (N-1-L)/L$, when all agents are concentrated in a small region, and links exist between all pairs of agents, while
$e_L=1$ when all the agents are scattered loosely in the plane, and no links exist between them, finally,
 $e_L = 0$ when all the agents have $L$ links (typically we will require that $e_L$ is below some acceptable threshold, see Section \ref{subsubsec:performance_eval}).
It is important to remark that, if the number $N$ of agents is finite, $e_L$ can never be equal to zero, because the agents on the boundary of the group will always have less than $L$ links (see Figure~\ref{fig:lattices}), but this effect gets less relevant as $N$ increases.
Note that a similar metric was also independently defined in \cite{Song2014}. 
We remark that the compactness metric inherently penalizes the presence of holes in the configuration and the emergence of detached swarms, as those scenarios are characterized by larger boundaries.

For the sake of brevity, in what follows we will omit dependence on time when that is clear from the context.

\section{Design of the control law}

\subsection{Problem formulation}
\label{sec:ProblemStatement}
Consider a planar swarm $\mathcal{S}$  whose agents' dynamics is described by the first order model
\begin{align}
    \dot{\vec{x}}_i(t) = \vec{u}_i(t), \ \ \forall \ i\in\mathcal{S},
    \label{eq:firstOrdDynamics}
\end{align}
where $\vec{x}_i(t)\in \mathbb{R}^2$ represent the position of agent $i$, as described in Section \ref{subsec:swarm_definition} and depicted in Figure \ref{fig:two_agents_positions}, and $\vec{u}_i(t) \in \mathbb{R}^2$ is the input signal determining its velocity.

\begin{figure}
    \begin{center}
    \includegraphics[width=0.4\columnwidth]{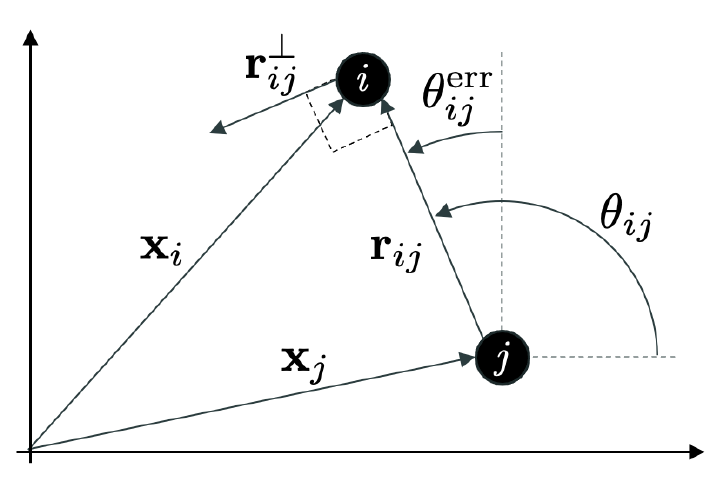}
    \end{center}
    \caption{Schematic diagram of two agents, $i$ and $j$, showing the key variables describing the agents' position and their geometrical relationship.}
    \label{fig:two_agents_positions}
\end{figure}

\begin{remark}
\label{rem:second_order_to_first}
First order models like \eqref{eq:firstOrdDynamics} are often used in the literature \cite{Lee2008, Lee2010, Casteigts2012, Zhao2019}.
In some other works \cite{Spears2004, Sailesh2014} a second order model is used, given by $m \ddot{\vec{x}}_i + \mu \dot{\vec{x}}_i = \vec{u}_i$, where $\vec{u}_i$ is a force, $m$ is a mass and $\mu$ is a viscous friction coefficient.
Under the simplifying assumptions of small inertia ($m\Vert \dot{\vec{v}}_i \Vert \ll \mu \Vert \vec{v}_i \Vert $) and $\mu = 1$, the two models coincide.
\end{remark}

We want to design a \emph{distributed} feedback control law  $\vec{u}_i = \vec{g}(\{\vec{r}_{ij}\}_{j\in \C{I}_i}, L, R)$
to let the swarm self-organise into a desired triangular or square lattice, starting from any set of initial positions in some disk of radius $r$, while guaranteeing the control strategy to be: 
\begin{enumerate}
    \item \emph{robust} to failures of agents and to noise;
    \item \emph{flexible}, allowing dynamic reorganisation of the agents into different patterns; 
    \item \emph{scalable}, allowing the number of agents $N$ to change dynamically.
\end{enumerate}

We will assess the effectiveness of the proposed strategy by using the performance metrics $e_{\theta}$ and $e_L$ introduced above (see Definitions \ref{def:regularity} and \ref{def:compactness}).

\subsection{Distributed control law}
\label{sec:controlDesign}
To solve this problem we propose a distributed displacement-based control law of the form
\begin{align}\label{eq:controlLaw}
    \vec{u}_i(t) &= \vec{u}_{\text{r},i}(t)+ \vec{u}_{\text{n},i}(t),
\end{align}
where $\vec{u}_{\text{r},i}$ and $\vec{u}_{\text{n},i}$ are the \emph{radial} and \emph{normal} control inputs, respectively.
The two inputs have different purposes and each comprises  several \emph{virtual forces}.
The radial input $\vec{u}_{\text{r},i}$ is the sum of attracting/repelling actions between the agents, with the purpose of aggregating them into a compact swarm, while avoiding collisions and keeping the desired inter-agent distance.
The normal input $\vec{u}_{\text{n},i}$ is also the sum of multiple actions, used to adjust the angles of the relative positions of the agents.

Note that the control strategy in \eqref{eq:controlLaw} is \emph{displacement-based} because it only requires  each agent $i$ (i) to be able to measure the relative positions of the agents close to it (in the sets $\C{I}_i$ and $\mathcal{A}_i$, given in Definitions \ref{def:interaction_set} and \ref{def:adjacency_set}), and (ii) to possess knowledge of a common reference direction.
Next, we describe in detail each of the two control actions in \eqref{eq:controlLaw}.

\subsubsection{Radial Interaction}
\label{subsec:RIF}
The radial control input $\vec{u}_{\text{r},i}$ in \eqref{eq:controlLaw} is defined as the sum of several virtual forces, one for each agent in $\C{I}_i$ (neighbours of $i$), each force being attractive (if the neighbour is far) or repulsive (if the neighbour is close).
Specifically, we set
\begin{align}\label{eq:radInput}
    \vec{u}_{\text{r},i} &= G_{\text{r},i}\sum_{j\in \C{I}_i} f_{\text{r}}(\Vert \vec{r}_{ij}\Vert) \frac{\vec{r}_{ij}}{\Vert \vec{r}_{ij} \Vert},
\end{align}
where $G_{\text{r},i}\in \mathbb{R}_{\geq0}$ is the radial control gain.
Note that $\vec{u}_{\text{r},i}$ is termed as \emph{radial} input because in \eqref{eq:radInput} the attraction/repulsion forces are parallel to the vectors $\mathbf{r}_{ij}$ (see Figure \ref{fig:two_agents_positions}).
The magnitude and sign of each of these forces depend on the distance, $\Vert \vec{r}_{ij}\Vert$, between the agents, according to the \emph{radial interaction function} $f_{\text{r}} : \mathbb{R}_{\geq 0} \rightarrow \mathbb{R}$.
Here, we select $f_{\text{r}}$  as the Physics-inspired Lennard-Jones function \cite{Brambilla2013}, given by
\begin{equation}
    f_{\text{r}}(\Vert \vec{r}_{ij} \Vert) = \min \left\lbrace \left( \frac{a}{\Vert \vec{r}_{ij} \Vert^{2c}}-\frac{b}{\Vert \vec{r}_{ij} \Vert^c}\right), \ 1 \right\rbrace,
    \label{eq:radial_Lennard-Jones}
\end{equation}
where $a, b \in \mathbb{R}_{> 0}$ and $c \in \mathbb{N}$ are design parameters.
In \eqref{eq:radial_Lennard-Jones}, $f_{\text{r}}$ is saturated to $1$ to avoid divergence for $\lVert \vec{r}_{ij} \rVert \to 0$.
$f_\mathrm{r}$ is portrayed in Figure \ref{subfig:interaction_func_radial}

\begin{figure}[t]
    \centering
    \begin{subfigure}[t]{0.45\textwidth}
        \includegraphics[width=1\textwidth]{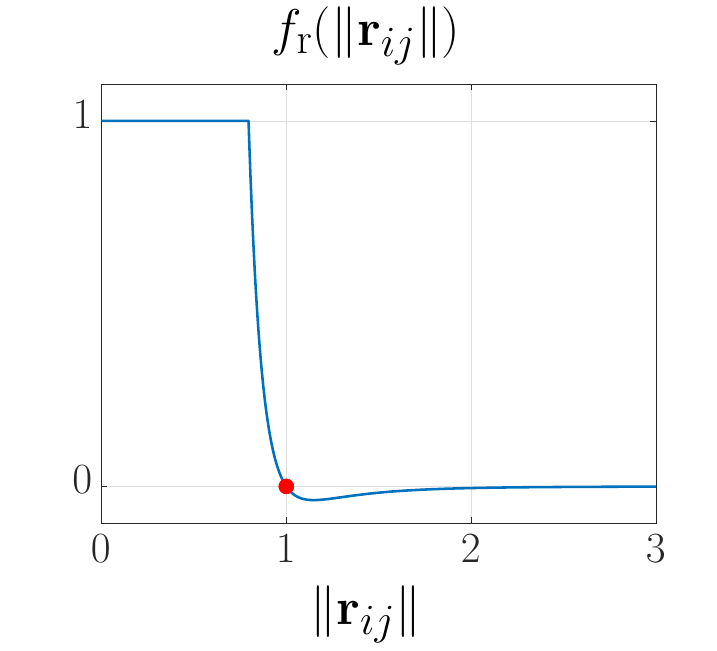}
        \caption{Radial interaction function}
        \label{subfig:interaction_func_radial}
    \end{subfigure}
    \begin{subfigure}[t]{0.45\textwidth}
        \includegraphics[width=1\textwidth]{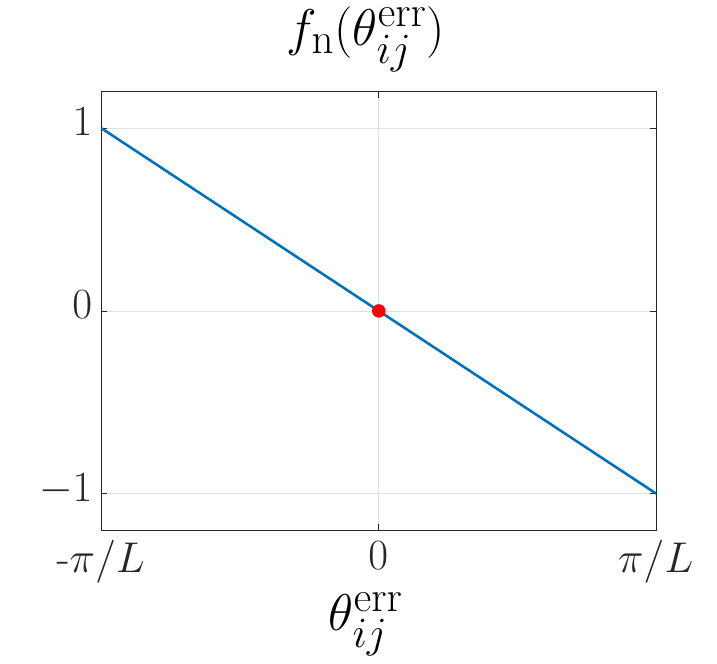}
        \caption{Normal interaction function}
        \label{subfig:interaction_func_normal}
    \end{subfigure}
    \caption{Interaction functions.
    Red dots highlight zeros of the functions.
    Parameters are taken from Tab \ref{tab:parameters}.    }
    \label{fig:interaction_functions}
\end{figure}

\subsubsection{Normal Interaction}
\label{subsec:NIF}

For any link $(i,j)$, we define the \emph{angular error} $\theta_{ij}^{\mathrm{err}} \in \left]- \frac{\pi}{L}, \frac{\pi}{L}\right]$ as the difference between $\theta_{ij}$, introduced in Section \ref{subsec:swarm_definition}, and the closest multiple of $2\pi/L$ (see Figure \ref{fig:two_agents_positions}), that is,
\begin{align}\label{eq:theta_ij^err}
    \theta_{ij}^{\mathrm{err}} &\coloneqq \theta_{ij} - 
    \frac{2 \pi}{L}
    \arg \min_{q \in \mathbb{Z}} \left\{ \left\vert \theta_{ij} - q\frac{2 \pi}{L} \right\vert \right\},
\end{align}

Then, the normal control input $\vec{u}_{\text{n},i}$ in \eqref{eq:controlLaw} is chosen as 
\begin{align}\label{eq:normalInput}
    \vec{u}_{\text{n},i} = G_{\text{n},i}\sum_{j \in \mathcal{A}_i} f_{\text{n}}(\theta_{ij}^{\mathrm{err}}) \frac{\vec{r}_{ij}^\perp}{\norm{\vec{r}_{ij}}},
\end{align}
where $G_{\text{n,i}} \in \mathbb{R}_{\geq 0}$ is the normal control gain.
Note that each of the normal virtual forces is applied in the direction of $\vec{r}_{ij}^\perp$, that is the vector normal to $\vec{r}_{ij}$, obtained by applying a $\pi/2$ counterclockwise rotation (see Figure \ref{fig:two_agents_positions}). The magnitude and sign of these forces are determined by the \emph{normal interaction function} $f_{\text{n}}: \,\left]-\frac{\pi}{L},\frac{\pi}{L}\right] \rightarrow \left[-1, 1\right[$, given by
\begin{align}\label{eq:LinearFn}
    f_{\text{n}}(\theta_{ij}^{\mathrm{err}}) & =  -\frac{L}{\pi} \,  \theta_{ij}^{\mathrm{err}}.
\end{align}
$f_\mathrm{n}$ is portrayed in Figure \ref{subfig:interaction_func_normal}. 

\section{Numerical validation}
\label{sec:results}

In this section, we assess the performance and the robustness of our proposed control algorithm \eqref{eq:controlLaw} through an extensive simulation campaign. The experimental validation of the strategy is later reported in Section \ref{sec:robotarium}.
First in Section \ref{subsec:tuning}, using a numerical optimisation procedure, we tune the control gains $G_{\text{r},i}$ and $G_{\text{n},i}$ in \eqref{eq:radInput} and \eqref{eq:normalInput}, as the performance of the controlled swarm strongly depends on these values. 
Then in Section \ref{sec:robustness_analysis}, we assess the robustness of the control law with respect to (i) agents' failure, (ii)   noise, (iii) flexibility to pattern changes, and (iv) scalability.
Finally in Section \ref{subsec:comparison}, we present a comparative analysis of our distributed control strategy and other approaches previously presented in the literature.
The simulations and experiments performed in this and the next Sections are summarised in Table \ref{tab:simulations}
\begin{table}[t]
\begin{center}
    \begin{tabular}{@{}lll@{}}
    \toprule
    Scenario & Section & Figure  \\
    \midrule
    \emph{Control law \eqref{eq:controlLaw},\eqref{eq:radInput},\eqref{eq:normalInput}}\\
    \ \ Tuning & \ref{subsec:tuning} & \ref{fig:tuning_cost_map}\\
    \ \ Validation & \ref{subsec:tuning} & \ref{fig:simulation_triangular_and_sqaure}\\
    \ \ Robustness to faults & \ref{subsec:agents_removal} & \ref{fig:AgentsRemoval}\\
    \ \ Robustness to noise & \ref{subsec:noise} & \ref{fig:NoiseTest}\\
    \ \ Flexibility & \ref{subsec:DynLatt} & \ref{fig:dynlattices}\\
    \ \ Scalability & \ref{subsec:Scalability} & \ref{fig:Scalability}\\
    \ \ Comparison with established algorithm & \ref{subsec:comparison} & \ref{fig:spears}\\
    \hline
    \emph{Adaptive gain tuning \eqref{eq:adaptationLaw} }\\
    \ \ Validation & \ref{sec:adaptive} & \ref{fig:AdaptiveSquares}\\
    \ \ Robustness to faults & \ref{subsec:adaptiveRobustnessTests} & \ref{fig:adaptive_agents_removal}\\
    \ \ Flexibility & \ref{subsec:adaptiveRobustnessTests} & \ref{fig:adaptive_flexibility}\\
    \ \ Scalability & \ref{subsec:adaptiveRobustnessTests} & \ref{fig:adaptive_scalability}\\
    \hline
    Robotarium experiment & \ref{sec:robotarium} & \ref{fig:Robotarium}\\
    \hline
\end{tabular}
\caption{List of simulations and experiments with indication of the corresponding section and figures.}
\label{tab:simulations}
\end{center}
\end{table}
\subsection{Simulation setup}
\label{sec:simulation_setup}
We consider a swarm consisting of $N = 100$ agents (unless specified differently).
To represent the fact that the agents are deployed from a unique source (as typically done in the literature, see e.g., \cite{Spears2004}), their initial positions are drawn randomly with uniform distribution from a disk of radius $r=2$ centred at the origin.%
\footnote{That is, denoting with $U([a,b])$ the uniform distribution on the interval $[a,b]$, the initial position of each agent in polar coordinates $\vec{x}_i(0) \coloneqq (d_i,\phi_i)$ is obtained by independently sampling $\phi_i \sim U\left([0,2\pi[\right)$ and $d_i$ is chosen according to the  probability density function $p_\mathrm{l}(\xi): [0,r]\mapsto \mathbb{R}_{\geq 0}$ defined as $p_l(\xi) = 2 \xi/r^2$.}\textsuperscript{, }%
\footnote{\label{foot:deployment} We also considered different deployment strategies (e.g., agents starting uniformly distributed from a larger disk or several disjoint disks) and verified that the results are qualitatively similar.}

Initially, for the sake of simplicity and to avoid the possibility of some agents becoming disconnected from the group, we assume that the sensing radius $R_{\text{s}}$ in Definition \ref{def:interaction_set} is large enough so that
\begin{equation}\label{eq:assumption_sensing_radius}
    \forall i \in \mathcal{S}, \forall t \in \mathbb{R}_{\ge 0}, \quad \mathcal{I}_i(t) = \mathcal{S} \setminus i;
\end{equation}
i.e., any agent can sense the relative position of all others.
Later, in Section \ref{sec:robustness_analysis}, we will drop this assumption and show the validity of our control strategy also for smaller values of $R_{\text{s}}$.
All simulation trials are conducted in {\sc Matlab} using the agent based simulator we developed SwarmSim V1 (for more information see Appendix \ref{ch:swarmsim} or visit \url{https://github.com/diBernardoGroup/SwarmSimPublic/tree/SwarmSimV1}).
The agents' speed is limited to $V_{\mathrm{max}} > 0$ and their dynamics is integrated using the forward Euler method with a fixed time step $\Delta t > 0$,
The values of the parameters used in the simulations are reported in Table \ref{tab:parameters}.

\begin{table}[t]
\begin{center}
    \begin{tabular}{@{}lll@{}}
    \toprule
    Parameter & Description & Value \\
    \midrule
    $R$ & Desired link length &  $1 \, \text{m}$ \\
    $R_{\min}$ & Minimum link length & $0.6 \, \text{m}$\\
    $R_{\max}$ & Maximum link length &  $1.1 \, \text{m}$ \\
    $V_{\max}$ & Maximum speed & $5 \, \text{m/s}$ \\
    $t_{\max}$ & Maximum simulation time & $200$ s\\
    $\Delta t$ & Integration step & $0.01 \, \text{s}$\\
    $T_{\text{w}}$ & Time window & $10 \, \text{s}$ \\
    $a$ & Radial interaction function $f_{\text{r}}(\cdot)$ & 0.15 \\
    $b$ & '' & 0.15\\
    $c$ & '' & 5\\    
    \hline
    \end{tabular}
\caption{Simulation parameters}
\label{tab:parameters}
\end{center}
\end{table}

\subsubsection{Performance evaluation}
\label{subsubsec:performance_eval}
To assess the performance of the controlled swarm, we exploit the metrics $e_{\theta}$ and $e_L$ given in Definitions \ref{def:regularity} and \ref{def:compactness}.
Namely, we select empirically the thresholds $e_{\theta}^* = 0.2$ and $e_L^* = 0.3$, which are associated to satisfactory compactness and regularity of the swarm.
Then, letting $T_{\text{w}}>0$ be the length of a time window, we say that $e_{\theta}$ is at \emph{steady-state} from time $t'=k \Delta t$ (for $k \in \mathbb{Z}$) if 
\begin{equation}
  \vert e_{\theta}(t') - e_{\theta}(t' - j\Delta t) \vert \leq 0.1 \, e_{\theta}^*, \
  \forall j \in \left\{ 1, 2, \dots, \left\lfloor \frac{T_{\text{w}}}{\Delta t} \right\rfloor \right\}.
\end{equation}
We give an analogous definition for the steady state of $e_L$ (using $e_L^*$ rather than $e_\theta^*$).
Then, we say that in a trial the swarm \emph{achieved steady-state} at time $t_{\mathrm{ss}}$ if there exists a time instant such that both $e_{\theta}$ and $e_L$ are at steady state, and $t_{\mathrm{ss}}$ is the smallest of such time instants.
Moreover, we deem the trial \emph{successful} if $e_{\theta}(t_{\mathrm{ss}}) < e_{\theta}^*$ and $e_L(t_{\mathrm{ss}}) < e_L^*$.
If in a trial steady-state is not reached in the time interval $[0, t_{\max}]$, the trial is stopped (and deemed unsuccessful).
We define
\begin{equation}
    e_{\theta}^{\mathrm{ss}} \coloneqq
    \begin{dcases}
         e_{\theta}(t_{\mathrm{ss}}), & \text{if steady state is achieved}, \\
         e_{\theta}(t_{\mathrm{max}}), & \text{otherwise}. \\
    \end{dcases}
\end{equation}
\begin{equation}
    e_L^{\mathrm{ss}} \coloneqq
    \begin{dcases}
         e_L(t_{\mathrm{ss}}), & \text{if steady state is achieved}, \\
         e_L(t_{\mathrm{max}}), & \text{otherwise}. \\
    \end{dcases}
\end{equation}

Finally, to asses how quickly the pattern is formed, we define
\begin{align}
    T_{\theta} &\coloneqq\min \{t' \in \mathbb{R}_{\geq 0} :\ e_{\theta}(t') \leq e_{\theta}^*, \;\; \forall t \geq t'\},\\
    T_L &\coloneqq\min \{t'' \in \mathbb{R}_{\geq 0} :\ e_L(t'') \leq e_L^*, \;\; \forall t \geq t'' \},\\
    T &\coloneqq \max\{T_\theta, T_L\}
    \label{eq:convergence_time}.
\end{align} 

\subsection{Tuning of the control gains}
\label{subsec:tuning}
For the sake of simplicity, in this section we assume that 
$G_{\text{r},i} = G_{\text{r}}$ and $G_{\text{n},i} = G_{\text{n}}$, for all $i \in \mathcal{S}$; later, in Section \ref{sec:adaptive}, we will present an adaptive control strategy allowing each agent to independently vary online its own control gains.
To select the values of $G_{\text{r}}$ and $G_{\text{n}}$ giving the best performance in terms of regularity and compactness, we conducted an extensive simulation campaign and evaluated.
Specifically, for each pair $(G_{\text{r}}, G_{\text{n}}) \in \{0, 1, \dots ,30\} \times \{0, 1, \dots ,30\}$, we executed $30$ trials, starting from random initial conditions, and averaged the following cost function:
\begin{equation}
    C(e_{\theta}^{\mathrm{ss}}, e_{L}^{\mathrm{ss}}) \coloneqq 
    \left(\frac{e_{\theta}^{\mathrm{ss}}}{e_{\theta}^*}\right)^2 + \left(\frac{e_{L}^{\mathrm{ss}}}{e_{L}^*}\right)^2.
    \label{eq:costFunction}
\end{equation}
The results are reported in Figure \ref{fig:tuning_cost_map} for the triangular ($L=6$) and the square ($L=4$) lattices; in the former case, the pair $(G_{\text{r}}^*, G_{\text{n}}^*)_{L=4}$ minimising $C$ is $(22, 1)$, whereas in the latter case it is $(G_{\text{r}}^*, G_{\text{n}}^*)_{L=6} = (15, 8)$.
Both pairs achieve $C\le 1$, implying $e_\theta^{\mathrm{ss}} \le e^*_\theta$ and $e_L^{\mathrm{ss}} \le e_L^*$.

\begin{figure}[t!]
    \centering
    \begin{subfigure}[t]{0.48\textwidth}
        \includegraphics[width=1\textwidth]{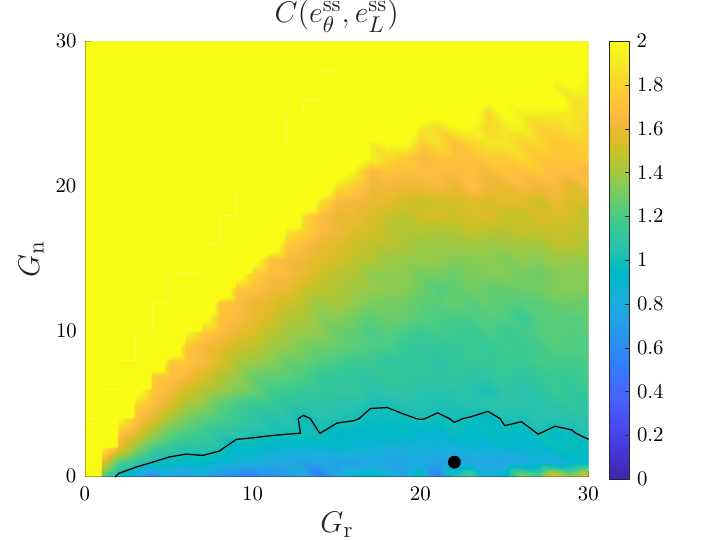}
        \caption{Triangular lattice ($L=6$)}
    \end{subfigure}
    \begin{subfigure}[t]{0.48\textwidth}
        \includegraphics[width=1\textwidth]{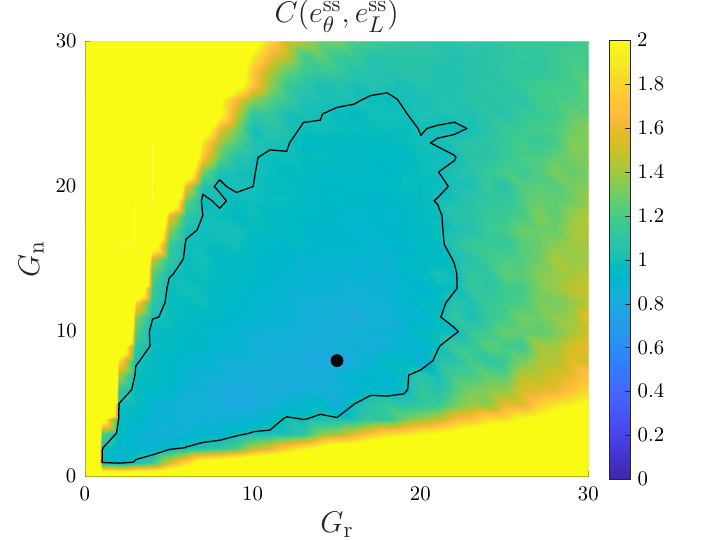}
        \caption{Square lattice ($L=4$)}
    \end{subfigure}
    \caption{Tuning of the control gains $G_{\text{r}}$ and $G_{\text{n}}$.
    The black dots correspond to $(G_{\text{r}}^*,G_{\text{n}}^*)_{L=6}$ and $(G_{\text{r}}^*,G_{\text{n}}^*)_{L=4}$, minimising the metric cost $C$ defined in \eqref{eq:costFunction}. 
    The black curves delimit the regions where $C \leq 1$.
    }
    \label{fig:tuning_cost_map}
\end{figure}

In Figure \ref{fig:simulation_triangular_and_sqaure}, we report four snapshots, at different time instants, of two representative simulations, together with the metrics $e_{\theta}(t)$ and $e_L(t)$, for the cases of a triangular and a square lattice, respectively.
The control gains were set to the optimal values $(G_{\text{r}}^*,G_{\text{n}}^*)_{L=6}$ and $(G_{\text{r}}^*,G_{\text{n}}^*)_{L=4}$.
In both cases, the metrics quickly converge below their prescribed thresholds, as $T$ $< 2.75 \, \text{s}$.
Moreover, note that $e_L(t)$ decreases faster than $e_{\theta}(t)$, meaning that the swarm tends to first reach the desired level of compactness and then agents' positions are rearranged to achieve the desired pattern.
Finally, we note, and it is immediate to verify, that it is possible to control the orientation of the resulting lattice simply by applying a uniform offset to the agents' compasses.

\begin{figure}[t!]
    \includegraphics[width=1.05\textwidth]{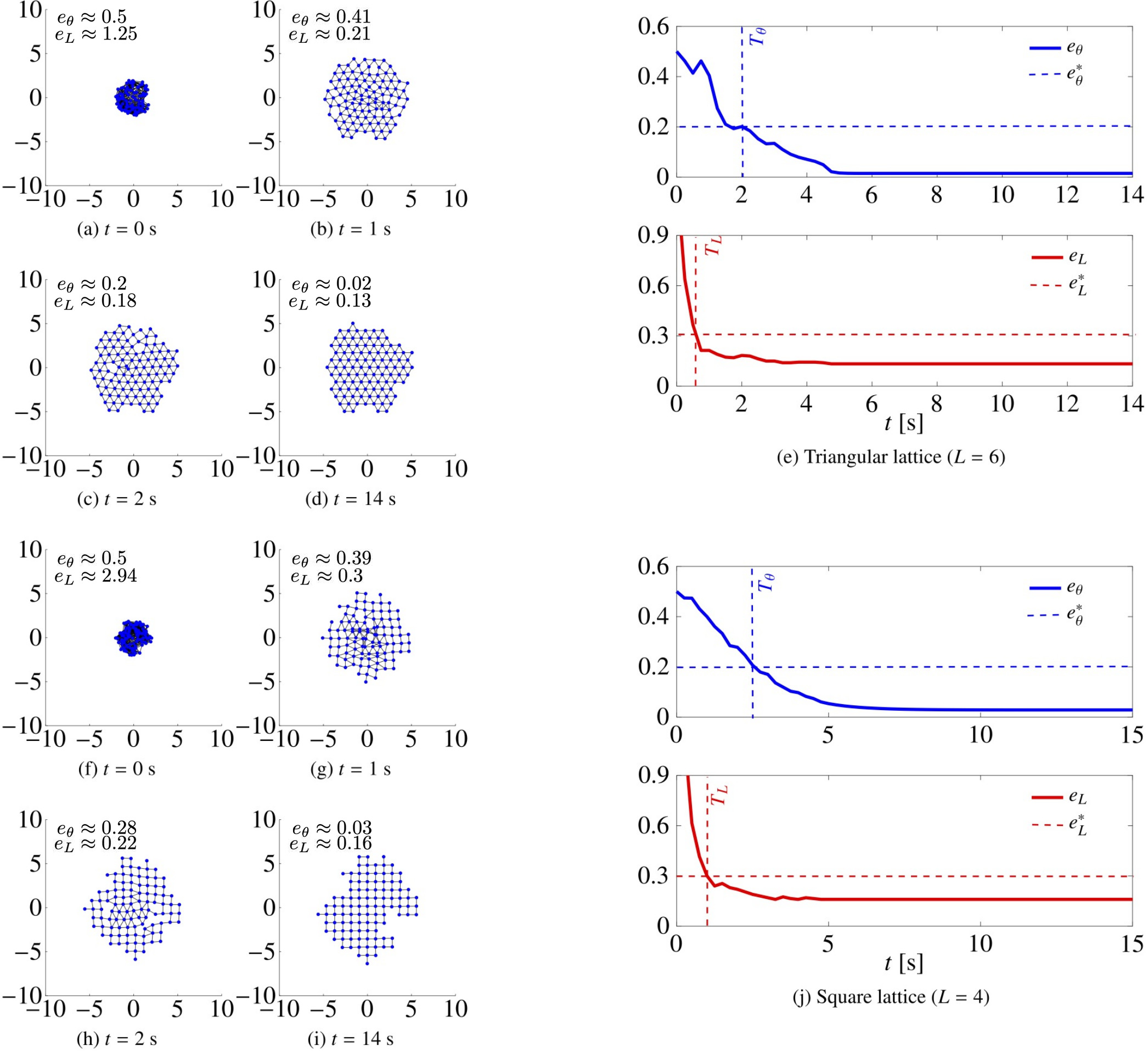}
    \caption{
    Snapshots at different time instants of a swarm of $N=100$ agents being controlled to form a triangular lattice (a-d) and a square lattice (f-i). For each snapshot, we also report the values of $e_\theta$ and $e_L$.
    (e) and (j) show the time evolution of the metrics $e_{\theta}$ and $e_L$ for $L=6$ and $L=4$, respectively.
    When $L=6$, we set $(G_{\text{r}},G_{\text{n}}) = (G_{\text{r}}^*,G_{\text{n}}^*)_{L=6}$; when $L=4$, we set $(G_{\text{r}},G_{\text{n}}) = (G_{\text{r}}^*,G_{\text{n}}^*)_{L=4}$. (See \ref{subsec:tuning} for details on how the gains were tuned.)
    }
    \label{fig:simulation_triangular_and_sqaure}
\end{figure}

\subsection{Robustness analysis}
\label{sec:robustness_analysis}

In this section, we investigate numerically the properties that we required in Section \ref{sec:ProblemStatement}, that is robustness to faults and noise, flexibility, and scalability.

\subsubsection{Robustness to faults}
\label{subsec:agents_removal}
To analyse the robustness of the controlled swarm to agents' faults, we ran a series of simulations in which we removed a percentage of the agents at a certain time instant, and assessed the capability of the swarm to recover the desired pattern.
For the sake of brevity, we report only one of them as a representative example in Figure \ref{fig:AgentsRemoval}, where, with $L=4$, 30\% of the agents were removed at random at time $t = 30 \, \mathrm{s}$.
We notice that, as the agents are removed, $e_L(t)$ and $e_{\theta}(t)$ suddenly increase, but, after a short time, they converge again to values below the thresholds, recovering the desired pattern, despite the formation of small holes in the pattern at steady-state that increase $e_L^{\mathrm{ss}}$. 
Finally, we also considered the case where the faulty agents stay still in their positions after the fault, with other agents having to form the lattice around them.
We observed that when the fault takes place after a satisfying structure is formed, the metrics are not affected by the event (the numerical results are omitted here as redundant).

\begin{figure}[t!]
    \begin{minipage}{0.45\textwidth}
    \centering
    \begin{subfigure}[t]{0.48\textwidth}
        \includegraphics[width=1\textwidth]{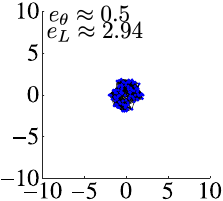}
        \caption{$t=0\,\text{s}$}
    \end{subfigure}
    \begin{subfigure}[t]{0.48\textwidth}
        \includegraphics[width=1\textwidth]{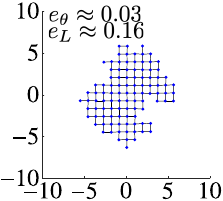}
        \caption{$t=29.99\,\text{s}$}
    \end{subfigure}
    
    \begin{subfigure}[t]{0.48\textwidth}
        \includegraphics[width=1\textwidth]{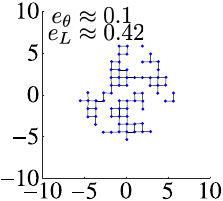}
        \caption{$t=30\,\text{s}$}
    \end{subfigure}
    \begin{subfigure}[t]{0.48\textwidth}
        \includegraphics[width=1\textwidth]{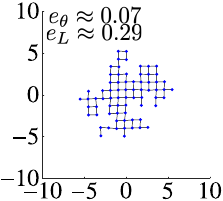}
        \caption{$t=60\,\text{s}$}
    \end{subfigure}
    \end{minipage}
    \begin{minipage}{0.55\textwidth}
    \begin{subfigure}[t]{1\textwidth}
        \includegraphics[width=1\textwidth]{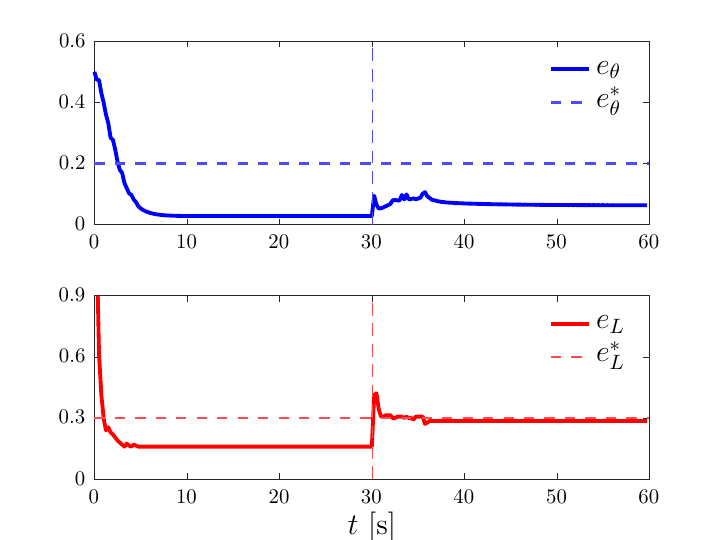}
        \caption{}
    \end{subfigure}
    \end{minipage}
    \caption{Robustness to agents' removal.
    (a-d) Snapshots at different time instants of a swarm achieving square lattice. Initially, there are $N=100$ agents with $30$ agents being removed at $t=30 \, \text{s}$.
    (e) Time evolution of the metrics; dashed vertical lines denote the time instant when agents are removed. 
    Here $L=4$, and $(G_{\text{r}},G_{\text{n}}) = (G_{\text{r}}^*,G_{\text{n}}^*)_{L=4}$.
    }
    \label{fig:AgentsRemoval}
\end{figure}

\subsubsection{Robustness to noise}

\label{subsec:noise}
%
We assessed the robustness to noise both on actuation and on sensing, in two separate tests.
In the first case, we assumed that the dynamics \eqref{eq:firstOrdDynamics} of each agent is affected by additive white Gaussian noise with standard deviation $\sigma_{\mathrm{a}}$.
In the second case, we assumed that, for each agent, both the distance measurements $\Vert \vec{r}_{ij} \Vert$ in \eqref{eq:radInput} and the angular measurements $\theta_{ij}^{\mathrm{err}}$ in \eqref{eq:normalInput} are affected by additive white Gaussian noise (i.i.d. for each $i$ and $j$) with standard deviation $\sigma_{\mathrm{m}}$ and $\sigma_{\mathrm{m}} \frac{\pi}{L}$, respectively.

In particular, we set $L=4$ and varied either $\sigma_{\mathrm{a}}$ or $\sigma_{\mathrm{m}}$ in intervals of interest with small increments%
. 
For each value of $\sigma_{\mathrm{a}}$  and $\sigma_{\mathrm{m}}$, we ran $M = 30$ trials, starting from random initial conditions, and reported the average values of $e_{\theta}^{\mathrm{ss}}$ and $e_{L}^{\mathrm{ss}}$ in Figure \ref{fig:NoiseTest}.
%
%
We observe that, while in the ranges $\sigma_{\mathrm{a}} \in [0,0.45]$ and $\sigma_{\mathrm{m}} \in [0,0.125]$ the strategy guarantees robustness, for large enough noise ($\sigma_{\mathrm{a}} \geq 0.45$ or $\sigma_{\mathrm{m}} \geq 0.125$) performance is increasingly worsened with trials eventually becoming unsuccessful (the swarm never achieving the desired lattice configuration).
Interestingly, we find that for smaller noise ($0 < \sigma_{\mathrm{a}} \leq 0.2$ or $0 < \sigma_{\mathrm{m}} \leq 0.1$) performance is actually improved, as small random inputs can prevent the agents from getting stuck in undesired configurations, including those containing holes.

We obtained qualitatively similar results when we assumed the presence of noise on the compass measurements of the agents (obtained by adding Gaussian noise on the variables $\theta_{ij}^{\mathrm{err}}$, with the noise value being the same for $\theta_{ij}^{\mathrm{err}}$ and $\theta_{kl}^{\mathrm{err}}$ when $i=k$).

%

\begin{figure}[t!]
    \centering
    \begin{subfigure}[t]{0.48\textwidth}
        \includegraphics[width=1\textwidth]{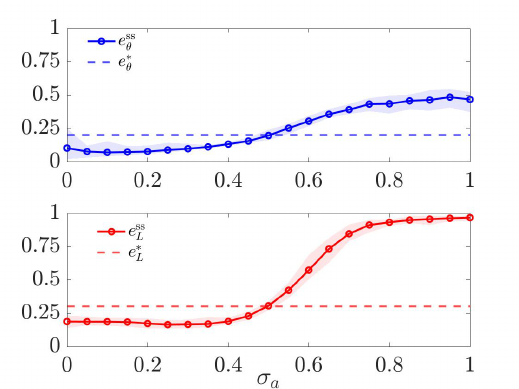}
        \caption{Actuation noise}
    \end{subfigure}
    \begin{subfigure}[t]{0.48\textwidth}
        \includegraphics[width=1\textwidth]{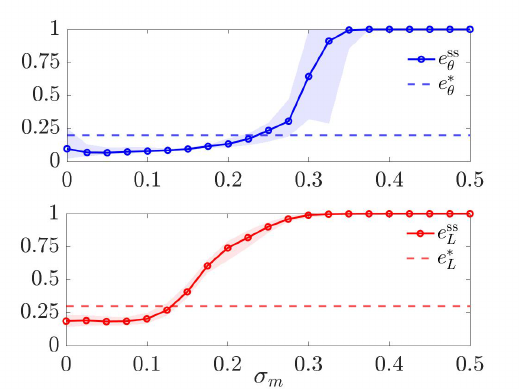}
        \caption{Measurement noise}
    \end{subfigure}
    \caption{Robustness to noise.
    Value of the metrics $e_\theta^\mathrm{ss}$ and $e_L^\mathrm{ss}$, averaged over $M = 30$ trials, when (a) the intensity $\sigma_{\mathrm{a}}$ of the actuation noise is varied and (b) the intensity $\sigma_{\mathrm{m}}$ of the measurement noise is varied.
    The shaded areas represent the maximum and minimum values obtained over the $M$ trials.
    Here $L=4$, and $(G_{\text{r}},G_{\text{n}}) = (G_{\text{r}}^*, G_{\text{n}}^*)_{L=4}$.
    }
    \label{fig:NoiseTest}
\end{figure}

\subsubsection{Flexibility}
\label{subsec:DynLatt}
%
%
In Figure \ref{fig:dynlattices}, we report a simulation where $L$ was initially set equal to $4$ (square lattice), changed to $6$ (triangular lattice) at time $t=30  \ \mathrm{s}$, and finally changed back to $4$ at $t=60 \ \mathrm{s}$.
The control gains are set to $(G_{\text{r}}^*,G_{\text{n}}^*)_{L=4}$ and kept constant during the entire the simulation.
Clearly, as $L$ is changed, both $e_L$ and $e_{\theta}$ suddenly increase, but the swarm is quickly able to reorganise and reduce them below their prescribed thresholds in less than $5\, \mathrm{s}$, thus achieving the desired patterns.
%
\begin{figure}[t!]
    \centering
    \includegraphics[width=0.5\columnwidth]{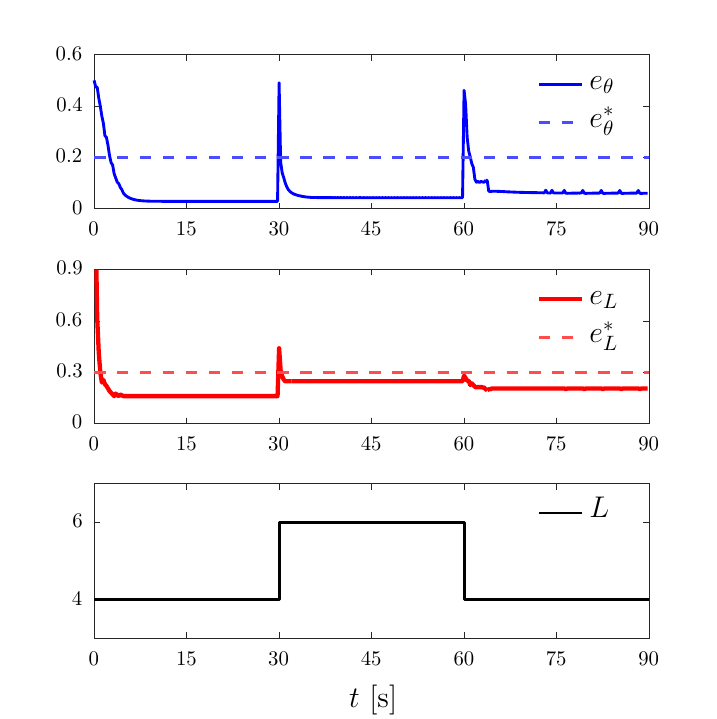}
    \caption{Flexibility to spatial reorganisation. 
    Time evolution of the metrics $e_\theta^\mathrm{ss}$ and $e_L^\mathrm{ss}$ as $L$ changes as shown in the bottom panel. The gains are set as
    $(G_{\text{r}},G_{\text{n}}) = (G_{\text{r}}^*,G_{\text{n}}^*)_{L=4}$.
    }
    \label{fig:dynlattices}
\end{figure}
%

\subsubsection{Scalability}
\label{subsec:Scalability}
%
We relaxed the assumption \eqref{eq:assumption_sensing_radius}  and characterised $e_L^{\mathrm{ss}}$ as a function of the sensing radius $R_{\text{s}}$.
The results are portrayed in Figure \ref{subfig:scalability_sensing_radius}, showing that the performance starts deteriorating for approximately $R_{\text{s}} < 6 \, R$, until it becomes unacceptable for about $R_{\text{s}} < 1.1 \, R$.
%
%
Therefore, as a good trade-off between performance and feasibility, we set $R_{\text{s}} = 3 \, R$.
Then, to test for scalability, we varied the number $N$ of agents (initially, $N = 100$), reporting the results in Figure \ref{subfig:scalability_number_of_agents}.
We see that (i) the controlled swarm correctly achieves the desired pattern for at least four-fold changes in the size of the swarm, (ii) compactness ($e_L^{\mathrm{ss}}$) improves as $N$ increases, and (iii) the average convergence time $T$ increases as $N$ increases.

%
%
\begin{figure}[t!]
    \centering
    \begin{subfigure}[t]{0.48\textwidth}
        \includegraphics[width=1\textwidth]{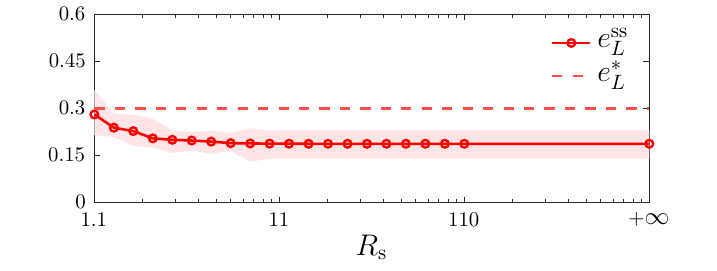}
        \caption{}
        \label{subfig:scalability_sensing_radius}
    \end{subfigure}
    \begin{subfigure}[t]{0.48\textwidth}
        \includegraphics[width=1\textwidth]{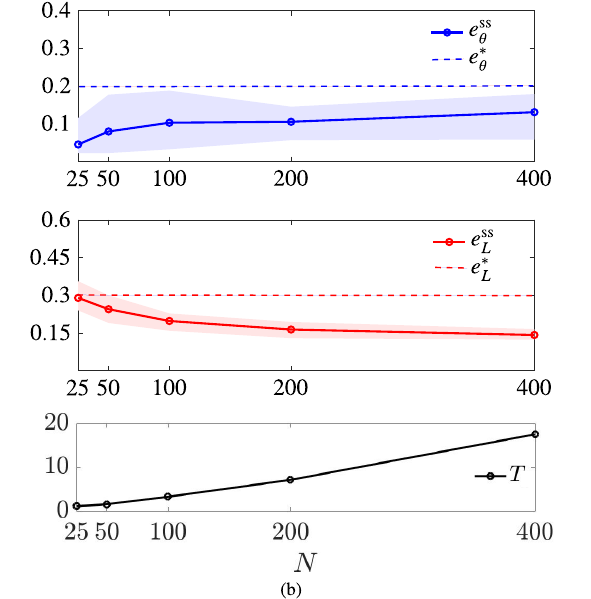}
        \caption{}
        \label{subfig:scalability_number_of_agents}
    \end{subfigure}
    \caption{Scalability.
    (a) $e_L^\mathrm{ss}$ averaged over $M=30$ trials for different values of the sensing radius $R_{\text{s}}$.
    (b) Metrics $e_\theta^\mathrm{ss}$ and $e_L^\mathrm{ss}$ and the convergence time $T$ averaged over the trials, with varying $N$, while 
    $R_{\text{s}} = 3\, \text{m}$, and agents' initial positions are drawn with uniform distribution from a disk with radius $r=\sqrt{N/25}$.
    The shaded areas represent the maximum and minimum values over the $M$ trials. Here
    $L=4$ and the gains are set as $(G_{\text{r}},G_{\text{n}}) = (G_{\text{r}}^*,G_{\text{n}}^*)_{L=4}$.
    }    
    \label{fig:Scalability}
\end{figure}

\subsection{Comparison with other established algorithm}
\label{subsec:comparison}

\begin{figure}
    \centering
    \begin{subfigure}[t]{0.48\textwidth}
        \includegraphics[width=1\textwidth]{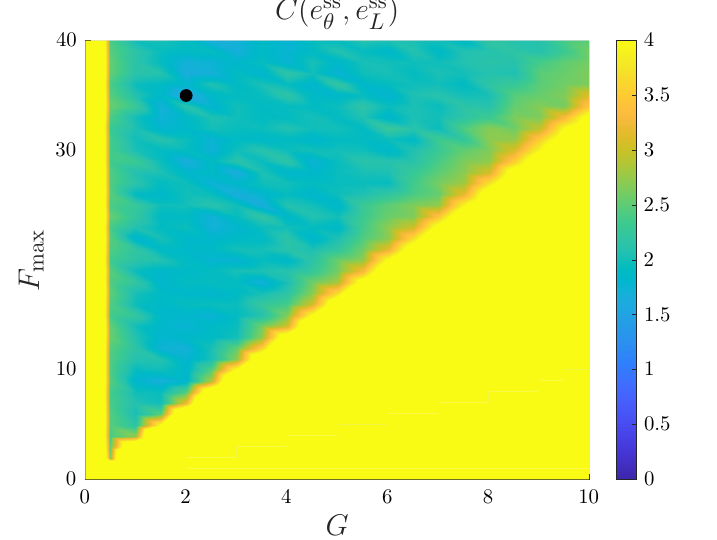}
        \caption{}
        \label{subfig:spears_tuning_cost}
    \end{subfigure}
    \begin{subfigure}[t]{0.48\textwidth}
        \includegraphics[width=1\textwidth]{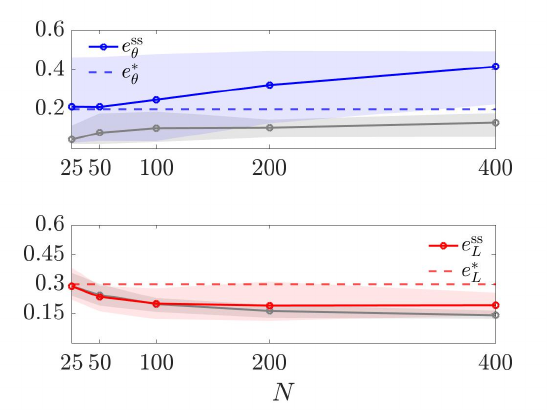}
        \caption{}
        \label{subfig:spears_scalability}
    \end{subfigure}    
    \caption{Testing the algorithm from \cite{Spears2004}. 
    (a) Tuning of parameters $G$ and $F_{\max}$ for the square lattice ($L=4$). The black dot denotes the optimal pair $(G^*,F_{\max}^*)$. 
    (b) Scalability test.
    The metrics $e_L^\mathrm{ss}$ and  $e_\theta^\mathrm{ss}$ are averaged over $M=30$ trials, as $N$ varies, and plotted against our results (in gray) in the same scenario (see Figure \ref{fig:Scalability} for the sake of comparison).
    Agents' initial positions are drawn with uniform distribution from a disk of radius $r=\sqrt{N/25}$.
    The shaded area represents the maximum and minimum values over the trials. Here
    $L=4$, and $(G, F_{\max}) = (G^*, F_{\max}^*)$.}
    \label{fig:spears}
\end{figure}
As done in related literature \cite{Zhao2019} (yet for a position-based solution), we compared our control law \eqref{eq:controlLaw} to the so-called ``gravitational virtual forces'' strategy first introduced in \cite{Spears1999}  and later extended in \cite{Spears2004}, that represents an established solution to geometric pattern formation problems. 

In these works the agents' dynamics is described by 
\begin{align}
\begin{cases}
\dot{\vec{x}}_i =\vec{v}_i, \\
\dot{\vec{v}}_i =\frac{1}{m}(\vec{u}_i-\mu \vec{v}_i),
\end{cases}
\quad \forall i \in \mathcal{S},
\label{eq::SecondOrderIntegrDamped}
\end{align}
where $\vec{u}_i\in \mathbb{R}^2$ is the control input, $m\in \mathbb{R}_{>0}$ is the mass of the agent and $\mu\in \mathbb{R}_{>0}$ is the friction damping factor. 
The control input $\vec{u}_i$ is given by the classic virtual forces control law, that is
\begin{equation}
    \vec{u}_i = \sum_{j=1}^N f(\Vert \vec{r}_{ij}\Vert) \frac{\vec{r}_{ij}}{\Vert\vec{r}_{ij}  \Vert}
    \label{eq::spearsContrInp},
\end{equation}
where $f$ is a gravitational-like virtual force, namely
\begin{equation}
    f(\Vert \vec{r}_{ij}\Vert )= \begin{cases}
   \left[ G \frac{m^2}{\Vert \vec{r}_{ij}\Vert ^2} \right]_0^{F_{\max}}, 
   &\mbox{if } \ 0\leq \Vert \vec{r}_{ij}\Vert \leq R,\\
    -\left[G \frac{m^2}{\Vert \vec{r}_{ij}\Vert ^2}\right]_0^{F_{\max}},
    &\mbox{if } \ R<\Vert \vec{r}_{ij}\Vert \leq1.5 R,\\
    0,
    &\mbox{otherwise}.
    \end{cases}
    \label{eq::spearsForce}
\end{equation}
Here $G$ and $F_{\max} \in \mathbb{R}_{\geq0}$ are tunable control gains, and $R \in \mathbb{R}_{>0}$ is the desired link length, while the \emph{saturation} function $\left[x(t) \right]_a^b:\BB{R}\to [a;b]$ is defined as
\begin{equation*}
    \left[x(t) \right]_a^b \coloneqq \begin{cases}
    a, &\mbox{if} \;\; x(t)<a, \\
    x(t), &\mbox{if} \;\; a\leq x(t) \leq b,\\
    b, &\mbox{if} \;\; x(t)>b.
    \end{cases}
\end{equation*}
The control law given by \eqref{eq::spearsContrInp} and  \eqref{eq::spearsForce} was showed to work for triangular lattices. 
To make it suitable for square patterns, a binary variable called \emph{spin} is introduced for each agent, and the swarm is divided in two subsets, depending on the value of their spin.
Then, agents with different spin aggregate at distance $R$, while agents with the same spin do so at distance $\sqrt{2}R$. 
The extension to the case of hexagonal lattice is discussed in \cite{Sailesh2014} and requires communication among the agents.
Notice that in \eqref{eq::SecondOrderIntegrDamped}, a second order damped dynamics is considered for the agents.
Nevertheless, for the sake of comparison and by recalling Remark \ref{rem:second_order_to_first} we reduced it to the first order model in \eqref{eq:firstOrdDynamics}, by assuming that the viscous friction force is significantly larger than the inertial one.
Then, to select the gravitational gain $G$ and the saturation value $F_{\max}$ in the control law from \cite{Spears2004}, we applied the same tuning procedure described in Section \ref{subsec:tuning}.
In particular, we considered $(G, F_{\max}) \in \{0, 0.5, \dots, 10\} \times \{0, 1, \dots, 40\}$, and performed $30$ trials for each pair of parameters, obtaining as optimal pair for the square lattice $(G^*, F_{\max}^*) = (35, 2)$ (see Figure \ref{subfig:spears_tuning_cost}).
All other parameters where kept to the default values in Table \ref{tab:parameters}.

Then, we performed the same scalability test in Section \ref{subsec:Scalability} and report the results in Figure \ref{subfig:spears_scalability}.
Remarkably, by comparing these results with ours, we see that our proposed control strategy performs better, obtaining much smaller values of $e_{\theta}^{\mathrm{ss}}$, regardless of the size $N$ of the swarm.
In particular, the control law from \cite{Spears2004} only rarely achieves $e_{\theta}^{\mathrm{ss}} \le e_{\theta}^*$, implying a low success rate. 

\section{Adaptive tuning of control gains}
\label{sec:adaptive}

Tuning the control gains (here $G_{\text{r}, i}$ and $G_{\text{n}, i}$) can in general be a tedious and time-consuming procedure.
Therefore, to avoid it, we propose the use of a simple, yet effective adaptive control law, that might also improve the robustness and flexibility of the swarm.
Specifically, for the sake of simplicity, $G_{\text{r},i}$ is set to a constant value $G_{\text{r}}$ for all the swarm, while each agent computes its gain $G_{\text{n},i}$ independently, using only local information.
Letting $e_{\theta,i} \in [0,1]$ be the \emph{average angular error} for agent $i$, given by
\begin{equation}
    e_{\theta,i}\coloneqq\frac{L}{\pi} \frac{1}{\vert \mathcal{A}_i \vert} \sum_{j\in \mathcal{A}_i} \vert \theta_{ij}^{\mathrm{err}} \vert,
    \label{eq:avgAngError}    
\end{equation}
$G_{\text{n},i}$ is varied according to the law
\begin{subequations}
\begin{align}
    \frac{\mathrm{d}}{\mathrm{d}t}{G}_{n,i}(t)&=\begin{cases}
    \alpha \, (e_{\theta,i}(t)-e_{\theta}^*), &\mbox{if } e_{\theta,i}(t) > e_{\theta}^*, \\
    0, &\text{otherwise}.
    \end{cases} \label{eq:adaptationLaw:1}\\
    G_{\text{n},i}(0)&=0,
    \label{eq:adaptationLaw:2}
\end{align}
\label{eq:adaptationLaw}
\end{subequations}
\noindent where $\alpha > 0$ is an adaptation gain and $e_{\theta}^*$ (introduced in Section \ref{sec:simulation_setup}) is used to determine the amplitude of the dead-zone. 
Here, we empirically choose $\alpha=3$. 
To evaluate the effect of the adaptation law, we also define the average normal gain of the swarm $\bar{G}_{\text{n}}(t) \coloneqq \frac{1}{N} \sum_{i=1}^N G_{\text{n},i}(t)$.

In Figure \ref{fig:AdaptiveSquares}, we report the time evolution of $e_L$, $e_{\theta}$, and of $\bar{G}_{\text{n}}$ for a representative simulation. 
First, we notice that the average normal gain $\bar{G}_{\text{n}}$ eventually settles to a constant value.
Moreover, comparing the results with the case in which the gains $G_{\mathrm{n},i}$ are static (see Section \ref{subsec:tuning} and Figure \ref{fig:simulation_triangular_and_sqaure}j), here $T_\theta$, $T_L$ and $t_\mathrm{ss}$ are larger (meaning longer convergence time), but $e_\theta^\mathrm{ss}$ and $e_L^\mathrm{ss}$ are smaller (meaning better regularity and compactness performance).

\begin{figure}[t!]
    \centering
    \includegraphics[width=0.6\columnwidth]{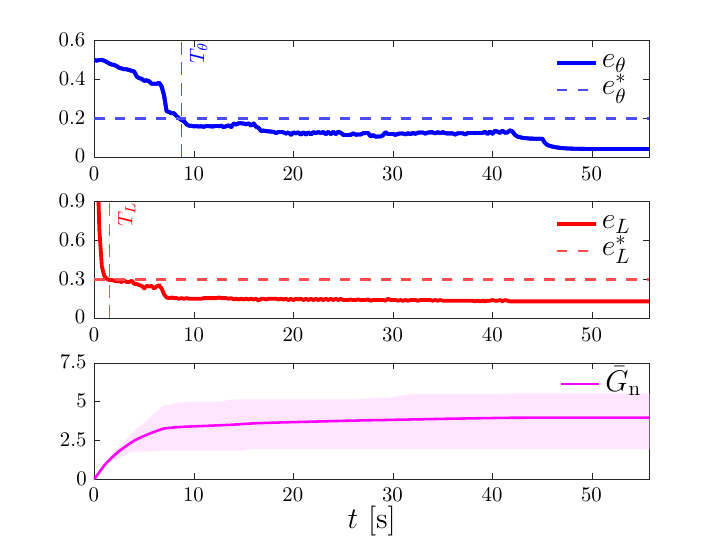}
    \caption{Pattern formation using the adaptive tuning law \eqref{eq:adaptationLaw}.
    Initial conditions are the same as those of the simulation in Figure \ref{fig:simulation_triangular_and_sqaure}.
    The shaded magenta area is delimited by $\max_{i \in \mathcal{S}} G_{\mathrm{n}, i}$ and $\min_{i \in \mathcal{S}} G_{\mathrm{n}, i}$, while the average across all agents is depicted by a solid magenta line. Here, $L=4$, and $G_{\text{r}}=15$.
    }
    \label{fig:AdaptiveSquares}
\end{figure}

\subsection{Robustness analysis}
\label{subsec:adaptiveRobustnessTests}
Next, we test robustness to faults, flexibility, and scalability for the adaptive law \eqref{eq:adaptationLaw}, similarly to what we did in Section \ref{sec:robustness_analysis}.

We ran a series of agents removal tests. 
For the sake of brevity, we report the results of one of such tests with $L=4$ in Figure \ref{fig:adaptive_agents_removal}.
At $t=30$ s, 30\%  of  the  agents are  removed; yet, after  a short time the swarm reaggregates to recover  the  desired  lattice. 
As observed in Section \ref{subsec:agents_removal}, at $t=30$ s both metrics increase, but, after a short transient, they converge again  below  the  respective  thresholds.

\begin{figure}
    \begin{minipage}{0.45\textwidth}
    \centering
    \begin{subfigure}[t]{0.48\textwidth}
        \includegraphics[width=1\textwidth]{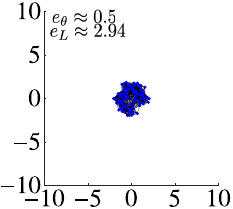}
        \caption{$t=0\,\text{s}$}
    \end{subfigure}
    \begin{subfigure}[t]{0.48\textwidth}
        \includegraphics[width=1\textwidth]{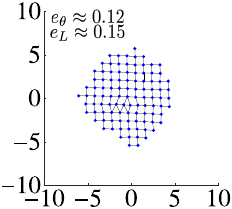}
        \caption{$t=29.99\,\text{s}$}
    \end{subfigure}
    
    \begin{subfigure}[t]{0.48\textwidth}
        \includegraphics[width=1\textwidth]{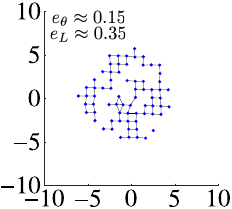}
        \caption{$t=30\,\text{s}$}
    \end{subfigure}
    \begin{subfigure}[t]{0.48\textwidth}
        \includegraphics[width=1\textwidth]{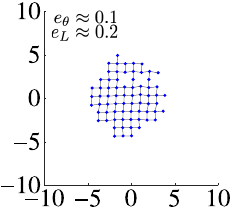}
        \caption{$t=60\,\text{s}$}
    \end{subfigure}
    \end{minipage}
    \begin{minipage}{0.54\textwidth}
    \begin{subfigure}[t]{1\textwidth}
        \includegraphics[width=1\textwidth]{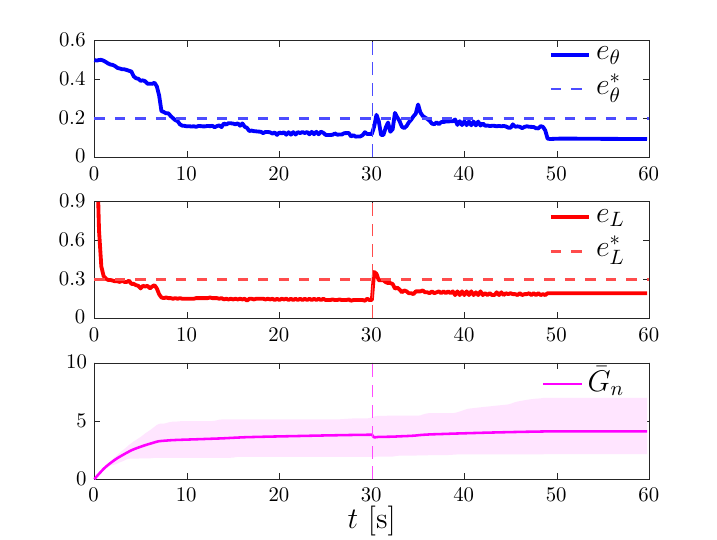}
        \caption{}
    \end{subfigure}
    \end{minipage}
    \caption{Robustness to agents removal using the adaptive tuning law \eqref{eq:adaptationLaw}. The simulation starts with $N=100$ agents, $30$ agents are removed at $t=30 \, \text{s}$. Initial conditions are the same as those of the simulation in Figure \ref{fig:AgentsRemoval}.
    (a--d) snapshots of the agents' configurations at different time instants. 
    (e) time evolution of the metrics $e_\theta$  and $e_L$, and of the adaptive gain $G_{\mathrm{n}}$ (the shaded magenta area is delimited by $\max_{i \in \mathcal{S}} G_{\mathrm{n}, i}$ and $\min_{i \in \mathcal{S}} G_{\mathrm{n}, i}$ while the average value of the gain over all the agents is shown as a solid magenta line). Dashed vertical lines denote the time instant when agents are removed. Here $L=4$. 
    }
    \label{fig:adaptive_agents_removal}
\end{figure}

When performing the flexibility test (time-varying reference lattice), a non trivial issue was how to set the value of the control gains, as, in a single trial, the algorithm was tested over different patterns to form dynamically.
For the sake of simplicity, we tested our solution using the optimal gains found for the formation of square lattices, as this structure is intrinsically more complex than the triangular one. 
We re-propose here the same experiment, but considering adaptive tuning of the normal gain.
Specifically, we set $G_{\text{r}} = 18.5$ (that is the average between the optimal gain for square and triangular patterns), and set $G_{\text{n}, i}$  according to law \eqref{eq:adaptationLaw}, resetting all $G_{\mathrm{n}, i}$ to $0$ when $L$ is changed.
The results are shown in Figure \ref{fig:adaptive_flexibility}. 
When compared to the non-adaptive case (Figure \ref{fig:dynlattices}), here $e_\theta^\mathrm{ss}$ and $e_L^\mathrm{ss}$ are smaller (better pattern formation), but $T_\theta$ and $T_L$ are larger (slower), especially when forming square patterns. 
Interestingly, when $L = 4$, $\bar{G}_{\mathrm{n}}$ settles to about 5, while when $L = 6$ it settles to about 0.3, a much smaller value.


Finally, we repeated the test in Section \ref{subsec:Scalability}, setting again the sensing radius $R_{\text{s}}$ to 3 $R$ and assessing performance while varying the size $N$ of the swarm; results are shown in Figure \ref{fig:adaptive_scalability}. 
First, we notice that the larger the swarm is, the larger the steady state value of $\bar{G}_{\mathrm{n}}$ is.
Comparing the results with those obtained for static gains shown in Figure \ref{subfig:scalability_number_of_agents}, here we observe a slight improvement of performance, with a slightly smaller $e_\theta^{\mathrm{ss}}$, while we verified that the convergence time is comparable to the one observed for the static policy.

\begin{figure}
\centering
\begin{minipage}{0.5\textwidth}
    \includegraphics[width=\textwidth]{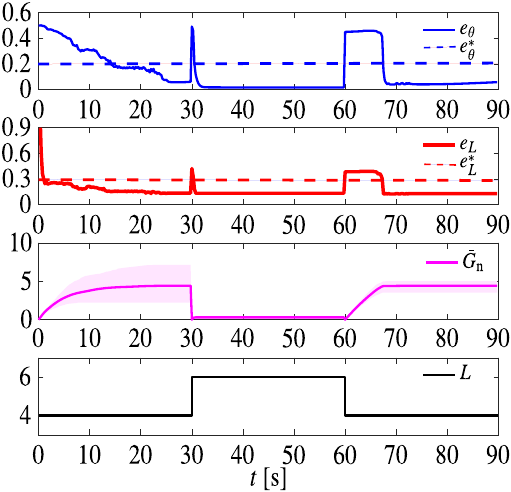}
    \caption{Flexibility tests using the adaptive tuning law \eqref{eq:adaptationLaw}.
    Initial conditions are the same as those of the simulation in Figure \ref{fig:dynlattices}.
    The shaded magenta area is delimited by $\max_{i \in \mathcal{S}} G_{\mathrm{n}, i}$ and $\min_{i \in \mathcal{S}} G_{\mathrm{n}, i}$. 
    Here $L=4$, and $G_{\text{r}}=15$, $R_{\text{s}} = 3\, \text{m}$.
    }
    \label{fig:adaptive_flexibility}
\end{minipage}
\begin{minipage}{0.49\textwidth}
    \centering
    \includegraphics[width=0.98\textwidth]{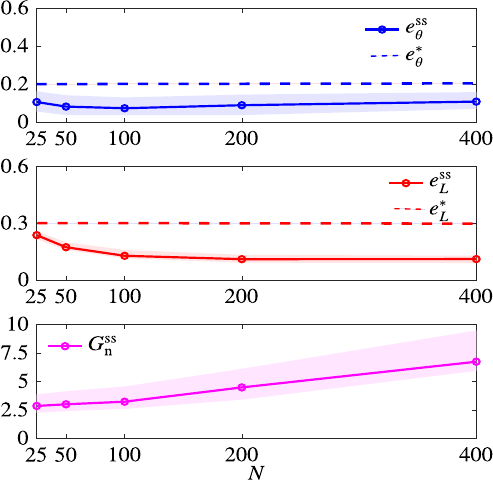}
    \caption{Scalability test using the adaptive tuning law \eqref{eq:adaptationLaw}. 
    $e_\theta^\mathrm{ss}$ and $e_L^\mathrm{ss}$ are averaged over $M=30$ trials with varying $N$.
    $R_{\text{s}} = 3\, \text{m}$; agents' initial positions are drawn with uniform distribution from a disk with radius $r=\sqrt{N/25}$.
    $G_{\text{n}}^{\text{ss}} \coloneqq \bar{G}_{\text{n}}(t_\mathrm{ss})$.
    The shaded areas represent the maximum and minimum values over the trials.
    $L=4$, $G_{\text{r}}=15$.
    }
    \label{fig:adaptive_scalability}
\end{minipage}
\end{figure}

\section{Experimental validation}
\label{sec:robotarium}

To further validate our control algorithm, we tested it in a real robotic scenario, using the open access \emph{Robotarium} platform; see \cite{Pickem2017, Mayya2020} for further details. 
The experimental setup features 20 differential drive robots (GRITSBot \cite{Pickem2015}), that can move in a 3.2 m $\times$ 2 m rectangular arena. The robots have a diameter of about 11 cm, a maximum (linear) speed of 20 cm/s, and a maximum rotational speed of about 3.6 rad/s.
To cope with the limited size of the arena, distances $\left\Vert \vec{r}_{ij} \right\Vert$ in \eqref{eq:radial_Lennard-Jones} are doubled, while control inputs $\vec{u}_i$ are halved.
The Robotarium implementation includes a collision avoidance safety protocol and transforms the velocity inputs \eqref{eq:controlLaw} into appropriate acceleration control inputs for the robots.
Moreover, we run an initial routine to yield an initial condition in which the agents are aggregated as much as possible at the centre, similarly to what considered in Section \ref{sec:results}.




As a paradigmatic example, we performed a flexibility test (similarly to what done in Sec \ref{subsec:DynLatt} and reported in Figure \ref{fig:dynlattices}).
During the first $33$ seconds, the agents reach an aggregated initial condition.
Then we set $L=4$ for $t \in [33, 165)$, $L=6$ for $t \in [165, 297)$, and $L=4$ for $t \in [297, 429]$, ending the simulation.
We used the static control law \eqref{eq:controlLaw}-\eqref{eq:radInput} and \eqref{eq:normalInput}, and to comply with the limited size of the arena, we scaled the control gains to the values $G_{\text{r}}=0.8$ and $G_{\text{n}}=0.4$, selected empirically.

The resulting movie is available online at {\small\url{https://github.com/diBernardoGroup/SwarmSimPublic/tree/SwarmSimV1/Media}}, while representative snapshots are reported in Figure \ref{fig:Robotarium}, with the time evolution of the metrics.
The metrics qualitatively reproduce the behaviour obtained in simulation (see Figure \ref{fig:dynlattices}).
In particular, we obtain $e_{\theta}^\mathrm{ss} < e_{\theta}^*$, with both triangular and square patterns. 
On the other hand, we obtain $e_{L}^\mathrm{ss} < e_{L}^*$
when forming square patterns, but $e_{L}^\mathrm{ss} > e_{L}^*$ with triangular patterns; this does not mean that the pattern is not achieved, as it can be seen in Figure \ref{subfig:robotarium_snap_3} showing the pattern is successfully achieved.
This minor  performance degradation is due to (i) the reduced number of agents, (ii) unmodelled dynamics of the differential-drive robots such as non-holonomic constraints and finite acceleration, and (iii) additional constraints such as the finite size of the arena and the size of the robots themselves.

The overall outcome is that, with a relative small effort, our solution can be implemented in real robots and is capable to overcome the simulation to reality gap.

\begin{figure}[t]
    \begin{minipage}{0.55\textwidth}
    \centering
    \begin{subfigure}[t]{0.49\textwidth}
        \includegraphics[width=1\textwidth]{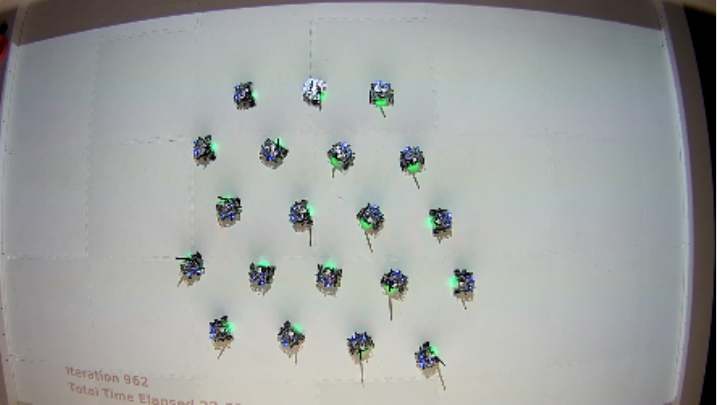}
        \caption{$t=33\,\text{s}$}
    \end{subfigure}
    \begin{subfigure}[t]{0.49\textwidth}
        \includegraphics[width=1\textwidth]{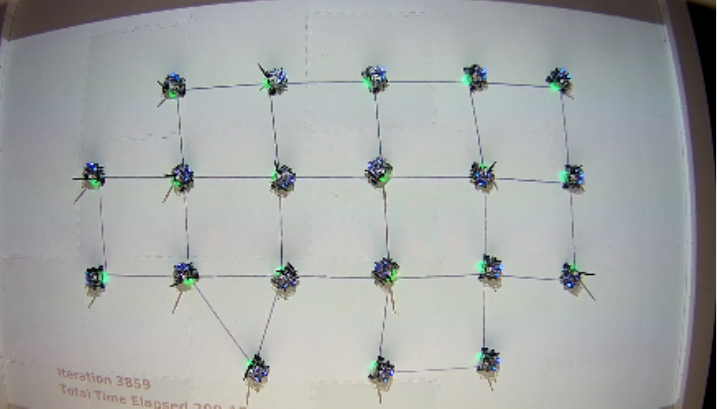}
        \caption{$t=127\,\text{s}$}
    \end{subfigure}
    
    \begin{subfigure}[t]{0.49\textwidth}
        \includegraphics[width=1\textwidth]{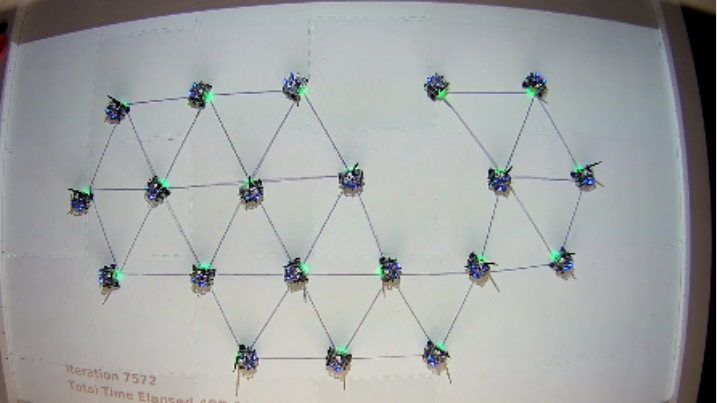}
        \caption{$t=250\,\text{s}$}
        \label{subfig:robotarium_snap_3}
    \end{subfigure}
    \begin{subfigure}[t]{0.49\textwidth}
        \includegraphics[width=1\textwidth]{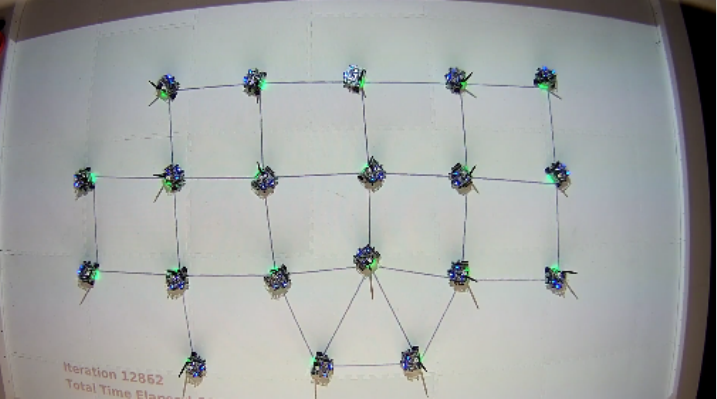}
        \caption{$t=425\,\text{s}$}
    \end{subfigure}
    \end{minipage}
    \begin{minipage}{0.45\textwidth}
    \begin{subfigure}[t]{1\textwidth}
        \includegraphics[width=1\textwidth]{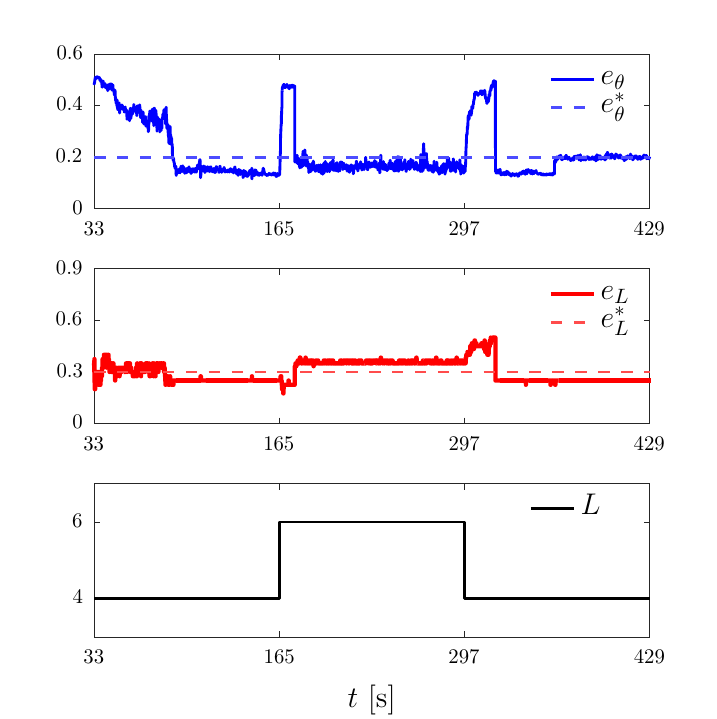}
        \caption{}
    \end{subfigure}
    \end{minipage}
    \caption{Flexibility test in the Robotarium. (a--d) swarm at different time instants.
    (e) time evolution of the metrics and the parameter $L$. The gains are set as
    $(G_{\text{r}},G_{\text{n}}) = (G_{\text{r}}^*,G_{\text{n}}^*)_{L=4}$.
    During the first 33 seconds, the agents reach an aggregated initial condition.
    }
    \label{fig:Robotarium}
\end{figure}

\section{Discussion}
\label{sec:dist_cont_discussion}

In this Chapter we presented a middle level control algorithm (see Section \ref{sec:background_control}) to solve problems of geometric pattern formation, specifically our solution uses virtual forces to form square and triangular lattices.
The proposed solution is distributed and displacement-based, i.e. only requires distance sensors and a compass, and does not need communication between the agents.
We showed via exhaustive simulations and experiments that the strategy is effective in achieving both triangular and square lattices. As a benchmark, we also compared it with the well established distance-based strategy in \cite{Spears2004}, observing better performance particularly when the goal is that of achieving square lattices.
Additionally, we showed that the control law is robust to failures of the agents and to noise, it is flexible to changes in the desired lattice and scalable with respect to the number of agents.
We also presented a simple yet effective gain adaptation law to automatically tune the gains so as to be able to switch the goal pattern in real-time.

On the other hand, a complete proof of the convergence for this control algorithm is still lacking, mostly because of the difficulties generated by the interplay between the two virtual forces (i.e. radial and normal).
Nevertheless, in the simplified case where only the radial force is present, such proof was indeed obtained and will be discussed in the next Chapter.

\chapter{Convergence toward geometric patterns}
\thispagestyle{empty} 
\label{ch:convergence}

As introduced in Chapter \ref{ch:patt_form_background}, proving formally the stability of distributed control algorithms for geometric pattern formation is a challenging yet crucial task.
In this Chapter, which follows the contents and the structure of our work \cite{Giusti2023LCSS}, we 
complement the numerical and experimental results presented in Chapter \ref{ch:dist_cont}, with a novel analytical result. 
Specifically, we revisit the problem of geometric pattern formation using \emph{attraction/repulsion} virtual forces with the aim of bridging a gap in the existing literature and deriving a general proof of convergence when considering the formation of \emph{rigid lattice}  configurations in multi-dimensional spaces.
When compared to previous work, our stability results (i) can be applied to most control laws based on virtual forces (or potentials), rather than only to a specific algorithm \cite{Lee2008}, (ii) are sufficient rather than necessary conditions, as, e.g., in \cite{Hinsen2005}, (iii) characterize the asymptotic configuration of the agents, rather than just proving its  boundedness \cite{Gazi2002}, and (iv) guarantee the emergence of rigid lattices rather than less regular ones, e.g., the $\alpha$-lattices studied in \cite{Olfati-Saber2006}, which allow for disconnected graphs and the coexistence of heterogeneous patterns (e.g., triangular and square).

\section{Problem formulation}
\label{sec:problem_statement}

Let us formulate the problem by recalling the concepts of \emph{swarm}, a set of $N$ mobile agents as introduced in Section \ref{subsec:swarm_definition}, and of \emph{adjacency set} (Definition \ref{def:adjacency_set}).
Then set $R_{\min}=0$ and the \emph{maximum link length} $R_{\max}=R_{\text{a}}$, so that two agents are connected by a link (see Definition \ref{def:links}) if and only if their distance is at most $R_\R{a}$; see Figure \ref{fig:R_a_R}.

Now let us recall the \emph{infinitesimal rigidity} of \emph{frameworks} (Definitions \ref{def:framework} and \ref{def:inf_rigidity}), to introduce the central topic of this Chapter, i.e. the \emph{rigid lattice}.

\begin{definition}[Rigid lattice]
\label{def:triangular_lattice}
Given a swarm with framework $\C{F}(\bar{\vec{x}}^*)$, we call $\bar{\vec{x}}^*$ a \emph{rigid lattice configuration} if
\begin{enumerate}[label=(\Alph*)]
    \item\label{condition:rigidity}
    $\C{F}(\bar{\vec{x}}^*)$
    is infinitesimally rigid, and
    \item\label{condition:link_length} 
    $\norm{\vec{r}_{ij}}=R,\ \forall (i,j)\in \C{E}(\bar{\vec{x}}^*)$,
\end{enumerate}
where $R \in \BB{R}_{>0}$ denotes the \emph{desired link length}.
\end{definition}

Figs.~\ref{fig:rigid_lattice_2d}, \ref{fig:rigid_lattice_3d} portray examples of rigid lattices, for $d=2$ and $d=3$: a tessellation of triangles, and one of tetrahedra and octahedra, respectively.
It is immediate to verify that rigid lattices  are characterized by connected graphs where each agent has at least $d$ links, yielding robustness to link failure. 
A similar structure is the $\alpha$-lattice from \cite{Olfati-Saber2006}, which requires \ref{condition:link_length} but not \ref{condition:rigidity} (hence, a rigid lattice is an $\alpha$-lattice, but the converse is false).
Thus, $\alpha$-lattices can display more heterogeneous structures, containing different polytopes (e.g., squares, cubes), or even be disconnected, which can be unsuited for applications such as region coverage or distributed sensing.
Note however that \emph{vacancies}, i.e. holes in the lattice, can be present in both rigid and $\alpha$-lattices.

In a rigid lattice, we denote by $R_{\R{next}}$ the minimum distance between two not directly connected agents (e.g., $R_{\R{next}}=R\sqrt{3}$ if $d=2$ and $R_{\R{next}}=R\sqrt{2}$ if $d=3$).
Here, we assume that
    $R_\R{a}\in \, ] R; R_{\R{next}} [$,
so that, when the swarm is in a rigid lattice configuration, the adjacency set (Definition~\ref{def:adjacency_set}) of any agent includes only the agents in its immediate surroundings, and all the links (Definition~\ref{def:links}) have length $R$ (see Figure \ref{fig:rigid_lattices_examples}).
Moreover, defining $\C{T} \subset \BB{R}^{dN}$ as the set of all rigid lattice configurations; it is immediate to verify that $\C{T}$ is unbounded and disconnected.  
\begin{figure}[t]
    \centering
    \subfloat[]{
    \includegraphics[trim=4mm 4mm 4mm 4mm, clip, width=0.28\columnwidth]{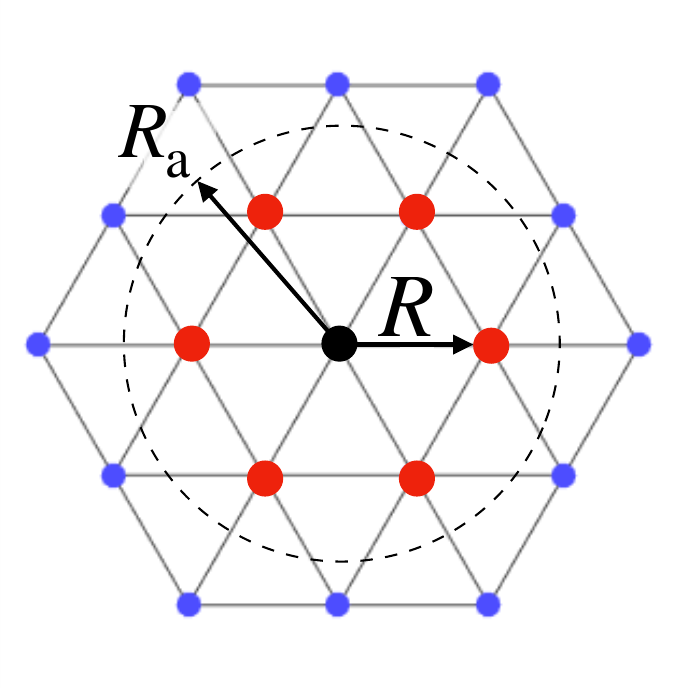}\label{fig:R_a_R}}
    \subfloat[]{
    \includegraphics[trim=4mm 4mm 4mm 4mm, clip, width=0.28\columnwidth]{figs/N=100_sparse_rep2_small.pdf}\label{fig:rigid_lattice_2d}}
    \subfloat[]{
    \includegraphics[trim=2mm 2mm 2mm 0mm, clip, width=0.4\columnwidth]{figs/lattice_3D_N=8.pdf}\label{fig:rigid_lattice_3d}}
    \caption{
    (a) Adjacency set (red) of an agent (black).
    (b) A rigid lattice with $d=2$ and $N=100$.
    (c) A rigid lattice with $d=3$ and $N=8$.}
    \label{fig:rigid_lattices_examples}
\end{figure} 
%
Also, note that any configuration congruent to a rigid lattice (according to Definition \ref{def:congruent_conf}) is a rigid lattice itself, formally $\bar{\vec{x}}^* \in \C{T} \Leftrightarrow \Gamma(\bar{\vec{x}}^*) \subset \C{T}$ (see Figure \ref{fig:rigid_lattices_sets}), and
\begin{equation}
    \label{eq:TasUnion}
    \C{T}=\bigcup_{\bar{\vec{x}}^* \in \C{T}} \Gamma(\bar{\vec{x}}^*).
\end{equation}

\begin{figure}[t]
    \centering
    \subfloat[]{
    \includegraphics[width=0.3\columnwidth]{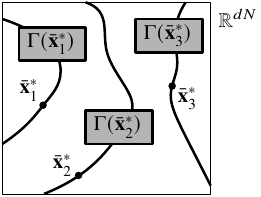}
    \label{fig:rigid_lattices_sets}}
    \subfloat[]{
    \includegraphics[width=0.3\columnwidth]{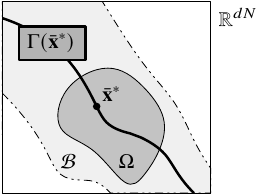}
    \label{fig:all_sets}}
    \caption{(a) Sets of rigid lattices configurations. (b) Sets used in the proof of Theorem \ref{th:local_stability_lyap}.}
\end{figure}



Now, consider a swarm $\C{S}$ and assume the agents' dynamics is described by
\begin{equation}\label{eq:model}
    \dot{\vec{x}}_i(t) = \vec{u}_i(t), \ \ \forall i \in \C{S},
\end{equation}
where $\vec{x}_i(t) \in \mathbb{R}^d$ is the position of agent $i$, and $\vec{u}_i(t)\in \mathbb{R}^d$ its velocity input.
We aim to select and validate a distributed control law to compute this input and let the swarm achieve a rigid lattice configuration.
%
Therefore, let us recall the interaction set $\C{I}_i(t)$ from Definition \ref{def:interaction_set} and
assume that $R_{\R{s}}\geq R_\R{a}$, so that
\begin{equation}\label{eq:A_subset_I}
    \C{A}_i \subseteq \C{I}_i, \quad \forall i\in \C{S}.
\end{equation}
We can now select $\vec{u}_i(t)$ in \eqref{eq:model} as the distributed \emph{virtual forces} control law
\begin{equation}\label{eq:control_law}
    \vec{u}_i(t) \coloneqq \sum_{j \in \C{I}_i(t)} f\left( \norm{\vec{r}_{ij}(t)} \right)\, \unitvec{r}_{ij}(t),
\end{equation}
where $f : \mathbb{R}_{>0} \rightarrow \mathbb{R}$ is the \emph{interaction function}.

Let us introduce our first result, which slightly extends {\cite[Lemma 1]{Gazi2002}}, and will be used in the next Section.
\begin{lemma}\label{th:invariant_center}
The position of the center of the swarm, 
say $\vec{x}_{\R{c}}$, under the control law \eqref{eq:control_law} is invariant, that is 
$
    \dot{\vec{x}}_{\R{c}} = \vec{0}\ \forall \bar{\vec{x}} \in \BB{R}^{dN}.
$
\end{lemma}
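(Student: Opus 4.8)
The plan is to show that the sum of all the control inputs vanishes identically, which immediately gives $\dot{\vec{x}}_{\R{c}} = \frac{1}{N}\sum_{i\in\C{S}} \dot{\vec{x}}_i = \frac{1}{N}\sum_{i\in\C{S}} \vec{u}_i = \vec{0}$. So the whole argument reduces to verifying $\sum_{i\in\C{S}} \vec{u}_i(t) = \vec{0}$ for every configuration $\bar{\vec{x}} \in \BB{R}^{dN}$.

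First I would write out the double sum explicitly:
\begin{equation*}
    \sum_{i \in \C{S}} \vec{u}_i(t) = \sum_{i \in \C{S}} \sum_{j \in \C{I}_i(t)} f\!\left( \norm{\vec{r}_{ij}(t)} \right)\, \unitvec{r}_{ij}(t).
\end{equation*}
The key observation is the symmetry of the interaction topology: by Definition~\ref{def:interaction_set}, $j \in \C{I}_i(t)$ holds exactly when $\norm{\vec{r}_{ij}(t)} \leq R_{\text{s}}$, and since $\norm{\vec{r}_{ij}} = \norm{\vec{r}_{ji}}$ this is equivalent to $i \in \C{I}_j(t)$. Hence the set of ordered pairs $\{(i,j) : j \in \C{I}_i(t)\}$ is symmetric, so for every term indexed by $(i,j)$ there is a matching term indexed by $(j,i)$. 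Now I would pair these two terms: since $\vec{r}_{ij} = \vec{x}_i - \vec{x}_j = -\vec{r}_{ji}$, we have $\unitvec{r}_{ij} = -\unitvec{r}_{ji}$, while $f(\norm{\vec{r}_{ij}}) = f(\norm{\vec{r}_{ji}})$ because the argument is the same scalar. Therefore
\begin{equation*}
    f\!\left( \norm{\vec{r}_{ij}} \right) \unitvec{r}_{ij} + f\!\left( \norm{\vec{r}_{ji}} \right) \unitvec{r}_{ji} = \vec{0},
\end{equation*}
and the double sum collapses to $\vec{0}$ as a sum of such cancelling pairs.

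The remaining (minor) point to be careful about is well-posedness: the summand involves $\unitvec{r}_{ij} = \vec{r}_{ij}/\norm{\vec{r}_{ij}}$ and $f$ has domain $\mathbb{R}_{>0}$, so the expression is only defined when $\vec{x}_i \neq \vec{x}_j$ for all interacting pairs; as is standard in this literature (and consistent with the saturation of $f_{\text{r}}$ in the previous chapter) I would note that collisions are excluded, so the sum is well-defined on the relevant portion of $\BB{R}^{dN}$. There is essentially no real obstacle here — the statement is a direct consequence of Newton's-third-law-style antisymmetry of radial pairwise forces combined with the symmetry of the sensing relation; the only thing one must check is precisely that the interaction set $\C{I}_i$ is defined symmetrically, which it is. This mirrors, and very slightly generalizes, {\cite[Lemma~1]{Gazi2002}}, the generalization being that here the interaction is restricted to the (still symmetric) neighbourhood $\C{I}_i$ rather than taken over all agents.
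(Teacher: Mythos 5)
Your proposal is correct and follows essentially the same argument as the paper's own proof: express $\dot{\vec{x}}_{\R{c}}$ as the double sum of pairwise forces, use the symmetry of the sensing relation $\C{I}_i$ to pair each term $(i,j)$ with $(j,i)$, and cancel them via $f(\norm{\vec{r}_{ij}})\,\unitvec{r}_{ij} = -f(\norm{\vec{r}_{ji}})\,\unitvec{r}_{ji}$. Your extra remark on well-posedness at coincident positions is a reasonable addition but not part of the paper's argument.
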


\begin{proof}
Exploiting \eqref{eq:model} and \eqref{eq:control_law}, the dynamics of the center of the swarm is given by
    $\dot{\vec{x}}_{\R{c}} \coloneqq \frac{1}{N} \sum_{i = 1}^N \dot{\vec{x}}_i = \frac{1}{N} \sum_{i = 1}^N {\vec{u}}_i =
    \frac{1}{N} \sum_{i = 1}^N \sum_{j \in \C{I}_i} f(\norm{\vec{r}_{ij}})\, \unitvec{r}_{ij}$.
Since the existence of any link $(i,j)$ implies the existence of link $(j,i)$ (see Definition~\ref{def:adjacency_set}), for any term $f(\norm{\vec{r}_{ij}}) \, \unitvec{r}_{ij}$ there exists a term $f(\norm{\vec{r}_{ji}}) \, \unitvec{r}_{ji}=-f(\norm{\vec{r}_{ij}}) \, \unitvec{r}_{ij}$ (because $\norm{\vec{r}_{ij}} = \norm{\vec{r}_{ji}}$ and $\unitvec{r}_{ij} = - \unitvec{r}_{ji}$).
Therefore, the sum of the two is zero, yielding the thesis.
\end{proof}

\section{Convergence to a rigid lattice configuration}
\label{sec:main_results}
We can now state our main result, i.e., that, given an interaction function $f$ (in \eqref{eq:control_law}) generating short range repulsion and long range attraction, the set of rigid lattice configurations is locally asymptotically stable (see Definition \ref{def:LAS}).
For this result to hold we need the following assumption on the interaction function.

\begin{assumption}\label{ass:interaction_function}
    $f$ (in \eqref{eq:control_law}) is such that:
\begin{enumerate}[label=(a\arabic*)]
    \item \label{hp:null_point} $f(R)=0$,
    \item \label{hp:attraction_repulsion} $f(z) > 0$ for $z \in ]0;R [$ and $f(z) < 0$ for $z\in ]R;R_{\R{a}} [$, 
    \item \label{hp:integrable} $f(z)$ is continuous in $]0; R_\R{a}]$,
    \item \label{hp:vanishing} $f(z) = 0$ for any $z > R_\R{a}$.
    \end{enumerate}
\end{assumption}

An exemplary interaction function fulfilling the assumption above is portrayed in Figure \ref{fig:interaction_function_and_potential}.

Without loss of generality, we further assume that, under Assumption \ref{ass:interaction_function}, in a sufficiently small neighborhood of a rigid lattice configuration, all other equilibria are also rigid lattice configurations (supporting evidence showing that this assumption is not restrictive is reported in Section \ref{sec:rigid_latt_complementary_results}).

\begin{figure}[t]
    \centering
    \includegraphics[width=0.35\columnwidth]{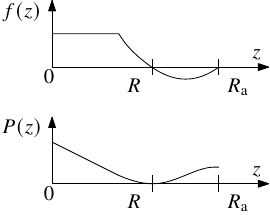}
    \caption{
    An interaction function $f$ satisfying Assumption \ref{ass:interaction_function} and its potential $P$.
    }
    \label{fig:interaction_function_and_potential}
\end{figure}


\begin{theorem}\label{th:local_stability_lyap}[Stability of rigid lattices]
Let Assumption \ref{ass:interaction_function} hold.
Then, for any rigid lattice configuration $\bar{\vec{x}}^*$, 
$\Gamma(\bar{\vec{x}}^*)$ is a locally asymptotically stable equilibrium set.
Consequently, $\C{T}$ is also a locally asymptotically stable equilibrium set.
\end{theorem}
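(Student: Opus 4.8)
The plan is to observe that \eqref{eq:model}--\eqref{eq:control_law} is a gradient flow and to build, out of the potential of the interaction function $f$, a set-valued Lyapunov function whose local minimisers are precisely the configurations congruent to $\bar{\vec{x}}^*$. \emph{Step 1 (gradient structure).} Let $P:\BB{R}_{>0}\to\BB{R}$ satisfy $P'=-f$, normalised so that $P(R_\R{a})=0$. By Assumption \ref{ass:interaction_function}\ref{hp:integrable}--\ref{hp:vanishing}, $P$ is $C^1$ on $\BB{R}_{>0}$ and vanishes identically on $[R_\R{a},\infty)$; by \ref{hp:null_point}--\ref{hp:attraction_repulsion} it is strictly decreasing on $]0;R[$ and strictly increasing on $]R;R_\R{a}[$, so on $]0;R_\R{a}]$ it attains a strict minimum $P(R)<0$ at $z=R$. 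Set $V(\bar{\vec{x}})\coloneqq\sum_{i<j}P(\norm{\vec{r}_{ij}})$ (only the pairs at distance at most $R_\R{a}$ contribute, since $P\equiv 0$ beyond $R_\R{a}$). A direct computation gives $\nabla_{\vec{x}_i}V=-\sum_{j}f(\norm{\vec{r}_{ij}})\unitvec{r}_{ij}=-\vec{u}_i$, so \eqref{eq:model}--\eqref{eq:control_law} is exactly $\dot{\bar{\vec{x}}}=-\nabla V$, whence $\dot V=-\norm{\nabla V}^2=-\sum_i\norm{\vec{u}_i}^2\le 0$, with equality precisely at the equilibria of \eqref{eq:model}.

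\emph{Step 2 (localisation and compactification).} Fix a rigid lattice configuration $\bar{\vec{x}}^*$. Since its links have length $R$ while every non-adjacent pair is at distance at least $R_{\R{next}}>R_\R{a}$ --- both inequalities strict --- there is a neighbourhood $\C{U}$ of $\Gamma(\bar{\vec{x}}^*)$ on which the set of pairs within $R_\R{a}$ is frozen to (a relabelling of) $\C{E}(\bar{\vec{x}}^*)$; hence on $\C{U}$ one has $V(\bar{\vec{x}})=\sum_{(i,j)\in\C{E}(\bar{\vec{x}}^*)}P(\norm{\vec{r}_{ij}})$, which is smooth, bounded below by $\abs{\C{E}(\bar{\vec{x}}^*)}\,P(R)$, and attains that bound exactly when every link has length $R$. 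By Lemma \ref{th:invariant_center} the centre $\vec{x}_\R{c}$ is conserved, so it is enough to argue on the invariant affine slice $\{\vec{x}_\R{c}=\vec{x}_\R{c}(\bar{\vec{x}}^*)\}$, on which the surviving portion of $\Gamma(\bar{\vec{x}}^*)$ is an $SO(d)$-orbit (modulo the discrete stabiliser of $\bar{\vec{x}}^*$), hence compact; this also neutralises the unboundedness of $\Gamma(\bar{\vec{x}}^*)$ and the cut-off of the force field at $R_\R{a},R_\R{s}$. \emph{Step 3 (rigidity pins down the minimiser set --- the crux).} Consider the edge-length map $\vec{g}(\bar{\vec{x}})\coloneqq(\norm{\vec{r}_{ij}}^2)_{(i,j)\in\C{E}(\bar{\vec{x}}^*)}$; at any configuration with all links of length $R$ its Jacobian equals $2\vec{M}$ (up to edge relabelling), with $\vec{M}$ the rigidity matrix, so by Theorem \ref{th:rigidity} it has constant rank $dN-d(d+1)/2$ near $\bar{\vec{x}}^*$. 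By the constant-rank theorem the fibre $\{\vec{g}=\vec{g}(\bar{\vec{x}}^*)\}$ is, near $\bar{\vec{x}}^*$, a $C^1$ manifold of dimension $d(d+1)/2$; it contains $\Gamma(\bar{\vec{x}}^*)$, which --- an infinitesimally rigid framework with $N>d$ admitting no continuous self-symmetry --- is itself a manifold of the same dimension, so the two coincide on $\C{U}$. Consequently, writing $W\coloneqq V-\abs{\C{E}(\bar{\vec{x}}^*)}\,P(R)$, on $\C{U}$ we have $W\ge 0$, $W=0$ iff $\bar{\vec{x}}\in\Gamma(\bar{\vec{x}}^*)$, and (restricting $\vec{g}$ to a slice transverse to $\Gamma(\bar{\vec{x}}^*)$, where it is a local diffeomorphism onto its image, together with the strict minimum of $P$ at $R$) $W$ is bounded below by a class-$\C{K}$ function of the distance to $\Gamma(\bar{\vec{x}}^*)$; i.e. $W$ is positive definite with respect to $\Gamma(\bar{\vec{x}}^*)$.

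\emph{Step 4 (invariance principle and the claim for $\C{T}$).} On the compact slice of $\C{U}$, $W$ is a Lyapunov function with $\dot W=-\sum_i\norm{\vec{u}_i}^2\le 0$; the largest invariant set contained in $\{\dot W=0\}$ is the set of equilibria of \eqref{eq:model} near $\bar{\vec{x}}^*$, which by the standing assumption (stated just below Assumption \ref{ass:interaction_function}) are rigid lattice configurations and hence, by Step 3, lie in $\Gamma(\bar{\vec{x}}^*)$. The set-valued version of LaSalle's/Lyapunov's theorem then yields that $\Gamma(\bar{\vec{x}}^*)$ is locally asymptotically stable (in the sense of Definition \ref{def:LAS}) on the slice, and translation-equivariance of \eqref{eq:control_law} together with conservation of $\vec{x}_\R{c}$ propagates this to all of $\BB{R}^{dN}$. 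For the second assertion, by \eqref{eq:TasUnion} $\C{T}$ is the union of such orbits; for a fixed number of agents there are finitely many combinatorial types of rigid lattice, and equivariance makes the basin size uniform within each type, so taking the smallest one shows that $\C{T}$ is locally asymptotically stable as well.

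\textbf{Main obstacle.} I expect Step 3 to be the delicate part: converting the purely first-order, linear-algebraic hypothesis of infinitesimal rigidity (Definition \ref{def:inf_rigidity}, Theorem \ref{th:rigidity}) into a genuine local geometric statement --- that near $\bar{\vec{x}}^*$ the only configurations with all links at length $R$ are the congruent ones, with a quantitative (class-$\C{K}$) lower bound on $W$ transverse to the orbit --- and carefully ruling out continuous self-symmetries so that $\Gamma(\bar{\vec{x}}^*)$ and the fibre of $\vec{g}$ have equal dimension. The remaining ingredients (the gradient identity, the cut-offs at $R_\R{a}$ and $R_\R{s}$, the unboundedness of $\Gamma(\bar{\vec{x}}^*)$, and the passage from a single orbit to $\C{T}$) are routine once Steps 2 and 4 are set up with Lemma \ref{th:invariant_center}.
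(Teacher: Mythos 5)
Your proof is correct and follows the same core route as the paper's: the Lyapunov function is built from the pair potential $P$ with $P'=-f$, its derivative along trajectories is $-\norm{\dot{\bar{\vec{x}}}}^2$, and LaSalle's invariance principle (together with the standing assumption that all nearby equilibria are rigid lattices) closes the argument. The differences lie in how two technical degeneracies are handled, and they are worth noting. First, the translation invariance of $\Gamma(\bar{\vec{x}}^*)$: the paper adds the penalty term $\norm{\vec{x}^*_{\R{c}}-\vec{x}_{\R{c}}}^2$ to $V$ so that the sublevel set $\Omega$ is bounded, whereas you quotient translations out by restricting to the invariant slice $\{\vec{x}_{\R{c}}=\vec{x}_{\R{c}}(\bar{\vec{x}}^*)\}$ guaranteed by Lemma \ref{th:invariant_center}, on which the congruence orbit becomes a compact $O(d)$-orbit; both work, yours being conceptually cleaner while the paper's stays in the ambient space. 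Second, your Step 3 proves explicitly, via the constant-rank theorem applied to the edge-length map (essentially the Asimow--Roth argument), that near $\bar{\vec{x}}^*$ the only configurations with all links of length $R$ are the congruent ones, and extracts a class-$\C{K}$ lower bound on $W$ transverse to the orbit; the paper leaves this identification largely implicit, invoking rigidity only in footnotes to ensure $\Omega\subseteq\C{B}$ and that $\Omega$ is bounded, so on the point you correctly flag as the crux your version is the more complete one. Finally, your closing argument for $\C{T}$ (finitely many combinatorial types of rigid lattice for fixed $N$, with equivariance giving uniform basins within each type) addresses a uniformity issue that the paper's one-line appeal to \eqref{eq:TasUnion} glosses over. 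The only minor blemish is the claim that $P$ is $C^1$ on all of $\BB{R}_{>0}$: Assumption \ref{ass:interaction_function} does not force $f(R_{\R{a}})=0$, so $P$ may fail to be differentiable exactly at $R_{\R{a}}$; this is harmless because in your neighbourhood $\C{U}$ all pairwise distances stay bounded away from $R_{\R{a}}$.
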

\begin{proof}
Let us consider \emph{any} rigid lattice configuration $\bar{\vec{x}}^* \in \C{T}$, 
with center $\vec{x}^*_{\R{c}} \coloneqq \frac{1}{N} \sum_{i = 1}^N \vec{x}_i^*$ and relative positions $\B{r}_{ij}^*$,
and the set $\Gamma(\bar{\vec{x}}^*)$ of its congruent configurations.
Recalling Definition~\ref{def:triangular_lattice}.\ref{condition:link_length} and \ref{hp:null_point}, we have that $\bar{\vec{x}}^*$ is an equilibrium point of \eqref{eq:model}--\eqref{eq:control_law}; thus, $\Gamma(\bar{\vec{x}}^*)$ and $\C{T}$ are equilibrium sets, according to Definition \ref{def:equilibrium_set}.
Next, we will prove local asymptotic stability of $\Gamma(\bar{\vec{x}}^*)\subset \C{T}$, which implies local asymptotic stability of $\C{T}$ through \eqref{eq:TasUnion}.

\paragraph*{Step 1 (Lyapunov function)}

Given a configuration $\bar{\vec{x}} \in \BB{R}^{dN}$ with center $\vec{x}_{\R{c}}$ and inducing the links in $\C{E}(\bar{\vec{x}})$ according to Definition~\ref{def:links}, let $m \coloneqq \lvert \C{E}(\bar{\vec{x}}) \rvert$ and order the links in $\C{E}(\bar{\vec{x}})$ arbitrarily, so that $\vec{r}_1, \dots, \vec{r}_m$ refer to the relative positions $\vec{r}_{ij}$ for $(i,j)\in \C{E}(\bar{\vec{x}})$.
Recalling \ref{hp:integrable}, we can define the potential function
$P:\, ]0, R_\R{a}]\to \BB{R}$ given by $P(z)=- \int_{R}^{z} f(y) \, \R{d}y$ (see Figure \ref{fig:interaction_function_and_potential}). 
Note that $P(R)=0$, $\frac{\R{d}P}{\R{d} z}(z)= -f(z)$, and, from \ref{hp:attraction_repulsion},
\begin{equation}\label{eq:P_positive}
    P(z) > 0 \quad \forall z \in \BB{R}_{>0} \setminus \{R\} .    
\end{equation}

Then, let us consider the candidate Lyapunov function
\begin{equation}\label{eq:V+cm}
    V(\bar{\vec{x}}) \coloneqq \norm{\vec{x}^*_\R{c} - \vec{x}_\R{c} }^2 + \sum_{k\in\C{E}(\bar{\vec{x}})} P(\norm{\vec{r}_{k}}).
\end{equation}
By \eqref{eq:P_positive}, it holds that
    $V(\bar{\vec{x}})\geq0 \ \forall \bar{\vec{x}} \in \BB{R}^{dN}$,
and $V = 0$ if and only if both $\vec{x}_{\R{c}} =\vec{x}_{\R{c}}^*$ and Definition~\ref{def:triangular_lattice}.\ref{condition:link_length} holds%
. 

\paragraph*{Step 2 (Properties of $V$)}

$V(\bar{\vec{x}})$ is discontinuous over $\BB{R}^{dN}$ (because $\C{E}(\bar{\vec{x}})$ changes when links (dis-)appear).
However, $V(\bar{\vec{x}})$ is continuous and differentiable in any subset of $\BB{R}^{dN}$ where the set $\C{E}(\bar{\vec{x}})$ of links is constant.
To find such a set, we seek conditions on $\vec{\bar{x}}$ such that $\C{E}(\vec{\bar{x}}) = \C{E}(\vec{\bar{x}}^*)$ (see Definitions \ref{def:adjacency_set} and \ref{def:links}), i.e.,
\begin{subequations}
\begin{align}
    \norm{\vec{r}_{ij}} &< R_\R{a}, \quad \forall (i, j) \in \C{E}(\vec{\bar{x}}^*),\label{eq:links_preserved}\\
    \norm{\vec{r}_{ij}} &> R_\R{a}, \quad \forall (i, j) \not\in \C{E}(\vec{\bar{x}}^*).\label{eq:no_links_created}
\end{align}
\end{subequations}
\eqref{eq:links_preserved} means that all links in $\C{E}(\vec{\bar{x}}^*)$ are preserved in $\C{E}(\vec{\bar{x}})$, while \eqref{eq:no_links_created} means that no new links are created in $\C{E}(\vec{\bar{x}})$ with respect to $\C{E}(\vec{\bar{x}}^*)$.
With simple algebraic manipulations it is possible to show that  \eqref{eq:links_preserved} and \eqref{eq:no_links_created} hold if $\bar{\vec{x}} \in \C{B}$, where
\begin{equation}
    \label{eq:setB}
    \C{B} \coloneqq \{ \bar{\vec{x}} \in \mathbb{R}^{dN} : \abs{ \norm{\vec{r}_{ij}} - \norm{\vec{r}_{ij}^*} } < \beta, \ \forall i,j \in \C{S} \},
\end{equation}
and $\beta < \min_{i,j\in \C{S}}  \abs{ R_\R{a} - \norm{\vec{r}_{ij}^*} }$; $\C{B}$ can be intended as a ``neighborhood'' of $\Gamma(\bar{\vec{x}}^*)$ with ``width'' $\beta$ (see Figure \ref{fig:all_sets}). 
Thus, $\C{E}(\vec{\bar{x}}) = \C{E}(\vec{\bar{x}}^*)$ in $\C{B}$ and, hence, $V$ is continuously differentiable in $\C{B}$.

\paragraph*{Step 3 (Analysis of $\dot{V}$)}
 
Now, we restrict our analysis to the set $\C{B}$ to study the attractivity of $\Gamma(\bar{\vec{x}}^*)$.
We start by studying the dynamics of the agents.
From \eqref{eq:model}--\eqref{eq:control_law}, we have
    $\dot{\vec{x}}_{i}= \sum_{j\in \C{I}_i} f(\Vert \vec{r}_{ij}\Vert)  \unitvec{r}_{ij}$.
Hypothesis \ref{hp:vanishing} and \eqref{eq:A_subset_I} imply that
    $\sum_{\substack{j\in \C{I}_i}} f(\norm{\vec{r}_{ij}})  \unitvec{r}_{ij} =
    \sum_{\substack{j\in \C{A}_i}} f(\norm{\vec{r}_{ij}})  \unitvec{r}_{ij}$.
Hence, using
the incidence matrix $\vec{B}$ (Definition\,\ref{def:incidence_mat}) of the swarm graph,
we get
\begin{equation} \label{eq:x_i_dot_Incidence}
\begin{aligned} 
    \dot{\vec{x}}_{i}= \sum_{j\in \C{A}_i} f(\Vert \vec{r}_{ij}\Vert)  \unitvec{r}_{ij} = \sum_{k=1}^m [\vec{B}]_{ik} f(\Vert \vec{r}_{k}\Vert) \unitvec{r}_{k}.
\end{aligned}
\end{equation}
Moreover we can write the dynamics of the relative positions along a link $k$ as
$
    \dot{\vec{r}}_{k}= \sum_{i=1}^N [\vec{B}]_{ik} \dot{\vec{x}}_{i}.
$ 
Therefore, exploiting \eqref{eq:V+cm}, Lemma \ref{th:invariant_center}, and \eqref{eq:x_i_dot_Incidence},
we get
\begin{multline}\label{eq:Vdot}
    \dot{V}(\bar{\vec{x}}) = \sum_{k=1}^m \frac{\partial V}{\partial \norm{\vec{r}_k}}\ \frac{\partial \norm{\vec{r}_k}}{\partial \vec{r}_k} \ \dot{\vec{r}}_k 
    = \sum_{k=1}^m P'(\norm{\vec{r}_k}) \ \unitvec{r}_k\T  \, \sum_{i=1}^N [\vec{B}]_{ik} \dot{\vec{x}}_{i} \\
    =-\sum_{i=1}^N  \sum_{k=1}^m f(\norm{\vec{r}_k}) \ [\vec{B}\T]_{ki} \ \unitvec{r}_k\T \dot{\vec{x}}_{i} 
    =-\sum_{i=1}^N \dot{\vec{x}}_i\T \dot{\vec{x}}_i = - \dot{\bar{\vec{x}}}\T \dot{\bar{\vec{x}}} \le 0,
\end{multline}
where we also used that $P' = -f$ and that $\frac{\partial \norm{\vec{r}_k}}{\partial \vec{r}_k} = \unitvec{r}_k\T$.
We can hence conclude that $\dot{V}(\bar{\vec{x}})=0$ if and only if $\dot{\bar{\vec{x}}}=\vec{0}$, i.e., in correspondence of equilibrium configurations.

Choosing $\beta$ in \eqref{eq:setB} small enough, we exclude the presence of equilibrium configurations not belonging to $\Gamma(\bar{\vec{x}}^*)$, and hence 
\begin{equation}
\begin{cases}\label{eq:characterization_V_dot}
    \dot{V}(\bar{\vec{x}})=0, 
    &\text{if }\bar{\vec{x}} \in \Gamma(\bar{\vec{x}}^*),\\
    \dot{V}(\bar{\vec{x}}) < 0,
    &\text{if }\bar{\vec{x}} \in \C{B} \setminus \Gamma(\bar{\vec{x}}^*).
\end{cases}
\end{equation}

\paragraph*{Step 4 (Applying LaSalle's invariance principle)}
To complete the proof, we define a forward invariant neighborhood of $\bar{\vec{x}}^*$ and then apply LaSalle's invariance principle. 
Given some $\omega \in \BB{R}_{>0}$, let $\Omega$ be the largest connected set containing $\bar{\vec{x}}^*$ such that
$ V(\bar{\vec{x}}) \leq \omega \ \forall \bar{\vec{x}} \in \Omega$ (see Figure \ref{fig:all_sets}).
In particular, we select $\omega$ small enough that $\Omega \subseteq \C{B}$.%
\footnote{\label{fn:existence_omega}
Such $\omega$ exists because $\C{B}$ is a ``neighborhood'' of $\Gamma(\bar{\vec{x}}^*)$ (in the sense of \eqref{eq:setB}) and,  by the rigidity of framework $\C{F}(\vec{\bar{x}}^*)$ (Definition~\ref{def:rigidity}), any continuous motion of the vertices that changes the distance between any two vertices also changes the length of at least one link, causing $V$ to increase.}
Since $V(\bar{\vec{x}}) \leq \omega$ and $\dot{V}(\bar{\vec{x}}) \leq 0$ for all $ \bar{\vec{x}} \in \Omega$, then $\Omega$ is forward invariant.
Moreover, $\Omega$ is closed, because $V$ is continuous in $\Omega$, and $\Omega$ is the inverse image of the closed set $[0,\omega]$.
$\Omega$ is also bounded because (i) translations too far from $\bar{\vec{x}}^*$ cause $V$ to increase beyond $\omega$ (see \eqref{eq:V+cm}), and (ii) $\Omega \subseteq \C{B}$ implies that the deformations of the framework are bounded (see \eqref{eq:setB}).

As $\Omega$ is closed, bounded (thus compact) and forward invariant, we can apply LaSalle's invariance principle \cite[Theorem~4.4]{Khalil2002}, and noting that, in $\Omega$, $\dot{V}(\bar{\vec{x}})=0$ if and only if $\bar{\vec{x}} \in \Gamma(\bar{\vec{x}}^*)$ (see \eqref{eq:characterization_V_dot}), we get that all the trajectories starting in $\Omega$ converge to $\Gamma(\bar{\vec{x}}^*) \cap \Omega$.
This and the forward invariance of $\Omega$ imply that $\Gamma(\bar{\vec{x}}^*)$ is locally asymptotically stable (see Definition \ref{def:LAS}), and so is $\C{T}$ because of \eqref{eq:TasUnion}.
\end{proof}

Moreover, this proof opens the door to the following ancillary results.
\begin{proposition}\label{pro:collision_avoidance}[Collision avoidance]
Let $P^0 \coloneqq \lim_{z \searrow 0}P(z)$.
(i) No collisions between agents occur if $P^0 = \infty$.
(ii) In a sufficiently small neighborhood of a rigid lattice configuration, no collisions occur if $\bar{\B{x}}(0)$ is such that $\sum_{k\in\C{E}(\bar{\vec{x}})} P(\norm{\vec{r}_{k}}) < P^0$.
\end{proposition}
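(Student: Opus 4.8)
The plan is to recognize the closed loop \eqref{eq:model}--\eqref{eq:control_law} as a gradient flow of a potential assembled from $P$, and to track the monotonicity of that potential along trajectories. First I would record some elementary facts about $P$: since $f(z)=0$ for $z>R_\R{a}$ (hypothesis \ref{hp:vanishing}), the function $P$ from the proof of Theorem~\ref{th:local_stability_lyap} satisfies $P(z)=P(R_\R{a})$ for all $z\geq R_\R{a}$; hence $P$ is continuous on $]0,\infty[$, locally Lipschitz away from $z=R_\R{a}$, and — by \eqref{eq:P_positive} together with $P(R)=0$ — non-negative, vanishing only at $z=R$. Then I would introduce the \emph{global potential} $U(\bar{\vec{x}})\coloneqq\sum_{i\neq j}P(\norm{\vec{r}_{ij}})$, summed over \emph{all} pairs (not just links), which is therefore continuous and everywhere finite on configurations with pairwise distinct positions. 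Using $\partial\norm{\vec{r}_{ij}}/\partial\vec{x}_i=\unitvec{r}_{ij}\T$, $P'=-f$, the inclusion $\C{A}_i\subseteq\C{I}_i$ (see \eqref{eq:A_subset_I}) and $f=0$ beyond $R_\R{a}$, a direct computation gives $\nabla_{\vec{x}_i}U=-\sum_{j\neq i}f(\norm{\vec{r}_{ij}})\unitvec{r}_{ij}=-\vec{u}_i$, so \eqref{eq:model}--\eqref{eq:control_law} is exactly $\dot{\bar{\vec{x}}}=-\nabla U(\bar{\vec{x}})$. Consequently $\dot U=\nabla U\T\dot{\bar{\vec{x}}}=-\norm{\dot{\bar{\vec{x}}}}^2\leq 0$ along any solution up to a possible collision, so $U(\bar{\vec{x}}(t))\leq U(\bar{\vec{x}}(0))$; moreover $\int_0^t\norm{\dot{\bar{\vec{x}}}}^2\,\R{d}s=U(\bar{\vec{x}}(0))-U(\bar{\vec{x}}(t))\leq U(\bar{\vec{x}}(0))$ precludes finite-time escape to infinity, so the only possible obstruction to global existence is a collision.

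\textbf{Part (i).} Here I would argue by contradiction: suppose agents $a,b$ collide at a finite time $T$, i.e. $\norm{\vec{r}_{ab}(t)}\to 0$ as $t\to T^-$. Since every term of $U$ is non-negative, $U(\bar{\vec{x}}(t))\geq P(\norm{\vec{r}_{ab}(t)})\to P^0=\infty$, which contradicts $U(\bar{\vec{x}}(t))\leq U(\bar{\vec{x}}(0))<\infty$ (finite because, as usual, $\bar{\vec{x}}(0)$ has pairwise distinct positions). Hence no collision ever occurs.

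\textbf{Part (ii).} Here I would reuse the machinery of the proof of Theorem~\ref{th:local_stability_lyap}. Fix a rigid lattice configuration $\bar{\vec{x}}^*$; after translating it — legitimate since $\Gamma(\bar{\vec{x}}^*)$ is closed under translations and, by Lemma~\ref{th:invariant_center}, $\vec{x}_\R{c}$ is conserved — I may assume $\vec{x}_\R{c}^*=\vec{x}_\R{c}(0)$. Take the compact, forward-invariant neighbourhood $\Omega\subseteq\C{B}$ of $\bar{\vec{x}}^*$ built there, in which the link set is constant and equal to $\C{E}(\bar{\vec{x}}^*)$ (see \eqref{eq:setB}). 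Introduce $\hat V(\bar{\vec{x}})\coloneqq\sum_{k\in\C{E}(\bar{\vec{x}}^*)}P(\norm{\vec{r}_k})$, which on $\Omega$ is $C^1$ and differs from $V$ in \eqref{eq:V+cm} only by the term $\norm{\vec{x}_\R{c}^*-\vec{x}_\R{c}}^2$; as the latter has zero time derivative by Lemma~\ref{th:invariant_center}, the computation \eqref{eq:Vdot} gives $\dot{\hat V}=\dot V=-\norm{\dot{\bar{\vec{x}}}}^2\leq 0$ on $\Omega$. Thus, if $\bar{\vec{x}}(0)\in\Omega$ then $\C{E}(\bar{\vec{x}}(0))=\C{E}(\bar{\vec{x}}^*)$ and, for all $t\geq 0$,
\[
  \hat V(\bar{\vec{x}}(t))\;\leq\;\hat V(\bar{\vec{x}}(0))\;=\;\sum_{k\in\C{E}(\bar{\vec{x}}(0))}P(\norm{\vec{r}_k(0)})\;<\;P^0 .
\]
If two agents $a,b$ collided, then near the collision $\norm{\vec{r}_{ab}}<R_\R{a}$, so $(a,b)$ is a link; since the link set in $\Omega$ is $\C{E}(\bar{\vec{x}}^*)$, this forces $(a,b)\in\C{E}(\bar{\vec{x}}^*)$, whence $\hat V(\bar{\vec{x}}(t))\geq P(\norm{\vec{r}_{ab}(t)})\to P^0$, contradicting the bound above. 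Pairs not in $\C{E}(\bar{\vec{x}}^*)$ stay at distance $>R_\R{a}$ throughout $\Omega$ (by \eqref{eq:no_links_created}) and so cannot collide either. Hence no collision occurs, and ``sufficiently small neighbourhood'' is made precise by any such $\Omega$.

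\textbf{Main obstacle.} The delicate point I anticipate is regularity: $U$ (and $P$) is only locally Lipschitz, not $C^1$, at configurations where some pair sits exactly at distance $R_\R{a}$ (because $f(R_\R{a})$ need not vanish), and it is singular at collisions — which is exactly why Part~(i) is run with the globally defined, everywhere-non-negative $U$, while Part~(ii) must retreat to the set $\Omega$, where the link set is frozen and the relevant potential $\hat V$ is genuinely smooth. The only other care needed is the bookkeeping of the center term, so that $V$, $\hat V$ and the hypothesis $\sum_k P(\norm{\vec{r}_k(0)})<P^0$ line up; this is handled by the normalisation $\vec{x}_\R{c}^*=\vec{x}_\R{c}(0)$.
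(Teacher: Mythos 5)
Your proof is correct, and part (ii) follows the paper's route exactly (forward invariance of $\Omega\subseteq\C{B}$ plus monotonicity of the potential along trajectories). For part (i), however, you take a genuinely different — and more careful — path: the paper simply invokes $V$ from \eqref{eq:V+cm} and $\dot V\le 0$ from \eqref{eq:Vdot}, but that $V$ sums only over the \emph{current} link set $\C{E}(\bar{\vec{x}})$ and therefore jumps upward by $P(R_\R{a})>0$ whenever a new link is created, so its monotonicity is only guaranteed where the link set is frozen; your all-pairs potential $U=\sum_{i\neq j}P(\norm{\vec{r}_{ij}})$, made globally continuous by the fact that $P$ is constant beyond $R_\R{a}$, repairs exactly this and turns the closed loop into a bona fide gradient flow on the whole collision-free configuration space, which is what a \emph{global} no-collision claim actually needs. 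Two minor bookkeeping remarks: your double sum over ordered pairs gives $\nabla_{\vec{x}_i}U=-2\vec{u}_i$ rather than $-\vec{u}_i$ (harmless, since only the sign of $\dot U$ matters), and the nonsmoothness of $U$ at configurations with a pair exactly at distance $R_\R{a}$ — which you flag yourself — is handled by local Lipschitzness and an a.e.\ chain rule. Net effect: same Lyapunov idea as the paper, but your part (i) closes a gap the paper's one-line justification leaves open.
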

\begin{proof}
    When a collision occurs, at least one $\B{r}_k$ becomes zero and thus, from \eqref{eq:P_positive}, $\sum_{k\in\C{E}(\bar{\vec{x}})} P(\norm{\vec{r}_{k}}) \ge P^0$.
    Equations \eqref{eq:V+cm} and \eqref{eq:Vdot} yield the first statement.
    The second statement is obtained by recalling that $\Omega \subseteq \C{B}$ and that $\Omega$ is forward invariant (see Step 4 of the proof of Theorem \ref{th:local_stability_lyap}).
\end{proof}

\begin{remark}\label{rem:tracking}[Path tracking]
    Path tracking can be obtained by adding a velocity term $\B{w}(t)$ on the right hand side of \eqref{eq:model}.
    Theorem \ref{th:local_stability_lyap} still holds, as the analysis can be carried out on new states $\B{y}_i$, with $\B{y}_i(t) = \B{x}_i(t) - \int_{0}^t \B{w}(\tau) \R{d} \tau$ and $\dot{\B{y}}_i = \B{u}_i$.
\end{remark}

Remark \ref{rem:tracking} only aims to show feasibility of path tracking; clearly, more sophisticated strategies can be designed.

\begin{remark}\label{rem:second_order_nonlinear}[Second order dynamics]
    It is possible to show that the results in Theorem \ref{th:local_stability_lyap} also hold in the case of second order nonlinear dynamics, that is $\dot{\B{x}}_i = \B{v}_i$, $\dot{\B{v}}_i = g(\norm{\B{v}_i}) \unitvec{\B{v}}_i + \B{u}_i$, where $\B{x}_i$ and $\B{v}_i$ are the position and velocity of agent $i$, and $g : \BB{R}_{\ge0} \rightarrow \BB{R}_{\le 0}$ is a friction term with $g(z)=0 \Leftrightarrow z=0$ and such that $\B{v}_\R{c}(t) \coloneqq \sum_{i = 1}^N \B{v}_i(t) \rightarrow \B{0}$.
    Namely, the proof of Theorem \ref{th:local_stability_lyap} can be adapted by using the function 
    $
    V = \sum_{k\in\C{E}(\bar{\vec{x}})} P(\norm{\vec{r}_{k}}) + \frac{1}{2} \sum_{i=1}^N \B{v}_{i}\T \B{v}_i$
    in \eqref{eq:V+cm} and exploiting that $\B{x}_\R{c}(t) \coloneqq \sum_{i = 1}^N \B{x}_i(t)$ remains bounded, to apply LaSalle's invariance principle.
\end{remark}


\section{Numerical validation}
\label{sec:validation}

In this section, we validate numerically the result presented in Section \ref{sec:main_results} and estimate the basin of attraction of $\C{T}$.

\subsection{Simulation setup}

We set the number of agents to $N=100$, the desired link length to $R=1$, the sensing radius $R_\R{s}=3$, and the maximum link length to 
$R_\R{a}= (1+R_{\R{next}})/2$ (i.e. $R_\R{a}\approx 1.37$ if $d=2$; $R_\R{a}\approx 1.21$ if $d=3$).
We validate our strategy 
using two interaction functions, depicted in Figure \ref{fig:interaction_function}.
\begin{figure}[t]
    \centering
    \includegraphics[width=0.5\columnwidth]{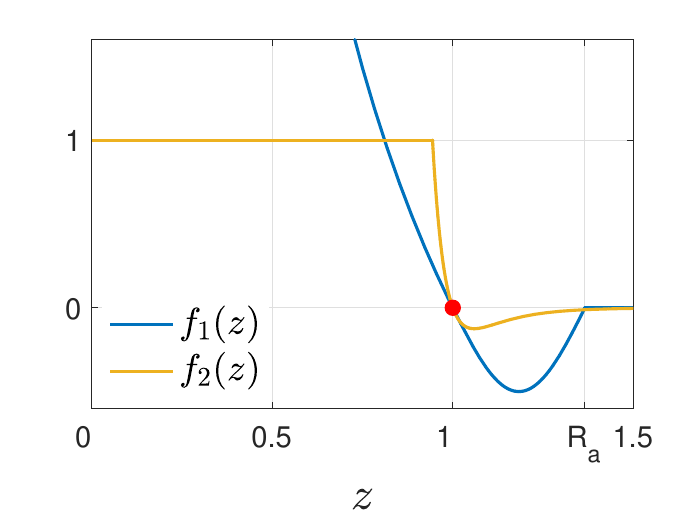}
    \caption{Interaction functions \eqref{eq:power_law} and \eqref{eq:Lennard-Jones} in the case $d=2$.    
    The red dot highlights the zero of the functions in $z=R$.}
    \label{fig:interaction_function}
\end{figure}
The first one is
\begin{equation}\label{eq:power_law}
    f_1(z) = \begin{cases}
        g \left(\frac{1}{z}-\frac{1}{R}\right)\frac{\pi R^2}{R_{\R{a}}-R} &\mbox{if } z\in \ ]0; R],\\
        -g \sin{\left((z-R)\frac{\pi}{R_{\R{a}}-R}\right)} &\mbox{if } z\in \ ]R; R_{\R{a}}],\\
        0 &\mbox{if } z>R_{\R{a}};
    \end{cases}
\end{equation}
with $g = 0.5$.
$f_1$ satisfies Assumption \ref{ass:interaction_function}, is smooth in $]0; R_{\R{a}}[$ and $\lim_{z\searrow 0} f_1(z) = \infty$, which, through Proposition \ref{pro:collision_avoidance}, guarantees the absence of collisions.
The second interaction function $f_2$ is the Physics-inspired Lennard-Jones function \cite{Brambilla2013}, i.e.,
\begin{equation}\label{eq:Lennard-Jones}
    f_2(z) = \min \left\lbrace \left( \frac{a}{z^{2c}}-\frac{b}{z^c}\right), \ 1 \right\rbrace,
\end{equation}
where we select $a = b = 0.5$ and $c=12$ when $d=2$ and $c=24$ when $d=3$; see Figure \ref{fig:interaction_function}.
$f_2$ saturates to $1$ as $z \searrow 0$ to comply with possible actuator saturation.
Moreover, $f_2$ satisfies \ref{hp:null_point}, \ref{hp:attraction_repulsion} and \ref{hp:integrable} in Assumption \ref{ass:interaction_function} exactly, but \ref{hp:vanishing} only approximately.
This is intentional as it allows to account for long range attraction between the agents, which is frequently required in swarm robotics applications \cite{Gazi2002}.
Notice that $f_2$ is equivalent to the radial interaction function \eqref{eq:radial_Lennard-Jones} used in previous Chapter for the formation of triangular and square lattices.

To assess if the swarm is in a rigid lattice configuration, we check conditions \ref{condition:rigidity}, \ref{condition:link_length} in Definition~\ref{def:triangular_lattice}.
%
To evaluate \ref{condition:rigidity} we use Theorem \ref{th:rigidity}.
To evaluate \ref{condition:link_length}, we define the \emph{error} $e(t) \coloneqq \max_{k\in\mathcal{E}(t)} \abs{\norm{\vec{r}_{k}(t)} - R}$, which is zero when \ref{condition:link_length} holds.
Also, as long as $e(t)$ stays strictly lower than $R_\R{a}-R$, links in the configuration of interest are neither created nor destroyed.

For each simulation, the initial positions of the agents are obtained by picking a random rigid lattice configuration and then applying, to each agent, a different random displacement drawn from a uniform distribution over a disk (when $d=2$) or a sphere (when $d=3$), having radius $\delta \in \BB{R}_{\ge 0}$.

All simulation are run in {\sc Matlab} using the agent-based simulator SwarmSim V2 we developed (for more information see Appendix \ref{ch:swarmsim} or visit \url{https://github.com/diBernardoGroup/SwarmSimPublic}).
In particular, agents' dynamics \eqref{eq:model}--\eqref{eq:control_law} are integrated using the forward Euler method with a fixed time step equal to $0.01\, \text{s}$.

\subsection{Numerical results}
To validate Theorem \ref{th:local_stability_lyap}, in Figure \ref{fig:simulation} we report the time evolution of the error $e(t)$ for 10 simulations where the swarm starts from a perturbed rigid lattice configuration.
Simulations are presented for $d \in \{2, 3\}$ and for both interaction functions $f_1$ and $f_2$.
In all cases, infinitesimal rigidity is preserved and $e(t)$ converges to zero, denoting local stability of the lattice.

\begin{figure}[t]
    \centering
    \subfloat[$f_1, \ d=2, \ \delta=0.2$]{\includegraphics[trim={3mm 4mm 4mm 4mm},clip,width=0.49\columnwidth]{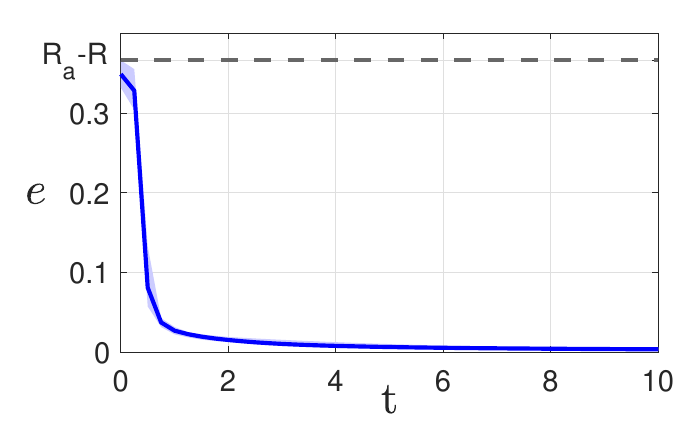}\label{fig:simulation_f_1_d_2}}
    \subfloat[$f_2, \ d=2, \ \delta=0.2$]{\includegraphics[trim={3mm 4mm 4mm 4mm},clip,width=0.49\columnwidth]{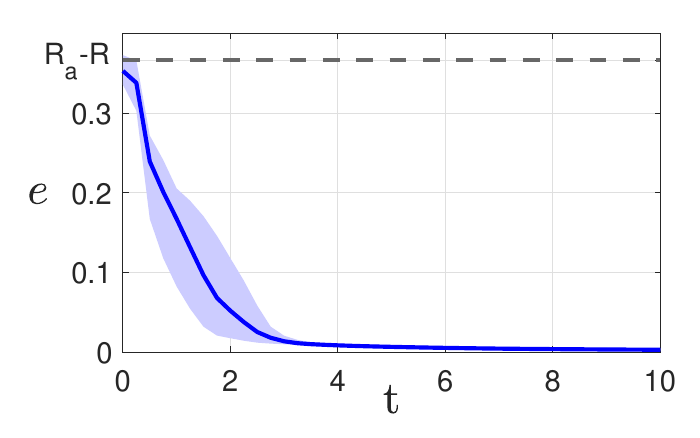}\label{fig:simulation_f_2_d_2}}\\
    \subfloat[$f_1, \ d=3, \ \delta=0.1$]{\includegraphics[trim={3mm 4mm 4mm 4mm},clip,width=0.49\columnwidth]{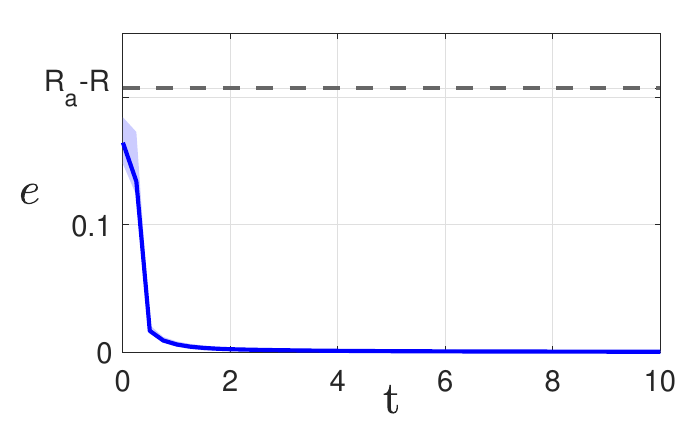}\label{fig:simulation_f_1_d_3}}
    \subfloat[$f_2, \ d=3, \ \delta=0.1$]{\includegraphics[trim={3mm 4mm 4mm 4mm},clip,width=0.49\columnwidth]{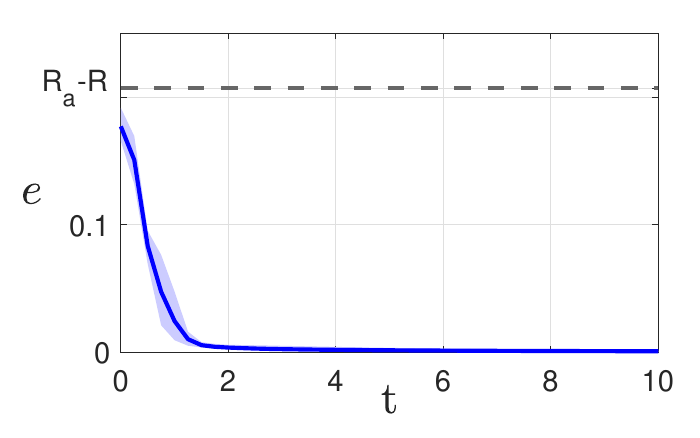}\label{fig:simulation_f_2_d_3}}

    \caption{
    Time evolution of $e$ for various interaction functions and values of $d$. 
    In each panel, 10 simulations with random initial conditions are showed; the solid line is the mean; the shaded area is the minimum and maximum.
    }
    \label{fig:simulation}
\end{figure}

To estimate the basin of attraction of the set of rigid lattice configurations, we performed extensive simulations for various values of $\delta$, and characterize the steady state configurations in Figure \ref{fig:varyingdelta}.
Namely, for $\delta$ smaller than $0.25$ for $d=2$ and $0.2$ for $d=3$ all simulations converge to a rigid lattice configuration. 
Then, as $\delta$ increases,
fewer simulations converge to rigid lattices, until none does.
Note that $e(0)\leq 2\delta$, therefore $\delta = 0.25$ (respectively $\delta = 0.2$) corresponds to a perturbation of up to 50\% (respectively 40\%) of the initial link length, giving an estimation of the basin of attraction 
of $\C{T}$.
%
%

\begin{figure}[t]
    \centering
    \subfloat[Terminal values for $d=2$]{\includegraphics[trim={2mm 2mm 2mm 8mm}, clip, width=0.4\columnwidth]{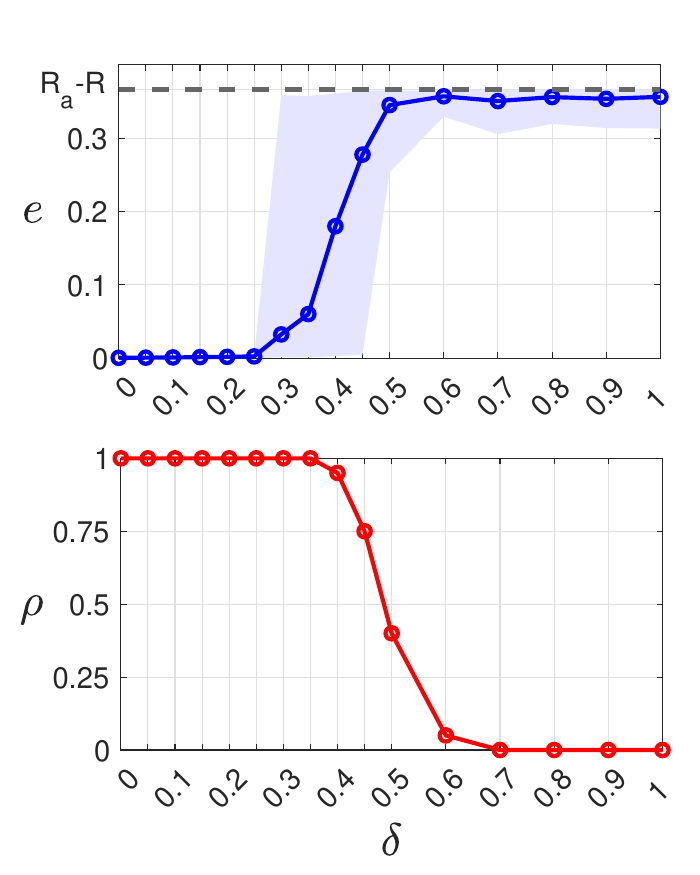} \label{fig:metrics_campaign_2d}}
    \subfloat[Terminal values for $d=3$]{\includegraphics[trim={2mm 2mm 2mm 8mm}, clip, width=0.4\columnwidth]{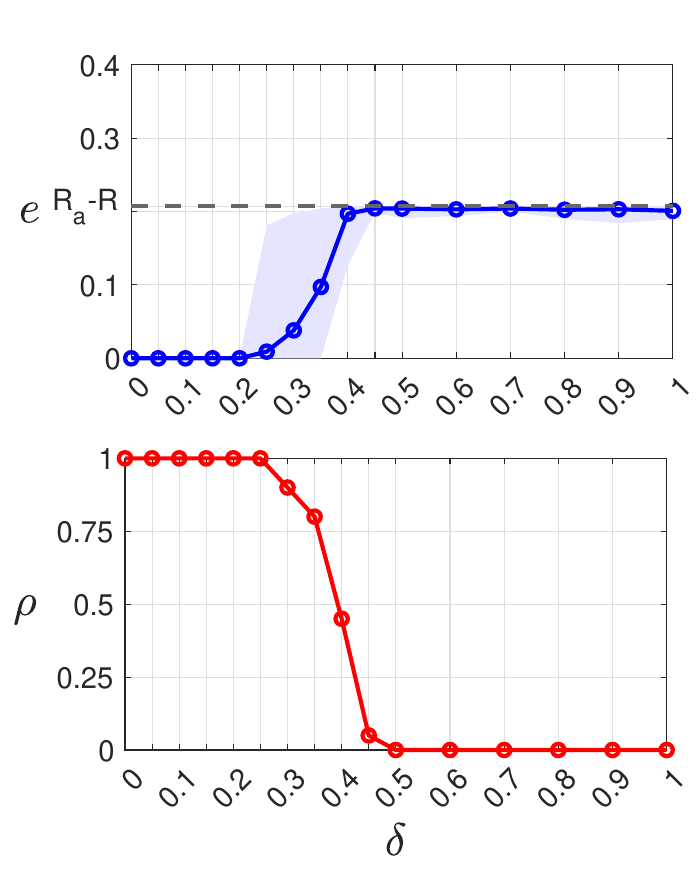} \label{fig:metrics_campaign_3d}}
    \\
    %
    \subfloat[Initial configurations ($d=2$)]{
    \captionsetup[subfigure]{labelformat=empty}
    \subfloat[$\delta=0.2$]{\includegraphics[trim={2mm 10mm 2mm 12mm},clip,width=0.25\columnwidth]{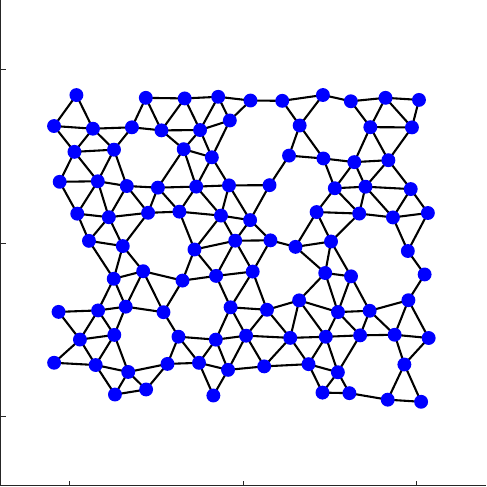}}
    \subfloat[$\delta=0.4$]{\includegraphics[trim={2mm 10mm 2mm 12mm},clip,width=0.25\columnwidth]{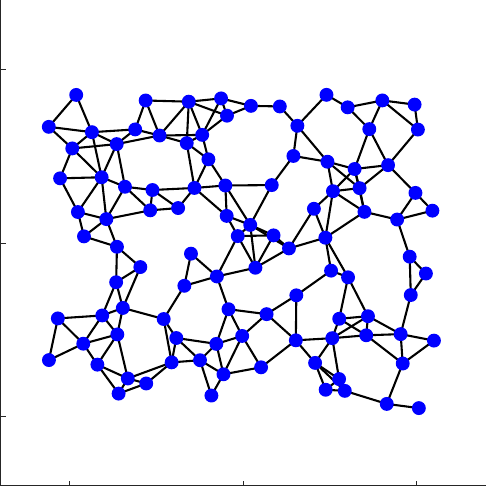}}
    \subfloat[$\delta=0.6$]{\includegraphics[trim={2mm 10mm 2mm 12mm},clip,width=0.25\columnwidth]{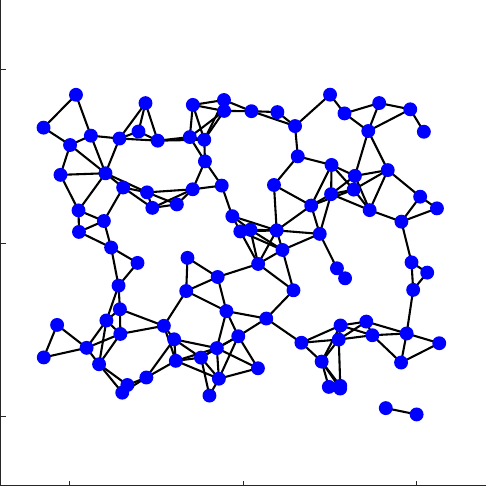}}
    \setcounter{subfigure}{3}}
    \\
    \subfloat[Final configurations ($d=2$)]{
    \captionsetup[subfigure]{labelformat=empty}
    \subfloat[$\delta=0.2$]{\includegraphics[trim={2mm 10mm 2mm 12mm},clip,width=0.25\columnwidth]{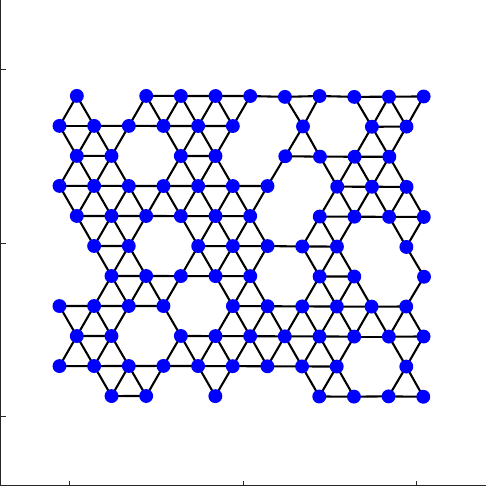}}
    \subfloat[$\delta=0.4$]{\includegraphics[trim={2mm 10mm 2mm 12mm},clip,width=0.25\columnwidth]{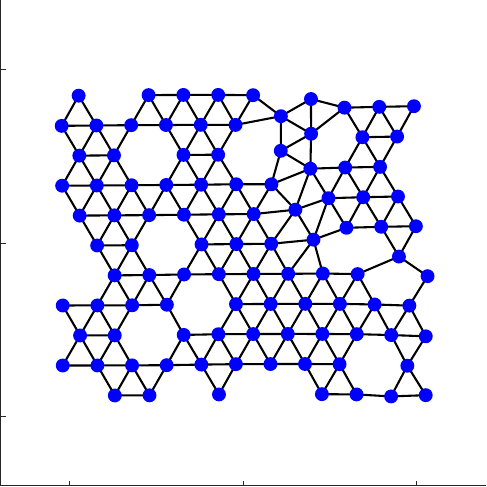}}
    \subfloat[$\delta=0.6$]{\includegraphics[trim={2mm 10mm 2mm 12mm},clip,width=0.25\columnwidth]{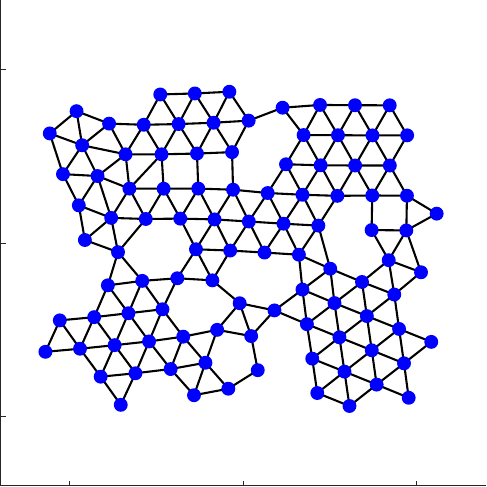}}
    \setcounter{subfigure}{4}}
    %
    \caption{Simulations for different values of $\delta$ and interaction function $f_2$.
    (a), (b): Terminal values of $e$ and $\rho$, respectively for $d=2$ and $d=3$.
    $\rho$ is the fraction of simulations converging to an infinitesimally rigid configuration.
    For $e$, the solid line is the mean; the shaded area is the minimum and maximum. 
    20 simulations with random initial conditions are performed for each value of $\delta$, and last 20\,s.
    (c), (d): Initial and final configurations of representative simulations for specific values of $\delta$ in the case that $d=2$.
    }
    \label{fig:varyingdelta}
\end{figure}


\clearpage
\section{Further results}
\label{sec:rigid_latt_complementary_results}
To further confirm the effectiveness of our theoretical results, we provide below complementary semi-analytical evidence that the set of rigid lattice configurations $\C{T}$ is locally asymptotically stable, which also excludes the presence of other equilibria in an arbitrarily small neighborhood of it.
To do so, we linearize system \eqref{eq:model}--\eqref{eq:control_law} around a rigid lattice configuration, say $\bar{\vec{x}}^*$, obtaining
$
    \label{eq:linearization}
    \dot{\bar{\vec{x}}} \approx \vec{J}(\bar{\vec{x}}^*) \, (\bar{\vec{x}}-\bar{\vec{x}}^*)
$, 
with $\vec{J}(\bar{\vec{x}}^*) \in \BB{R}^{dN \times dN}$ derived as follows.

\paragraph*{Jacobian of \eqref{eq:model}--\eqref{eq:control_law}}
System \eqref{eq:model}--\eqref{eq:control_law} can be recast as
\begin{equation}\label{eq:model_stack}
    \dot{\bar{\vec{x}}} = ((\B{B} \vec{F} \B{G}^{-1} \B{B}\T) \otimes \B{I}_{d}) \bar{\vec{x}}= ((\B{B} \vec{H} \B{B}\T) \otimes \B{I}_{d}) \bar{\vec{x}},
\end{equation}
where $\B{F}, \B{G}, \B{H} \in \BB{R}^{m \times m}$ are diagonal matrices; $[\B{F}]_{ii} \coloneqq f(\norm{\vec{r}_i})$, $[\B{G}]_{ii} \coloneqq \norm{\vec{r}_i}$, and $\vec{H} \coloneqq \B{F} \B{G}^{-1}$. 
The Jacobian of \eqref{eq:model_stack} is 
\begin{equation}
\label{eq:Jacobian}
\begin{aligned}    
    \vec{J}&=\left(
    \vec{B} \frac{\partial \vec{H}}{\partial \bar{\vec{x}}}  \vec{B}\T \otimes \vec{I}_{d} \right) \vec{\bar{x}} +
    (\B{B} \vec{H} \B{B}\T) \otimes \B{I}_{d} \eqqcolon \vec{J}_1 +\vec{J}_2,
\end{aligned}
\end{equation}
where  $\frac{\partial \vec{H}}{\partial \bar{\vec{x}}} \in \BB{R}^{m\times m \times dN}$ is a tensor, and 
$
    \left[\frac{\partial \vec{H}}{\partial \bar{\vec{x}}}  \B{B}\T \right]_{:,:,k} =  \left[ \frac{\partial \vec{H}}{\partial \bar{\vec{x}}} \right]_{:,:,k} \B{B}\T \quad \in \BB{R}^{m\times N}
$, 
with notation $[\ \cdot\ ]_{:,:,k}$ denoting the matrix obtained by fixing the third index of the tensor.
From \ref{hp:null_point}, for all rigid lattice configurations we have $\vec{J}_2=(\B{B} \vec{H} \B{B}\T) \otimes \B{I}_{d} =\vec{0}$.
Then,
$
    \left[ \vec{J}_1 \right]_{:,k} 
    = \left(
    \vec{B} \left[ \frac{\partial \vec{H}}{\partial \bar{\vec{x}}} \right]_{:,:,k}  \vec{B}\T \otimes \vec{I}_{d} \right) \vec{\bar{x}}
$. 
From {\cite[p. 20]{Jackson2007} 
}%
we have 
$\frac{\partial \norm{\vec{r}_i}^2}{\partial [\bar{\vec{x}}]_k}  = 2 [\vec{M}]_{i,k}$ (see Definition \ref{def:rigidity_matrix}),
that is
$\frac{\partial \norm{\vec{r}_i}}{\partial [\bar{\vec{x}}]_k}  = \frac{1}{\norm{\vec{r}_i}} [\vec{M}]_{i,k}$, and thus
\begin{subequations}\label{eq:matrix_H}
\begin{align}
    \left[ \frac{\partial \vec{H}}{\partial \bar{\vec{x}}} \right]_{i,i,k} 
    &= \frac{\partial [f(\norm{\vec{r}_i})/\norm{\vec{r}_i}]}{\partial \norm{\vec{r}_i}} \frac{\partial \norm{\vec{r}_i}}{\partial [\bar{\vec{x}}]_k}\nonumber\\
    &= [f'(\norm{\vec{r}_i}) \norm{\vec{r}_i}  - f(\norm{\vec{r}_i})] \norm{\vec{r}_i}^{-3} [\vec{M}]_{i,k},\\
    \left[ \frac{\partial \vec{H}}{\partial \bar{\vec{x}}} \right]_{i,j,k} &= 0, \quad \text{if} \ i \ne j.
\end{align}
\end{subequations}

\paragraph*{Numerical analysis}
We set $R = 1$ and generated $1520$ random rigid lattice configurations ($10$ per each $N \in \{25, 26, \dots, 100\}$, and each $d\in\{2, 3\}$).
For each of these configurations, assuming $f$ (in \eqref{eq:control_law}) is in the form \eqref{eq:Lennard-Jones}, we computed $\B{J}$ using \eqref{eq:Jacobian}--\eqref{eq:matrix_H} and found that in all cases $\vec{J}$ has $d(d+1)/2$ zero eigenvalues with eigenvectors $\{\vec{w}_i^{0}\}_{i}$, and $dN-d(d+1)/2$ negative eigenvalues with eigenvectors $\{\vec{w}_j^{\pm}\}_j$.
Moreover, $\B{M} \B{w}_i^{0} = \B{0}$ and $\B{M} \B{w}_j^{\pm} \ne \B{0}$; thus, from Definition~\ref{def:inf_rigidity}, the span of $\{\B{w}_i^{0}\}$ corresponds to roto-translations and is a hyperplane locally tangent to $\Gamma(\bar{\vec{x}}^*)$ (see Definition \ref{def:congruent_conf}), while $\{\B{w}_j^{\pm}\}$ correspond to other motions.
Therefore, the \emph{center manifold theorem} \cite[Theorem 5.1]{Kuznetsov2004} yields that $\Gamma(\bar{\vec{x}}^*)$ is a \emph{center manifold} of system \eqref{eq:model}--\eqref{eq:control_law}.
Moreover, as expected from Theorem \ref{th:local_stability_lyap}, the \emph{reduction principle} \cite[Theorem 5.2]{Kuznetsov2004} confirms that the dynamics locally converge onto the 
equilibrium set $\Gamma(\bar{\vec{x}}^*)$, and excludes the presence of other equilibria in an  arbitrarily small neighborhood of it.

\section{Discussion}

We proved analytically local asymptotic stability of rigid lattice configurations, in multi-dimensional spaces, for swarms under the action of a distributed control law based on virtual attraction/repulsion forces.
The theoretical results were supported by exhaustive numerical simulations, providing also an estimate of the basin of attraction, and further semi-analytical derivations.
The mild hypotheses required on the interaction function allow for wide applicability of the theoretical results.
This result addresses an important gap in the Literature on geometric pattern formation, providing a theoretical support for numerous existing solutions.
The analytical characterization of the basin of attraction of $\C{T}$, and the extension of the results to other geometric lattices, such as square, remain open.

In conclusion, Part \ref{part:geometric_pattern} presented our contribution to the control of geometric pattern formation in \ac{LS-MAS}.
In particular, the results presented in Chapters \ref{ch:dist_cont} and \ref{ch:convergence} represent an advancement in both the design and the validation of distributed control algorithms for the formation of geometric patterns. 
The next Part will deal with the study of spatial behaviours of microorganisms, and how these emerge from the movement of the agents and their response to external stimuli.

\part{Spatial behaviours of microorganisms}
\label{part:spatial_microorganisms}

\chapter{Introduction to spatiotemporal control of microorganisms}
\thispagestyle{empty} 
\label{ch:miroorganisms_background}

Understanding how natural \ac{LS-MAS} navigate the environment and organize their spatial distribution is as relevant as implementing these behaviours in artificial systems, as discussed in Part \ref{part:geometric_pattern}.
Controlling the movement and the spatial distribution of microorganisms, cells or artificial micro-agents is a crucial goal to improve our ability to control the microscopic world, with applications ranging from tumor treatment \cite{Hauert2014} to wound healing \cite{Wijewardhane2022}, from soil and water remediation \cite{VanDerMeer2010,Zarzhitsky2005} to population control \cite{Grandel2021}.
All these applications require the ability to control the spatial distribution of the agents (i.e. microorganisms or micro-robots) or the expression of a specific behaviour (e.g.  expression of a certain protein).
Specifically, \emph{spatiotemporal} control aims to regulate some quantity of interest in both space and time, so as, for example, to implement density regulation by prescribing the agents to uniformly distribute in the environment and then, at a specific time, converge towards a certain region.
To influence the behaviour of the agents of interest some actuation input is needed.
Chemical inputs (e.g. antibiotics, sugars or molecules such as IPTG) have been traditionally, and are still today, used as the main control inputs to steer the cell behaviour. 
Given the ubiquity of chemical reactions, the right choice of the input molecules allows to influence virtually any biological organism.
Moreover, chemicals are easily stored and administered by human operators.
On the other hand, chemical inputs, because of diffusion and degradation dynamics, provide poor resolution, both in space and time.
In many applications such tight control is not required, for example the spatial patterns generated by swarming bacteria (e.g. \textit{Proteus mirabilis}) have been used to encode information \cite{Doshi2022}.
Combinations of molecules in different concentrations were used as inputs, encoded in the features of the swarming pattern and then decoded using computer vision.

Instead, when precise actuation is required chemical inputs can be replaced by physical ones, such as light \cite{Fielding2023,Denniss2022,Lam2017} or magnetic fields \cite{Gardi2022}.
Both allow unparalleled temporal resolution, with actuation times in the order of milliseconds. 
Especially light, thanks to the use of \acf{DLP}, provides high resolution in both space and time, together with fine quantization (usually 1/256) and multiplexing capabilities by combining different colors (usually red, green and blue).
Moreover, light can transmit energy and many microorganisms naturally react to it \cite{Nultsch1988}, changing, for example, their movement, while others can be genetically engineered to show light induced behaviours \cite{Lugagne2022}. 
We therefore focused our research on spatiotemporal control of microorganisms using light inputs.

In the following, we will discuss how microorganisms move, and, crucially, how their motion can be modelled.
Then, we will deal with the influence of light on their movement and how this can be exploited to achieve spatiotemporal control of cellular populations.

\section{Movement of microorganisms}
\label{sec:microorganisms_movement}
Microorganisms can propel themselves using two main solutions, either \emph{cilia}, or one or multiple \emph{flagella} \cite{Fawcett1961, Bondoc-Naumovitz2023} (see Figure \ref{fig:ciliate_flagellate}).
Flagella are long filaments that protrude from the posterior part of the body. Despite some relevant differences in structure and composition this solution is widespread across Bacteria (e.g. \textit{E. coli}), Archaea and microscopic Eukaryotes (e.g. \textit{Euglena}).
Cilia, instead, are exclusive of Eukaryotes. These are small filaments that cover the cell membrane and can address various functions. In particular \emph{ciliate} microorganisms (e.g. \textit{Paramecium}) use  cilia for locomotion \cite{Lynn2008}.

\begin{figure}
    \centering
    \begin{subfigure}[t]{0.4\textwidth}
        \centering
        \includegraphics[width=1\textwidth]{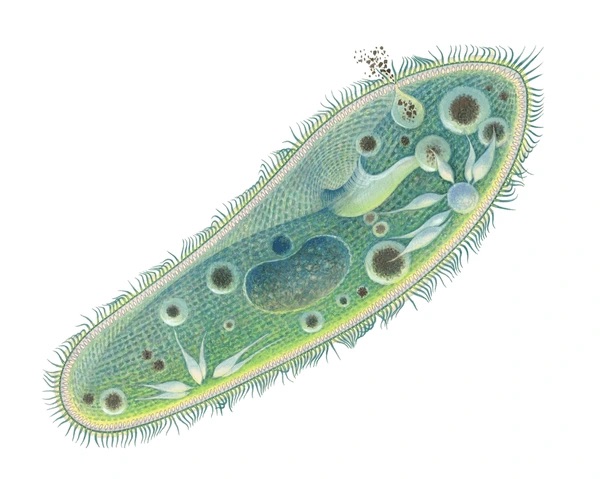}
        \caption{Ciliate}
        \label{subfig:ciliate}
    \end{subfigure}
        \begin{subfigure}[t]{0.3\textwidth}
        \centering
        \includegraphics[width=1\textwidth]{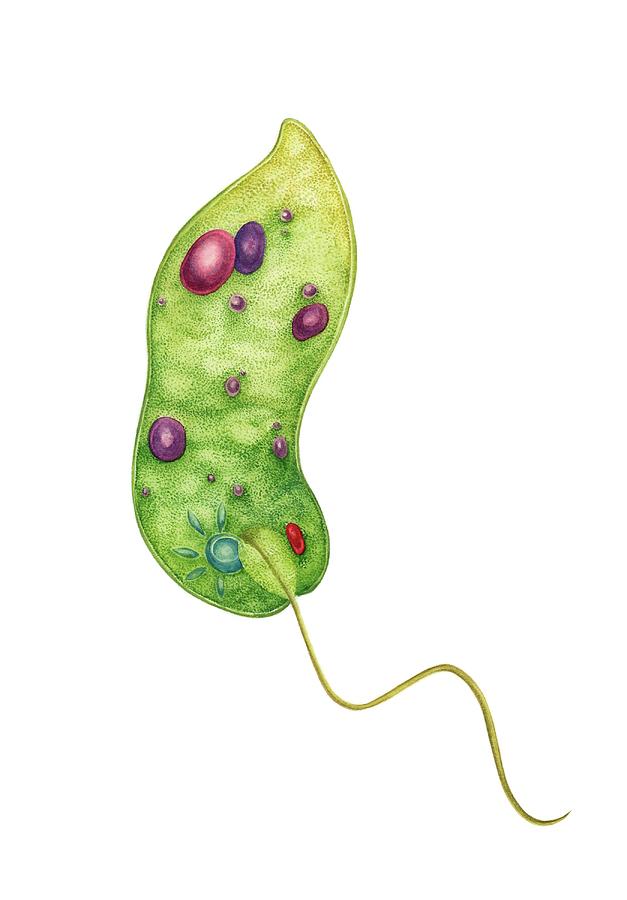}
        \caption{Flagellate}
        \label{subfig:flagellate}
    \end{subfigure}
    \caption{Graphical representation of ciliate and flagellate microorganisms.
    }
    \label{fig:ciliate_flagellate}
\end{figure}

We are now interested in abstracting the details of  locomotion, such as the propulsion mechanism or the biochemistry behind it, to focus on a higher level description, so as to uncover and mathematically describe the kinematic features of the movement and the resulting trajectories.
Such stochastic dynamics cannot be described by classical dynamical models using \acfp{ODE}, therefore stochastic models are introduced to include randomness \cite{Codling2008} (see Chapter \ref{ch:background} for more about modelling multi-agent systems).
%
%

\subsection{Run and tumble dynamics}
The motion of many microorganisms is often described in terms of "\emph{run and tumble}" dynamics \cite{Berg1993}.
This is mostly the case of bacteria, such as \textit{E. coli} \cite{Berg1993}, but also unicellular Eukaryotes, such as the microscopic alga \textit{Chlamydomonas reinhardtii} \cite{Polin2009}.
This behaviour is characterized by almost straight movements (runs) alternated by sudden changes of direction (tumbles).
Runs are usually longer and executed at a relatively high constant speed.
On the contrary, tumbles occur over shorter time and produce a negligible displacement.
A similar behaviour is the run-reverse-flick (adopted for example by some marine bacteria such as \emph{Vibrio alginolyticus}) \cite{Son2016}. In this case only 180 degrees (reverse) and $\sim90$ degrees (flick) rotations are present.

Both run and tumble, and run-reverse-flick behaviours can be mathematically described by Lévy walks \cite{Zaburdaev2015}.
This random motion resembles the biological run and tumble. Indeed, the agent moves with a constant velocity $v$ for a randomly distributed time span $\tau$, then randomly selects a new direction $\theta$ and start moving again with the same speed. 
This can be represented as the hybrid dynamical system \cite{Henzinger1996} depicted in Figure \ref{fig:levy_walk_model}.
In this representation, $(x,y)$ is the position of the agent in a 2D space, and $f_\tau:\mathbb{R}_{\geq0} \to \mathbb{R}_{\geq0}$ and $f_\theta:[-\pi;+\pi] \to \mathbb{R}_{\geq0}$ are the probability density distributions of $\tau$ and $\theta$ respectively.
\begin{figure}
    \centering
    \includegraphics[width=0.7\linewidth]{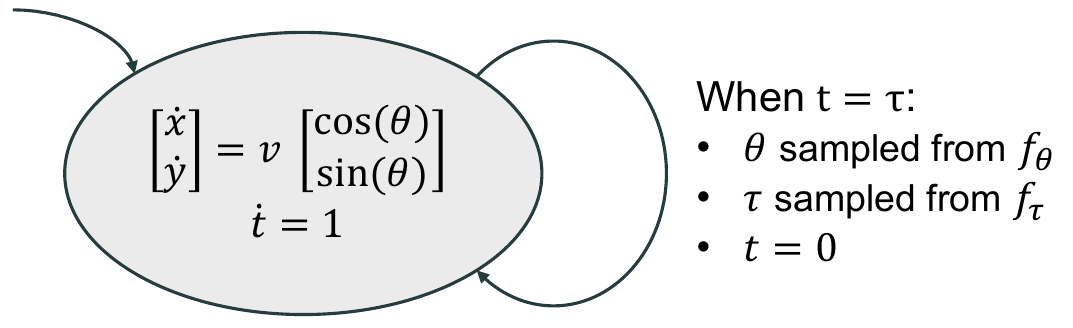}
    \caption{Hybrid dynamical model of the Lévy walker.}
    \label{fig:levy_walk_model}
\end{figure}

In the most classic case the duration of each \emph{run} is sampled from either an exponential or a power-law distribution \cite{Duncan2022}, while the new direction after a \emph{tumble} is assumed to be uniformly distributed (i.e. $f_\theta=U_{[-\pi;+\pi]}$).
Naturally, many variations of this simple model have been proposed to describe richer behaviours. For example the selection of a new direction might depend on the the current one \cite{Berg2004}, this allows to model the run-reverse-flick behaviour or the fact that smaller rotations might be more likely (e.g. the average angle between two consecutive movements of \textit{E. coli} is 68° \cite{Berg1993}).
Other variations have been proposed to model tumbles with non-zero duration \cite{Berg2004}, or to introduce randomness in the run phase, for example considering some additive random vibration \cite{Zaburdaev2011}.
\new{Moreover, these models can be \emph{continuified} to obtain a macroscopic description (see Section \ref{sec:background_modelling}), that captures the evolution of the ensemble of agents \cite{Codling2008,Conte2023}.}

\subsection{Persistent Turning Walkers}
Many other (micro)organisms do not show such abrupt turns, but smoother trajectories.
This is the case of microscopic algae such as \textit{Volvox} \cite{Pedley2016} and \textit{Chlamydomonas nivalis} \cite{Hill1993}, but also animals, such as fishes \cite{Gautrais2009,Zienkiewicz2015}.
To model such smooth and continuous turns it is more suitable to consider the heading direction and the angular velocity as state variables.
This is the case of \acf{PTW} models \cite{Gautrais2009}, in which the evolution of the angular velocity is described by a \ac{SDE}. Such equations extend the framework of \acp{ODE} to include randomness. 
In the most simple case the longitudinal speed $v$ is assumed to be constant, while the evolution of the angular velocity $\omega$ is described by a Ornstein-Uhlenbeck process, which is a type of linear \ac{SDE}, also known as Vasicek models and often used to model the evolution of interest rates or dynamical systems with noise \cite{Bishwal1923}. 
The resulting model then is
\begin{equation}
\begin{cases}
    \d v &= 0 \\
    \d \omega &= -\theta \omega\d t + \sigma\d W.
\end{cases}
\end{equation}
where the parameters $\theta$ and $\sigma$ represent respectively the \emph{rate} and the \emph{volatility} of the Ornstein-Uhlenbeck process; while \emph{W} is a Wiener process.
Extensions of this model have been proposed to include a dynamic evolution of the speed in the form of a second Ornstein-Uhlenbeck process, allowing also for dependencies between the speed and the angular velocity \cite{Zienkiewicz2015}.
\new{A similar model, specifically developed to describe the 3D motion of \textit{Euglena gracilis}, assumes the agent moves with constant speed  while rotating around an axis fixed to its own body frame \cite{Tsang2018}.}

The main difference between Lévy walk and \ac{PTW} models, besides the different smoothness of the trajectories, is in the resulting distributions of the  angular velocity. Indeed measuring the speed and the angular velocity of Lévy walkers at some time instants we expect to observe two clearly distinct behaviours, either high longitudinal speed and approximately zero angular velocity or low speed and significant angular velocity.
Contrary, for \ac{PTW} agents we expect the angular velocity to be continuously distributed over a range of values.
A second difference is given by the structure of the models, indeed in the \ac{PTW} model, the evolution of the state variables is explicitly described by \acp{SDE}, that can be easily modified to accommodate non-linear terms or exogenous inputs.
Last, while Lévy walks allow a straightforward extension to the 3D case, this requires non-trivial adjustments for \ac{PTW} models, as in \cite{Tsang2018}.
Figure \ref{fig:Levy_PTW_trajectories} shows the comparison between trajectories generated by these two models.

\begin{figure}
    \centering
    \begin{subfigure}[t]{0.4\textwidth}
        \centering
        \includegraphics[width=1\textwidth]{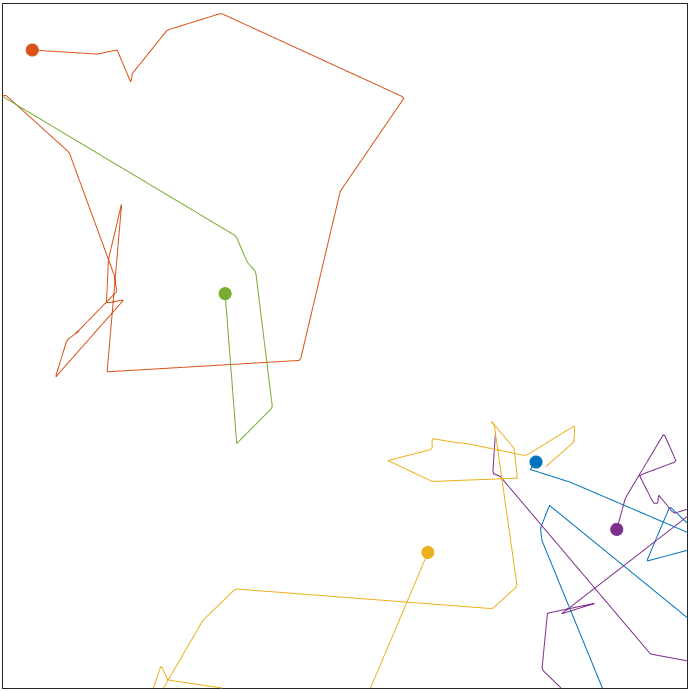}
        \caption{Lévy walk trajectories}
        \label{subfig:LevyWalk_trajectories}
    \end{subfigure}
    \quad
    \begin{subfigure}[t]{0.4\textwidth}
        \centering
        \includegraphics[width=1\textwidth]{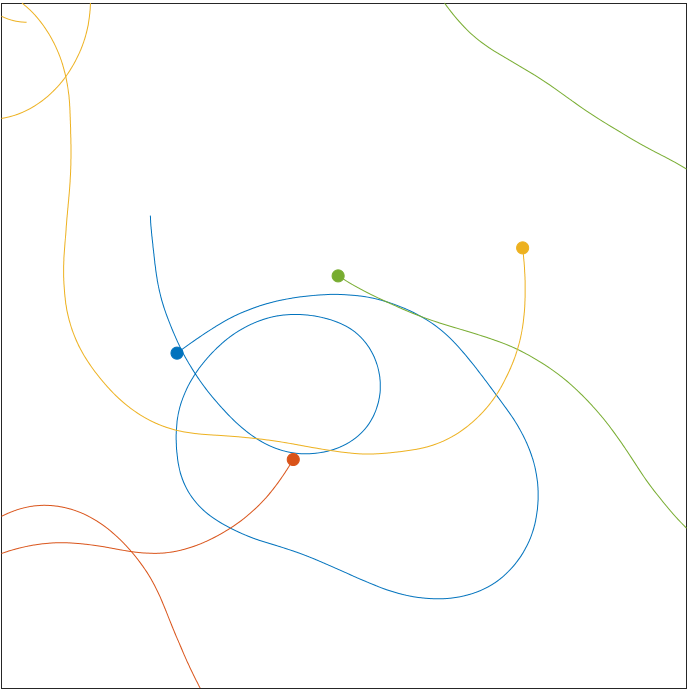}
        \caption{PTW trajectories}
        \label{subfig:PTW_trajectories}
    \end{subfigure}
    \caption{Example trajectories generated by Lévy and persistent turning walkers.
    }
    \label{fig:Levy_PTW_trajectories}
\end{figure}

Finally, it is worth to notice that some  microorganisms might even show a combination of these two behaviours, with both smooth and abrupt turns.
Besides the movement of individual agents, the interactions between them can induce more complex behaviours at the population level.
For example, inter-cellular chemical communication can trigger bacterial swarming \cite{Beer2019}, that is a rapid and highly coordinated movement of colonies of flagellate bacteria; while hydrodynamic interactions between swimming starfish embryos can make them assembly into active chiral crystals \cite{Tan2022}.

\section{Light induced behaviours}
\label{sec:microorganisms_light_response}
As previously discussed, light represents an ideal control input to influence the behaviour of micro-agents, either biological or artificial, mainly because high temporal and spatial resolution can be easily achieved \cite{Fielding2023}. 
Therefore, we are interested in understanding how light can influence the movement of such motile agents, and, once more, we will disregard biochemical and mechanical details, to focus on the resulting kinematic effects of such light inputs.

Some microorganisms can perform \emph{photomovements}, i.e. react to light and modify their motion in a variety of ways (see \cite{Iwatsuki1988, Nultsch1988}, or \cite{Burr1984} for a discussion about the nomenclature), namely
\begin{itemize}
    \item \emph{Photokinesis}: change of the movement speed. Positive if the speed increases with light, negative otherwise.
    \item \emph{Photoklinokinesis}: change of the turning rate or in the frequency of direction changes. Positive if the turns increase with light, negative otherwise.
    \item \emph{Phototaxis} or \emph{Phototactic orientation}: alignment with unidirectional light source. Positive if the alignment points toward the light source, negative otherwise.
    \item \emph{Photophobic response}: sudden stop followed by a change of direction. Such response can be further classified as:
    \begin{itemize}
    \item \emph{Step-down}: triggered by light decrease.
    \item \emph{Step-up}: triggered by light increase.
    \end{itemize}
\end{itemize}
Multiple of these responses are often present at the same time, in different proportions and possibly induced by different light intensities and wavelengths \cite{Hader2017}, or depending on light polarization \cite{Yang2021}.
The combination of the these individual responses can lead to the emergence of macroscopic density regulation behaviours at the population level \cite{Sgarbossa2002},
\begin{itemize}
    \item \emph{Photoaccumulation}: accumulation of microorganisms in illuminated areas.
    It is promoted by step-down photophobic, negative photokinesis and positive photoklinokinesis.
    \item \emph{Photodispersion}: accumulation of microorganisms in darker areas.
    It is promoted by step-up photophobic and positive photokinesis and negative photoklinokinesis.
\end{itemize}
This landscape of behaviours is further enriched by the presence of adaptation, indeed many microorganisms, such as \textit{Paramecium bursaria} \cite{Iwatsuki1988} or \textit{Volvox} \cite{Drescher2010}, show some adaptation to constant light inputs.
This implies that not only the intensity and the color, but also the duration, of light inputs can influence the organisms' response.

\section{Control applications \& experimental platforms}
Many applications of spatiotemporal control at the micro-scale have been proposed, and various types of micro-agents have been used.
Examples include both biological agents, such as microscopic algae, bacteria or mammalian cells, and artificial micro-robots.
Here we discuss some of these applications based on light inputs, with a focus on the proposed experimental platforms.

The movement of microscopic algae \textit{Euglena} (see Section \ref{sec:microorganisms} for details about this microorganism) can be influenced exploiting its light response. 
In \cite{Hossain2016} a prototype experimental platform is proposed, which embeds a microscopic camera for vision, while light actuation is provided by four LEDs placed around the sample. These LEDs can be used to induce phototaxis in the \textit{Euglena} and steer their movement direction toward one of the sides. 
The main features of this platform are the possibility to be used remotely and to automatically handle the loading of the sample, making it more accessible; nevertheless no automatic control application is proposed, either in the original or in subsequent works, such as \cite{Washington2019}.
A similar solution is proposed in \cite{Lam2017}. This platform lacks the remote access but embeds, besides the four LEDs, a \acf{DLP}, which provides more flexible actuation. Indeed here the phototactic response of \textit{Euglena} can be coupled with the photophobic one, induced by highly bright areas generated by the projector. 
A simple strategy is proposed to remove the microorganisms from a certain area of the sample, but the feedback is minimal, as it does not take into account any information on the movement of the specimens, and the control strategy consists only in the projection of a constant light pattern and the lighting of one of the LEDs.
Therefore, more advanced control strategies may easily improve the performance and solve more general tasks.
Another experimental platform, called the \acf{DOME}, was introduced in \cite{Denniss2022} (see Section \ref{sec:dome} for details about this platform). It lacks the lateral actuation LEDs present in \cite{Hossain2016,Lam2017}, but the embedded projector provides flexible actuation. Indeed, the \ac{DOME} has been used to influence the movement of microscopic algae \textit{Volvox}, to promote migration of human cells \cite{Wijewardhane2022}, and its use has also been proposed to control micro-robots \cite{Uppington2022}.
Specifically, it was shown in \cite{Denniss2022} that the movement of \textit{Volvox} (see Section \ref{sec:microorganisms} for details about this microorganism) can be influenced using repeating light inputs to induce negative photokinesis and, thus, reducing their speed. This was first demonstrated in \cite{Denniss2022} using an open loop strategy, and then improved in \cite{Denniss2022b} with the use of Q-learning.
Also motile micro-robots, made of contractile and light-reactive units, are currently being developed and tested in the \ac{DOME} \cite{Uppington2022}.

Spatiotemporal control has also been applied to bacteria. 
The speed and the spatial distribution of genetically engineered \textit{E. coli} were controlled in \cite{Massana-Cid2022} using light inputs.
Specifically the bacteria were engineered to show positive photokinesis. A computer vision algorithm was then used to detect the microorganisms and a bright spot was projected next to each agent, towards the desired direction of movement, so that only the agents moving in the desired direction exhibited augmented speed. This was shown to produce an overall flow of bacteria towards the desired area.
Light sensitive bacteria were also used to propel micro-robots. In \cite{Steager2015} multiple specimens of \textit{Serratia marcescens} were attached on a star-shaped plate. The propulsion generated by the bacteria induces a rotation of the micro-robot, whose velocity can be influenced via UV light acting on the swimming of the bacteria. 
Acting directly on the movement of bacteria is not the only way of controlling their spatial distribution. For example the relative distribution of two bacterial populations is spatially controlled in \cite{Velema2015} by the use of selective light-induced antibiotics.
In some application the physical distribution of the bacteria might not be the only quantity of interest, for example spatiotemporal control of the expression of green fluorescence protein, is implemented in \cite{Lugagne2022}, using genetically engineered \textit{E. coli} and light inputs of different colors.

Light can also be used to influence the migration of mammalian cells.
The \ac{DOME}, modified to project UV light, was used in \cite{Wijewardhane2022} to induce DNA damage in the cells at the edges of a wound, accelerating their migration and promoting healing of the wound. 
While sub-cellular lighting was used in \cite{Town2023} to influence the steering of migrating human cells.


\section{Discussion}
In this Chapter we described the concept of spatiotemporal control of microorganisms and how to achieve it.
Specifically, we discussed how microorganisms move in their environment, and two different approaches to mathematically model their stochastic motion.
We then discussed how light influences both, the movement of the individual organisms, and the macroscopic behaviour of the population.
Finally, we reviewed the existing solutions that exploit light to achieve spatiotemporal control of micro-agents, including microscopic algae, bacteria and mammalian cells. 
The main outcome of this analysis is that, despite various approaches having been proposed to control the movement of such micro-agents, a comprehensive framework is still missing. Indeed, most control strategies are designed heuristically and specific to a single species. This may be explained be the lack of mathematical models describing the motion of such microorganisms, and how it is influenced by light.

In the following we propose a general methodology to build such models, without a priori knowledge on the behaviour of a specific species, but using experimental data.
The next Chapter will present our experimental setup, describing the experimental platform and the species of microorganisms we studied.
Moreover, it will discuss the experiments we executed, and how the resulting data allow to characterize the movement and the light response of the microorganisms.
Chapter \ref{ch:modelling} will describe our modelling approach, and how experimental data can be used to estimate the values of the parameters and validate the resulting model.
Finally, we will discuss how these models allow to improve our spatiotemporal control over the microorganisms, and ultimately achieve density regulation.

\chapter{Experimental setup and data processing}
\thispagestyle{empty} 
\label{ch:dome}

As discussed in the previous Chapter characterizing the behaviour of microorganisms is a crucial step towards their spatiotemporal control. 
Specifically, we are interested in studying the movement of different species of motile microorganisms, together with their response to light.
Such study starts from experimental data to characterize and model the most relevant features of these behaviours, and then allow the design of feedback control laws.

In this chapter we will describe the experimental setup used to study the movement of microorganisms and their response to light, the algorithmic pipeline to analyse the data, and the results of such analysis. 

\section{DOME: the experimental platform}
\label{sec:dome}
Our research was carried out using the experimental platform called the \acf{DOME}.
This low-cost and open-access equipment was designed by Denniss et al. \cite{Denniss2022} at the BioComputeLab (\url{www.biocomputelab.github.io}) to study the behaviour of microscopic agents (e.g. protozoa, micro-robots \cite{Uppington2022}, tissue samples \cite{Wijewardhane2022}, etc.) under light inputs.
Such platform extends the functionalities of a classic digital microscope by implementing, besides the imaging capabilities, the possibility of acting on the sample with light inputs, and embedded computing.
These features, thanks to both the hardware design and the software we developed, allow high-throughput data acquisition and the execution of feedback control strategies.

\subsection{Hardware}
The \ac{DOME} is made of two main blocks arranged around the sample, the sensing and the actuation blocks (see Figure \ref{fig:dome}).
The first unit is responsible for the acquisition of the microorganisms images. It is made up of an HD camera, a light filter, a magnifying tube lens and a magnifying objective lens.
The objective lens are optional and replaceable, allowing to adjust the magnification level depending on the size of the microscopic agents to be studied. 
The tube lens provides a first magnification and can be moved vertically to adjust the back-focus distance from the camera sensor.
The light filter is an optional and easily replaceable component that can be chosen to image only specific wavelengths. For example, a long-passing filter, with the right cutting wave length, will absorb green and blue light, allowing only red light to reach the camera.
The second unit allows for the spatiotemporal control of the light inputs delivered to the organisms. It consists of a condenser lens, a projector and a Raspberry Pi board.
The Raspberry Pi ZERO board drives the \acf{DLP}, while the condenser lens focuses the projected light on the sample.
The two blocks are connected by a Raspberry Pi 4, which is connected to the camera, runs the software to acquire the images, process them and compute the light pattern to be projected. This is sent via an ad-hoc WiFi connection to the Raspberry Pi ZERO driving the projector.
The sample stage is placed between the sensing and actuation blocks.
Its vertical position can be adjusted to focus the camera on the sample, while a caliper controls the horizontal position.
The liquid sample is loaded on either a Petri dish or a microscopy slide.
Specifically, we used microscopy slides fitted with a 3D printed plastic frame, allowing the use of larger liquid samples (up to 300 $\mu$L) \cite{Denniss2022}.
For further details about the \ac{DOME} refer to \cite{Denniss2022} or visit \url{www.theopendome.org}.

\begin{figure}
    \centering
    \begin{subfigure}[t]{0.44\textwidth}
        \centering
        \includegraphics[width=1\textwidth]{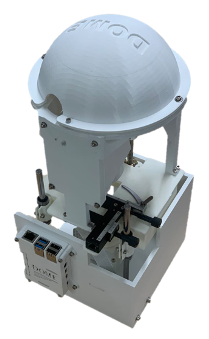}
        \caption{}
        \label{fig:dome_pic}
    \end{subfigure}
    \begin{subfigure}[t]{0.55\textwidth}
        \centering
        \includegraphics[trim={0 -70 10 0},clip,width=1\textwidth]{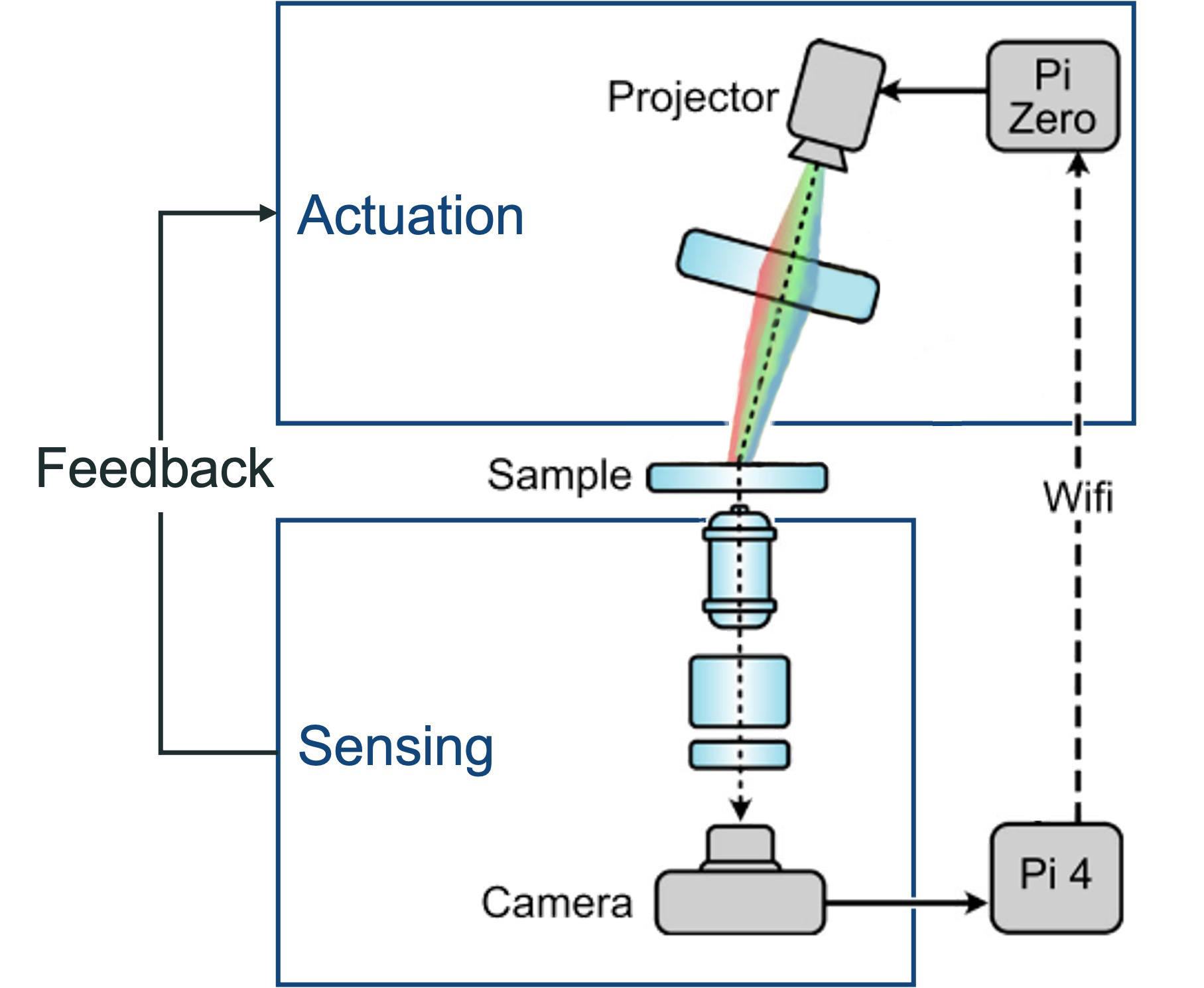}
        \caption{}
        \label{fig:dome_scheme}
    \end{subfigure}
    \caption{\ac{DOME} experimental platform: (a) picture and (b) schematic adapted from \cite{Denniss2022}.}
    \label{fig:dome}
\end{figure}

The projection and imaging specifications of the \ac{DOME} are reported in Table \ref{tab:DOME_spechs}, while the components used in our setup are listed in Table \ref{tab:DOME_components}. Additionally, to interact with the \ac{DOME}, we used a USB keyboard, a mouse, and an HDMI monitor connected to the Raspberry Pi 4.
Note that we used the \ac{DOME} in two different configurations, either with a 10X objective lens (90X total magnification) or without (9X total magnification), this allowed us to work with microorganisms of very different sizes. 
Moreover, we used a long wavelength passing filter to screen the camera sensor from blue light. 

\begin{table}
    \centering
    \begin{tabular}{p{2cm}p{2cm}p{2cm}p{2cm}p{2cm}}
    \hline
    Total magnification & Imaging pixel size [$\mu$m] & Imaging resolution [px]  & Projection pixel size [$\mu$m] & Projection resolution [px]  \\
    \hline
    9X & 4.44\x4.44 & \multirow{2}{*}{1080\x1920} & \multirow{2}{*}{30\x30} & \multirow{2}{*}{480\x854}  \\
    90X & 1.25\x1.25 &   &  & \\
    \hline
    \end{tabular}
    \caption{Projection and imaging specifications of the \ac{DOME} in our setup.}
    \label{tab:DOME_spechs}
\end{table}

\begin{table}
    \centering
    \begin{tabular}{p{0.3cm}p{2.5cm}p{6cm}p{2.8cm}}
    \hline
        & Component &	Description &	Manufacturer \\
    \hline
    \multirow{13}{*}{\rotatebox[origin=c]{90}{Optics}}
    \arrayrulecolor{lightgray}
        & Projector &	DLP evaluation module  (DLPM2000EVM) &	Texas Instruments \\ \cline{2-4}
		& Condenser Lens&	Ø50mm\x44mm FL, PCX condenser lens	& Edmund Optics \\ \cline{2-4}
        & Tube Lens	& 9X tube lens &	Edmund Optics \\ \cline{2-4}
        & Objective Lens	& 10X DIN semi-plan standard objective	& Edmund Optics \\ \cline{2-4}
        & Long Passing Filter&	Red light passing filter, Ø25.4mm to cut out blue light from the camera	& Edmund Optics \\ \cline{2-4}
        & Camera Lens	& Ø25mm uncoated glass window to seal the camera. &	Edmund Optics \\ \cline{2-4}
        & O-ring	&23.5-25.5mm O-ring to seal the camera	& \\
    \arrayrulecolor{black} \hline
    \multirow{12}{*}{\rotatebox[origin=c]{90}{Electronics}}
    \arrayrulecolor{lightgray}
        & Raspberry Pi 4 &	Model B, 4GB &	Raspberry Pi \\ \cline{2-4}
		& Raspberry Pi ZERO	& Raspberry Pi ZERO W	& Raspberry Pi \\ \cline{2-4}
        & Camera	& Raspberry Pi High Quality Camera Module	&Raspberry Pi \\ \cline{2-4}
        & Camera Connector	& Flexible cable for Raspberry Pi Camera - 300mm	& Raspberry Pi \\ \cline{2-4}
        & MicroSD Card	& MicroSD Card (Class 10 A1) 32GB &	SanDisk \\ \cline{2-4}
        & Raspberry Power Supply &	Official Raspberry Pi 4 power supply (5.1V, 3A)	& Raspberry Pi \\ \cline{2-4}
        & Power Supply &  Plug-in power supply (20W, 5V, 4A)	& RS \\ \cline{2-4}
        & Interface PCB	& Custom board to connect the Pi ZERO adapter	with the projector& \\ 
    \arrayrulecolor{black} \hline
    \multirow{6}{*}{\rotatebox[origin=c]{90}{Mechanical}}
    \arrayrulecolor{lightgray}
        & 3D printed parts & Custom 3D printed parts & \\ \cline{2-4}			
        & Linear Rail Set &	Set of lead screw, linear bearing, rod rail support & Glvanc	\\ \cline{2-4}
        & X-Y Caliper &	Caliper for the stage	& \\ \cline{2-4}
        & Linear Ball Bearing & Long linear motion ball bearings & \\
    \arrayrulecolor{black} \hline
    \end{tabular}
    \caption{Components of the \ac{DOME} in our setup. For up to date information visit \url{www.theopendome.org}.}
    \label{tab:DOME_components}
\end{table}

\subsection{Software}
\label{sec:dome_software}
To readily use the \ac{DOME} we developed a software package. The package is written in Python and made of two parts. 
The first includes scripts working on Raspberry Pi OS to control both the Raspberry Pi boards embedded in the \ac{DOME}.
It manages the WiFi communication between the two boards, and allows to acquire images and video from the camera, project light patterns, run experiments and save the data in a structured way. 
The second part collects scripts, that can be executed on any PC, to read the data acquired during the experiments, perform the automatic tracking (see Appendix \ref{ch:tracker}) of the microorganisms from the images and analyze the resulting trajectories.
This package is currently being refined and will soon be made available.

\section{Experiments with microorganisms}
\label{sec:experiments}

\subsection{Microorganisms}
\label{sec:microorganisms}
For our experiments we selected four species of microorganisms (see Figure \ref{fig:microorganisms}), two \textit{Paramecia} (\textit{P. caudatum} and \textit{P. bursaria}) and two microscopic algae (\textit{Volvox} and \textit{Euglena gracilis}).
Together these species represent a significant sample of motile and light sensitive eukaryotic microorganisms, that will allow to test the wide adaptability of both the experimental platform and the analytical pipeline.

\begin{figure}
    \centering
    \begin{subfigure}[t]{0.3\textwidth}
        \centering
        \includegraphics[trim={100 20 100 100},clip,width=1\textwidth]{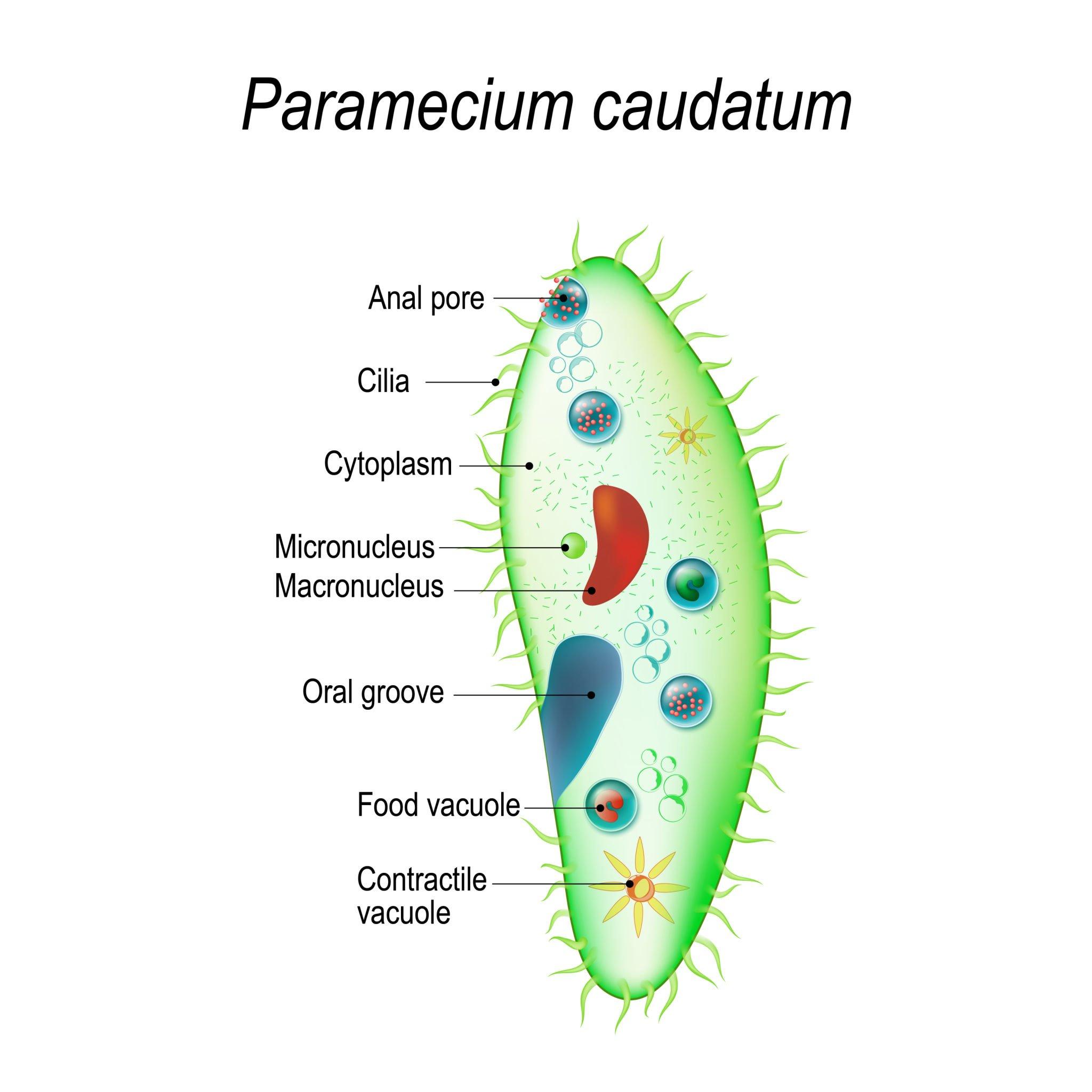}
        \caption{\textit{P. caudatum}}
        \label{fig:paramecium_caudatum}
    \end{subfigure}
        \begin{subfigure}[t]{0.45\textwidth}
        \centering
        \includegraphics[trim={40 20 20 60},clip,width=1\textwidth]{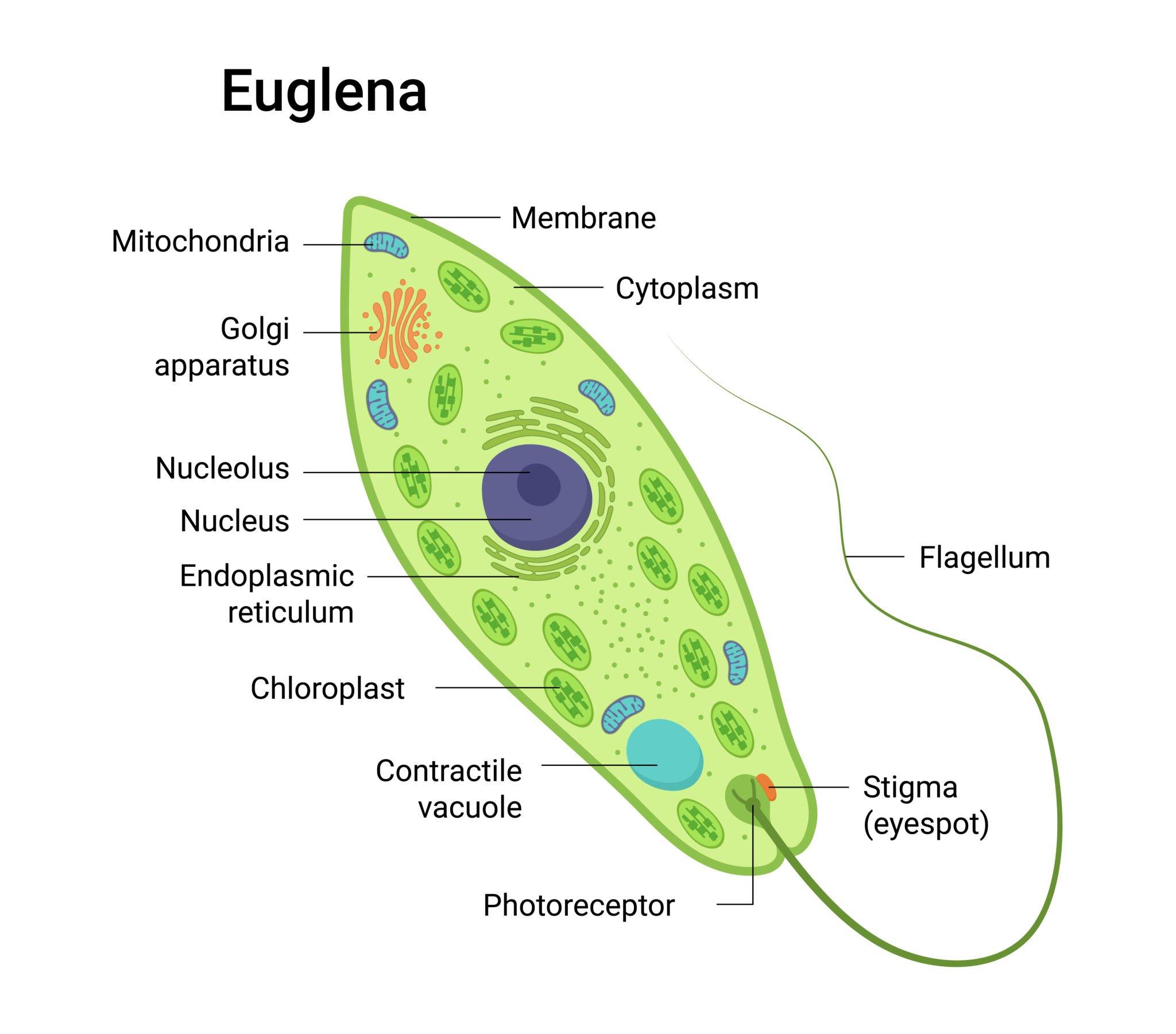}
        \caption{\textit{Euglena gracilis}}
        \label{fig:euglena}
    \end{subfigure}
    
    \begin{subfigure}[t]{0.24\textwidth}
        \centering
        \includegraphics[width=1\textwidth]{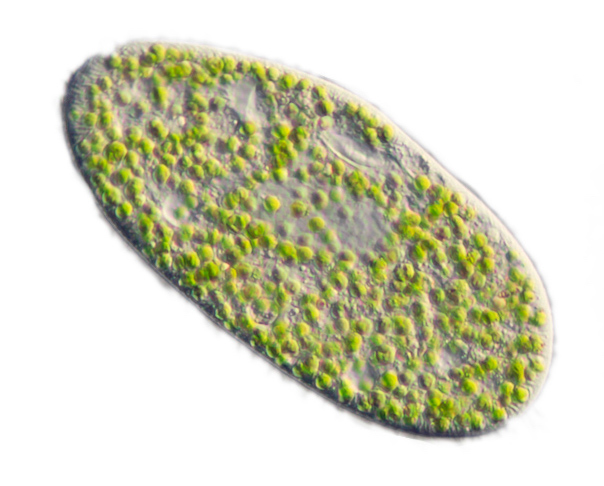}
        \caption{\textit{P. bursaria}}
        \label{fig:paramecium_bursaria}
    \end{subfigure}
    \quad \quad \quad
    \begin{subfigure}[t]{0.24\textwidth}
        \centering
        \includegraphics[width=1\textwidth]{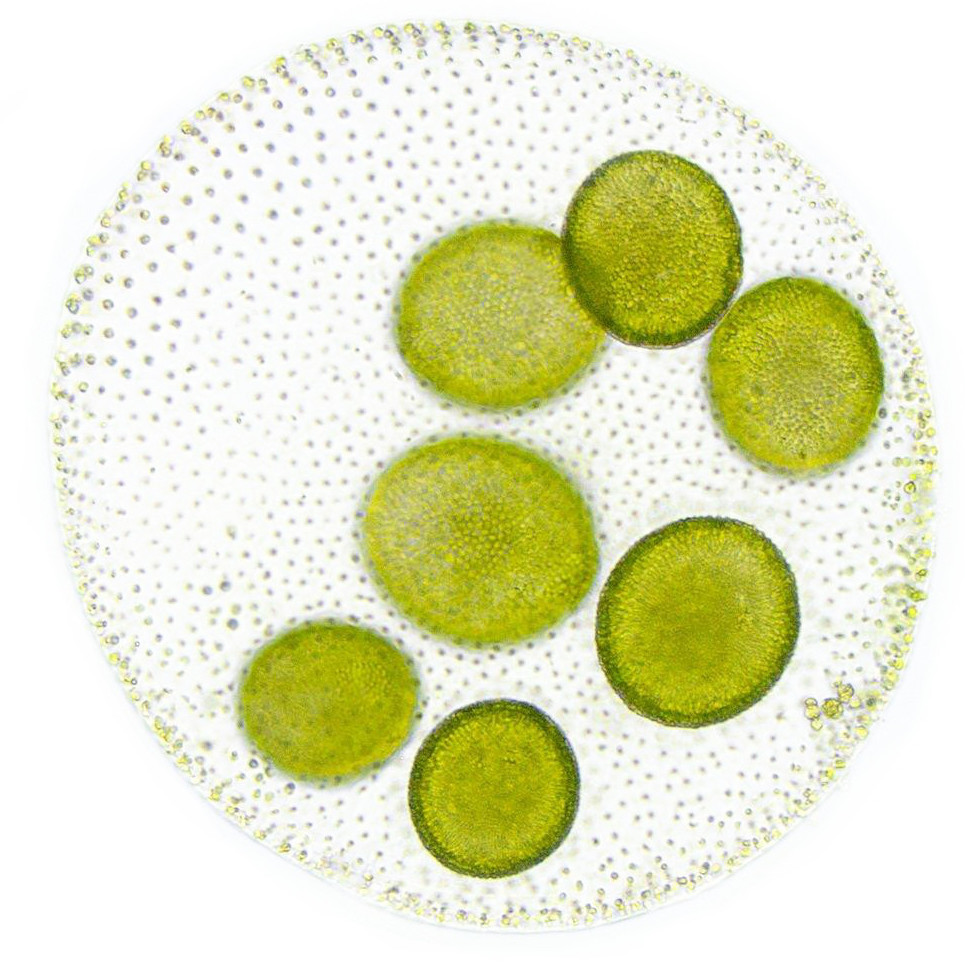}
        \caption{\textit{Volvox}}
        \label{fig:volvox}
    \end{subfigure}
    \caption{Species of microorganisms used in the experiments. 
    (a) Diagram of \textit{P. caudatum}.
    (b) Diagram of \textit{Euglena gracilis}.
    (c) Picture of \textit{P. bursaria}, where symbiotic \textit{Chlorella} are easily visible.
    (d) Picture of a \textit{Volvox} colony.
    }
    \label{fig:microorganisms}
\end{figure}

\textit{Paramecium caudatum} is a common species of paramecium widespread in stagnant and  fresh water environments \cite{Wichterman1986}. 
The locomotion along the main axis is characterized by a spiral path and powered by the cilia.
It shows photophobic step-down response \cite{Okumura1963} (see Section \ref{sec:microorganisms_light_response} for an introduction to light responses of microorganisms), with peak response to UV light (300-330nm) and a linear correlation between the intensity of light and the percentage of responding agents.
Compared to the other three species \textit{P. caudatum} is large \new{(200-300 $\mu$m)} and fast moving \new{(1100-1500 $\mu$m/s \cite{Okumura1963})}, but its light response is weaker and less documented.

\textit{Paramecium bursaria} is another ciliate fresh water species \cite{Wichterman1986}. It is characterized by the symbiotic relationship with the \textit{Chlorella} microscopic algae, which can live inside the these microorganisms and influences their response to light stimuli. 
Specifically, the light-adapted (i.e., cultured in a lit environment for a few hours before the experiment) \textit{Chlorella-}containing paramecium show a combination of step-down photophobic response, negative photokinesis and positive klinokinesis \cite{Cronkite1981}, resulting in photoaccumulation. 
Conversely, \textit{Chlorella-}containing dark-adapted (i.e., cultured in a dark environment for a few hours before the experiment) and \textit{Chlorella-}free specimens perform step-up photophobic response, resulting in photodispersion \cite{Iwatsuki1988}. Both behaviors are predominantly induced by blue and green light (440-680nm).
The complex light response of \textit{P. bursaria} results in a large literature, but often unclear and even contrasting results \cite{Iwatsuki1988,Saji1974,Cronkite1981}.
Moreover, the small size \new{(85-150 $\mu$m)} combined with a low population density (0.4-1.0 specimens/$\mu$L) and a fast movement \new{(1500-1900 $\mu$m/s \cite{Iwatsuki1988})} make the simultaneous imaging of multiple organisms problematic. 

\textit{Euglena gracilis} is a fresh water unicellular alga. It swims rolling around its main axis and is propelled by a flagellum. It shows strong light responses and, therefore, is commonly used in studies on light-responsive microorganisms \cite{Lam2017, Hossain2016, Washington2019}.
Specifically it shows phototaxis (positive when light intensity is low and negative otherwise \cite{Hader1987}), and a strong step-up photophobic response \cite{Tsang2018}, resulting in a clear photodispersion behaviour, with peak response  induced by blue light (420-480nm).
It is the smallest within the species we considered \new{(50-90 $\mu$m)}, but the relatively slow movement \new{(30-120 $\mu$m/s \cite{Muku2023,Tsang2018})} combined with a high population density (50-70 specimens/$\mu$L) implies that a significant amount of organisms can be contained in a relatively small volume and be observed with higher magnification.

The last species we worked with is \textit{Volvox}. These unicellular algae form spherical colonies of up to 50,000 cells.
The colonies, despite being almost spherical, have a main axis dictating the internal structure and the direction of movement. 
The propulsion is generated by the coordinated motion of the cilia of the cells on the surface of the colony.
They show negative photokinetic response to green light (470-530nm) \cite{Denniss2022}, resulting in photoaccumulation \cite{Ueki2010} when the light intensity is not too high, otherwise a photodispersion behaviour emerges \cite{Drescher2010}.
\textit{Volvox} is the largest species we worked with \new{(350–500 $\mu$m)}, moreover their spherical shape and the relatively slow movement with respect to the body size \new{(300-600 $\mu$m/s \cite{Pedley2016, Solari2008})} facilitate the automatic detection and tracking.


\begin{table}
    \centering
    \small
    \begin{tabular}{p{2cm}p{1.2cm}p{1.2cm}p{1.6cm}p{5cm}}
        \hline
        Species     & Size [$\mu$m] & Speed [$\mu$m/s] & Peak light sens. [nm]& Ligth response \\
        \hline
        \arrayrulecolor{lightgray}
        \textit{P. caudatum} & 200-300 & 1100-1500 & 300-330 & Step-down photophobic.  \\ \hline
        \textit{P. bursaria} & 85-150 & 1500-1900 & 440-680 & Light-adapted: step-down photophobic, negative photokinesis and positive klinokinesis, photoaccumulatio. Dark-adapted or \textit{Chlorella-}free: step-up photophobic response, photodispersion \\ \hline
        \textit{E. gracilis} & 50-90 & 30-120 & 420-480  & Phototaxis (negative or positive depending on the intensity), step-up photophobic, photodispersion. \\ \hline
        \textit{Volvox}     & 350-500 & 300-600 & 470-530 & Negative photokinesis, photoaccumulation. \\
        \arrayrulecolor{black} \hline
    \end{tabular}
    \caption{Species of microorganisms used in the experiments and their main features.}
    \label{tab:microorganisms}
\end{table}

These species show remarkably different features (i.e. size, movement speed, light response, etc), and together represent a significant sample of motile eukaryotic microorganisms. See Table \ref{tab:microorganisms} for a summary.

\textit{P. caudatum}, \textit{Volvox} and  \textit{Euglena} were sourced from BladesBio UK (\url{www.blades-bio.co.uk}), while \textit{P. bursaria} were sourced from Carolina Biological Supply (\url{www.carolina.com}).
\textit{Volvox} were cultured in Alga Grow medium,  \textit{Euglena} in its ad-hoc medium and both \textit{Paramecium} species in Protozoan Pellet medium. 
Euglena, \textit{Volvox} and \textit{P. bursaria} were kept under an artificial light bank with a daily activation time of 12h%
\footnote{Culturing and media preparation protocols were taken from \url{www.carolina.com/teacher-resources/Document/protozoa-invert-care-handling-instructions/tr10466.tr}.}.

\subsection{Experimental protocol}
To collect a rich data-set and to be able to characterize both the movement in a dark environment and the light response of the sample microorganisms we performed a set of open loop experiments, projecting predefined spatial and temporal light inputs.
Each experiment lasts 3 minutes with the acquisition of an image every 0.5s.
The sampling time was chosen due to constraints on the execution frequency given by the platform, while the duration of the experiments was selected large enough to collect abundant data and be able to observe slow dynamics, such as adaptation, while keeping a reasonable burden in terms of storage memory and time.

For each experiment a given volume (see Table \ref{tab:exp_param}) is taken from the culture jar and placed on the microscopy slide, eventually in the well made by the plastic frame. The sample is then placed on the sample stage and the \ac{DOME}  covered with a dark hood to screen external, not controlled, light.
The projector is then set to shine constant and low intensity red light (640nm, 5\% of the projector's maximum red brightness), providing the illumination for dark-field imaging.
Blue light (460nm), being within the sensibility range of most microorganisms, is then used as input. The presence of a red light filter prevents blue light from reaching the camera sensor, guaranteeing a constant illumination for better image acquisition%
\footnote{The light filter was not present during preliminary experiments, resulting in uneven quality images due to sudden changes in luminosity.}.
Due to the different sizes of the species involved in the experiments we used 90X magnification, when working with \textit{Euglena}, while the other species were imaged at 9X magnification, see Section \ref{sec:dome} for details about the components.

\begin{table}
    \centering
    \begin{tabular}{p{2.1cm}p{2.1cm}p{2cm}p{2.2cm}p{2.5cm}}
        \hline
        Species   & Magnification & Density [$\mu $L$^{-1}$]  & Sample \newline volume [$\mu$L]& Plastic frame  \\
        \hline
        \textit{P. caudatum} & 9X & 0.19-0.95 & 100-150 & Yes  \\
        \textit{P. bursaria} & 9X & 0.4-1.0 & 20-100 &  Only for larger volumes\\
        \textit{E. gracilis} & 90X & 50-70 & 15-20  & No\\
        \textit{Volvox}     & 9X & 0.2-0.8 &120-150 &  Yes\\
        \hline
    \end{tabular}
    \caption{Experimental parameters.}
    \label{tab:exp_param}
\end{table}

We run different experiments to characterize (i) the movement of the microorganisms in absence of light inputs, and (ii) their response to light inputs of different intensities and duration.
Specifically we run experiments with light inputs of different duration, from 1 to 60 seconds, and different intensity, ranging from 0\% to 100\% of the projector's maximum blue brightness.
Table \ref{tab:experiments} and Figure \ref{fig:experiments_inputs} show a short description and a graphical representation of the different types of experiments, while Figure \ref{fig:images_example} shows examples of the images acquired during the experiments.
Each experiment was repeated, for each species, for a minimum of 3 biological and 2 technical replicates, resulting in more than 300 experiments.

 \begin{table}
    \centering
    \begin{tabular}{p{0.5cm}p{2.5cm}p{8cm}}
    \hline
     & Experiment & Description \\
    \hline
    \multirow{3}{*}{\rotatebox[origin=c]{90}{Default}}
         & \multirow{3}{*}{No input}         & \multirow{3}{*}{Constant OFF, only red background illumination.}\newline \newline \\
    \hline
    \multirow{5}{*}{\rotatebox[origin=c]{90}{Intensity}}
    \arrayrulecolor{lightgray}
         & 30\% Intensity   & 1min OFF, 1min with 30\% input, 1min OFF. \\ \cline{2-3}  
         & 60\% Intensity   & 1min OFF, 1min with 60\% input, 1min OFF. \\ \cline{2-3} 
         & 100\% Intensity  & 1min OFF, 1min with 100\% input, 1min OFF.\\ \cline{2-3} 
         & Ramp             & 10s OFF, linearly increasing input from 0\% to 100\%, 10s with constant 100\% input.\\
    \arrayrulecolor{black} \hline
    \multirow{6}{*}{\rotatebox[origin=c]{90}{Frequency}}
    \arrayrulecolor{lightgray}
         & Switch 10s       & 10s OFF, repeating ON-OFF input with 20s period and 50\% duty cycle.\\ \cline{2-3}
         & Switch 5s        & 10s OFF, repeating ON-OFF input with 10s period and 50\% duty cycle.\\ \cline{2-3}
         & Switch 1s        & 10s OFF, repeating ON-OFF input with 2s period and 50\% duty cycle.\\
    \arrayrulecolor{black} \hline
    \end{tabular}
    \caption{Types of experiments executed to characterize the movement and light response of microorganisms.}
    \label{tab:experiments}
\end{table}

\begin{figure}[t]
    \centering
    \includegraphics[trim={0, 152, 0, 0},clip,width=0.9\linewidth]{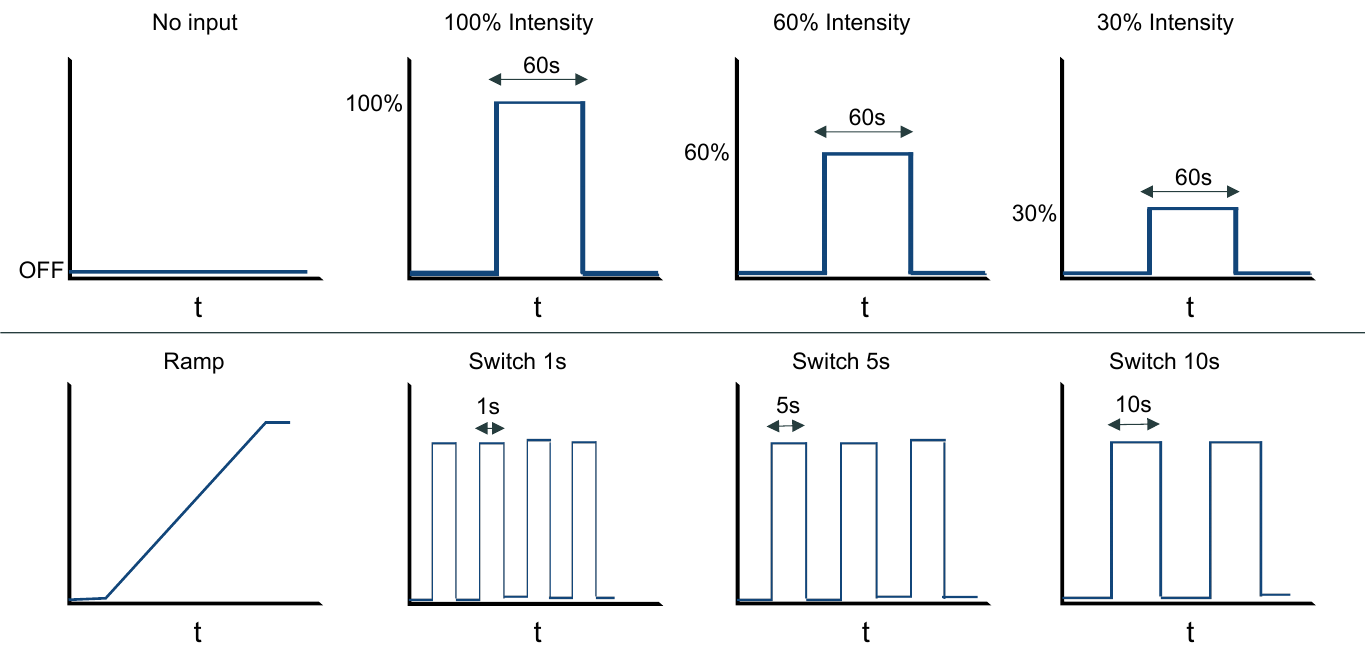}
    \caption{Graphical representation of the types of experiments executed to characterize the movement and light response of microorganisms.
    }
    \label{fig:experiments_inputs}
\end{figure}

\begin{figure}[t]
    \centering
    \begin{subfigure}[t]{0.48\textwidth}
        \includegraphics[width=\linewidth]{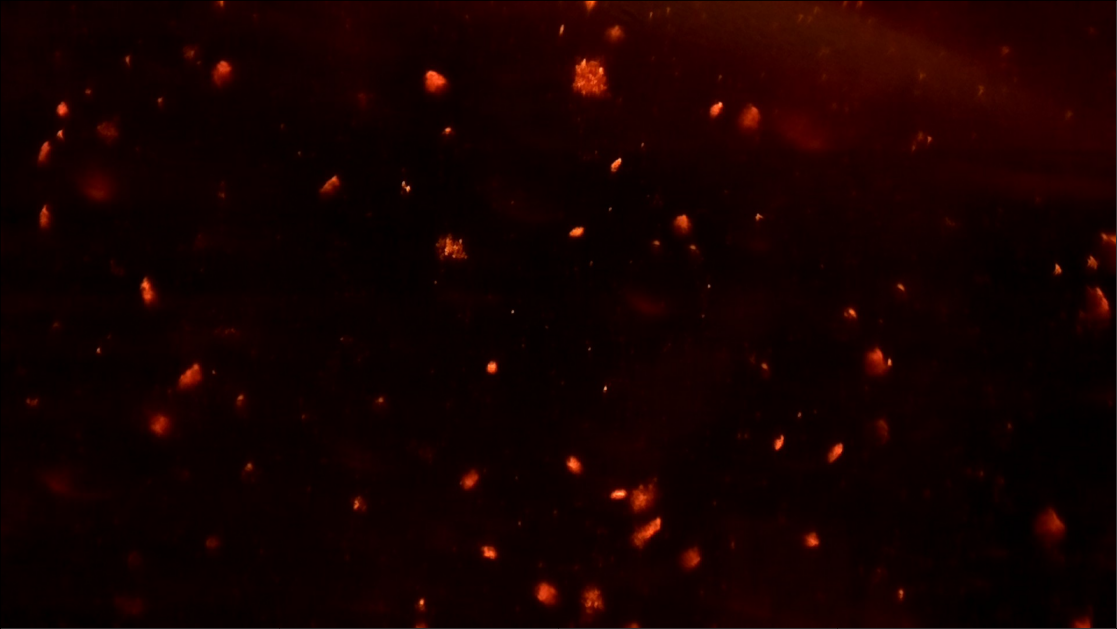}
        \caption{\textit{Euglena}}
    \end{subfigure}
    \begin{subfigure}[t]{0.48\textwidth}
        \includegraphics[width=\linewidth]{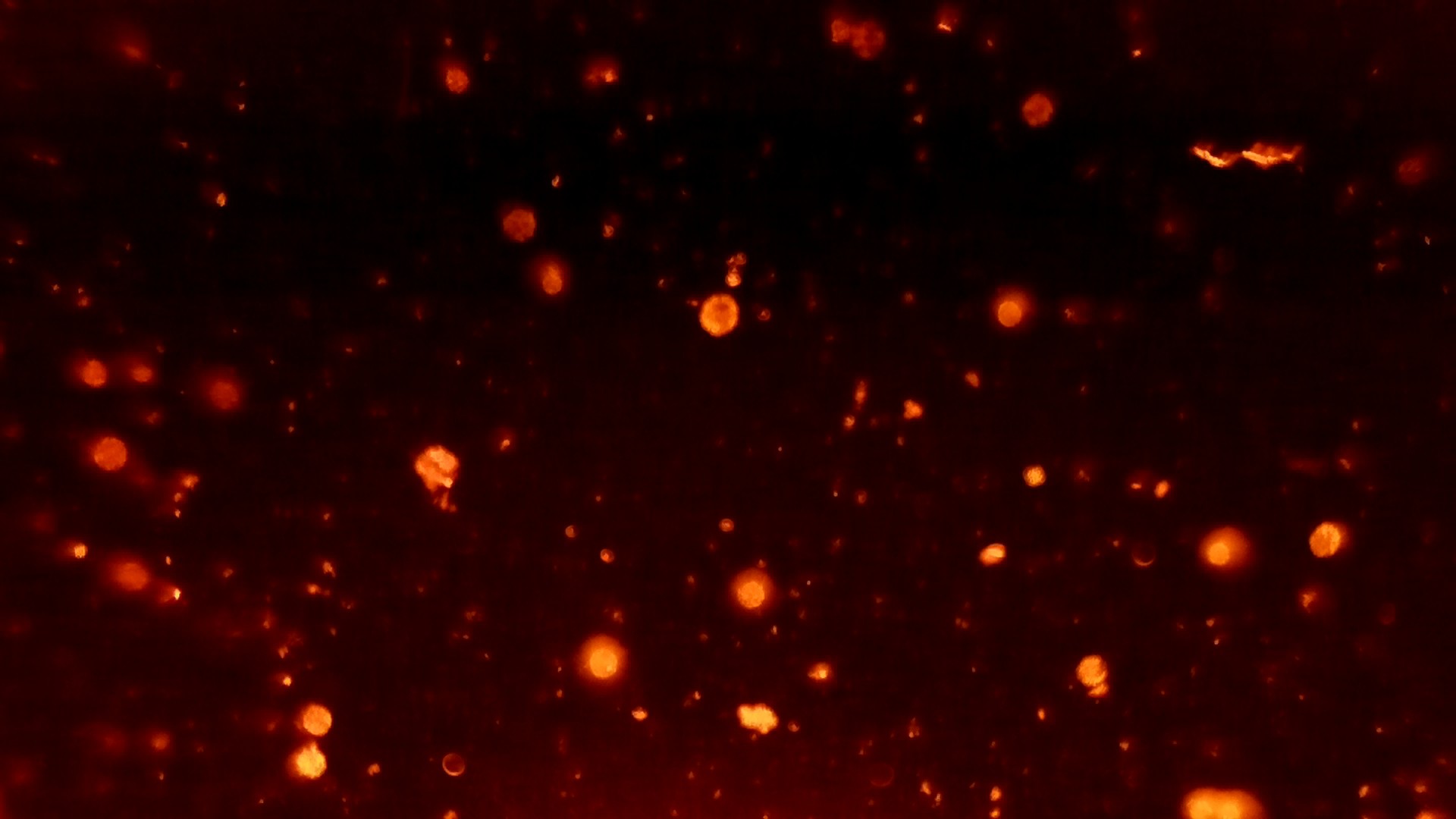}
        \caption{\textit{Volvox}}
    \end{subfigure}

    \begin{subfigure}[t]{0.48\textwidth}
        \includegraphics[width=\linewidth]{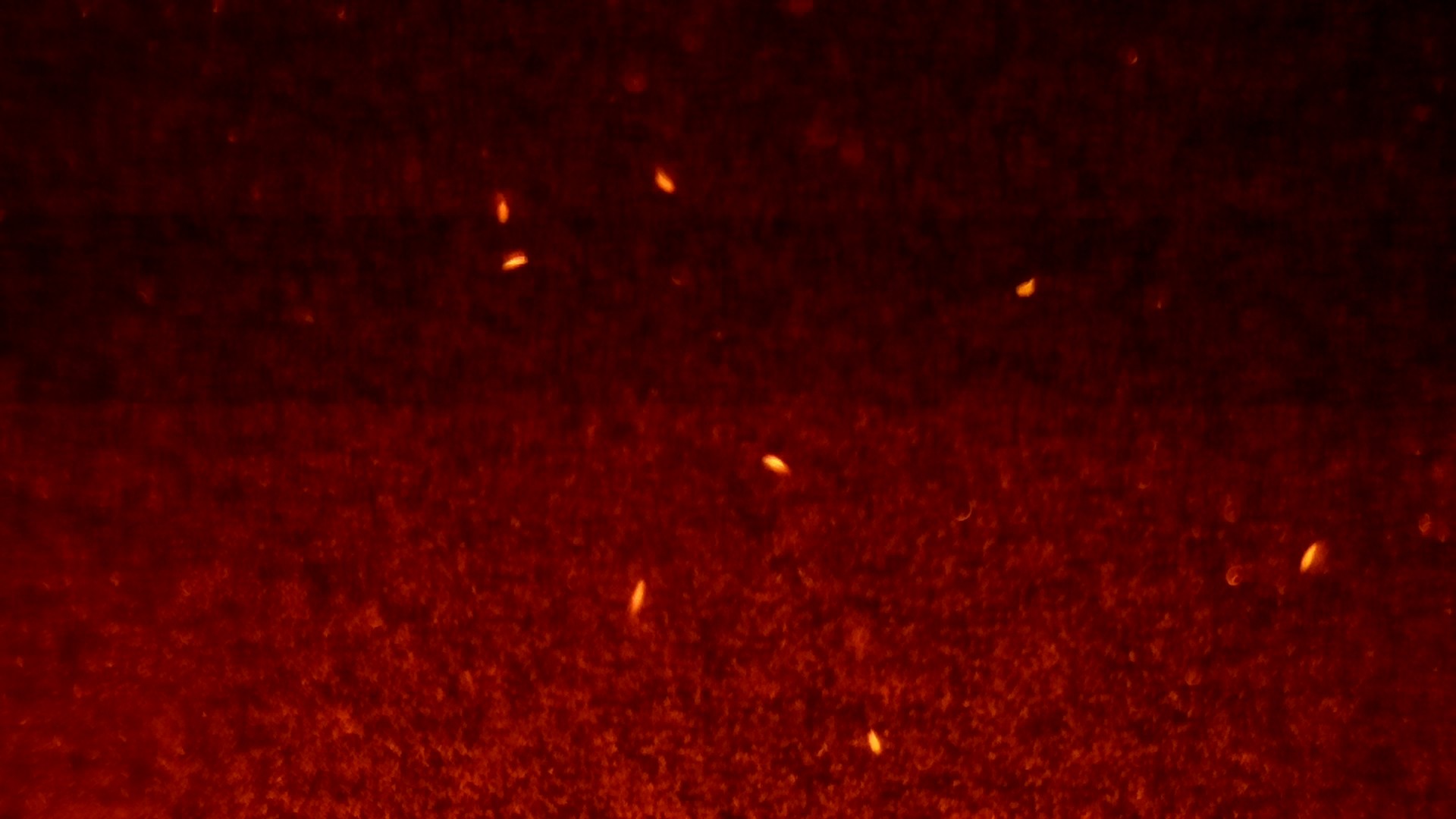}
        \caption{\textit{P. bursaria}}
    \end{subfigure}
    \begin{subfigure}[t]{0.48\textwidth}
        \includegraphics[width=\linewidth]{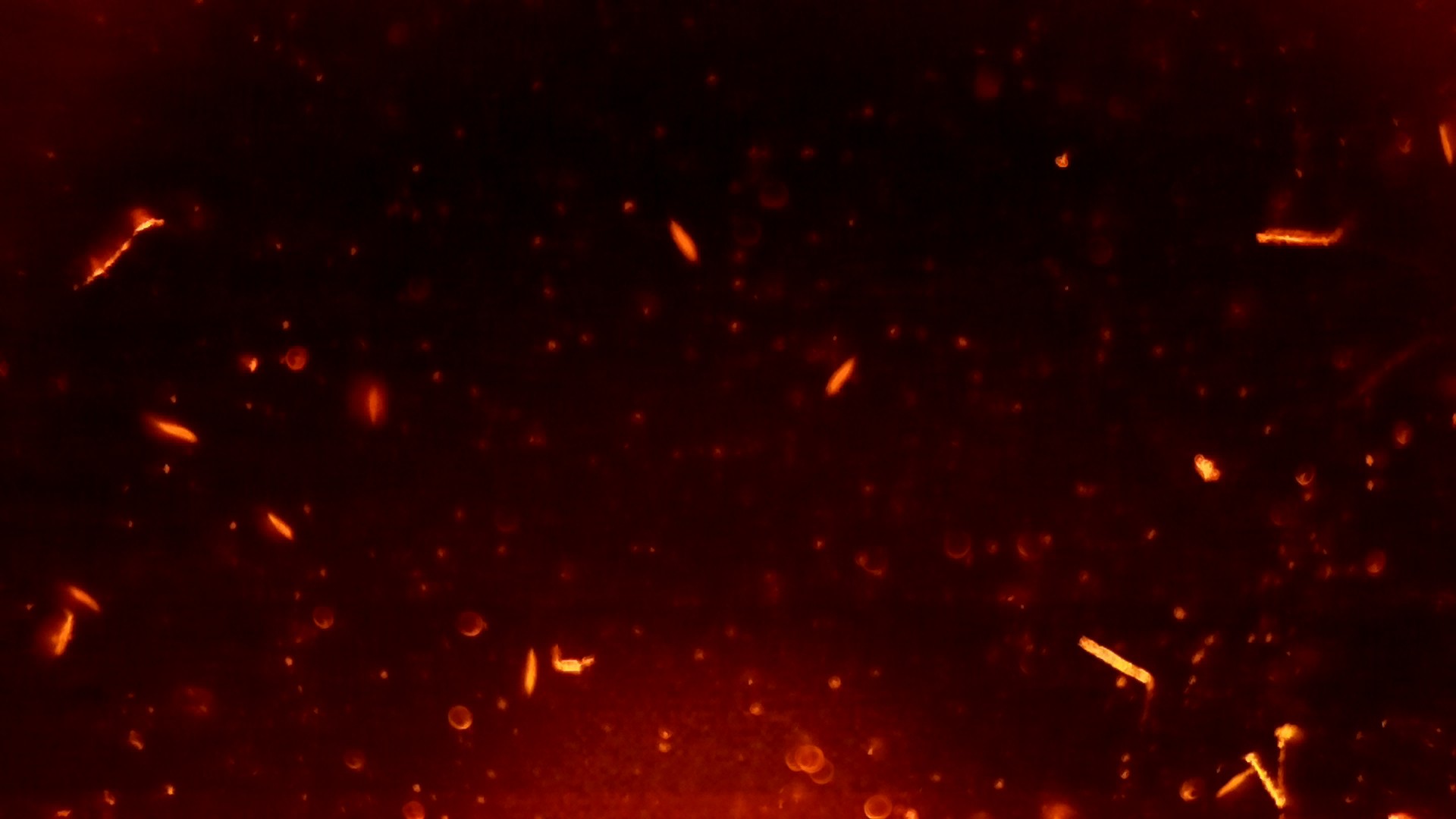}
        \caption{\textit{P. caudatum}}
    \end{subfigure}
    \caption{Example images acquired during the experiments. 
    The experimental conditions are those reported in Table \ref{tab:exp_param}.
    }
    \label{fig:images_example}
\end{figure}

\section{Data extraction and analysis}
\label{sec:analysis}

To study the movement of microorganisms we need to extract the relevant kinematic variables (position, velocity, orientation, etc.) from the images acquired during the experiments.
Therefore, we developed a software that elaborates these images, automatically detects the microorganisms and tracks their position over time (see Appendix \ref{ch:tracker} for more details about the tracking algorithm).

A preliminary processing of the recorded trajectories was performed prior of the data analysis.
Specifically, the trajectories are selected according to their duration, and those shorter than 5s are discarded.
This helps in removing erroneously detected trajectories, and ensures remaining ones have a sufficient number of points.
The remaining trajectories are then smoothed with a moving average of window size 3 to reject noise or high frequencies movements.
The velocity vectors are computed using second order accurate central differences, to then extract the longitudinal speed (i.e. the magnitude of the velocity vector) and the angular velocity (i.e. the rate of change of the movement direction).
Specifically, given a single trajectory of points $\{\vec{x}_1, \vec{x}_2,...,\vec{x}_n\}$ in the camera frame $[0;1080]\times[0;1920]$, sampled every $\Delta T$ seconds, the velocity  is computed as
\begin{equation}
    \vec{v}_k =
        \begin{cases}
        \frac{\vec{x}_{k+1}-\vec{x}_{k}}{\Delta T} &\mbox{if  } k=1; \\
        \frac{\vec{x}_{k}-\vec{x}_{k-1}}{\Delta T} &\mbox{if  } k=n; \\
        \frac{\vec{x}_{k+1}-\vec{x}_{k-1}}{2\Delta T} &\mbox{otherwise} \\
        \end{cases}
    \label{eq:numerical_differentiation}
\end{equation}
and measured in px/s.
Then the speed $v_k$ is simply given by the norm of $\vec{v}_k$, while the angular velocity $\omega_k$, measured in rad/s, is computed as

\begin{equation}
    \omega_k = \frac{1}{\Delta T} \func{atan2}{\frac{\vec{v}_{k} \times \vec{v}_{k+1}}{\vec{v}_{k} \cdot \vec{v}_{k+1}}}
\end{equation}

Each of these time series is then further smoothed by applying again the moving average of size 3.

We then check possible tracking errors, by applying Algorithm \ref{alg:detect_outliers}, with $m=2.5$, to detect the outliers in the resulting speed data.
For each outlier the corresponding agent and time instant are reported to the user that can visually check the tracking video, and, in case of errors, correct it by removing erroneous trajectories or merging multiple trajectories belonging to the same microorganism.


\begin{algorithm}[t]
\caption{Given a data-set (vector or multidimensional array) $d$ and a positive threshold $m$, the function  \texttt{detect\_outliers($d,m$)} identifies the outliers in the $d$.}
\label{alg:detect_outliers}
\begin{algorithmic}
\tt
\STATE function detect\_outliers($d,m$) \begin{addmargin}[1em]{0em}
require $m>0$ \qquad\quad\algcomment{Check threshold}
\STATE $outliers$ = empty list \algcomment{Initialize empty list}
\FOR{$k$ {\bf in} indices($d$)}
    \STATE \algcomment{Compute normalised variation}
    \STATE $s$ = |$d$[$k$]-median($d$)| / median(|$d$-median($d$)|)

\IF{$s>m$}
    \STATE  $outliers$.append($k$) \algcomment{Insert $k$ in the list of outliers}
\ENDIF
\ENDFOR
\end{addmargin}
\RETURN $outliers$

\end{algorithmic}
\end{algorithm}

Note that methods described in this and the following pages hold for any of the species we considered, but, for the sake of time, we have, up to this point, only analysed the data from the experiments with \textit{Euglena}.


\subsection{The behaviour in a dark environment}
\label{sec:analysis_dark}

First we characterized the behaviour of microorganisms without light inputs (i.e., in a dark environment), with only low intensity red illumination.
Figure \ref{fig:dark_time_evolution} shows the time evolution of speed and angular velocity from an experiment with \textit{Euglena} (see Section \ref{sec:experiments} for details about the experiments).
It can be easily observed that the data are very variable, but their macroscopic features (i.e., mean and variance) are consistent over time.

\begin{figure}
    \centering
    \includegraphics[width=1\linewidth]{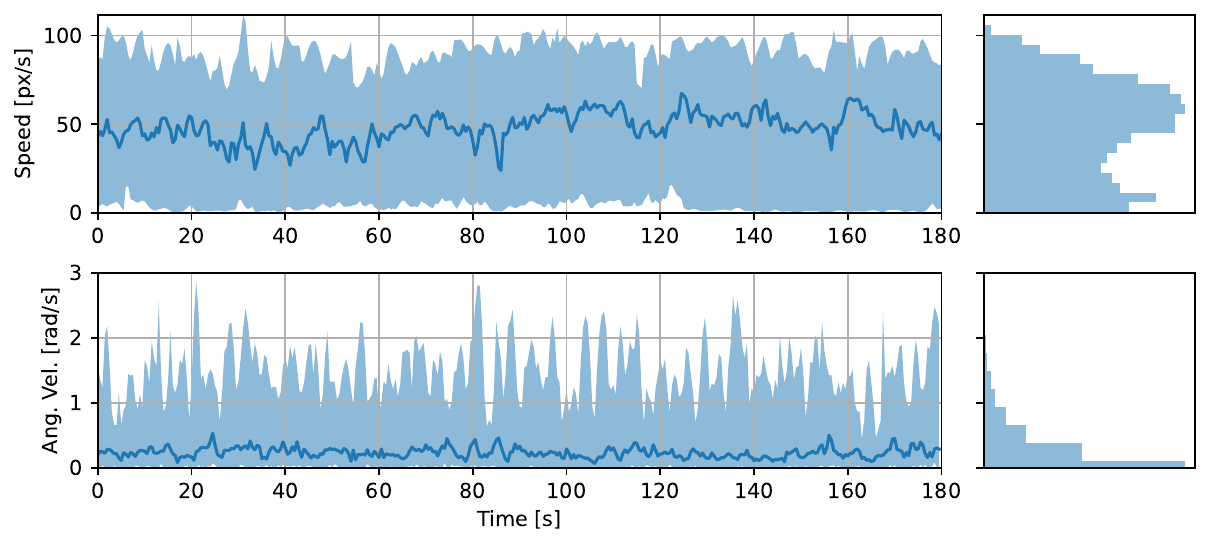}
    \caption{Longitudinal speed and absolute angular velocity of \textit{Euglena} specimens in a dark environment.
    The solid lines show the sample's median, while the shaded areas show the minimum and the maximum.
    }
    \label{fig:dark_time_evolution}
\end{figure}

We then statistically characterized these data. 
Specifically, we computed the mean speed and mean absolute angular velocity of each agent, and observed a clear negative correlation between these quantities (see Figure \ref{fig:dark_scatter_mean}), with faster agents having a lower angular velocity.
Nevertheless, we notice this might be a byproduct of discrete time sampling and numerical differentiation. Indeed, if during a certain time step an agent turns, while keeping a constant longitudinal velocity, at the next sampling time the displacement from the initial position will be smaller, and numerical differentiation will result in a smaller velocity.

Moreover, we studied the deviation of agents from their average behaviour. Figure \ref{fig:dark_scatter_speed} shows a non monotonic dependence between the standard deviation and the mean of the speed of the agents, while Figure \ref{fig:dark_scatter_angv} shows the expected linear correlation between standard deviation and mean value of the angular velocity.
Therefore, we can conclude that faster agents have more consistent motion, with lower angular velocity and proportionally less variable speed.

Moreover, none of the plots in Figure \ref{fig:dark_scatter} shows clearly separated clusters, this suggests the presence of a single homogeneous, yet variable, population; rather that the coexistence of more populations with clearly different behaviours.


\begin{figure}
    \centering
    \begin{subfigure}[t]{0.49\textwidth}
        \centering
        \includegraphics[width=1\textwidth]{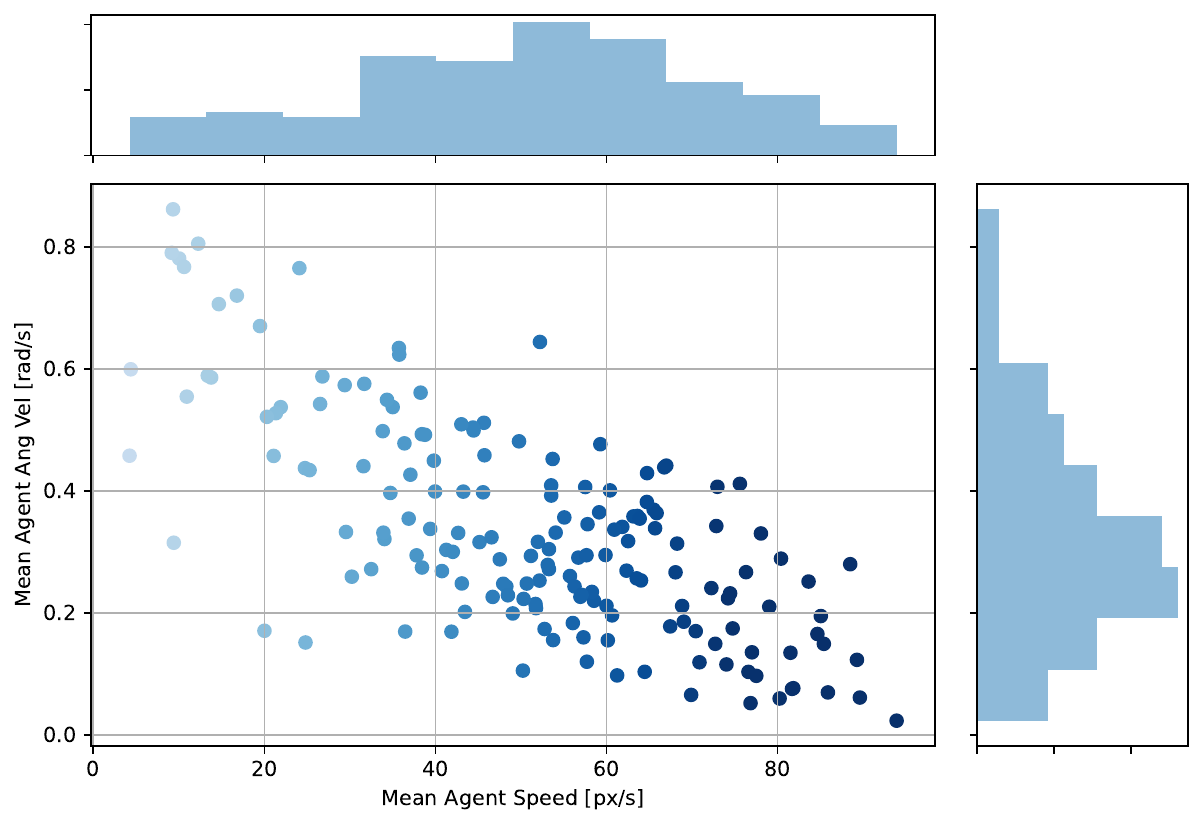}
        \caption{}
        \label{fig:dark_scatter_mean}
    \end{subfigure}
    
    \begin{subfigure}[t]{0.49\textwidth}
        \centering
        \includegraphics[width=1\textwidth]{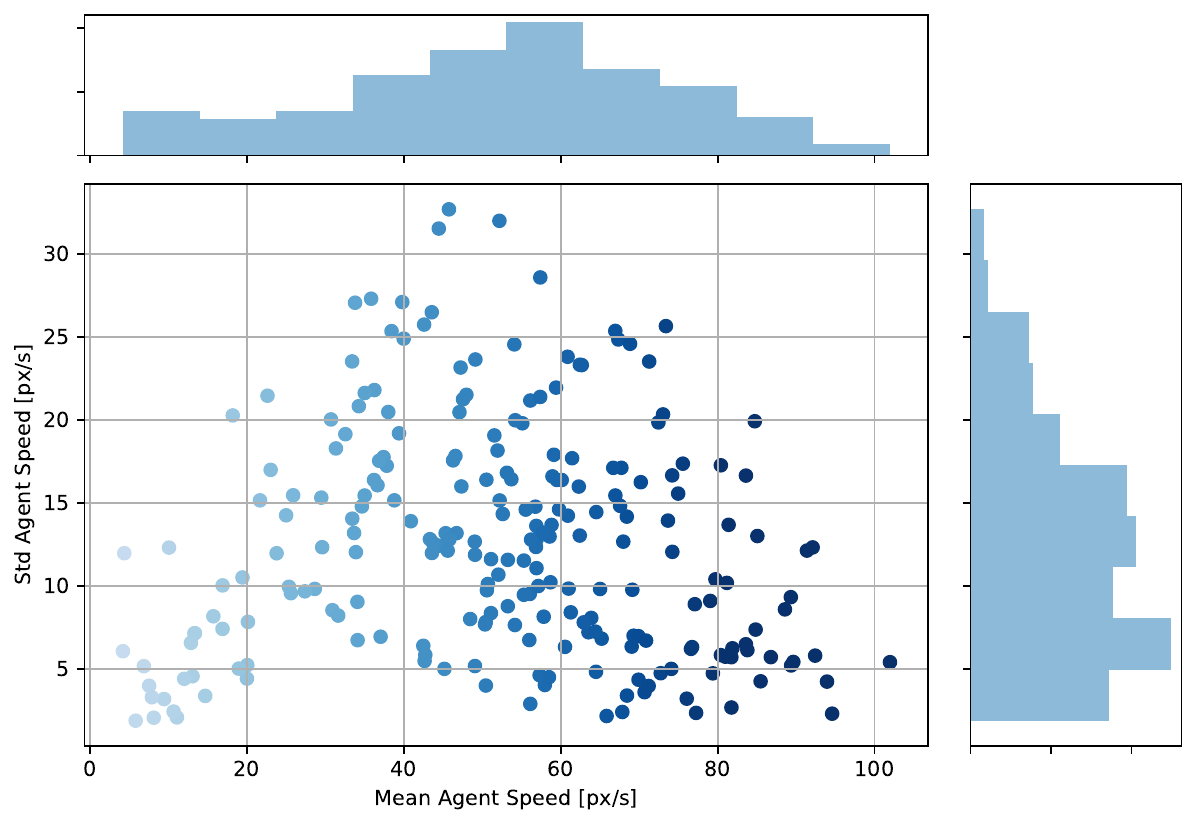}
        \caption{}
        \label{fig:dark_scatter_speed}
    \end{subfigure}
    \begin{subfigure}[t]{0.49\textwidth}
        \centering
        \includegraphics[width=1\textwidth]{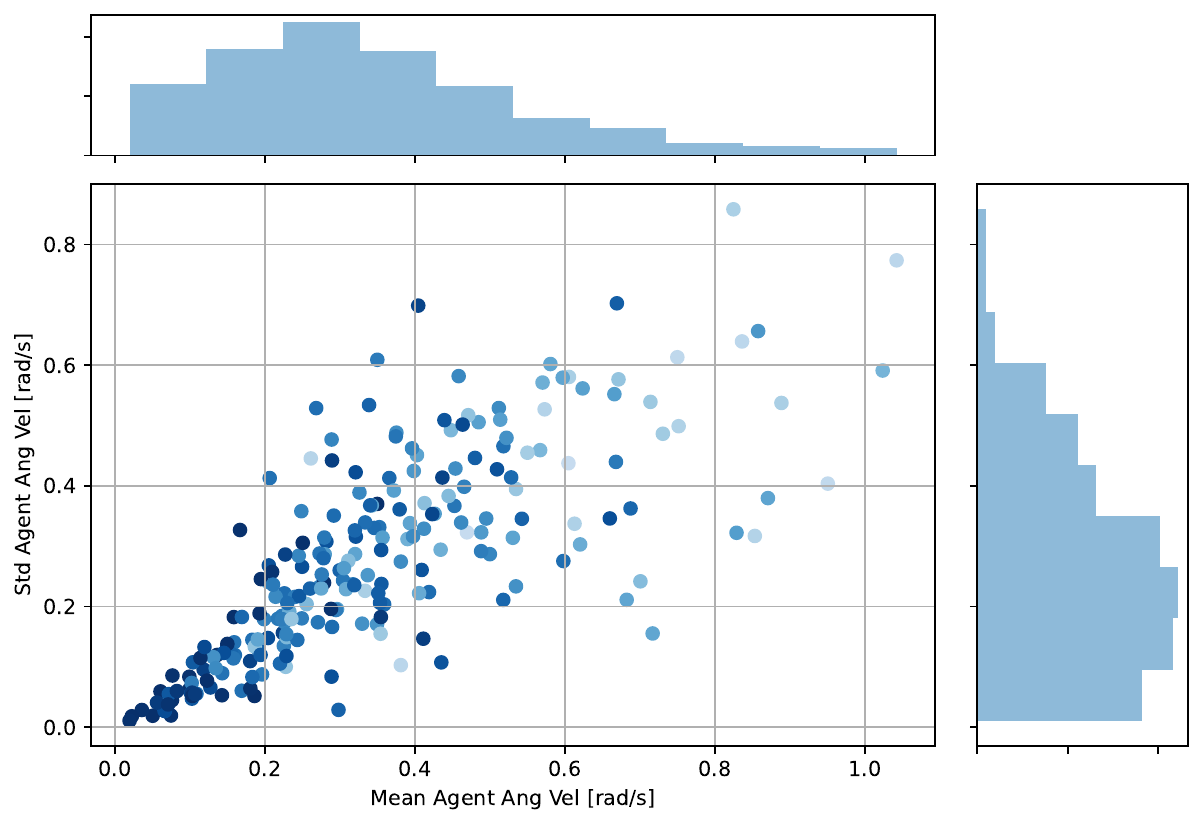}
        \caption{}
        \label{fig:dark_scatter_angv}
    \end{subfigure}

    \caption{Characterization of the motion \textit{Euglena} specimens in a dark environment. 
    (a) Average speed and average absolute angular velocity of the agents.
    (b) Average and standard deviation of the speed of the agents.
    (c) Average and standard deviation of the absolute angular velocity of the agents.
    Each dot represents one agent and is colored according to the mean speed of the agent, so that darker points correspond to faster agents.
    }
    \label{fig:dark_scatter}
\end{figure}

\subsection{The influence of light}
We then characterized the response of the microorganisms to light inputs.
Figure \ref{fig:light_inputs_frequency} shows the behaviour of \textit{Euglena} with light inputs of different duration, respectively 60s, 10s and 5s, and spatially uniform (see Section \ref{sec:experiments} for details about the experiments).
The time evolution of the quantities of interest clearly shows a rapid (i.e. in approximately one second) decrease of the speed and a simultaneous increase of the absolute angular velocity when the blue light is switched on.
In the following $\sim$10\,s both quantities return back, close to the original values, showing a strong adaptation capability of this species.
This implies that a constant illumination or very fast bursts (i.e. during a second or less) do not cause large effects in the long term, while inputs with intermediate duration (i.e. during between 5 and 10 seconds)  do (see Figure \ref{fig:varying_input_duration}).
These effects are compatible with the well known step-up photophobic response of \textit{Euglena} to blue light (see Table \ref{tab:microorganisms}), and, potentially, weaker negative photokinesis and positive photoklinokinesis.

\begin{figure}[t]
    \centering
    \vspace{-0.6cm}
    \begin{subfigure}[t]{1\textwidth}
        \centering
        \includegraphics[trim={0 8 0 0},clip,width=0.97\textwidth]{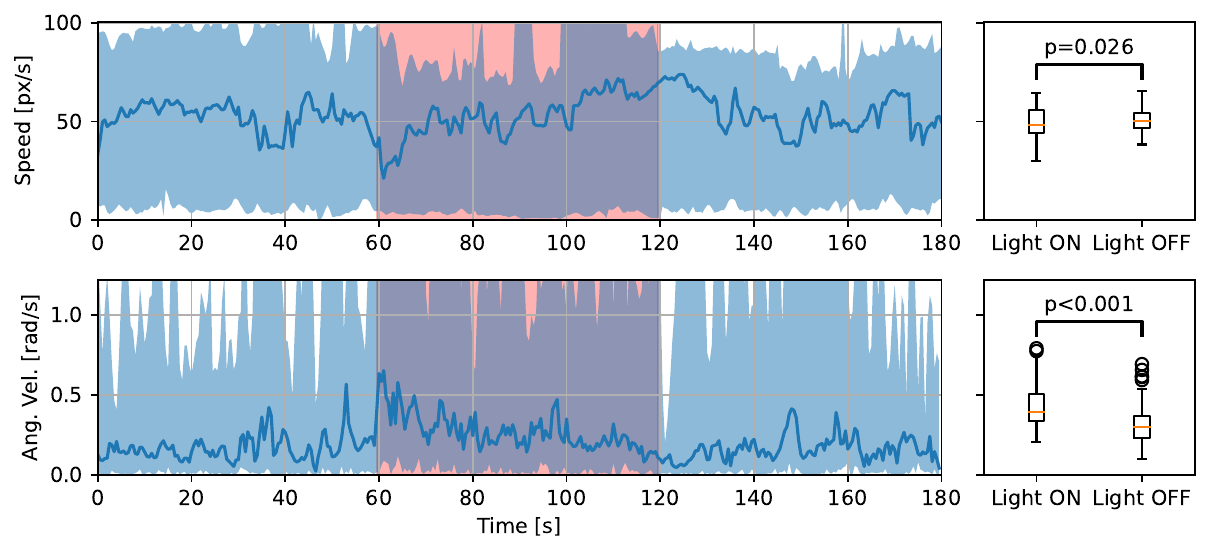}
        \vspace{-0.2cm}
        \caption{Intensity 100\% experiment with 60s activation.}
        \label{fig:light_inputs_255ON}
    \end{subfigure}

    \begin{subfigure}[t]{1\textwidth}
        \centering
        \includegraphics[trim={0 8 0 0},clip,width=0.97\textwidth]{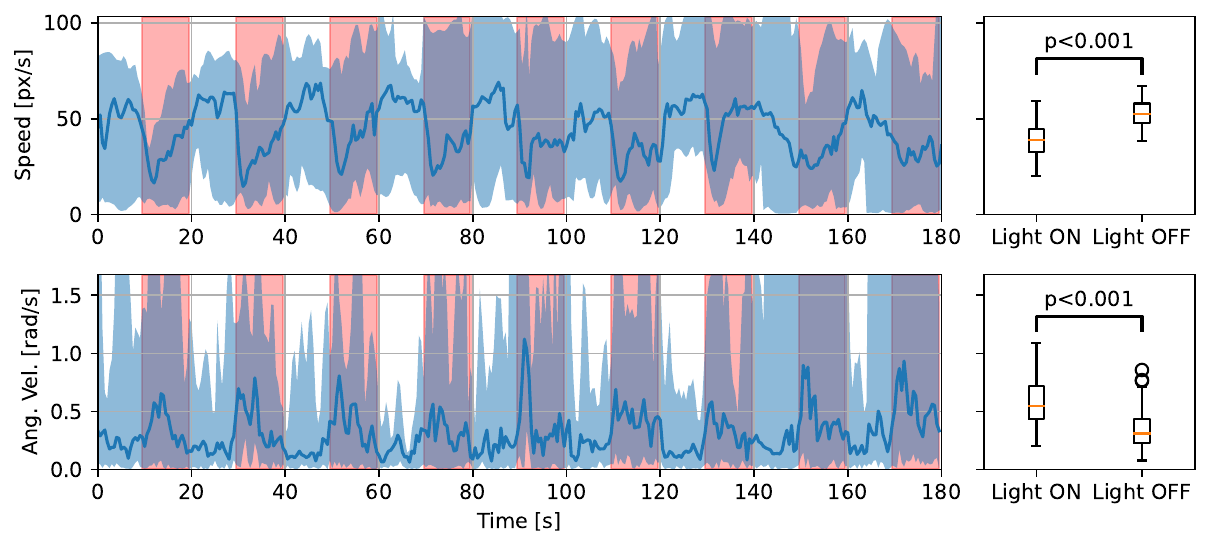}
        \vspace{-0.2cm}
        \caption{Switch 10s experiment.}
        \label{fig:light_inputs_switch10}
    \end{subfigure}
    
    \begin{subfigure}[t]{1\textwidth}
        \centering
        \includegraphics[trim={0 8 0 0},clip,width=0.97\textwidth]{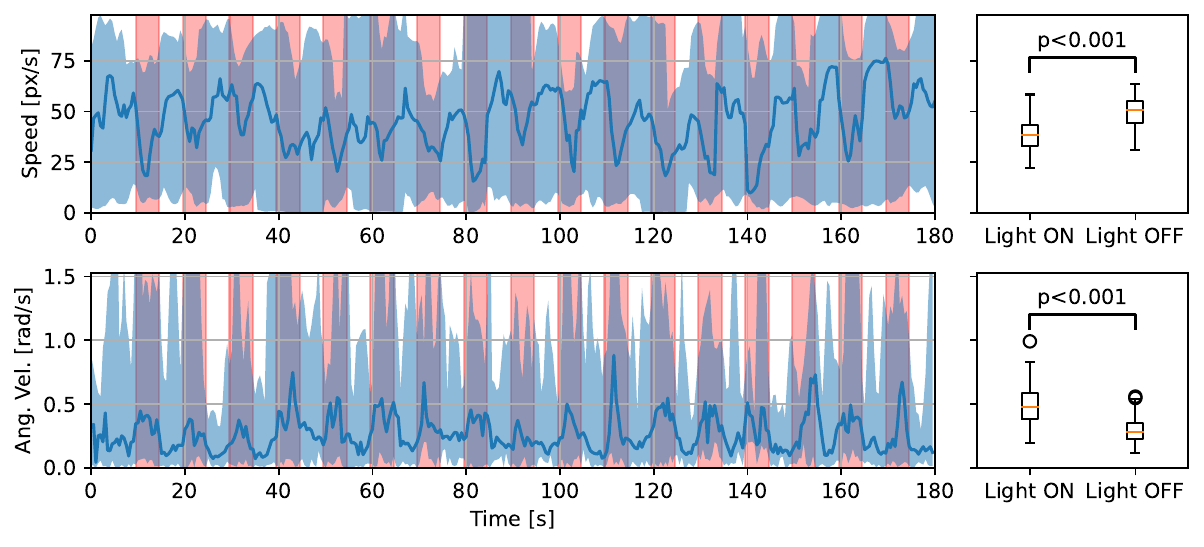}
        \vspace{-0.2cm}
        \caption{Switch 5s experiment.}
        \label{fig:light_inputs_switch5}
    \end{subfigure}
    \vspace{-0.3cm}
    \caption{Response of \textit{Euglena} to light inputs of different duration. The red bands indicate the periods in which the light inputs were active. Boxplots show the distribution of the mean speeds (or angular velocity) of the agents. Statistical significance is computed by Mann-Whitney U test \cite{Mann1947}.}
    \label{fig:light_inputs_frequency}
\end{figure}

We then studied how the light intensity affects the response of \textit{Euglena}.
Figure \ref{fig:light_inputs_intensity} shows the effect of light inputs with different intensities, respectively 100\%, 60\% and 30\%, and spatially uniform.
The results suggest that the intensity of light do not affect significantly the response, as even the minimum tested intensity is sufficient to trigger the photophobic response.
Figure \ref{fig:varying_input_intensity} further supports this finding, and shows that, regardless of input intensity, long exposure does not cause significant effects in the long term.
\begin{figure}[t]
    \centering
    \vspace{-0.6cm}
    \begin{subfigure}[t]{1\textwidth}
        \centering
        \includegraphics[trim={0 8 0 0},clip,width=0.97\textwidth]{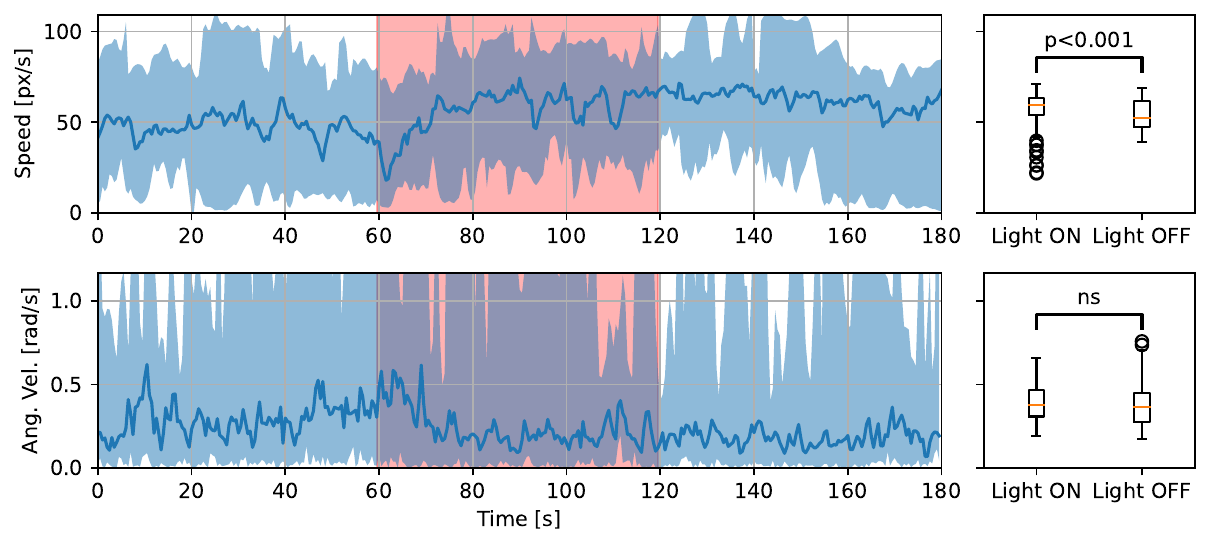}
        \vspace{-0.2cm}
        \caption{Intensity 30\% experiment.}
        \label{fig:light_inputs_75ON}
    \end{subfigure}

    \begin{subfigure}[t]{1\textwidth}
        \centering
        \includegraphics[trim={0 8 0 0},clip,width=0.97\textwidth]{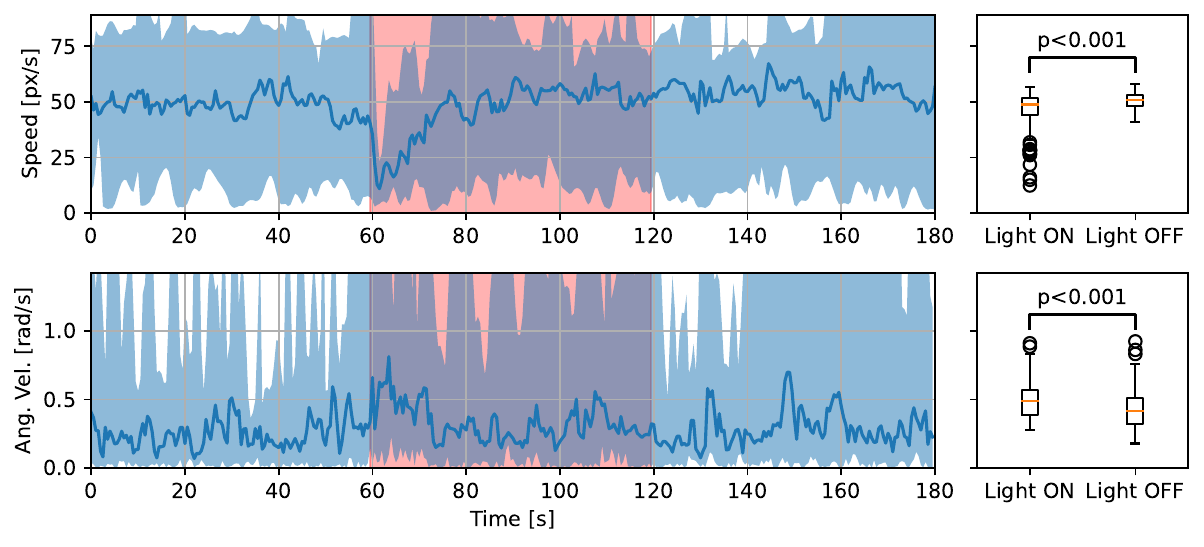}
        \vspace{-0.2cm}
        \caption{Intensity 60\% experiment.}
        \label{fig:light_inputs_150ON}
    \end{subfigure}
    
    \begin{subfigure}[t]{1\textwidth}
        \centering
        \includegraphics[trim={0 8 0 0},clip,width=0.97\textwidth]{figs/Euglena_255ON_time_evolution_boxplots.pdf}
        \vspace{-0.2cm}
        \caption{Intensity 100\% experiment.}
        \label{fig:light_inputs_255ON_int}
    \end{subfigure}
    \vspace{-0.3cm}
    \caption{Response of \textit{Euglena} to light inputs of different intensities. The red bands indicate the periods in which the light inputs were active. Boxplots show the distribution of the mean speeds (or angular velocity) of the agents. Statistical significance is computed by Mann-Whitney U test \cite{Mann1947}.
    }
    \label{fig:light_inputs_intensity}
\end{figure}
\begin{figure}[t]
    \centering
    \vspace{-0.5cm}
    \begin{subfigure}[t]{0.45\textwidth}
        \centering
        \includegraphics[trim={0 0 0 20},clip,width=1\textwidth]{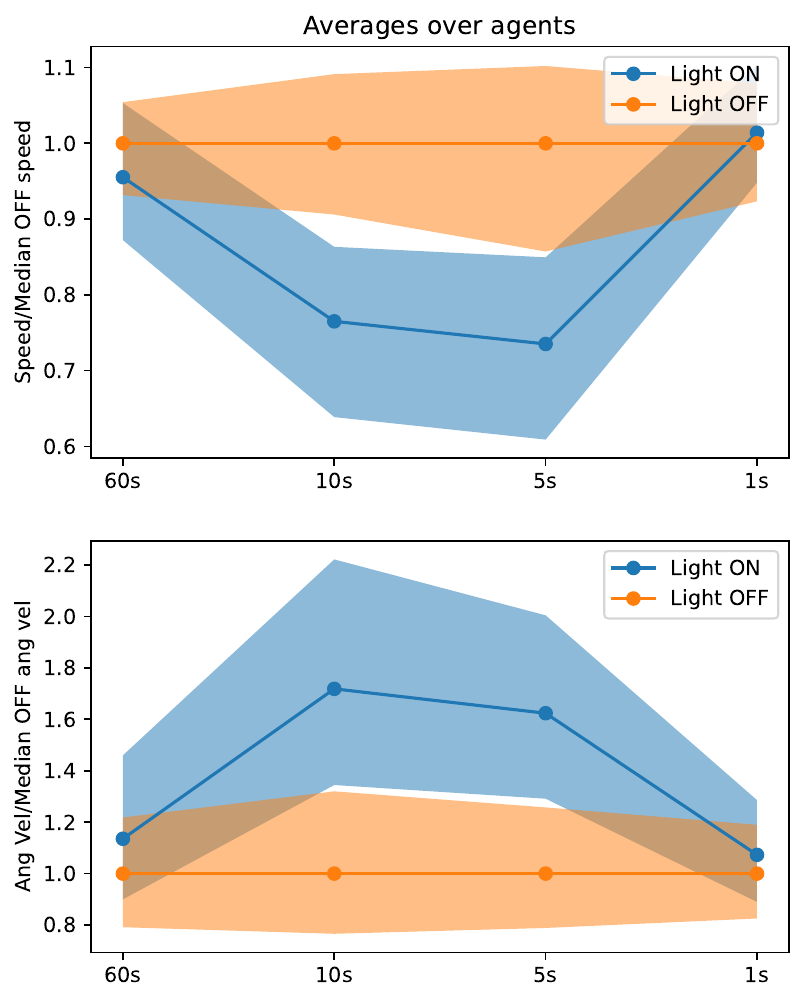}
        \caption{Varying input duration.}
        \label{fig:varying_input_duration}
    \end{subfigure}
    \begin{subfigure}[t]{0.46\textwidth}
        \centering
        \includegraphics[trim={0 0 0 20},clip,width=1\textwidth]{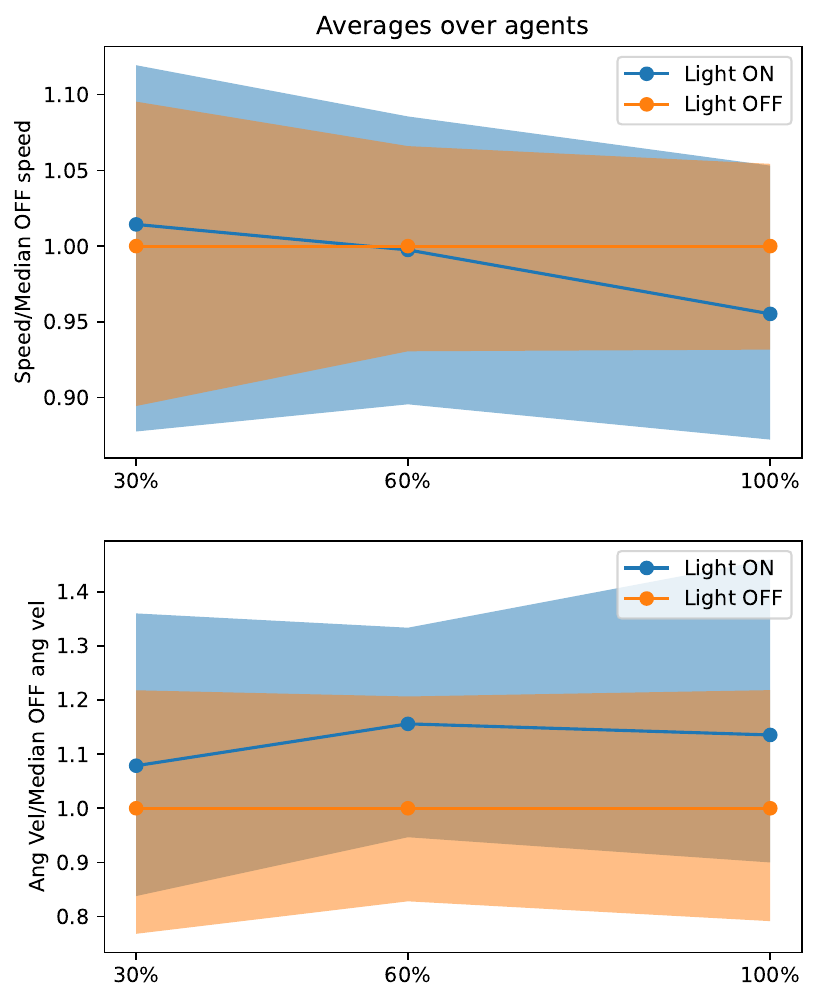}
        \caption{Varying input intensity.}
        \label{fig:varying_input_intensity}
    \end{subfigure}
    \caption{Response of \textit{Euglena} to light inputs of different (a) duration and (b) intensities. 
    The plots are obtained by aggregating the results from three replicates for each case. For each experiment the average over the agents is computed, and normalized on the median value recorder when the light inputs are absent.
    The solid lines show the median, while the shaded areas show the range between first and third quartiles.
    }
    \label{fig:varying_input}
\end{figure}
These results are summarised in Table \ref{tab:results}, where can be easily observed that switching inputs generate the largest effects, with up to 25\% speed decrease and 67\% increase in the angular velocity.

\begin{table}
    \centering
    \begin{tabular}{p{2.3cm}p{1.5cm}p{1.5cm}p{0.9cm}p{1.5cm}p{1.5cm}p{0.8cm}}
        \hline
         \multirow{2}{*}{Scenario} & \multicolumn{3}{l}{Speed [px/s]} & \multicolumn{3}{l}{Angular velocity [rad/s]} \\
           & OFF & ON & & OFF& ON &\\
        \hline
        No input &48.1±25.7	&---------	&	&0.35±0.40	&---------	&  \\
        30\% Intensity  &49.6±27.3	&49.6±27.0	&+0\%	&0.40±0.47	&0.43±0.47	&+6\%	\\
        60\% Intensity &46.6±27.3	&46.2±26.9	&-1\%	&0.40±0.47	&0.46±0.49	&+14\%\\
        100\% Intensity &51.4±27.9	&49.5±26.9	&-4\%	&0.34±0.43	&0.39±0.45	&+14\% \\
        Switch 1s &45.8±26.0	&46.8±25.9	&+2\%	&0.41±0.44	&0.44±0.46	&+7\%	 \\
        Switch 5s &46.9±29.0	&35.0±24.9	&-25\%	&0.38±0.50	&0.59±0.56	&+57\%  \\
        Switch 10s &51.4±28.2	&39.2±26.7	&-24\%	&0.32±0.43	&0.53±0.55	&+67\%  \\
        \hline
    \end{tabular}
    \caption{Experimental results on the movement of \textit{Euglena}. Data are obtained aggregating the results of three biological replicates per case. Shown values are mean ± standard deviation and, for light ON case, percentage variation of the mean value with respect to the light OFF case.}
    \label{tab:results}
\end{table}

\clearpage
\section{Discussion}
In this Chapter we introduced our experimental setup, and characterized the movement and the light response of \textit{Euglena}.
We started by describing the \ac{DOME}, an innovative experimental platform that was crucial for our research. In particular, the ability to stimulate the microorganisms with light inputs allowed to study their response.
Then, we introduced the four species of motile microorganisms used in our study, and the series of experiment designed to investigate their response to light stimuli.
Finally, we analysed the kinetics of the movement of \textit{Euglena}, studying their behaviour in a dark environment and the modifications induced by light.
In particular, we observed these microorganisms to strongly react to sudden luminosity changes by slowing down and increasing their turning rate.
These results will be used in the next Chapter to direct the modelling of the motion of these microorganisms.

\chapter{Data-driven modelling}
\thispagestyle{empty} 
\label{ch:modelling}

Mathematical models allow to achieve better understanding of the experimental results and new insights on the studied system, trough both simulation or analysis.
Moreover, dynamical models are a fundamental tool for the design of control strategies.
Nevertheless, as discussed in Chapter \ref{ch:miroorganisms_background}, general approaches to model the movement and the light response of microorganisms are currently lacking, hampering the development of effective control strategies.
Therefore, this Chapter will introduce our data-driven approach to model stochastic mobile agents, which also allows to include inputs and might be used to describe the movement of a variety of microorganisms, together with their response to light. 
Specifically we will use the data presented in Section \ref{sec:analysis} to parameterize and validate the model, using \emph{Euglena} as a representative example.
Finally, we will present some preliminary results in the direction of controlling the spatial distribution of microorganisms.

\section{Stochastic model of motion}
Models of dynamical systems usually consist of  \acp{ODE}, nevertheless, these descriptions cannot grasp the nature of highly stochastic behaviours, as often found in biology.
In Section \ref{sec:microorganisms_movement}, we discussed two of the most common approaches to model the stochastic movement of swimming organisms, namely the Lévy walk \cite{Berg1993} and the \acf{PTW} \cite{Gautrais2009}.
Given that we observed \textit{Euglena} to be continuously moving and turning for most of the time, that we are only interested in modelling 2D trajectories, and that the \ac{PTW} model can be easily adapted to include inputs, we adopted this modelling framework.
Specifically, we use two \acfp{SDE} to model, respectively, the time evolution of the speed $v_i$ and the angular velocity $\omega_i$ \cite{Zienkiewicz2015}.
So that, for each agent $i$, we have
%
\begin{subnumcases}{\label{eq:SDEmodel}}
    \d v_i= [\theta_{v,i} (\mu_{v,i}-v_i) + \alpha_{v,i} u_i + \beta_{v,i} \max(\dot{u}_i,0) ] \d t + \sigma_{v,i} \d W_{v,i} \label{eq:SDEmodel_v}\\
    \d \omega_i= [-\theta_{\omega,i} \omega_i +\!\func{sign}{\omega_i} (\alpha_{\omega,i} u_i + \beta_{\omega,i} \max(\dot{u}_i,0))] \d t + \sigma_{\omega,i} \d W_{\omega,i} \label{eq:SDEmodel_w}
\end{subnumcases}

where $\theta_{\bigcdot,i}$ represents the \emph{rate}, $\mu_{\bigcdot,i}$ the \emph{mean}, $\sigma_{\bigcdot,i}$  the \emph{volatility}, $\alpha_{\bigcdot,i}$ and $\beta_{\bigcdot,i}$ the input gains of the corresponding equation; while $u_i$ is the intensity of the light input measured by agent $i$ and $\dot{u}_i$ its time derivative; finally, $W_{\bigcdot,i}$ represent independent standard Wiener processes.
Notice that we assumed zero mean for the angular velocity (i.e. $\mu_{\omega,\bigcdot}=0$), giving the agents no preference between left or right turns, but, for all the other parameters, we are allowing different values for each agent.

Concerning the agents' reaction to light inputs, we included two terms in each equation, describing the effects of light intensity and its time derivative.
In particular, by using the term $\max(\dot{u}_i,0)$ we only considered the effect of positive variations of the inputs, that can indeed trigger the step-up photophobic response of these microorganisms%
\footnote{Here we decided to only model the step-up photophobic response, nevertheless the extension to also include the step-down response is straightforward. It can be implemented by inserting an additional term $+\gamma_{\bigcdot,i}\min(\dot{u}_i,0)$ in both equations.}.
Moreover, the term $\func{sign}{\omega_i}$ in \eqref{eq:SDEmodel_w} keeps the consistency between the parameters $\alpha_{\omega,i}$ and $\beta_{\omega,i}$, describing the variation of the absolute value of $\omega_i$, and $\omega_i$ itself, being either positive or negative.
All together this model allows to capture any combination of photokinetic (trough $\alpha_{v,i}$), photoklinokinetic (trough $\alpha_{\omega,i}$), and step-up photophobic responses (trough $\beta_{v,i}$ and $\beta_{\omega,i}$) (see Section \ref{sec:microorganisms_light_response} for more information about possible light responses of microorganisms).

Compared with the original \ac{PTW} model \cite{Gautrais2009}, which assumed constant speed, our model can capture richer dynamics, allowing for fluctuations of the longitudinal speed.
This improvement was already proposed in \cite{Zienkiewicz2015}, which also introduced an explicit dependency between the speed and the angular velocity.
While, for the sake of simplicity, we assumed the two variables to be independent, we introduced input terms in both equations \eqref{eq:SDEmodel}. This, allows our model to capture the influence of external factors, such as light, that can, indeed, induce a coupling between speed and angular velocity.
\new{A similar model, specifically designed to capture the step-up photophobic response of \textit{Euglena g.}, was proposed in \cite{Tsang2018}. This, while describing 3D movement and some light responses, assumes constant longitudinal speed, so that some crucial aspects, such as velocity fluctuations and photokinesis, cannot be captured.}

\section{Model parametrization}
Model \ref{eq:SDEmodel} is linear in the parameters, allowing for a relatively straightforward parametrization. 
Therefore, we extended the calibration technique described in \cite{vanDenBerg2011} to identify the values of the parameters.


We need to identify the parameters in both \eqref{eq:SDEmodel_v} and \eqref{eq:SDEmodel_w}, which have the same structure,
\begin{equation}
    \d x= [\theta (\mu-x) + \alpha u] \d t + \sigma \d W.
    \label{eq:SDE}
\end{equation}
Assume we can acquire data points $\{x_1,x_2,\dots,x_n\}$ and $\{u_1,u_2,\dots,u_n\}$ from $x$ and $u$, with a sampling time step $\Delta T$.
Then, \ac{OLS} regression \cite{freedman2009} can be used to fit the straight line $x_{k+1}=a x_k+b u_k +c$ to these data, obtaining estimates for the parameters $a$, $b$ and $c$, and the residuals $\epsilon_k=x_{k+1} - (a x_k+b u_k + c)$.

Also, discretizing \eqref{eq:SDE} we have
\begin{equation}
    x_{k+1}=e^{-\theta \Delta T} x_{k} + \mu(1-e^{-\theta \Delta T}) 
    + \int_0^{\Delta T} e^{-\theta \tau} b \, \d \tau \, u_k
    + \sigma \sqrt{\frac{1-e^{-2\theta \Delta T}}{2\theta}} N_{0,1},
    \label{eq:SDEdiscretization}
\end{equation}
where $N_{0,1}$ is a normally distributed random variable.
By comparing it to the equation of the fitting line one gets the parameters of the original continuous time \ac{SDE}, namely
\begin{align}
    \theta &= -\frac{\ln(a)}{\Delta T}, \\
    \mu &=\frac{c}{1-a}, \\
    \alpha &= \frac{\ln(a)}{\Delta T(a-1)} b\\
    \sigma &= \func{std}{\epsilon}\sqrt{\frac{-2\ln(a)}{(1-a^2)\Delta T}}.
\end{align}

Given the time series from an experiment we apply this procedure to calibrate both the equations \eqref{eq:SDEmodel}.
Specifically, for each agent $i$ and independently for the recorded speed $v_i$ and the absolute angular velocity $\abs{\omega_i}$, we identified the parameters $\theta_{\bigcdot,i}$, $\mu_{\bigcdot,i}$, $\alpha_{\bigcdot,i}$, $\beta_{\bigcdot,i}$, and $\sigma_{\bigcdot,i}$.
As inputs $u_i$ and $\dot{u}_i$ we used, for all the agents, the intensity of blue light given by the projector, divided by 255 to be in the range [0; 1], and its approximate time derivative (numerical differentiation is described in \eqref{eq:numerical_differentiation}).
We then discarded all the agents for which $\theta_{v,i}$ or $\theta_{\omega,i}$ was negative, because that would result in unstable dynamics.
Moreover, we discarded all the agents for whom any of the identified parameters represents an outlier in the set of values of that parameter (according to Algorithm \ref{alg:detect_outliers} and with a threshold $m=5$).
This ensures robustness of the identification procedure, which involves highly non-linear functions and would otherwise be susceptible to noise or uncertainties in the data.
Notice that this procedure can also identify $\mu_{\omega,i}$, nevertheless we assumed the angular velocity to have zero mean, hence this parameter will not be used.

Figure \ref{subfig:identification_parameters_boxplot} shows the resulting estimated values of the parameters for a representative experiment with light inputs switching every 10\,s.
We can observe a good qualitative agreement between these data and the results presented in Section \ref{sec:analysis}, in particular $\alpha_v$ being mostly negative describes the observed negative photokinesis, similarly $\alpha_\omega$ being mostly positive describes the observed positive photoklinokinesis, while $\beta_v$ and $\beta_\omega$ being respectively negative and positive capture the photophobic response.
Figure \ref{subfig:identification_parameters_corrplot} shows the pairwise correlation between the estimated values of the parameters.
Notice that, while most pairs of parameters show weak correlation, the relations observed in the experimental data (see Section \ref{sec:analysis}), such as negative correlation between speed and angular velocity (see $(\mu_v,\mu_\omega)$), or positive correlation between the mean and the variability of the angular velocity (see $(\mu_\omega,\sigma_\omega)$), are here retrieved.
Moreover we can observe new properties, for example \new{faster agents are less responsive to light inputs (see the negative correlations} between $(\mu_v,\alpha_v)$ and $(\mu_v,\beta_v)$).

\begin{figure}[t]
    \vspace{-0.5cm}
    \centering 
    \begin{subfigure}[t]{1\textwidth}
        \includegraphics[trim={80 440 80 50},clip,width=0.48\textwidth]{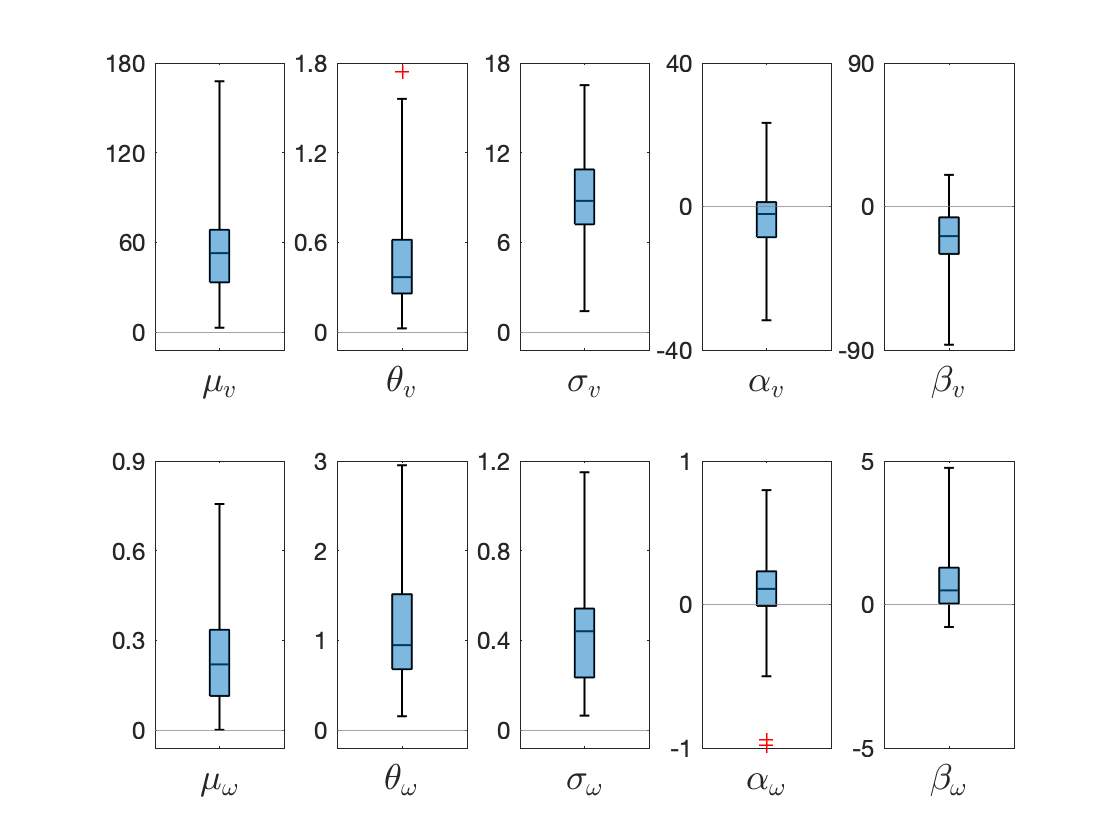} 
        \includegraphics[trim={80 40 80 400},clip,width=0.48\textwidth]{figs/Euglena_input_identified_parameters_boxplots.png} 
        \caption{Distribution of the estimated parameters.}
        \label{subfig:identification_parameters_boxplot}
        \vspace{0.5cm}
    \end{subfigure}
    \begin{subfigure}[t]{1\textwidth}
        \includegraphics[trim={150 100 150 140},clip,width=1\textwidth]{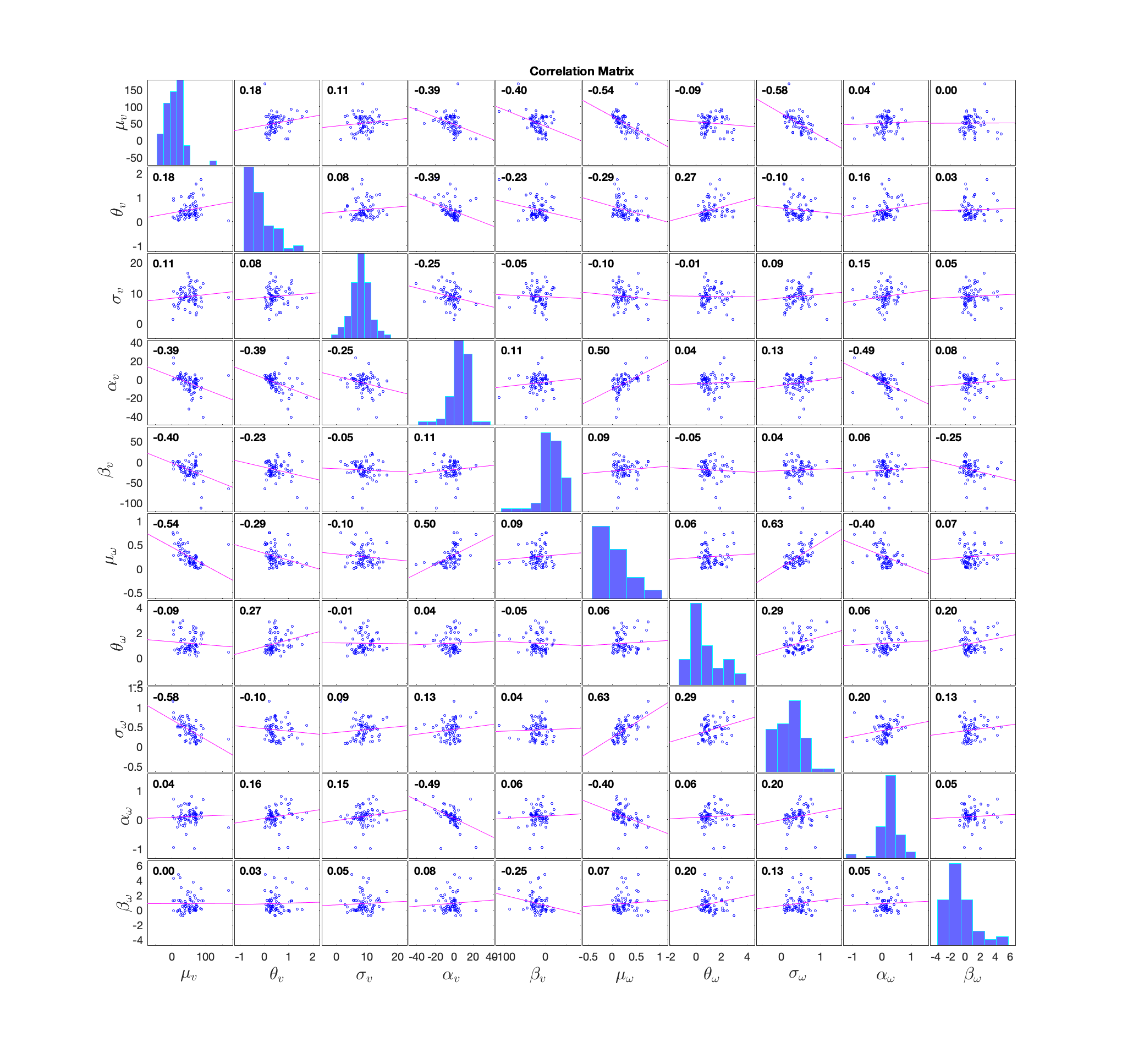} 
        \caption{Pairwise correlation between the estimated parameters.}
        \label{subfig:identification_parameters_corrplot}
    \end{subfigure}
    \caption{Estimated values of the parameters (after the selection procedure) (a), and their distribution and pairwise Pearson correlation coefficient (b). 
    The data comes from an experiment with light inputs switching every 10\,s.
    Input: 78 time series of speed and absolute angular velocity.
    Output: 71 valid sets of parameters.
    }
    \label{fig:identification_parameters}
\end{figure}

\section{Numerical validation}
To validate our model \eqref{eq:SDEmodel} and the parametrization procedure we implemented them in the agent based simulator we developed, called SwarmSim (for more information about SwarmSim see Appendix \ref{ch:swarmsim} or visit \url{www.github.com/diBernardoGroup/SwarmSimPublic}).
Specifically, starting from the trajectories recorded during an experiment, we applied the identification procedure to all $N'$ detected agents, obtaining a collection of $N''\leq N'$ valid sets of parameters.
By randomly sampling sets of parameters from this collection we can instantiate $N$ virtual agents%
\footnote{By allowing for repetitions, $N$ can be larger than $N''$.}
in our simulator, and then compare the synthetic data to the experimental ones.
Figure \ref{fig:identification_validation_trajectories} shows real and simulated trajectories, which, indeed appear to be qualitatively similar.
Moreover, Figure \ref{fig:identification_validation} shows the comparison between experimental and synthetic data, for two experiments, one with no inputs and the other in the presence of switching inputs.
Both simulations are obtained by simulating the agents identified from the experimental data with switching inputs. 
The remarkable similarity between experimental and simulated data, also validated by Mann-Whitney U statistical test \cite{Mann1947} (see Figures \ref{subfig:identification_validation_dark_box} and \ref{subfig:identification_validation_input_box}), confirms the effectiveness of both the model \eqref{eq:SDEmodel} and the identification procedure described in the previous pages.
In particular Figure \ref{subfig:identification_validation_input_time} shows how the simulated model reproduces the complex light response observed in the real organisms, made of a combination of step-up photophobic, photokinetic and photoklinokinetic responses, followed by an adaptation in the next $\sim$10\,s.

\begin{figure}[t]
    \centering   
    \begin{subfigure}[t]{0.49\textwidth}
        \centering
        \includegraphics[trim={0 0 0 0},clip,width=1\textwidth]{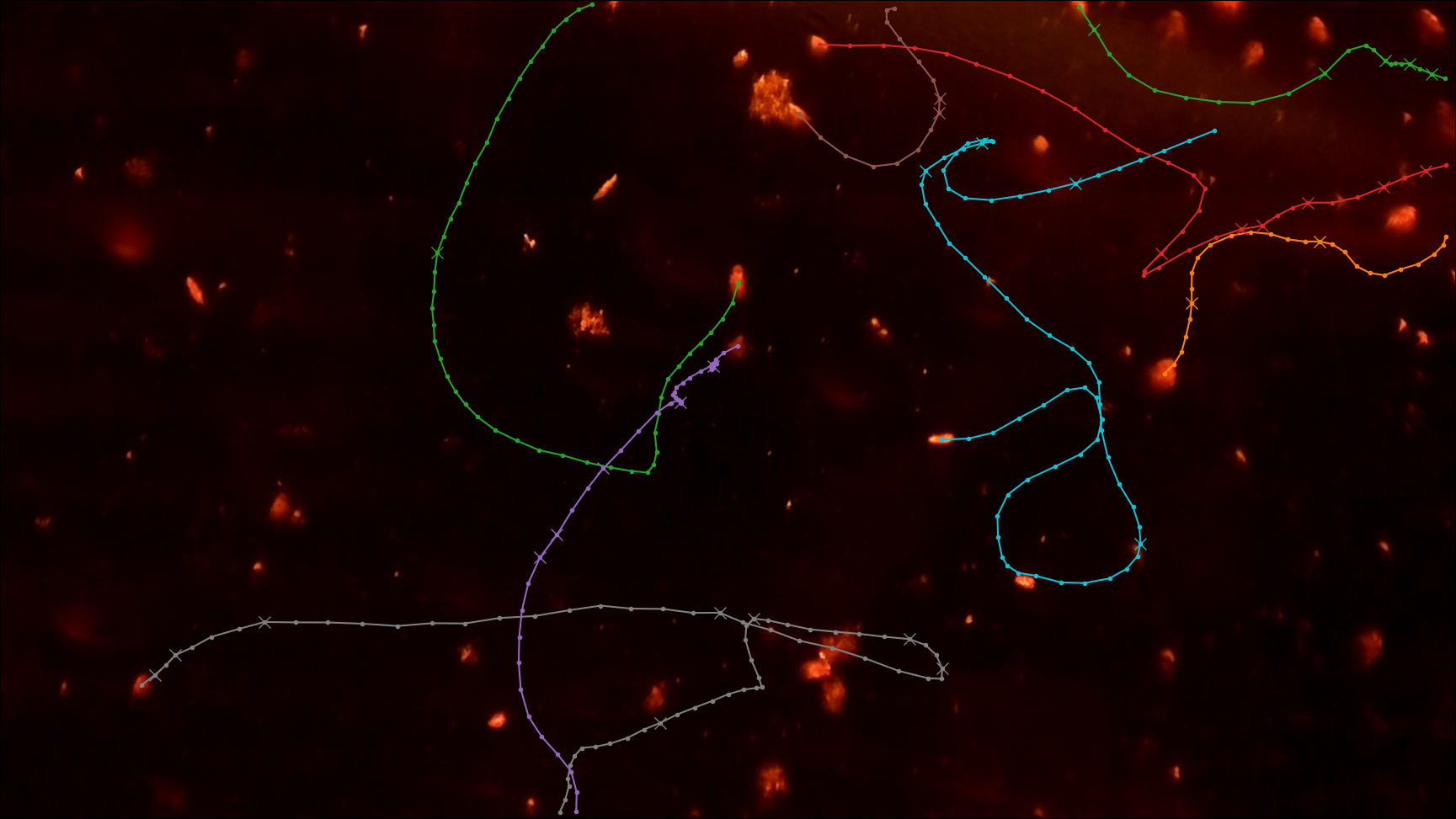} 
        \caption{}
    \end{subfigure} 
    \begin{subfigure}[t]{0.49\textwidth}
        \centering
        \includegraphics[trim={0 0 0 0},clip,width=1\textwidth]{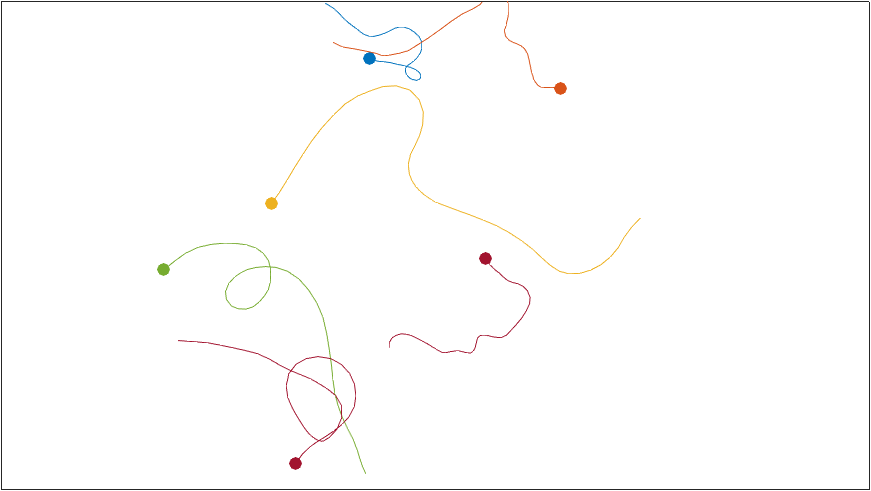} 
        \caption{}
    \end{subfigure} 
    
    \caption{Comparison between real (a) and simulated (b) trajectories of \textit{Euglena} in a dark environment.
    }
    \label{fig:identification_validation_trajectories}
\end{figure}

\begin{figure}[t]
\vspace{-0.6cm}
    \centering   
    \begin{subfigure}[t]{0.49\textwidth}
        \centering
        \includegraphics[trim={0 30 0 50},clip,width=1\textwidth]{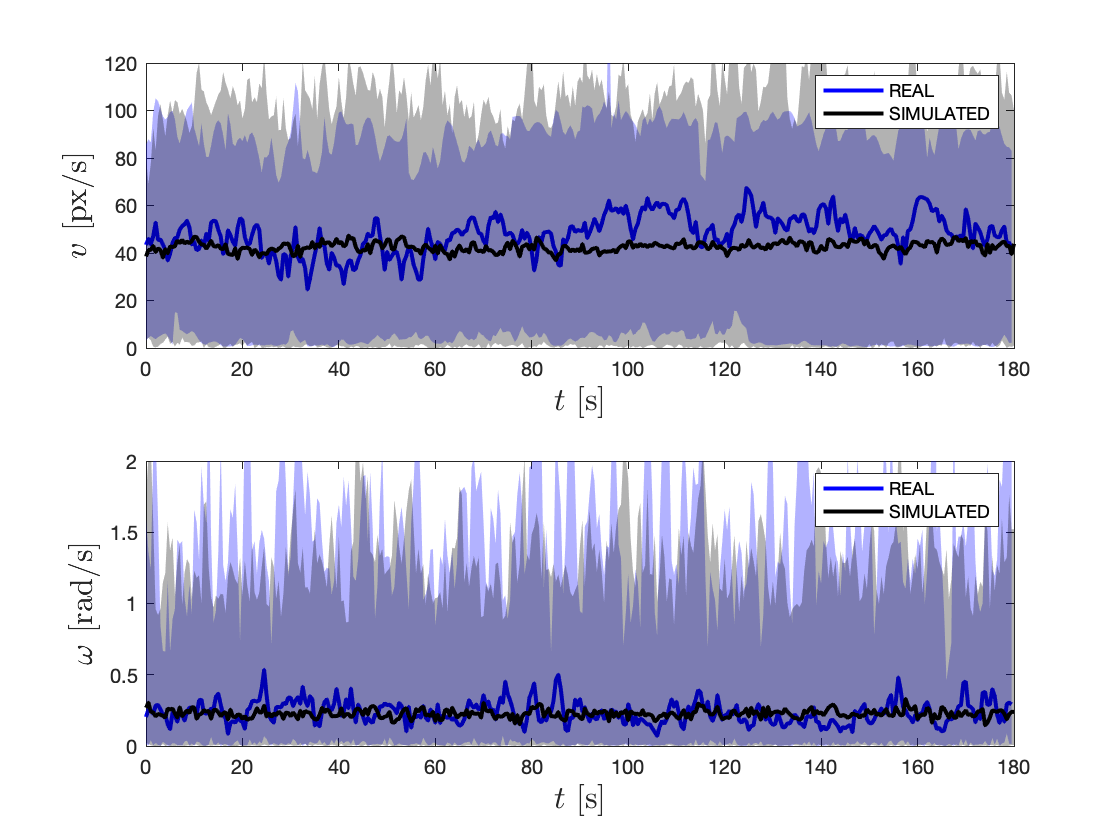} 
        \caption{}
        \label{subfig:identification_validation_dark_time}
    \end{subfigure} 
    \begin{subfigure}[t]{0.49\textwidth}
        \centering
        \includegraphics[trim={0 30 0 50},clip,width=1\textwidth]{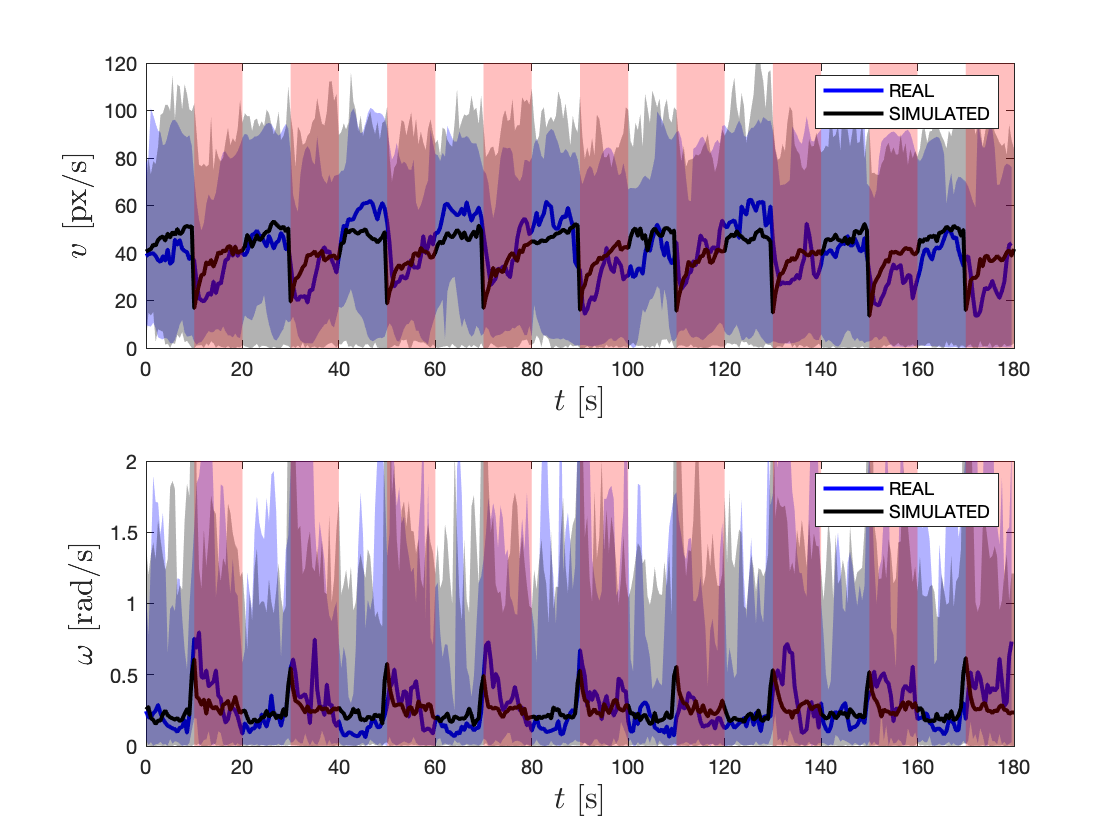} 
        \caption{}
        \label{subfig:identification_validation_input_time}
    \end{subfigure} 
    
    \begin{subfigure}[t]{0.45\textwidth}
        \centering
        \includegraphics[trim={0 20 0 0},clip,width=1\textwidth]{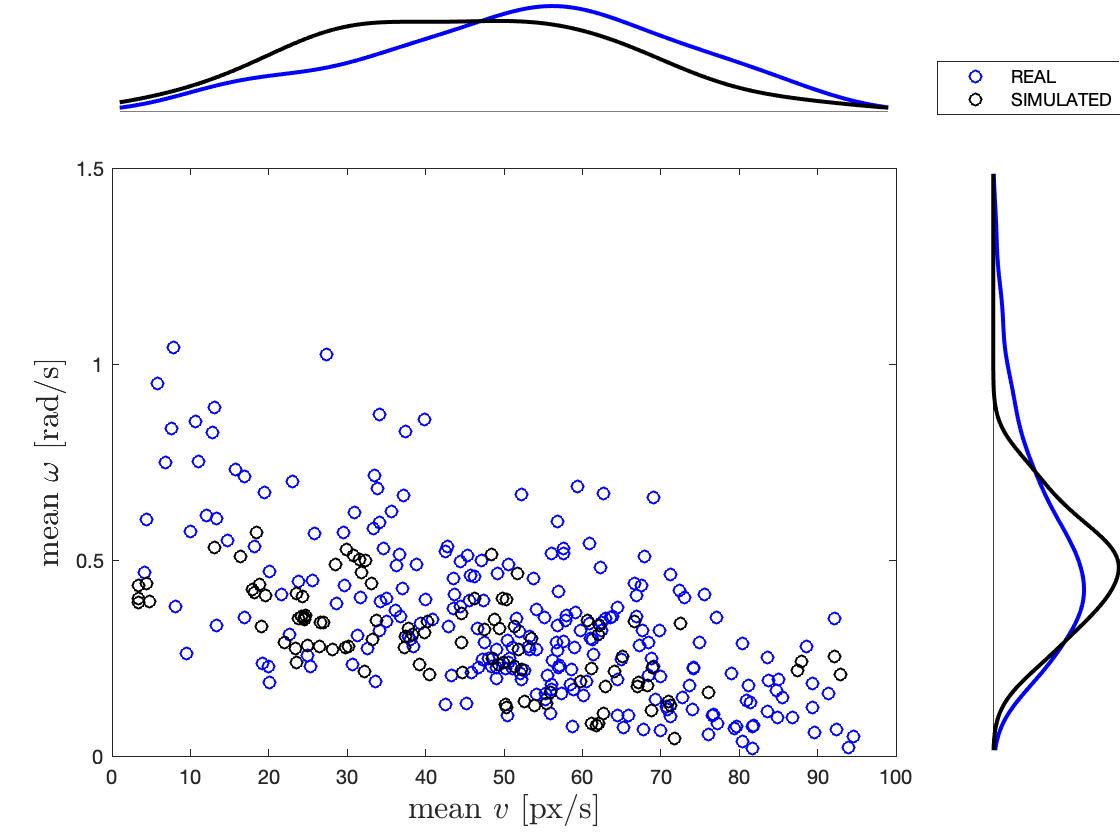} 
        \caption{}
        \label{subfig:identification_validation_dark_scatter}
    \end{subfigure} 
    \begin{subfigure}[t]{0.45\textwidth}
        \centering
        \includegraphics[trim={0 20 0 0},clip,width=1\textwidth]{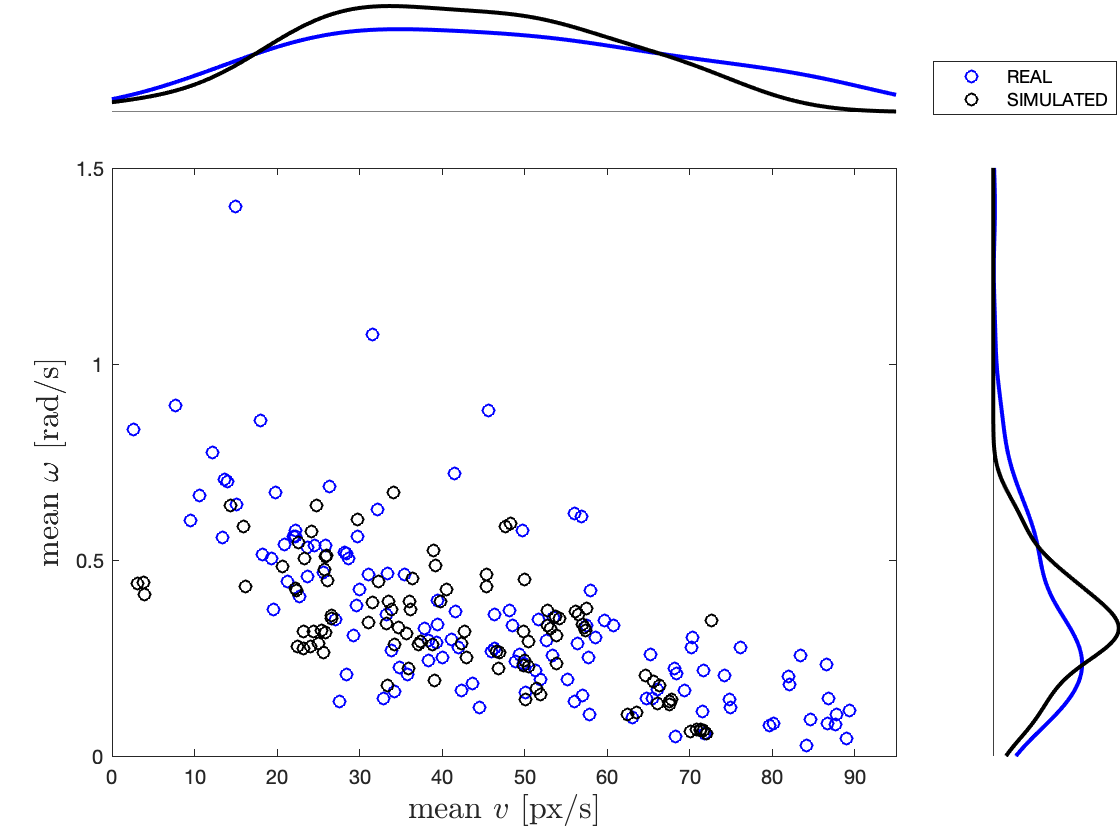} 
        \caption{}
        \label{subfig:identification_validation_input_scatter}
    \end{subfigure} 
    
    \begin{subfigure}[t]{0.3\textwidth}
        \centering
        \includegraphics[trim={0 55 0 72},clip,width=1\textwidth]{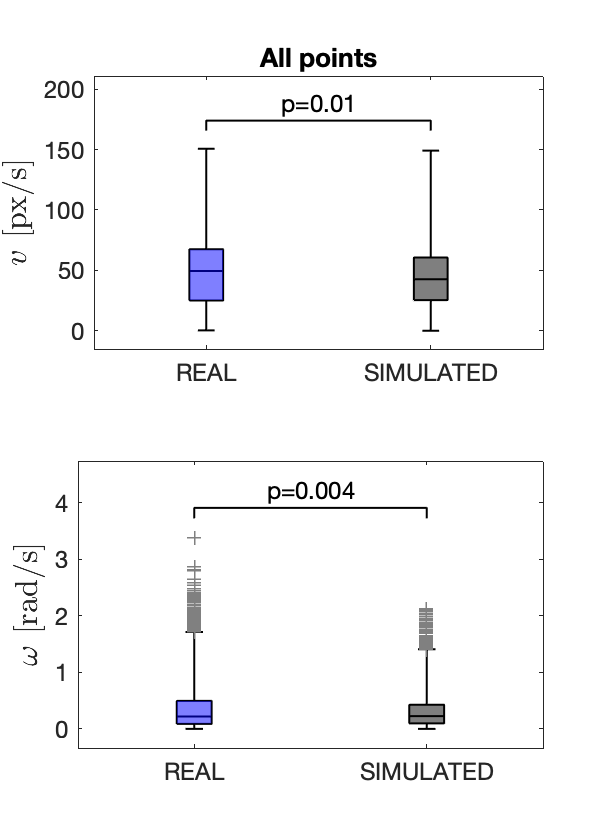}      \caption{}
        \label{subfig:identification_validation_dark_box}
    \end{subfigure}  
    \qquad
    \begin{subfigure}[t]{0.3\textwidth}
        \centering
        \includegraphics[trim={0 55 0 72},clip,width=1\textwidth]{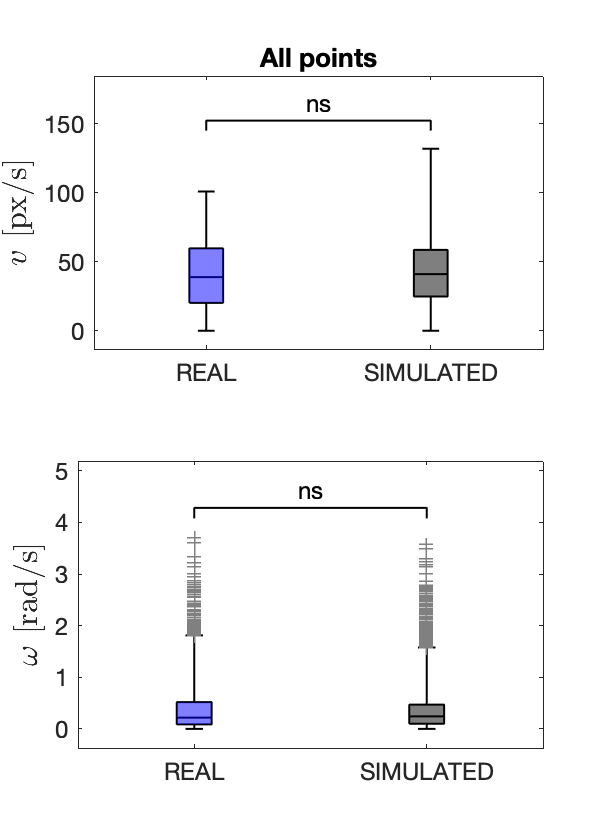}      \caption{}
        \label{subfig:identification_validation_input_box}
    \end{subfigure} 
    \vspace{-0.25cm}
    \caption{Comparison between experimental and synthetic data, for two experiments without inputs (a, c, e) and with switching input (b, d, f). 
    (a, b) Time evolution of speed and absolute angular velocity. Red shaded area indicate light input is on.
    (c, d) Each dot represents an agent, either real or simulated, and the position is given by its mean speed and absolute angular velocity.
    Marginal distributions are obtained by Kernel Density Estimation.
    (e, f) Data points, for either real or simulated agents.
    Statistical significance is computed by Mann-Whitney U test. 
    All real and simulated experiments lasted 180\,s, with a sampling time of 0.5\,s. 
    Both simulations use the virtual agents identified from the experimental data with switching inputs.
    }
    \label{fig:identification_validation}
\end{figure}

\clearpage
\section{Towards spatial control}
The final goal of our project is using the \ac{DOME} to implement spatial control of the studied microorganisms, according to the scheme in Figure \ref{fig:dome_control_scheme}.
Specifically, we will use the mathematical model developed in this Chapter to design a feedback control strategy able to steer the movement of the microorganisms, so to obtain the desired  distribution and achieving controlled density regulation.
Such control strategy may leverage the flexibility offered by the \ac{DLP} to combine spatial inputs with singe-cell ones, and exploit the knowledge provided by the model to achieve unseen performance.
Moreover, it will be easily applied to any species of microorganisms whose movement can be captured by our model. 

\begin{figure}[t]
    \centering   
    \includegraphics[trim={0 0 0 0},clip,width=1\textwidth]{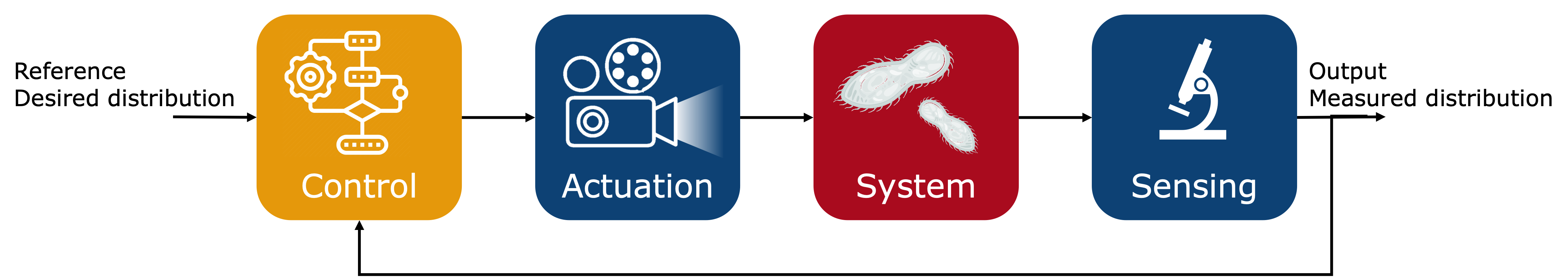} 
    \caption{Control scheme to implement spatial control in the \ac{DOME}.
    }
    \label{fig:dome_control_scheme}
\end{figure}

To this aim we run preliminary experiments, complementary to those described in Section \ref{sec:experiments}, to study the response of the microorganisms to spatial inputs.
These experiments are described in Table \ref{tab:spatial_experiments} and depicted in Figure \ref{fig:experiments_spatial_inputs}, while some representative results are shown in Figure \ref{fig:spatial_inputs_example}.
It can be easily seen that, as known, \textit{Euglena} shows a strong photodispersion behaviour, and something similar is observed for \textit{Volvox}.
Instead, the response of \textit{Paramecia} is less clear.
These observations allowed to test on \textit{Euglena} a simple open-loop approach, based on the static projection of a dark patter on a lit background.
This naive approach, while occasionally providing satisfactory results (see Figure \ref{fig:BCL_Euglena}), is limited to a single specie and proved very sensible to the experimental conditions.

\begin{figure}[t]
    \centering
    \includegraphics[trim={0, 0, 0, 320},clip,width=0.9\linewidth]{figs/experiments.pdf}
    \caption{Graphical representation of the experiments executed to characterize the spatial response of the microorganisms.}
    \label{fig:experiments_spatial_inputs}

    \vspace{0.5cm}
    \begin{tabular}{p{0.5cm}p{2.2cm}p{8cm}}
    \hline
     & Experiment & Description \\
    \hline
    \multirow{9}{*}{\rotatebox[origin=c]{90}{Spatial inputs}}
    \arrayrulecolor{lightgray}
         & Half-Half        & 10s OFF, half sample OFF, half ON.\\ \cline{2-3}
         & Gradient\newline lateral & 10s OFF, linear light gradient from one side of the sample to the other.\\ \cline{2-3}
         & Gradient center light & 10s OFF, linear light gradients from the sides of the sample to the center.\\ \cline{2-3}
         & Gradient center dark & 10s OFF, linear light gradients from the center of the sample to the sides.\\ \cline{2-3}
         & Circle light & 10s OFF, lit circle on dark background.\\ \cline{2-3}
         & Circle dark & 10s OFF, dark circle on lit background.\\
    \arrayrulecolor{black} \hline
    \end{tabular}
    \captionof{table}{Experiments executed to characterize the spatial response of the microorganisms.}
    \label{tab:spatial_experiments}
\end{figure}

\begin{figure}[t]
    \centering
    \begin{subfigure}[t]{0.48\textwidth}
        \includegraphics[width=\linewidth]{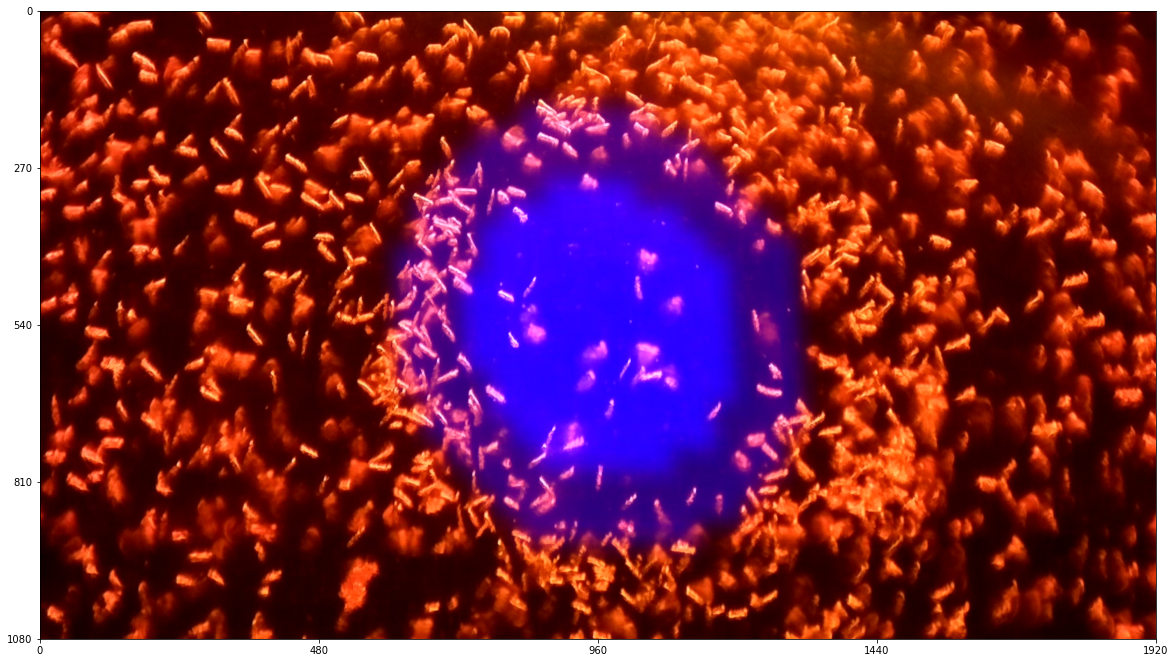}
        \caption{\textit{Euglena}}
    \end{subfigure}
    \begin{subfigure}[t]{0.48\textwidth}
        \includegraphics[width=\linewidth]{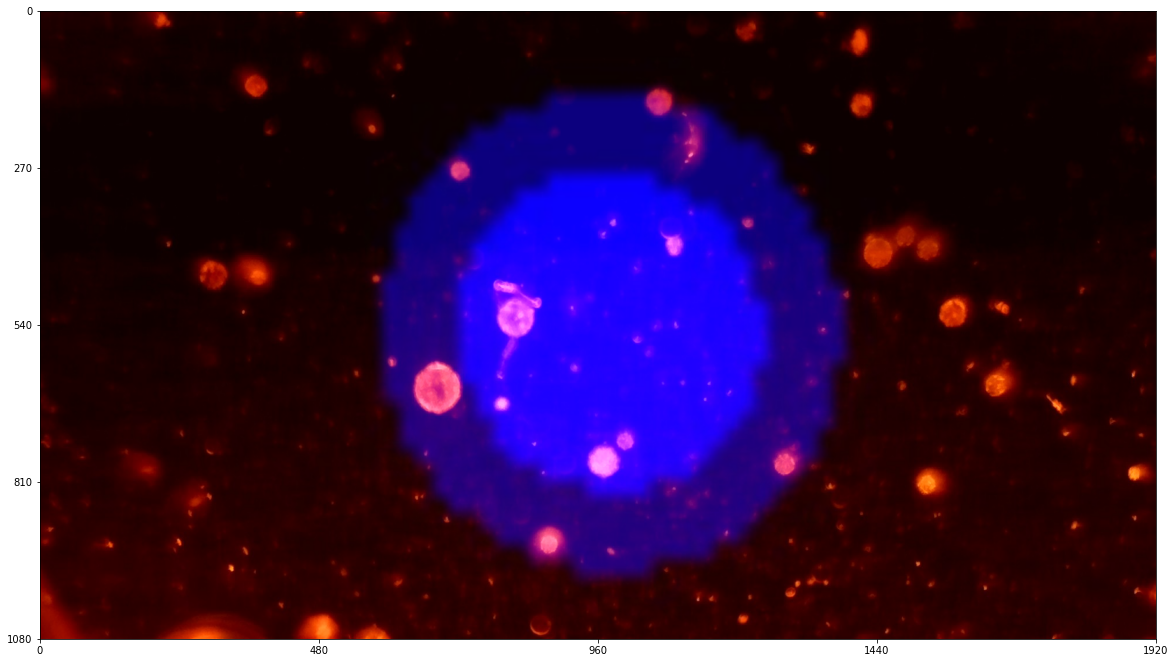}
        \caption{\textit{Volvox}}
    \end{subfigure}

    \begin{subfigure}[t]{0.48\textwidth}
        \includegraphics[width=\linewidth]{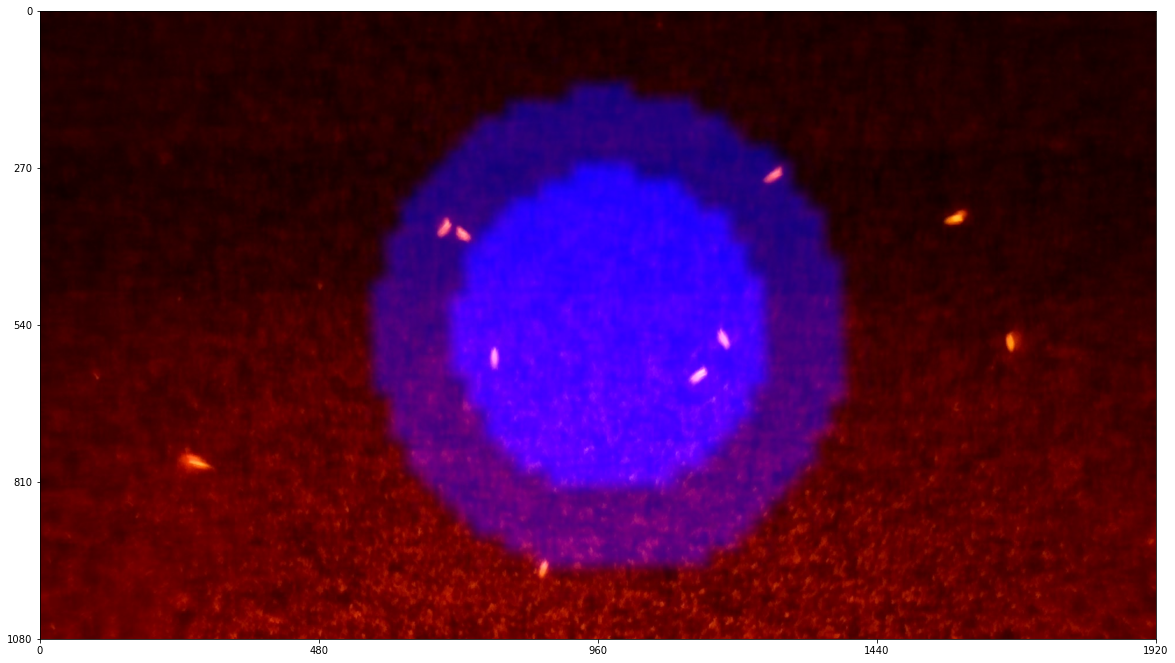}
        \caption{\textit{P. bursaria}}
    \end{subfigure}
    \begin{subfigure}[t]{0.48\textwidth}
        \includegraphics[width=\linewidth]{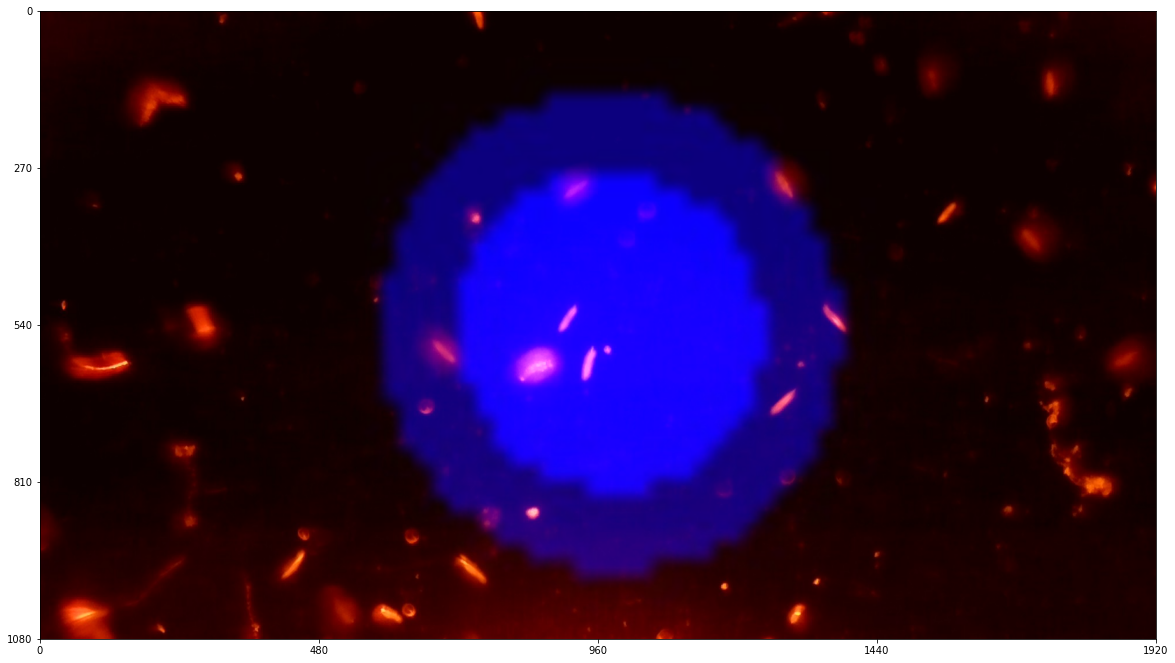}
        \caption{\textit{P. caudatum}}
    \end{subfigure}
    \caption{Response of microorganisms to a lit circle on dark background. 
    The experimental conditions are those reported in Table \ref{tab:exp_param}.
    The projected blue light pattern is not present in the original images, thanks to the red light filter, and was added digitally.
    }
    \label{fig:spatial_inputs_example}
\end{figure}

\begin{figure}[t]
    \centering
    \includegraphics[trim={0, 0, 0, 0},clip,width=0.9\linewidth]{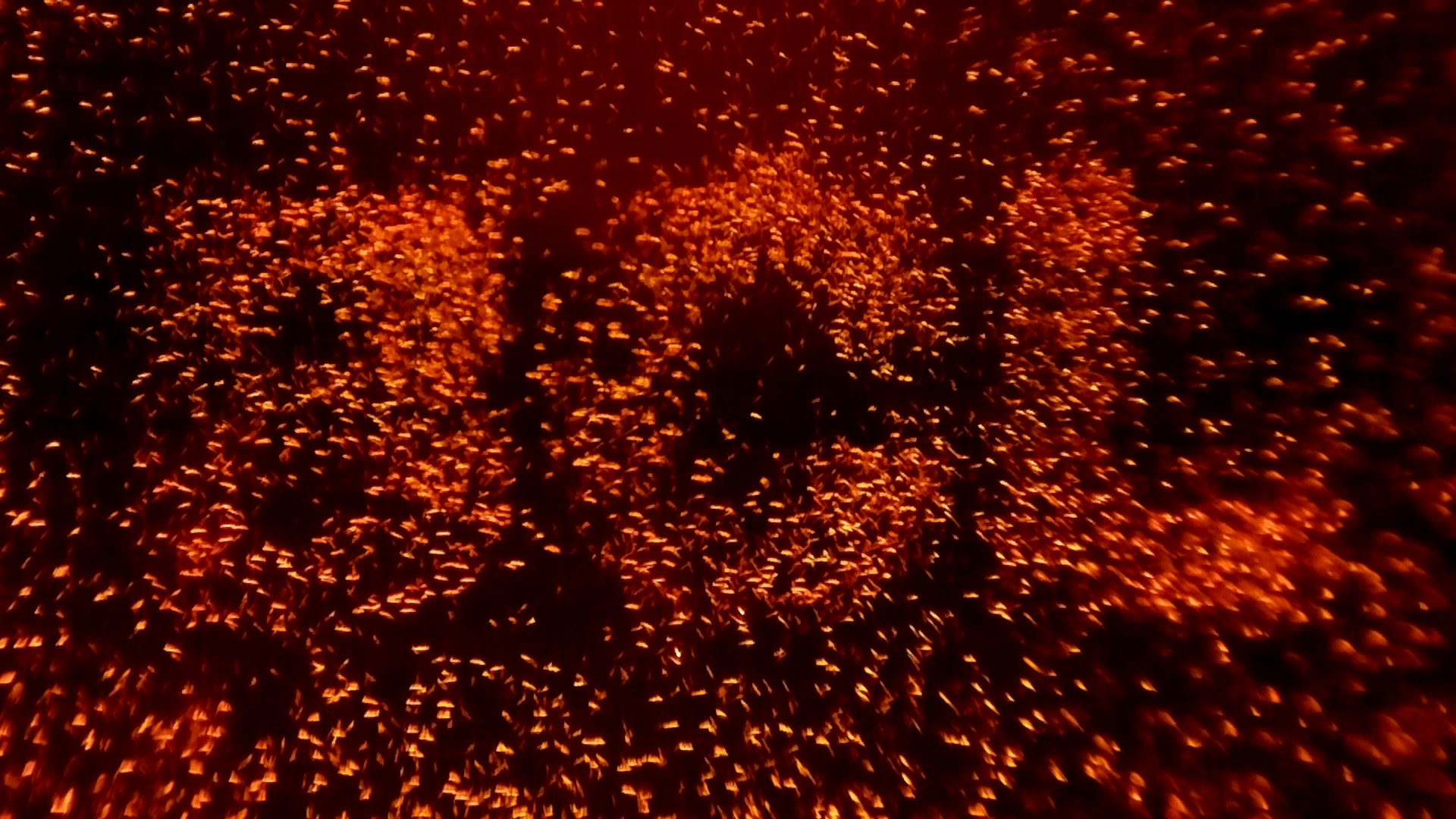}
    \caption{Spatial control with static input. A colony of \textit{Euglena} forms the letters "BCL", for BioCompute Lab.}
    \label{fig:BCL_Euglena}
\end{figure}

A thorough investigation of the control aspects will be carried out next, but is beyond the scope of this thesis and will be presented in future publications.

\section{Discussion}
In this Chapter we introduced a new model for stochastic mobile agents, extending the classic \acl{PTW} model with a second \acl{SDE} modelling the longitudinal speed, and inputs to take into account the effect of light.
This allowed to capture both the movement of microorganisms in a dark environment, and their complex response to light inputs, showing a combination of step-up photophobic, negative photokinetic and positive photoklinokinetic responses, further enriched by adaptation.
The model has a generic structure and can be easily parameterized from the experimental data, meaning that it may suite a variety of mobile micro-agents and effectively describe their motion.
Finally, we discussed how to proceed in the direction of developing new strategies to control the distribution of microorganisms in both space and time.
Crucially, relatively simple but descriptive models, which allow both simulations and model based design, can be instrumental for the development of  more effective feedback strategies.

To date, the model has been tested and validated only on \textit{Euglena}, but we are currently working to extending the application of this methodology to obtain  models also for the other species in our data-set, i.e. \textit{Volvox}, \textit{Paramecium bursaria} and \textit{Paramecium caudatum}.
Moreover, we aim to study how macroscopic density regulation (i.e. photoaccumulation or photodispersion) emerges from the individual responses of the organisms.

\chapter{Conclusions}
\thispagestyle{empty} 
\label{ch:conclusions}

This thesis explored the emerging behaviours displayed by \acf{LS-MAS}, particularly focusing on their spatial organization.
The ubiquity of such systems, made of several interacting agents, spans from natural phenomena to artificial systems.
These reflected into the two application domains discussed in this work, spatial control of biological populations and swarm robotics. 
The motivation stemmed from the many open challenges and the untapped possibilities, generated by the intricate dynamics of such systems.

\paragraph{Contributions.}
Our study began in Chapter \ref{ch:background} by surveying the existing Literature to establish a robust foundation and better understand the peculiarities of \ac{LS-MAS}. 
This preliminary step helped identify critical gaps and challenges in modeling, analyzing, and controlling their emergent behaviors.

Then, in Part \ref{part:geometric_pattern}, focusing on geometric pattern formation, we devised a novel distributed control algorithm to achieve self-organization of swarms into specific geometric patterns, like triangular and square lattices, representing a valuable advancement in the field of spatial organization of swarming robots.
After introducing and validating the novel algorithm, we developed a formal proof of the convergence of swarming systems towards rigid lattices, when these are driven by suited virtual forces.
This validates the stability of many existing  solutions, thereby enhancing their applicability and reliability.
In Part \ref{part:spatial_microorganisms}, parting from geometric patterns, we ventured into the study of spatial behaviours of biological systems, particularly the movement of microorganisms and their response to light stimuli.
Starting from experimental data, this led to the quantitative characterization of the key features of the kinematics of the microscopic algae \textit{Euglena gracilis}, and how its motion is influenced by light.
The resulting insights represented the starting point for the development of a novel data-driven mathematical model that clearly captures the behavior of the studied microorganisms, both in terms of motion and light response.
This model lays the foundation for future research on the spatial control of light sensitive cellular populations.

\paragraph{Methodological Insights.}
Our approach merged formal and numerical tools, together with experiments.
A formal approach, based on graph and Lyapunov theories, allowed the development of formal guarantees for the stability of rigid lattices.
The experiments were carried out leveraging innovative platforms, particularly the Robotarium was used to validate our distributed control strategy on real swarming robots, while the \ac{DOME} allowed the study and the collection of a rich data-set on a range of light sensitive microorganisms. 
Lastly, numerical tools were used throughout the thesis to analyze data and carry out simulations.
This resulted in the development of a new software platform, SwarmSim, to simulate mobile multi-agent systems, and the software to use the \ac{DOME}, which also includes a computer vision package to automatically detect and track moving objects.
These tools were used to validate our results, while also paving the way for future advancements in the study of \ac{LS-MAS} and their spatial behaviours.

\paragraph{Implications and Future Directions.}
The advancements made in this thesis have significantly contributed to understanding and addressing the challenges posed by \ac{LS-MAS}. However, certain critical aspects remain as open questions, presenting avenues for future research and exploration.
As regards geometric pattern formation, we contributed to the development of both novel control algorithms and formal proofs; nevertheless much remains to be addressed, aiming for a general control approach and the corresponding proof of convergence, so as to guarantee the formation of any required lattices. 
Looking at the behaviour of biological agents, we presented our first results, that point in the right direction, towards the development of integrated methodologies to model and control the movement and the emerging spatial behaviours of light sensitive microorganisms.
To this aim, it is needed to extend and improve our model, and the connected parametrization procedure, so as to be able to describe a wider range of micro-agents, and finally to control their distribution.
In a more general prospective, despite the numerous advancements achieved by researchers, obtaining a full understanding of the multiple scales in which \ac{LS-MAS} operate, and their interplay, remains a crucial goal.

\paragraph{Closing Remarks.}
In conclusion, this thesis, by addressing critical challenges in modeling, analyzing, and controlling  emergent behaviors of \acl{LS-MAS}, represents a step forward in unraveling their inherent complexity.
The work presented therein stemmed from a three years long journey that took place between the Sincro research group, at the University of Naples Federico II, and the BioCompute Lab at the University of Bristol.
Specifically, the work reported in Part \ref{part:geometric_pattern} was developed in collaboration with Giancarlo Maffettone, Marco Coraggio and Davide Fiore, while the experimental work presented in Part \ref{part:spatial_microorganisms} was carried out during a nine months period at the BioCompute Lab (University of Bristol), under the supervision of Dr. Thomas Gorochowski.

Now, as we look towards the horizon, the journey continues, brimming with opportunities to explore, innovate, and refine our understanding of \acl{LS-MAS}, propelling us towards a future where their potential is fully realized across diverse domains.

\appendix 

\chapter{SwarmSim: a framework for Multi-Agent Systems simulation}
\thispagestyle{empty} 
\label{ch:swarmsim}

\begin{figure}[t]
    \centering
    \includegraphics[width=0.6\linewidth]{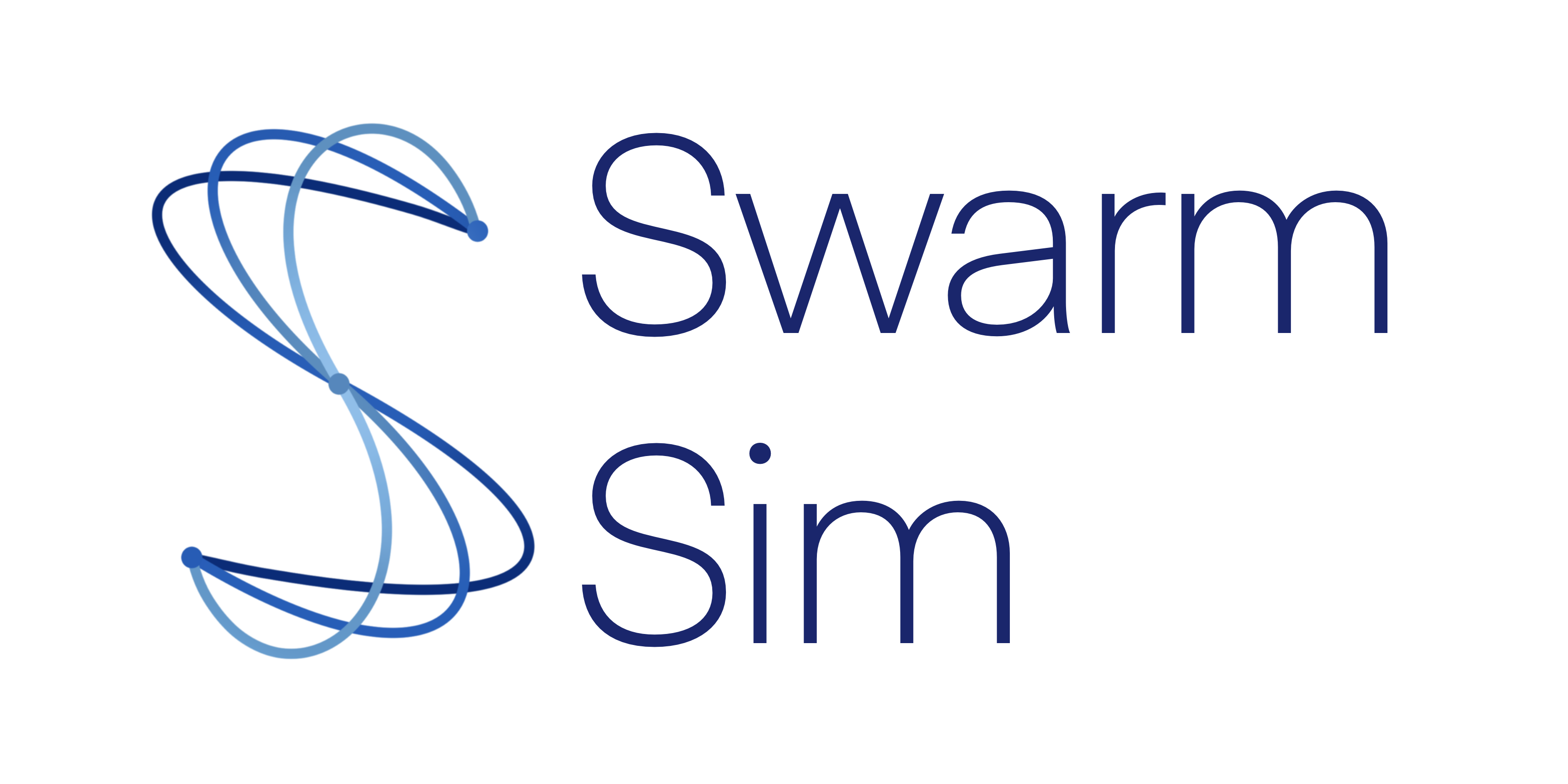}
    \caption{Logo of SwarmSim}
    \label{fig:swarmsim_logo}
\end{figure}

SwarmSim is a {\sc{Matlab}} software package crafted for the simulation of swarms of mobile agents, in both 2D and 3D environments. We developed it to provide researchers with an intuitive and flexible platform to analyze the intricate dynamics of multi-agent systems and test new control algorithms. This chapter summarize the utility, functionalities, and significance of SwarmSim. 
The up to date software and the complete documentation are available online at \url{www.github.com/diBernardoGroup/SwarmSimPublic}.
Figure \ref{fig:swarmsim_logo} depicts the logo of SwarmSim.

\section{Core functionalities and practical utility}
The software operates on a modular architecture that provides flexibility and adaptability. Central to its functionality is the ability to manipulate both the dynamics and the interactions governing agents' behavior. Users can either implement new dynamical models, tailored to their specific needs, or use pre-existing models embedded within the system, such as first and second order integrators, Lévy walkers \cite{Zaburdaev2015}, \acl{PTW} agents \cite{Gautrais2009} (see Figure \ref{subfig:swarmsim_examples_PTW}), etc. 
A central aspect of SwarmSim is its support for distributed control laws. 
Also in this case, users can either leverage embedded control algorithms based on virtual forces, such as those for geometric pattern formation discussed in Chapters \ref{ch:dist_cont} and \ref{ch:convergence} (see Figures \ref{subfig:swarmsim_examples_2D} and \ref{subfig:swarmsim_examples_3D}) or in \cite{Spears2004,Torquato2009}, or implement new ones.
This enables the observation and analysis of emergent phenomena stemming from the interactions among agents.

SwarmSim also facilitates the evaluation of performance metrics, providing quantitative measures to assess and compare different  scenarios (see Figure \ref{subfig:swarmsim_examples_metrics}). This capability is instrumental in gauging the efficacy of control strategies, evaluating their robustness, scalability and flexibility.
Moreover, the software supports extensive parallel simulation, allowing to easily scrutinize stochastic effects, explore diverse initial conditions, and investigate the influence of varying parameters on the resulting system's behavior (see Figure \ref{subfig:swarmsim_examples_parameters}).
The integration of local stability analysis through linearization further enriches the software, by enabling the exploration the system's stability around specific configurations.
Finally, SwarmSim includes functions to visualize all relevant information, such as the position of the agents, their trajectories and interactions, the time evolution of relevant quantities or the relation between selected parameters and output metrics.
This software was used for all the simulations showed in this thesis, to validate control algorithms for geometric pattern formation in Chapters \ref{ch:dist_cont} and \ref{ch:convergence}, and to simulate the movement of microorganisms in Chapter \ref{ch:modelling}.

Compared to other simulators, such as SwarmLab \cite{Soria2020}, tailored on specific applications (i.e. flocking) and pre-defined dynamics, SwarmSim offers improved flexibility, in terms of both agents dynamics and emerging behaviour.

\begin{figure}[t]
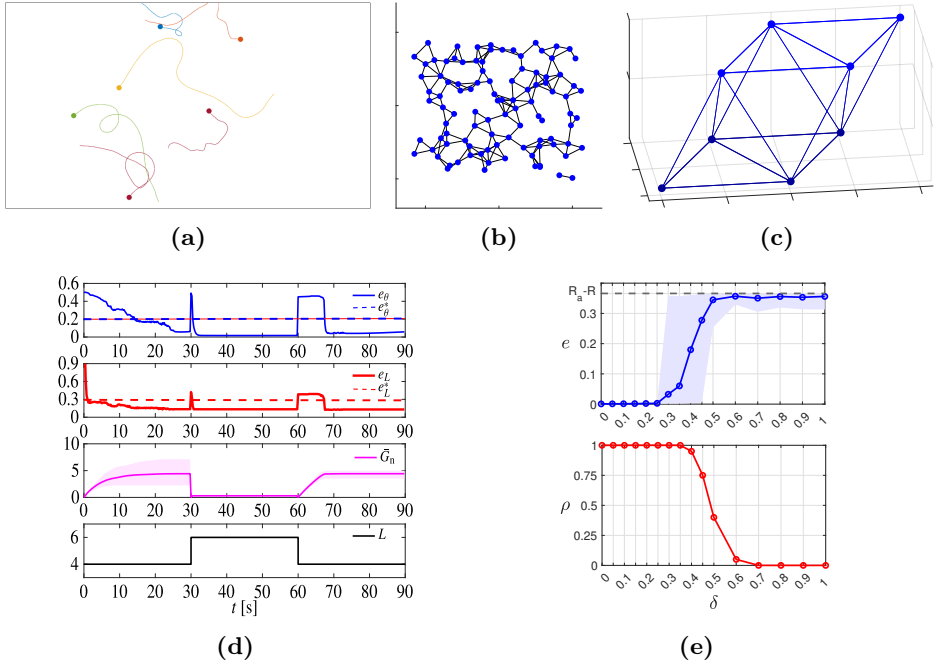

    \centering
    \begin{subfigure}[t]{0.4\textwidth}
        \centering
        \includegraphics[height=2.7cm]{figs/Euglena_trajectories_simulated.png}
        \caption{}
        \label{subfig:swarmsim_examples_PTW}
    \end{subfigure}
    \begin{subfigure}[t]{0.24\textwidth}
        \centering
        \includegraphics[height=2.7cm]{figs/d=06_x_0.pdf}
        \caption{}
        \label{subfig:swarmsim_examples_2D}
    \end{subfigure}
    \begin{subfigure}[t]{0.33\textwidth}
        \centering
        \includegraphics[height=2.7cm]{figs/lattice_3D_N=8.pdf}
        \caption{}
        \label{subfig:swarmsim_examples_3D}
    \end{subfigure}
    
    \begin{subfigure}[t]{0.48\textwidth}
        \centering
        \includegraphics[height=0.76\linewidth]{figs/adaptive_tuning_dynamic_lattice.pdf}
        \caption{}
        \label{subfig:swarmsim_examples_metrics}
    \end{subfigure}
    \begin{subfigure}[t]{0.48\textwidth}
        \centering
        \includegraphics[height=0.8\linewidth]{figs/e_rho_2D_small.pdf}
        \caption{}
        \label{subfig:swarmsim_examples_parameters}
    \end{subfigure}
    
    \caption{Example of figures generated by SwarmSim. 
    (a) Trajectories of \aclp{PTW}.
    (b) Agents in 2D space with interaction network.
    (c) Agents in 3D space with interaction network.
    (d) Time evolution of selected quantities of interest.
    (e) Dependency of two selected quantities of interest with respect to a parameter.
    }
    \label{fig:swarmsim_examples}
\end{figure}

\section{Discussion}

In conclusion, SwarmSim supports researchers by enabling exploration and analysis of emergent behaviors in multi-agent systems. Furthermore, the software serves as a valuable tool to test and validate novel distributed control algorithms.
SwarmSim, while being developed for swarm robotics, can result useful for different disciplines, such as biology and complex systems analysis. Additionally, its intuitive interface makes it a suitable tool for introducing students to the simulation of multi-agent systems.

\chapter{Tracking software}
\thispagestyle{empty} 
\label{ch:tracker}



Being able to automatically extract relevant information from images and video is a critical feature in many recent applications, including our research.
Therefore, together with the \ac{DOME} software, presented in Section \ref{sec:dome_software}, we developed an automatic tracking software. It allows to analyze the images acquired from a static camera to automatically detect the moving objects and track their movement over time. 
It combines computer vision techniques to detect the moving objects, with estimation and optimization algorithms, to assign them consistent ids over time. 
This software is written in Python using the OpenCV library.

The algorithm is made of three main steps, \emph{background modelling}, \emph{objects detection} and \emph{objects tracking}, as depicted in Figure \ref{fig:tracker}.
In the following we discuss the function and the structure of each of these steps.

\begin{figure}
    \centering
    \includegraphics[width=1\linewidth]{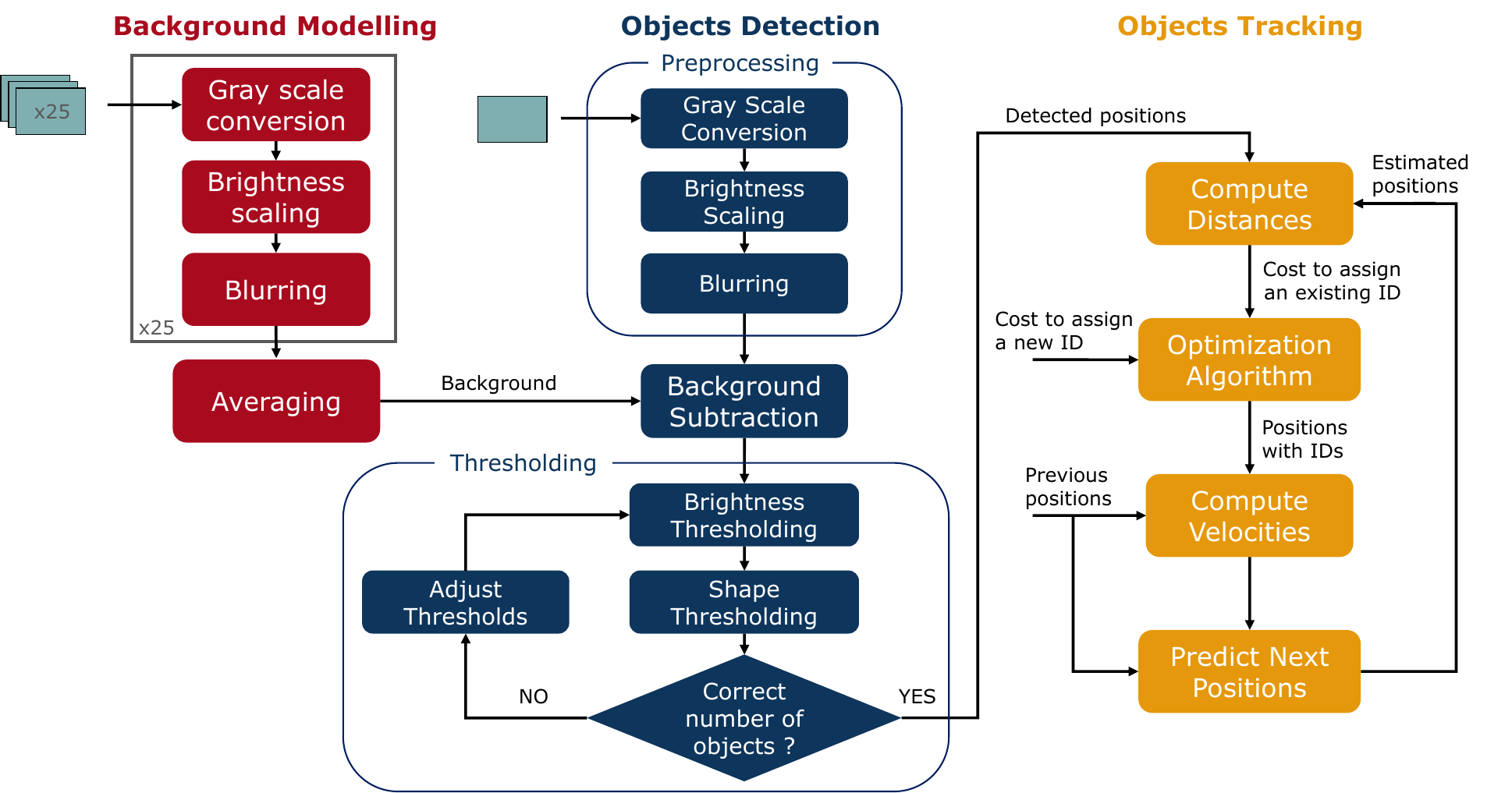}
    \caption{Schematic of the tracking algorithm.}
    \label{fig:tracker}
\end{figure}

\section{Background modelling}
Given a collection of images acquired during an experiment the software first builds a model of the background, that is an image representing all and only the elements that are static.

To do this a subset of images is selected, specifically we used $N_{\text{BG}}$ images uniformly distributed during the experiment.
Each of these images is converted to grey scale, according to the \texttt{channel} parameter, which allows to select either one of the three color channels (i.e. red, green and blue) or the combined brightness.
Then the brightness is adjusted so that the darkest pixel becomes the new zero, while the brightest one becomes the new one.
Finally, a median blurring is applied to remove noise and smooth the contours of the objects.

These image are then collapsed into a single one, by a pixel-wise median over the time dimension.
The resulting grey scale image represents the model of the background, that will be used for the background subtraction.
Figure \ref{fig:background_modelling} shows an example of this process.

\begin{figure}[t]
    \centering
    \includegraphics[width=1.05\linewidth]{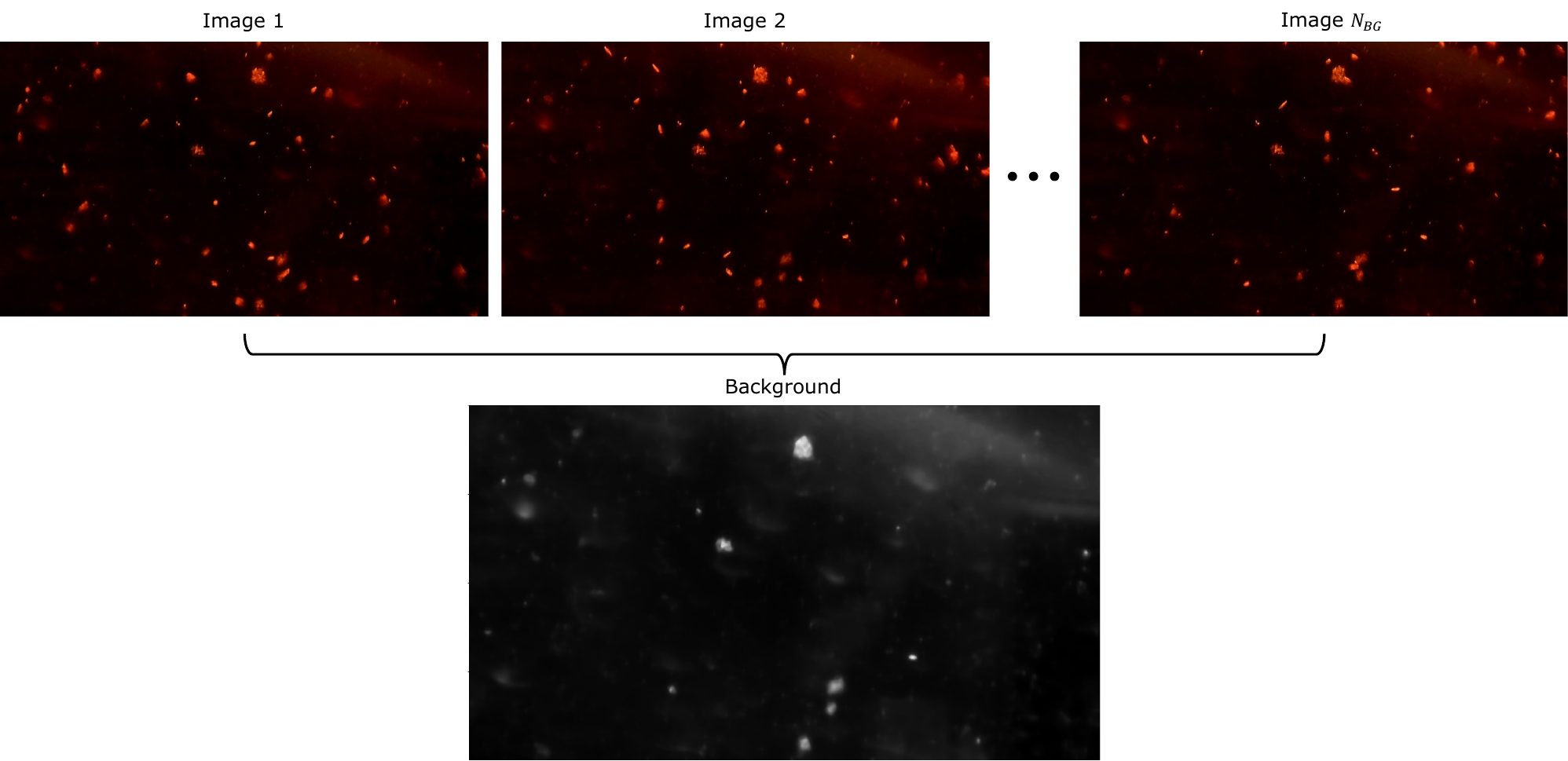}
    \caption{Example of background modelling.}
    \label{fig:background_modelling}
\end{figure}

\section{Objects detection}
After the background modelling, the software analyses the images one by one, in temporal order, to get the positions of the moving objects. 
Each frame undergoes the same pre-processing steps applied during the background modelling, namely grey scale conversion, brightness adjustment and blurring. This ensures the images stay consistent with the background. 
Then, the foreground, that represents only the moving objects, is computed by pixel-wise \emph{background subtraction} as $foreground=\max(0, frame - background)$.

The actual image segmentation is then performed by pixel-wise brightness thresholding, that is $mask = foreground\geq b_{\min}$. The \texttt{findContours} function is then applied to the resulting black and white mask, to detect the contours of the objects.
For each of the detected objects we compute the area $a$, the perimeter $p$ and the Polsby–Popper compactness measure  \cite{Polsby1991}, as $c=(4 \pi a) / (p^2) \in [0;1]$.
A second selection happens by retaining only the objects whose area and compactness fall in the given ranges, $[a_{\min}; a_{\max}]$ and $[c_{\min}; c_{\max}]$ respectively.
This allows to discern the objects we are looking for from debris or elements of the background not completely removed.
Figure \ref{fig:objects_detection} shows the objects detection process step by step.

\begin{figure}[t]
    \centering
    \begin{subfigure}[t]{0.48\linewidth}
        \centering
        \includegraphics[trim={40, 20, 10, 20},clip,width=\linewidth]{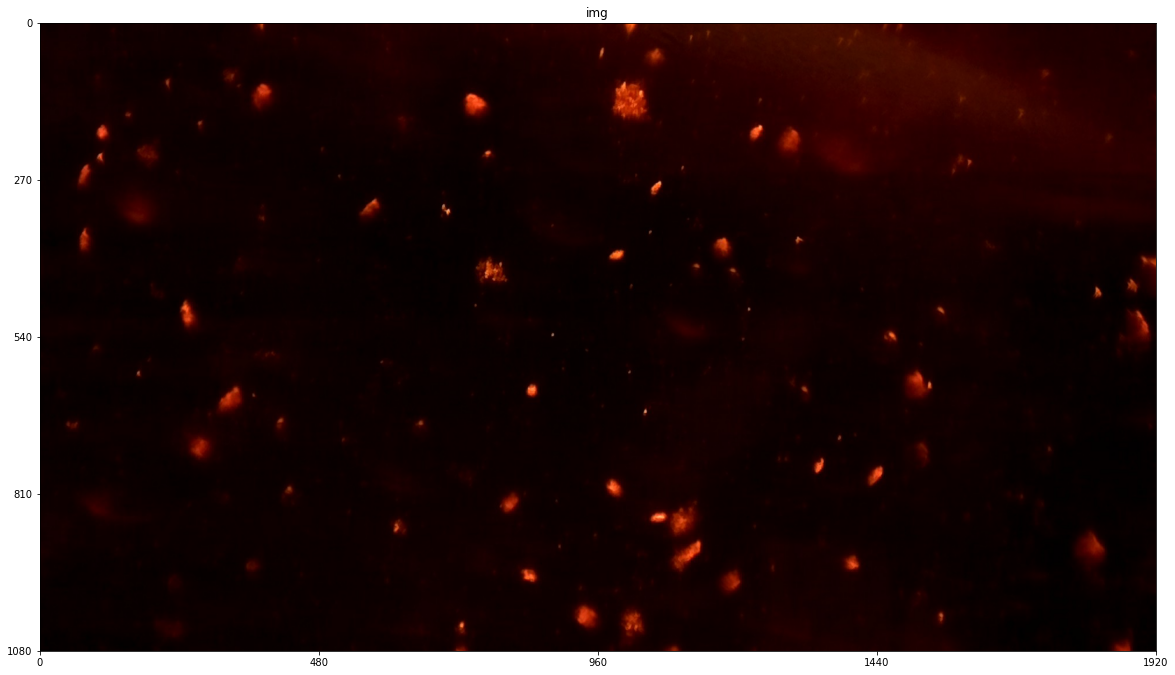}
        \caption{Image}
    \end{subfigure}
    \begin{subfigure}[t]{0.48\textwidth}
        \centering
        \includegraphics[trim={40, 20, 10, 20},clip,width=\linewidth]{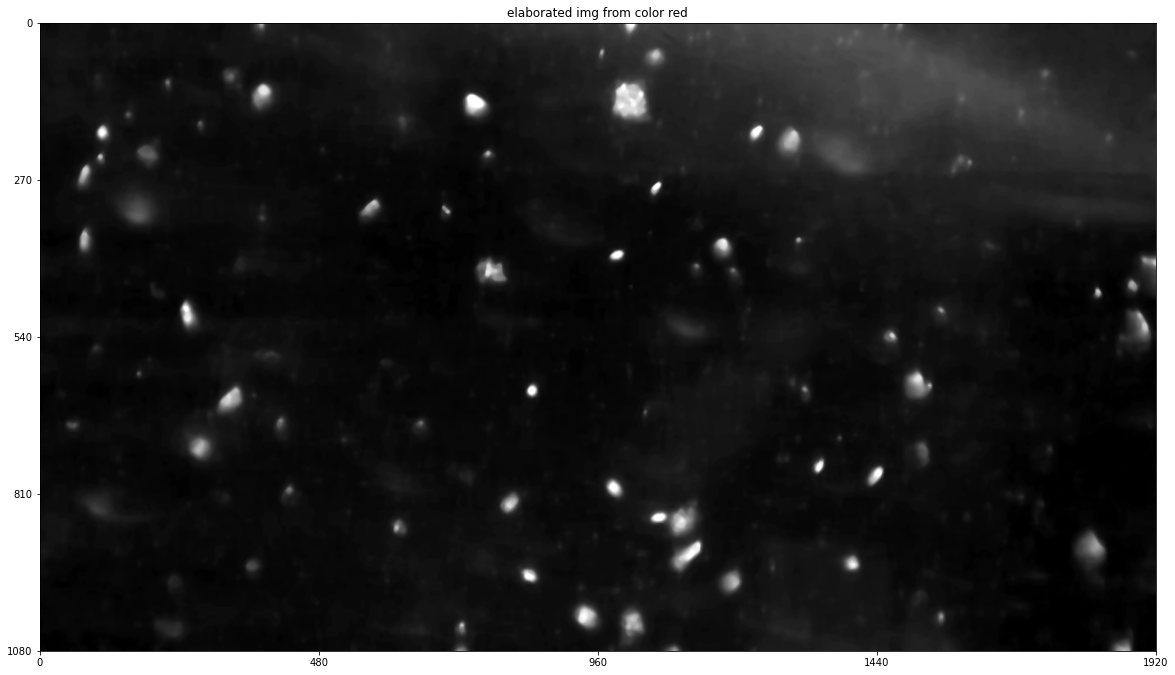}
        \caption{Processed image}
    \end{subfigure}
    
    \begin{subfigure}[t]{0.48\textwidth}
        \centering
        \includegraphics[trim={40, 20, 10, 20},clip,width=\linewidth]{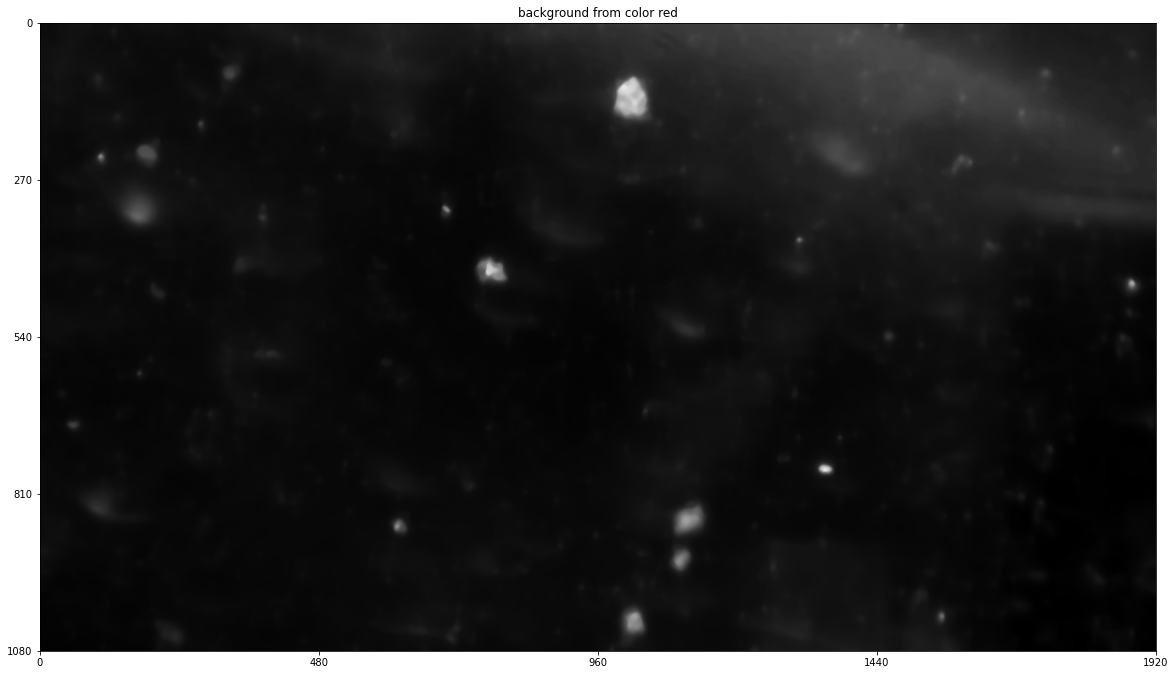}
        \caption{Background}
    \end{subfigure}
    \begin{subfigure}[t]{0.48\textwidth}
        \centering
        \includegraphics[trim={40, 20, 10, 20},clip,width=\linewidth]{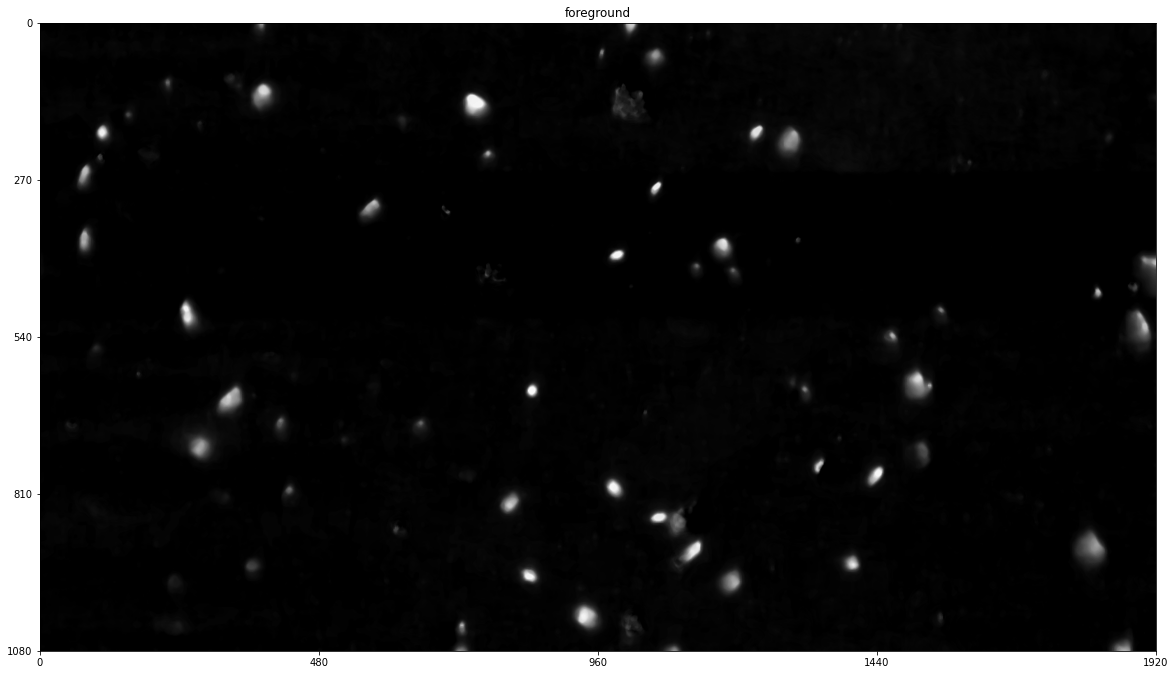}
        \caption{Foreground}
    \end{subfigure}
    
    \begin{subfigure}[t]{0.48\textwidth}
        \centering
        \includegraphics[trim={40, 20, 10, 20},clip,width=\linewidth]{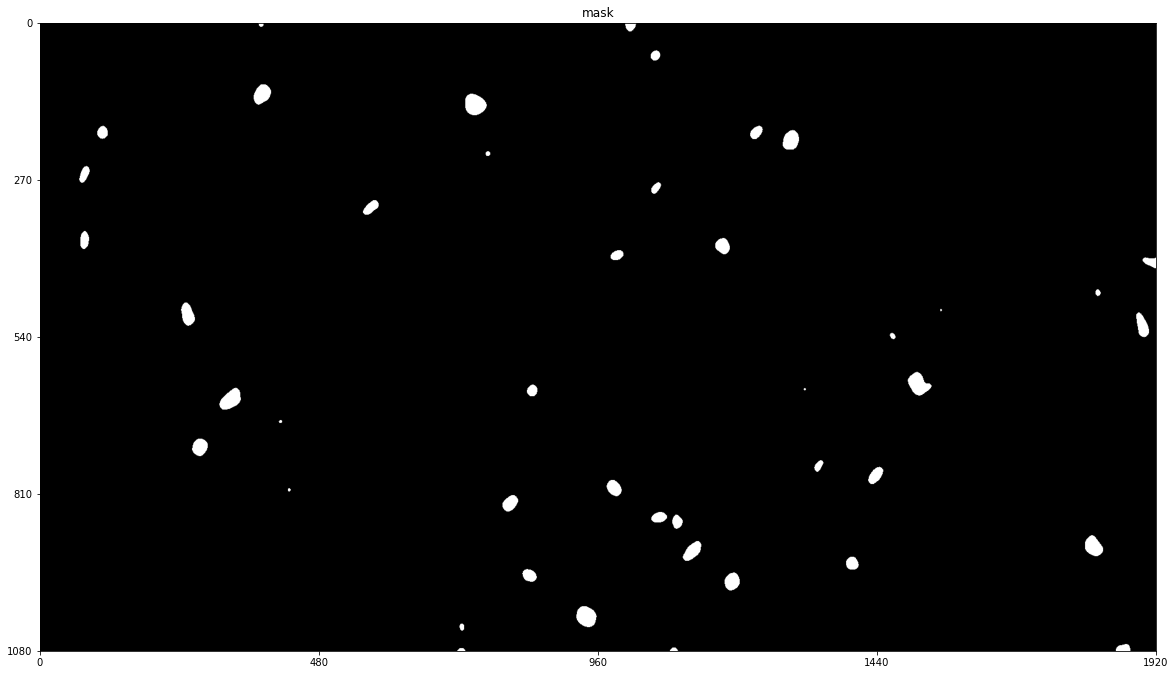}
        \caption{Mask}
    \end{subfigure}
    \begin{subfigure}[t]{0.48\textwidth}
        \centering
        \includegraphics[trim={40, 20, 10, 20},clip,width=\linewidth]{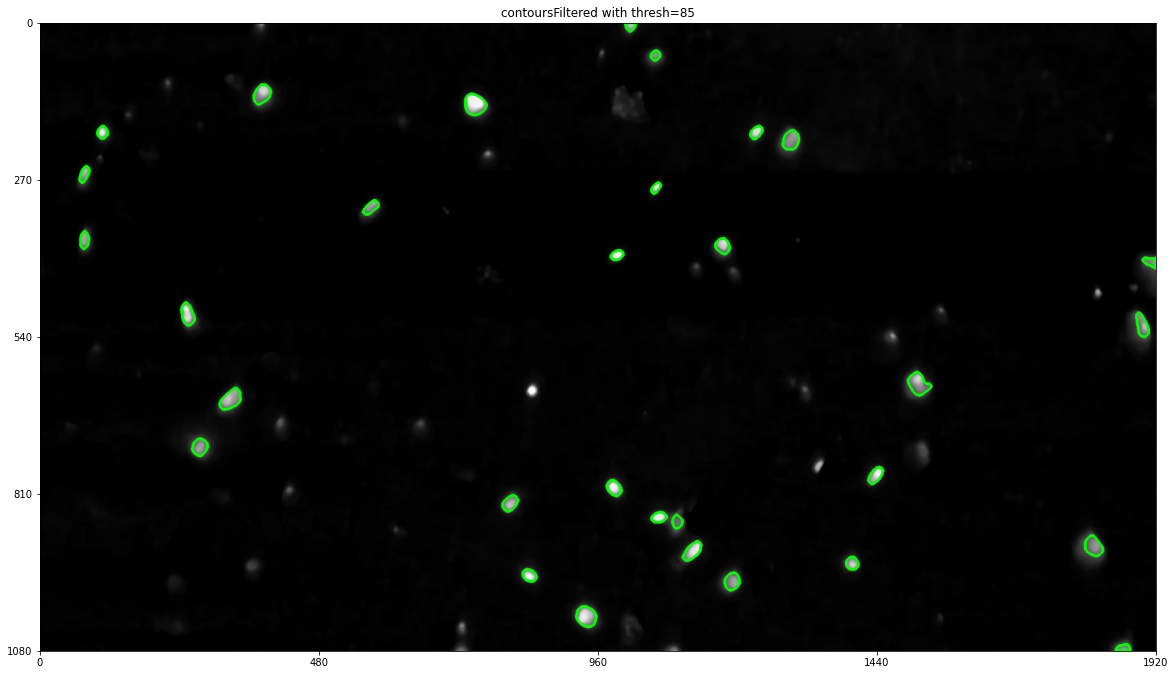}
        \caption{Detected objects}
    \end{subfigure}
    
    \caption{Step by step image elaboration for object detection. 
    (a) Original image.
    (b) Processed image by selecting the red channel, adjusting the brightness and applying the blur.
    (c) Model of the background.
    (d) Foreground obtained by subtracting the background from the processed image.
    (e) Black and white mask obtained by thresholding the brightness of the pixels.
    (f) Contours of the detected objects, after the selection based on area and compactness.
    These images were acquired by the \ac{DOME}, during an experiment with \textit{Euglena}.
    }
    \label{fig:objects_detection}
\end{figure}

The number of detected objects is then compared with the expected one, if it is not farther than $\delta$ from the expected one the detection is considered successful, and the positions of the centers of the detected objects are returned.
Otherwise, all the thresholds $a_{\min}, a_{\max}, c_{\min}, c_{\max}$ and $b_{\min}$ are relaxed or tightened by a factor $\lambda$, depending on whether the number of detected objects is smaller or larger than the expected one.
The detection is then repeated with the new thresholds.
This iterative procedure, should the condition on the number of detected objects not be satisfied, is forcibly stopped after a maximum number of iterations.
For the first frame the expected number of objects is set equal to the number of detected ones, so that the condition is automatically satisfied, while for subsequent frames the expected number of objects is set to the number of objects detected in the previous frame.
This automatic adaptation of the thresholds helps in keeping a more consistent detection, also in the presence of changes in the illumination.

\section{Objects tracking}
The last step consists in assigning ids to the detected objects and keeping these consistent over time.
Given the current detected positions and the estimated ones the distance between the two, for all the possible couples, is computed.
The matching cost is then defined as the square of the distance.
Moreover, a cost for the allocation of a new id is computed for each of the detected objects, depending on the distance from the edges of the camera, so that objects close to an edge have a lower cost for the allocation of a new id.

The Jonker-Volgenant optimization algorithm \cite{Crouse2016} is then used to solve the assignment problem, while minimizing the total cost.
Specifically, each detected object will be given an id, either an existing or a new one.

Finally, for each object the position at the next time step is estimated as $\vec{p}_{k+1} = \vec{p}_{k} + \gamma (\vec{p}_{k}-\vec{p}_{k-1})$, where $\gamma \in \BB{R}_{\geq0}$ is the inertia parameter.
The estimated position will be used at the next time step to perform the id matching.

When an existing id is not assigned to any of the detected objects it is not discarded and its estimated position is updated as usual, but its  \emph{inactivity counter} is incremented by one.
The inactivity counter is then used when computing the matching costs, so that longly inactive objects have a larger matching cost. Specifically this additional cost increases with the square of the inactivity counter.
This allows to keep a consistent tracking of the object even if it "disappears" for one or more consecutive frames, that might happen when it overlaps with another object or something in the background.
Figure \ref{fig:tracking_example} shows an example application of this algorithm.

\begin{figure}[t]
    \centering
    \includegraphics[width=\linewidth]{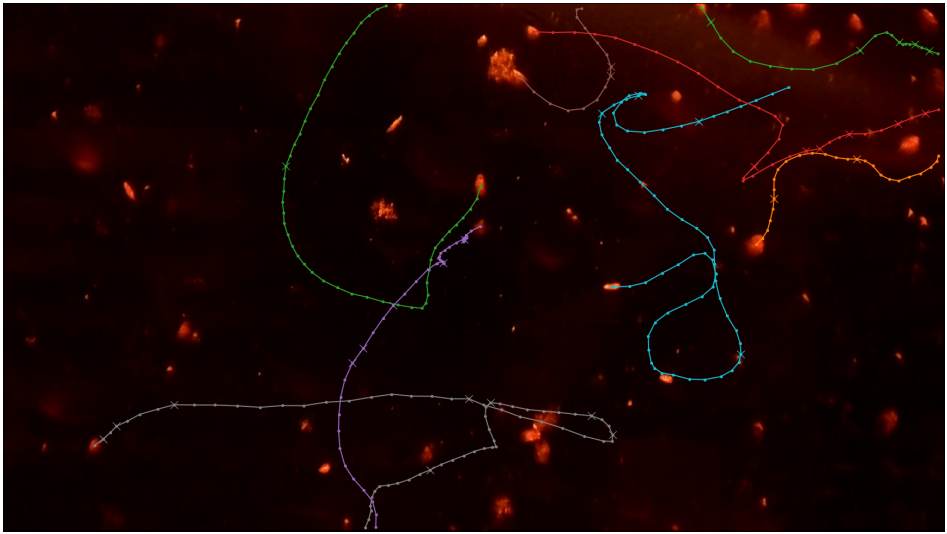}
    \caption{Trajectories generated by the tracking software during an experiment with \textit{Euglena}.}
    \label{fig:tracking_example}
\end{figure}

\section{Discussion}
This software allows to automatically detect and track moving objects, using images acquired from a static camera.
Despite not leveraging the most modern techniques, based on machine learning, it provides satisfactory results, with an open door for further improvements.
It was originally developed to analyze data from the \ac{DOME} experiments, but the generality of the algorithm makes it suitable for different applications.
In particular it was designed to detect bright objects moving on a dark background, but could be easily adapted to work in the opposite situation.
Table \ref{tab:tracking_parameters} shows the values of the parameters used to track the \textit{Euglena}, using the images from the \ac{DOME}.
Together with the rest of the software developed to use the \ac{DOME}, this is currently being refined for future publication.

\begin{table}
    \centering
    \begin{tabular}{l l l}
        \hline
        Parameter   & Description & Value\\
        \hline
        \texttt{channel}& Color channel used for gray-scale conversion & Red\\
        $N_{\text{BG}}$ & Number of images used for background modelling & 25\\
        $b_{\min}$  & Brightness threshold & 85/255\\
        $a_{\min}$  & Minimum area & 175\,px\\
        $a_{\max}$  & Maximum area & 1500\,px\\
        $c_{\min}$  & Minimum compactness & 0.55\\
        $c_{\max}$  & Maximum compactness  & 0.90\\
        $\delta$    & Tolerance on the number of detected objects & 25\%\\
        $\lambda$   & Thresholds adjustment gain & 2\%\\
        $\gamma$    & Inertia parameter for position estimation& 0.9\\
        \hline
    \end{tabular}
    \caption{Parameters used to track \textit{Euglena}.}
    \label{tab:tracking_parameters}
\end{table}

\clearpage

\thispagestyle{empty} 
\vspace*{.3\textheight}
\begin{center}
{\fontfamily{calligra}\selectfont
{\LARGE
{\color{main_color}
The End
}
}
}
\end{center}
\thispagestyle{empty} 

\cleardoublepage

\small
\bibliographystyle{plain}
\singlespacing
\setlength{\parskip}{0.2em}
\thispagestyle{empty} 
\paragraph{}
\addcontentsline{toc}{chapter}{Bibliography}
\bibliography{support/LSMAS, support/Synt_Bio}
\cleardoublepage

\chapter*{Author's publications}
\addcontentsline{toc}{chapter}{Author's publications}
\thispagestyle{empty} 
\label{ch:author_publications}

\noindent The results presented in Chapters \ref{ch:dist_cont} and \ref{ch:convergence} of this thesis were published in:

\begin{enumerate}
\item A. Giusti, M. Coraggio and M. di Bernardo, 2023, "Local Convergence of Multi-Agent Systems Toward Rigid Lattices" in \textit{IEEE Control Systems Letters}, vol. 7, pp. 2869-2874, doi: \href{https://ieeexplore.ieee.org/abstract/document/10160116}{10.1109/LCSYS.2023.3289060}.

\item A. Giusti, GC. Maffettone, D. Fiore, M. Coraggio and M. di Bernardo, 2023, "Distributed control for geometric pattern formation of large-scale multirobot systems" in \textit{Frontiers in Robotics and AI}, doi: \href{https://www.frontiersin.org/articles/10.3389/frobt.2023.1219931}{10.3389/frobt. 2023.1219931}.
\end{enumerate}

Moreover, the content of Chapters \ref{ch:dome} and \ref{ch:modelling} is currently being refined for publication.
\cleardoublepage

\chapter*{External resources}
\addcontentsline{toc}{chapter}{External resources}
\thispagestyle{empty} 
\label{ch:external_res}

\noindent For up to date information about the following items, visit the corresponding web page.

\begin{itemize}
\item The DOME \url{www.theopendome.org}.

\item SwarmSim \url{www.github.com/diBernardoGroup/SwarmSimPublic}.

\item SINCRO research group at the University of Naples Federico II \\ \url{www.sites.google.com/site/dibernardogroup}.

\item BioCompute Lab at the University of Bristol \\ \url{https://biocomputelab.github.io}.

\end{itemize}
\cleardoublepage

\end{document}